\newtheorem{theorem}{Theorem}[section]
\newtheorem{proposition}[theorem]{Proposition}
\newtheorem{example}[theorem]{Example}
\newtheorem{remark}[theorem]{Remark}
\newcommand \bR {\mathbb{R}}
\newcommand{\bC}{\mathbb{C}}
\newcommand{\bCp}{\mathbb{C}^+}
\newcommand {\bCpb} {\overline{\mathbb{C}^+}}
\newcommand{\ds}{\displaystyle}
\numberwithin{equation}{section}
\title{The transformations to remove or add bound states for the half-line matrix Schr\"odinger operator
\thanks{Research partially supported by the project PAPIIT-DGAPA, UNAM IN100321.}}
\author{Tuncay Aktosun\thanks{aktosun@uta.edu}\\
Department of Mathematics\\
University of Texas at Arlington\\
Arlington, TX 76019-0408, USA\\
\\
Ricardo Weder\thanks{weder@unam.mx. Emeritus fellow Sistema Nacional de Investigadores, CONAHCYT, M\'exico.}\\
Departamento de F\'\i sica Matem\'atica\\
Instituto de Investigaciones en
Matem\'aticas Aplicadas y en Sistemas\\
Universidad Nacional Aut\'onoma de M\'exico\\
Apartado Postal 20-126, IIMAS-UNAM\\
 Ciudad de M\'exico, CP 01000, M\'exico}
\date{}
\begin{document}

\maketitle

\begin{abstract}
We present the transformations to remove or add bound states or to decrease or increase the multiplicities of any existing bound states  for the half-line matrix-valued Schr\"odinger 
operator with the general selfadjoint boundary condition, without changing the continuous spectrum of the operator. When the matrix-valued potential is selfadjoint, is integrable, and has a finite first moment, the relevant 
transformations are constructed through the development of the Gel'fand--Levitan method for the corresponding Schr\"odinger 
operator. In particular, the bound-state normalization matrices are constructed at each bound state for any multiplicity. The transformations are obtained 
for all relevant quantities, including the matrix potential, the Jost solution, the regular solution, the Jost matrix, the scattering matrix, and the
boundary condition. For each bound state, the corresponding dependency matrix is introduced by connecting the normalization matrix
used in the Gel'fand--Levitan method and the normalization matrix used in the Marchenko method of inverse scattering.
Various estimates are provided to describe the large spacial asymptotics for the change in the potential when the bound states
are removed or added or their multiplicities are modified. An explicit example is provided showing that an asymptotic estimate available in
 the literature in the scalar case for the potential increment
is incorrect.
\end{abstract}

{\bf {AMS Subject Classification (2020):}} 34L15, 34L25, 34L40, 81Q10

{\bf Keywords:} matrix Schr\"odinger operator, selfadjoint boundary condition, bound states, bound-state multiplicities, spectral function, Gel'fand--Levitan method, Darboux transformation, removal of bound states, addition of bound states, dependency matrix

\newpage

\section{Introduction}
\label{section1}

In this paper we consider the half-line matrix-valued Schr\"odinger 
operator with the general selfadjoint boundary condition under the assumption that the matrix-valued potential is selfadjoint, is integrable, and has a finite first moment.
We analyze the change in all relevant quantities associated with this Schr\"odinger operator when we change its discrete spectrum by removing or adding a finite number of
eigenvalues or decreasing or increasing the multiplicities of the existing eigenvalues, without changing the
continuous spectrum. In other words, we determine the transformations of all relevant quantities associated with this Schr\"odinger operator when a bound state of any multiplicity is removed from or added to the discrete spectrum or the multiplicity of a bound state is changed.
Since the boundary condition directly affects the definition of the Schr\"odinger operator, in general we expect that the selfadjoint boundary condition also undergoes
a change, with the only exception that a purely Dirichlet boundary condition remains a purely  Dirichlet boundary condition. By extending the
Gel'fand--Levitan method of inverse spectral theory  from the scalar case to the matrix case, we are able to 
obtain the transformations for all relevant quantities when the bound states of any multiplicities are removed or added or when the bound-state multiplicities are
modified. The aforementioned relevant quantities include the potential, the regular solution, the Jost solution, the Jost matrix, the scattering
matrix, and the boundary matrices used in the description of the selfadjoint  boundary condition.

The use of a matrix-valued Schr\"odinger equation finds its origin in the beginning of the development of quantum mechanics.
The matrix Schr\"odinger operator plays a central role to take into account the internal structure, such as spin, 
of quantum mechanical particles and also to consider collections of particles such as nuclei, atoms, and molecules. An important example of a matrix-valued Schr\"odinger equation
is the Pauli equation describing particles of spin equal to one half.
We refer the reader to \cite{CS1989} and \cite{LL1989} for the applications
of  matrix-valued Schr\"odinger equation in quantum mechanics and in nuclear physics.
In the nonlinear case of a matrix-valued time-dependent Schr\"odinger equation, we refer the reader to \cite{NW2023} for the analysis of
long-time asymptotics of the corresponding solutions. We remark that
matrix-valued Schr\"odinger equations have important applications in the context of quantum graphs \cite{BK2006,K2024} in various fields including nanotechnology, quantum wires, and quantum computing.
In the analysis of quantum graphs the use of a general selfadjoint boundary condition, rather than a purely Dirichlet boundary condition, 
is crucial. For example, a star graph, which is a quantum graph with one vertex and a finite number of semi-infinite edges
meeting at the vertex, is mathematically described by a matrix Schr\"odinger operator with the general boundary condition.
For a further discussion on applications of matrix-valued Schr\"odinger equations with the general
boundary condition, we refer the reader to \cite{AW2021}.

A simple but useful method to remove or add bound states for scalar Sturm--Liouville  equations has a long history dating back to Jacobi \cite{J1837} and  Darboux \cite{D1882},  
with later contributions by  Crum \cite{C1955}, Deift \cite{D1978}, Deift and Trubowitz \cite{DT1979}, and Schminke \cite{S1979}. This method is often called 
the Darboux transformation, the Darboux--Crum transformation, or the single commutation method.
It involves factoring the corresponding Sturm--Liouville operator into two noncommuting operators.  However, it has the disadvantage that it requires the positivity of certain eigensolutions to the relevant Sturm--Liouville problem, and consequently it can only be applied to remove the bound state with the lowest energy 
 or to add a bound state 
at an energy below the existing bound-state energy levels. An alternate method is the double commutation method \cite{G1993,GT1996,GST1996,S2003}. In this 
alternate method a second commutation is used, which allows to remove or add bound states at any energy in the spectrum
of the Sturm--Liouville operator. It is mentioned in \cite{GT1996} that the second commutation method can be found in the seminal work of Gel'fand
and Levitan \cite{GL1955}. The literature on methods
to remove or add bound states is extensive. Besides the references we have quoted already, we mention
\cite{EK1982,L1987,M2011,AU2022} and the references therein. Such methods have applications in others areas as well. For example, 
they are used to obtain explicit solution to integrable evolution equations \cite{L1987,MS1991,M2011,RS2002}.

The Gel'fand--Levitan method was developed by Gel'fand and Levitan in \cite{GL1955} (see also \cite{LG1964}) to recover a scalar-valued potential 
when the corresponding spectral function is used as the input data set. 
The method was extended to the matrix case by Newton and Jost \cite{NJ1955} by considering the matrix-valued
Schr\"odinger operator when the boundary condition is purely Dirichlet. In our paper we extend the Gel'fand--Levitan method in the matrix case
by allowing the general selfadjoint boundary condition, which includes the purely Dirichlet boundary condition as a specific case.
We provide the transformations of all relevant quantities by using the Gel'fand--Levitan method for the matrix-valued  Schr\"odinger 
operator with the general selfadjoint boundary condition. We construct the bound-state normalization matrices by taking into consideration the multiplicities
of the bound states. We refer to the constructed normalization matrices as the Gel'fand--Levitan normalization matrices. This is contrast with the normalization
matrices in the Marchenko method of inverse scattering, which is a method used to recover the matrix-valued potential from the scattering data set consisting of the scattering
matrix, the bound-state energies, and the bound-state normalization matrices. We refer to the normalization matrices used in the Marchenko method
as the Marchenko normalization matrices. The Marchenko method for the half-line matrix-valued Schr\"odinger 
operator with the purely Dirichlet boundary condition was developed by Agranovich and Marchenko in their seminal work
\cite{AM1963}. We refer the reader to our own monograph \cite{AW2021} for the Marchenko method for the half-line matrix-valued Schr\"odinger 
operator with the general selfadjoint boundary condition.
For some of our results in this paper, the existence of the first moment of the potential is not needed. On the other hand, for some other results we impose
a stronger moment condition on the potential than just the first moment condition.

The analysis of the half-line Schr\"odinger operator in the matrix case is
much more involved than in the scalar case. The Gel'fand--Levitan normalization constants
in the scalar case are defined in a straightforward manner by normalizing the
regular solution, whereas in the matrix case
the definition of a Gel'fand--Levitan normalization matrix is more elaborate. This is because each bound state is simple in the scalar case whereas a bound state
in the matrix case may have a multiplicity.
Furthermore, in the
scalar case a boundary condition is either Dirichlet or non-Dirichlet, whereas in the
matrix case a boundary condition may be purely Dirichlet, purely non-Dirichlet, or
a mixture of Dirichlet and non-Dirichlet. 
Under a transformation to remove or to add bound states in the scalar case, a Dirichlet boundary condition remains Dirichlet and 
a non-Dirichlet boundary condition remains non-Dirichlet, although the boundary
parameter changes in the latter case.
In the matrix case, we observe the following. The boundary matrix $A$ appearing in \eqref{2.5} and \eqref{2.10} remains unchanged
under the transformation. In the matrix case, the purely Dirichlet case occurs when $A=0,$ and hence
a purely Dirichlet case remains purely Dirichlet under the  transformation, which resembles the scalar case.
In the matrix case, the boundary matrix $B$ appearing in \eqref{2.5} and \eqref{2.10} changes under the transformation
unless $A=0,$ and this also has a resemblance in the scalar case.
In our paper we obtain the  transformations to remove or add bound states or to decrease or increase the multiplicity of a bound state
when the potential is matrix valued and the boundary condition is the general selfadjoint boundary
condition, and this is done without restricting ourselves to the purely Dirichlet case or to the purely
non-Dirichlet case.
We deal with bound-state multiplicities by using the appropriate orthogonal projections onto the kernels of the Jost matrix and of the matrix adjoint of  the Jost matrix,
respectively,
evaluated at the bound-state energies. Our analysis also involves the introduction of
a dependency matrix at each bound state and the use of a Moore--Penrose inverse.
The bound-state dependency matrix connects the normalized matrix-valued bound-state wavefunction 
used in the Gel'fand-Levitan method to
the normalized matrix-valued bound-state wavefunction 
used in the Marchenko method.

In the half-line scalar case, when  remove or add a bound state, the Gel'fand--Levitan normalization constants for the remaining bound states are not affected
\cite{L1987,M2011}.
We observe the same in the matrix case even with the general selfadjoint boundary condition. In other words, we find that the
Gel'fand--Levitan normalization matrices remain unchanged for the
bound states not affected by the transformation. We also observe that, 
when we change the multiplicity of
a bound state,
the Gel'fand--Levitan normalization matrices for the remaining bound states  are not affected. 
As a result, in the half-line case, a removal or addition of
a bound state or a decrease or  increase in the multiplicity of a bound state does not affect the remaining bound states. One important consequence of
this result is that the mathematical formulation of  the transformation does not require that the bound-states 
must be ordered in an increasing or a decreasing manner with respect to the bound-state energies.

Our paper is organized as follows.
In Section~\ref{section2} we present the preliminaries
needed for our analysis, by introducing the boundary matrices associated with a selfadjoint boundary condition, the Jost solution
and the regular solution to the matrix Schr\"odinger equation on the half line, the Jost matrix, the scattering matrix, and the 
physical solution.
In Section~\ref{section3} we deal with the bound states, by introducing the orthogonal projections onto the kernels of
the Jost matrix and of the adjoint of the Jost matrix, respectively, at each bound state. We then introduce
the Gel'fand--Levitan normalization matrices and
the matrix-valued Gel'fand--Levitan normalized bound-state solutions.
This is done in a similar manner the Marchenko normalization matrices and the matrix-valued
Marchenko normalized bound state solutions
are introduced \cite{AW2021} for the corresponding matrix Schr\"odinger operator with the general selfadjoint boundary condition.
In Section~\ref{section4} we introduce the dependency matrices that relate the Marchenko normalized bound-state solutions to the  Gel'fand--Levitan normalized bound-state solutions, and we present the basic properties of the dependency matrices.
We show that the matrix adjoint of a dependency matrix is the same as its Moore--Penrose inverse, and we also
show that the dependency matrices help us treat bound states with  multiplicities not only
in a rather simple and mathematically elegant way but also in
a practical way suitable for computations.
In Section~\ref{section5} we introduce the Gel'fand--Levitan system of integral equations
and Parseval's equality related to the matrix-valued Schr\"odinger operator
with the general selfadjoint boundary condition.
We show how the input to the Gel'fand--Levitan system is related
to the change in the spectral measure. We also describe how the change in the potential,
the change in the regular solution, and the change in the boundary matrices are expressed in terms of the
solution to the Gel'fand--Levitan system of integral equations.
In Section~\ref{section6} we present the  transformations of all relevant quantities when a bound state is removed. In particular, we show how the regular solution,  the potential, the boundary matrices, the Jost matrix, the scattering matrix, the Jost solution, and other relevant
quantities change.  We also show that the Gel'fand--Levitan normalization matrices for the bound states
not involved in the  transformation remain unchanged. Further, we show that the continuous part of the spectral measure does not change.  In Section~\ref{section7} we present the transformations of all relevant
quantities when the multiplicity of a bound state is decreased, without totally removing the bound state. Note that this can only happen in the matrix case since in the scalar case the multiplicity of the bound states is one. The proofs of the results in this section are similar to the proofs in Section~\ref{section6}, and hence 
the results in Section~\ref{section7} are stated without their proofs. 
In Section~\ref{section8} we present the  transformations of all relevant quantities when a new bound state is added
to the discrete spectrum of the operator.
In (IV.1.30) of \cite{CS1989} it is claimed that, in the scalar case with the Dirichlet boundary condition when adding a bound state with the energy $-\kappa^2,$  the increment of the potential given by $\tilde V(x)- V(x)$ decays as $F e^{-2\kappa x}$ as $x\to+\infty,$ where $F$
is a nonzero constant. In Example~\ref{example8.9}, by expressing the relevant quantities and their transformations
explicitly, we show that the result of (IV.1.30) in \cite{CS1989} is incorrect.
Finally, in Section~\ref{section9} we present the transformations of all relevant quantities
when we increase the multiplicity of a bound state
without changing the rest of
the spectrum. The results in Section~\ref{section9} are presented without their proofs
because those proofs are similar to the proofs in Section~\ref{section8}.

\section{The half-line matrix Schr\"odinger equation}
\label{section2}

In this section we present the preliminaries needed for the
remaining sections. In particular, we introduce the boundary matrices $A$ and $B,$ the Jost solution
$f(k,x),$ the regular solution $\varphi(k,x),$ the Jost matrix $J(k),$ the scattering matrix $S(k),$ and the physical solution $\Psi(k,x)$
associated with the half-line matrix Schr\"odinger operator. We refer the reader to \cite{AW2021}
for further elaborations on the material presented here.

Consider the half-line Schr\"odinger equation
\begin{equation}
\label{2.1}
-\psi''+V(x)\,\psi=k^2\psi,\qquad x\in\bR^+,
\end{equation}
where the prime denotes the $x$-derivative, we let $\bR^+$ denote the interval $(0,+\infty),$ and the
potential $V$ is an $n\times n$ selfadjoint matrix-valued function
of $x,$ with $n$ being a fixed positive integer.
The wavefunction
$\psi$ is either an $n\times n$ matrix or
a column vector with
$n$ components.
Note that
the selfadjointness is expressed as
\begin{equation}
\label{2.2}
V(x)^\dagger=V(x),\qquad x\in\bR^+,
\end{equation}
with the dagger denoting the matrix adjoint
(matrix transpose and complex conjugation).
We always assume that the potential $V$ is integrable, i.e. we have
\begin{equation}\label{2.3}
\int_0^\infty dx\,\,|V(x)|<+\infty,
\end{equation}
where $|V(x)|$ denotes the
operator norm of the $n\times n$ matrix $V(x).$
At times we impose further restrictions on the potential $V$ so that it satisfies
\begin{equation}\label{2.4}
\int_0^\infty dx\, (1+x)^\epsilon   \,|V(x)|<+\infty,
\end{equation}
for some fixed positive constant $\epsilon.$ We say that $V$
belongs to $L^1_\epsilon(\bR^+)$ if \eqref{2.4} is satisfied.
Since all matrix norms are equivalent
in finite-dimensional vector spaces, any other matrix norm can be used in \eqref{2.3} and \eqref{2.4} instead of 
the operator norm. The adjectives selfadjoint and hermitian are used interchangeably. 
We use $\bCp$ to denote the upper-half complex plane, use
$\bR$ for the real axis, and let $\bCpb:=\bCp\cup\bR.$
We use both $V$ and $V(x)$ to refer to the potential $V,$ where the latter usage emphasizes
the dependence on $x.$

We obtain a selfadjoint Schr\"odinger operator on the half line
by supplementing \eqref{2.1} with the
general selfadjoint boundary condition at $x=0,$
which can be written as
\begin{equation}
\label{2.5}
-B^\dagger \psi(0)+A^\dagger \psi'(0)=0,
\end{equation}
where the constant $n\times n$ matrices $A$ and $B$ satisfy
\begin{equation}
\label{2.6}
-B^\dagger A+A^\dagger B=0,\end{equation}
\begin{equation}
\label{2.7}
A^\dagger A+B^\dagger B>0,\end{equation}
and we refer to $A$ and $B$ as the boundary matrices.
We use the positivity and the positive definiteness for a matrix interchangeably.
Thus, the positivity for the matrix appearing on the left-hand side of \eqref{2.7}
is the same as the positive definiteness of that matrix. We recall that an $n\times n$
matrix is positive if and only if
all its eigenvalues are positive.
Similarly, a matrix is nonnegative if and only if its eigenvalues
are real and nonnegative.
We remark that that \eqref{2.7} is equivalent \cite{AW2021} to
having the matrix-rank relation given by
\begin{equation*}
\text{rank}\begin{bmatrix} A\\
B\end{bmatrix}=n,
\end{equation*}
where $\begin{bmatrix} A\\
B\end{bmatrix}$ is the $2n\times n$ matrix formed from $A$ and $B.$
Multiplying the boundary matrices $A$ and $B$ on the right by an invertible
$n\times n$ matrix $T$ does not change \eqref{2.5}. Thus, even though the boundary condition \eqref{2.5} itself is uniquely
determined by
the boundary-matrix pair $(A,B),$ 
the matrix pair $(AT,BT)$ with any invertible $T$
also yields the
same boundary condition \eqref{2.5}. In fact, this is the only freedom we have in choosing the boundary matrices $A$ and $B.$ It is possible to 
use a single constant $n\times n$ matrix to describe the boundary condition \eqref{2.5} instead of a pair of constant
$n\times n$ matrices. However, it is more advantageous \cite{AW2021} to describe \eqref{2.5} by using the pair of
boundary matrices $A$ and $B$ appearing in the initial conditions \eqref{2.10}.
By using $A=0$ and $B=I$ in \eqref{2.10} we obtain the Dirichlet boundary condition
$-\psi(0)=0,$ which can also be written as $\psi(0)=0$ by letting $T=-I.$
At times we use the term purely Dirichlet for emphasis to make a distinction
with the boundary condition consisting of a mixture of Dirichlet and non-Dirichlet components.
The notation $0$ is used for the scalar zero, the zero vector with $n$ components, and
the $n\times n$ zero matrix, depending on the context, We use $I$ to denote the $n\times n$ identity matrix.

When the potential $V$ satisfies \eqref{2.2} and \eqref{2.3}, there are two relevant particular $n\times n$ matrix solutions to
\eqref{2.1}. One of them is the Jost solution
$f(k,x),$ and it satisfies the spacial
asymptotics
\begin{equation}
\label{2.9}
f(k,x)=e^{ikx}\left[ I+o(1)\right], \quad f'(k,x)= e^{ikx} \left[ik I+o(1)\right],
\qquad x\to +\infty,
\end{equation}
for $ k \in \overline{\mathbb C^+}\setminus\{0\}.$
In fact the Jost solution $f(k,x)$ is defined as the unique solution satisfying
 the first asymptotics in \eqref{2.9}, and the second asymptotics there is automatically satisfied. If $ V\in L^1_1(\mathbb R^+)$ 
then the Jost solution exists also at $k=0.$
The second relevant particular solution to \eqref{2.1} is the regular solution $\varphi(k,x)$
satisfying
the initial conditions
\begin{equation}
\label{2.10}
\varphi(k,0)=A,\quad \varphi'(k,0)=B,
\end{equation}
where $A$ and $B$ are the boundary matrices appearing in
\eqref{2.5}. When the potential $V$ satisfies \eqref{2.2} and \eqref{2.3}, 
the regular solution $\varphi(k,x)$ is entire in 
$k\in\mathbb C$ for each fixed $x\in[0,+\infty).$
We remark that in general the Jost solution $f(k,x)$ does not satisfy
the boundary condition \eqref{2.5}. However, the regular solution
$\varphi(k,x)$  indeed satisfies \eqref{2.5} at any $k\in\mathbb C.$

When the potential $V$ satisfies \eqref{2.2} and \eqref{2.3}, the Jost matrix $J(k)$ associated with
\eqref{2.1} and \eqref{2.5} is defined as
\begin{equation}
\label{2.11}
J(k):=f(-k^*,0)^\dagger\,B-f'(-k^*,0)^\dagger\,A,\qquad k\in\mathbb R\setminus \{0\},
\end{equation}
where the asterisk denotes complex conjugation. If we further have $V \in L^1_1(\mathbb R^+)$ then
$J(k)$ exists also at $k=0.$ The $n\times n$ matrix-valued
 $J(k)$ has an extension from $ k\in\mathbb R\setminus \{0\}$
 to $k\in\bCp,$ and in fact the asterisk in
\eqref{2.11} is used to indicate how that
extension occurs.
The scattering matrix $S(k)$ associated with \eqref{2.1} and \eqref{2.5}
is defined as
\begin{equation}
\label{2.12}
S(k):=-J(-k)\,J(k)^{-1},\qquad k\in\bR\setminus\{0\}.
\end{equation}
If $V \in L^1_1(\mathbb R^+)$ then the domain of the scattering matrix, by continuity, extends also to $k=0.$
We remark that the definition of the scattering matrix given in
\eqref{2.12} differs by a minus sign from the definition used in \cite{AM1963,NJ1955} 
in the matrix case with the purely Dirichlet boundary condition as well as the traditional definition
used in the scalar case \cite{CS1989,L1987,M2011} with the Dirichlet boundary condition.
On the other hand, in the scalar case with a non-Dirichlet boundary condition a minus sign
as in \eqref{2.12} is used \cite{L1987,M2011} to define the scattering matrix.
We refer the reader to \cite{AW2018,AW2021} for the elaboration on the issue of defining the
scattering matrix for the half-line matrix Schr\"odinger operator
and the troubling consequences of defining the scattering matrix differently
in the Dirichlet and non-Dirichlet cases.

The physical solution $\Psi(k,x)$ associated with
\eqref{2.1} and \eqref{2.5} is defined as
\begin{equation}
\label{2.13}
\Psi(k,x):=f(-k,x)+f(k,x)\,S(k).
\end{equation}
For each fixed $x\in\mathbf R^+$ the physical
solution $\Psi(k,x)$ has a meromorphic extension from $k\in\mathbb R\setminus\{0\} $  (or from $k \in \bR$ if $V\in L^1_1(\mathbb R^+)$) to
$k\in\mathbb C^+$  and the regular solution $\varphi(k,x)$ is entire in $k.$ The regular solution
can be expressed in terms of the physical solution, the Jost solution, 
and the Jost matrix in various equivalent forms
such as
\begin{equation}
\label{2.14}
\varphi(k,x)=-\ds\frac{1}{2ik}\,\Psi(k,x)\,J(k),\qquad k\in\bCpb\setminus\{0\},
\end{equation}
\begin{equation}
\label{2.15}
\varphi(k,x)=\ds\frac{1}{2ik}\left[f(k,x)\,J(-k)-f(-k,x)\,J(k)\right],
\qquad k\in\mathbb R\setminus\{0\}.
\end{equation}
The regular solution is even in $k,$ i.e. we have
\begin{equation*}
\varphi(-k,x)=\varphi(k,x),
\qquad k\in\mathbb C.
\end{equation*}

\section{The bound states of the Schr\"odinger operator}
\label{section3}

In this section we introduce the relevant material related to the
bound states of the half-line matrix Schr\"odinger operator.
For the treatment of bound states and the Marchenko normalization matrices, we refer the reader to \cite{AW2021}. In this paper, since we use the Gel'fand--Levitan
method to establish the relevant transformations
for the half-line matrix Schr\"odinger operator, we introduce
the Gel'fand--Levitan normalization matrices by using a procedure
similar to the construction of the Marchenko normalization matrices.

A bound-state solution
corresponds to a square-integrable column-vector solution
to \eqref{2.1} satisfying the boundary condition \eqref{2.5}. We let
$\lambda:=k^2.$
As stated in Theorem~3.11.1 of \cite{AW2021}, there are no bound states when $\lambda >0$ but  a bound state at $\lambda=0$ is possible. The  bound states when $\lambda<0$ occur at the $k$-values on the positive imaginary axis 
of the complex $k$-plane 
corresponding to the zeros of $\det[J(k)].$
We use $\det[J(k)]$ to denote the determinant of the
Jost matrix $J(k)$ defined in \eqref{2.11}. If there are an infinite number of bound states when $\lambda<0,$ the corresponding bound-state
$k$-values only accumulate at $k=0.$ As indicated in Theorem~4.3.3 of \cite{AW2021}, the selfadjoint Schr\"odinger operator related to 
\eqref{2.1} and \eqref{2.5} has no singular continuous spectrum and  its absolutely continuous spectrum is the $\lambda$-interval $[0,+\infty).$  
If $V \in L^1_1(\mathbb R^+),$ then from
Theorem~3.11.1 of \cite{AW2021} we conclude that the zero value of $\lambda$ does not correspond to
a bound state and that the number of bound states
when $\lambda<0$ is finite. 
We assume that there are
$N$ zeros of $\det[J(k)]$ occurring when $k=i\kappa_j$ for
$1\le j\le N,$ with
$\kappa_j$ being distinct positive
constants. Note that $N$ is a nonnegative integer or it is equal to $+\infty,$ and
it represents the number of negative-energy bound states
without counting multiplicities.
In case $N=0,$ there are no negative-energy bound states.
Thus, $\det[J(i\kappa_j)]=0$ and we use
$m_j$ to denote the number of linearly independent vectors
in $\bC^n$ spanning the kernel $\text{\rm{Ker}}[J(i\kappa_j)].$
We remark that $m_j$ is a positive integer
satisfying $1\le m_j\le n,$ and it corresponds to
the multiplicity of the bound state at $k=i\kappa_j.$
Thus, the total number of negative-energy bound states
including the multiplicities, denoted by $\mathcal N,$ is given by
\begin{equation*}
\mathcal N:=\ds\sum_{j=1}^N m_j.
\end{equation*}
Note that $\mathcal N$ can be zero, a positive integer, or infinity.
At $k=i\kappa_j$ there are $2n$ linearly independent column-vector
solutions to \eqref{2.1}. We remark that $n$ of such solutions are
exponentially decreasing as $x\to+\infty$ and hence square integrable in $x,$ and $n$ of
such solutions are exponentially increasing as $x\to+\infty$ and hence
not square integrable in $x.$ For further elaborations on this issue,
we refer the reader to Propositions~3.2.1, 3.2.2, and 3.2.3 of \cite{AW2021}.
 It turns out that, by using linear combinations of $n$ square-integrable solutions, we can construct exactly $m_j$ 
 square-integrable column-vector solutions that also satisfy the boundary condition.
Thus, the multiplicity $m_j$ of the bound state at $k=i\kappa_j$
cannot exceed $n.$  Consequently, there are exactly $m_j$ linearly
 independent square-integrable column-vector
 solutions satisfying the boundary condition, and those solutions make
 up the bound states.

The bound-state solutions to
\eqref{2.1} can be analyzed either via
the Marchenko theory or the Gel'fand--Levitan theory.
In the Marchenko theory the normalization
matrices for the bound states are obtained by
normalizing the Jost solution $f(k,x)$ at the bound states,
whereas in the Gel'fand--Levitan theory
the normalization
matrices for the bound states are obtained by
normalizing the regular solution $\varphi(k,x)$ at the bound states.
Thus, the Marchenko normalization matrices and
the corresponding normalized matrix solutions
to the Schr\"odinger equation
are different from
the Gel'fand--Levitan normalization matrices and
the corresponding normalized matrix solutions.
We use 
$M_j$ and $\Psi_j(x)$ to denote the $n\times n$
Marchenko normalization matrix and
the corresponding $n\times n$ normalized matrix solution
at the bound state with $k=i\kappa_j,$ respectively.
Similarly, we use
$C_j$ and $\Phi_j(x)$ to denote the
Gel'fand--Levitan normalization matrix and
the corresponding normalized matrix solution
at the bound state with $k=i\kappa_j,$ respectively.
In this section we
introduce the $n\times n$ matrices $C_j$ and $\Phi_j(x)$
in a similar way $M_j$ and $\Psi_j(x)$ are
defined \cite{AW2021}.

We remark that the $n\times n$ matrices
$J(i\kappa_j)$ and $J(i\kappa_j)^\dagger$ have the same rank, and their
common rank is equal to $n-m_j.$ We see that
when $n=1$ we must have $m_j=1,$ in which case
$J(i\kappa_j)$ and $J(i\kappa_j)^\dagger$ are both equal to the
scalar zero. If $n\ge 2,$
then the kernels
$\text{\rm{Ker}}[J(i\kappa_j)]$ and
$\text{\rm{Ker}}[J(i\kappa_j)^\dagger]$
are in general different, but they have the same
dimension equal to $m_j.$
We use $Q_j$ and $P_j$ to denote the orthogonal projections
onto $\text{\rm{Ker}}[J(i\kappa_j)]$ and
$\text{\rm{Ker}}[J(i\kappa_j)^\dagger],$ respectively.
By the definition of an orthogonal projection, we have
\begin{equation}
\label{3.2}
P_j^2=P_j^\dagger=P_j,\quad Q_j^2=Q_j^\dagger=Q_j.
\end{equation}

Let us use $\mathbf M^+$ to denote the Moore--Penrose inverse
of a matrix
$\mathbf M.$ Even though the Moore--Penrose inverse is uniquely defined for any matrix \cite{BG2003,CM2009}, in
our paper we only deal with Moore--Penrose inverses of square matrices.
As indicated in Definitions~1.12 and 1.13 and Theorem~1.1.1 of \cite{CM2009},
the matrix $\mathbf M^+$ is the Moore--Penrose inverse of the matrix $\mathbf M$ if the matrices
$\mathbf M$ and $\mathbf M^+$ satisfy the four equalities given by
\begin{equation}
\label{3.3}
\mathbf M \mathbf M^+ \mathbf M=\mathbf M,\quad \mathbf M^+ \mathbf M \mathbf M^+=\mathbf M^+,\quad 
 (\mathbf M \mathbf M^+)^\dagger=\mathbf M \mathbf M^+,\quad
(\mathbf M^+\mathbf  M)^\dagger=\mathbf M^+ \mathbf M.
\end{equation}

We list the basic properties of the Moore--Penrose inverse of a square matrix in the following theorem.
We refer the reader to
Theorems~1.1.1 and 1.2.1 of
\cite{CM2009} for a proof.

\begin{theorem}
\label{theorem3.1} The Moore--Penrose inverse $\mathbf M^+$ 
of an $n\times n$ matrix $\mathbf M$ has the following properties:

\begin{enumerate}

\item[\text{\rm(a)}] The matrix $\mathbf M^+$ exists and is unique.

\item[\text{\rm(b)}] If $\mathbf M$ is invertible, then $\mathbf M^+=\mathbf M^{-1}.$

\item[\text{\rm(c)}] The Moore--Penrose inverse commutes with the matrix  adjoint, i.e. we have
\begin{equation}
\label{3.4}
 (\mathbf M^+)^\dagger=(\mathbf M^\dagger)^+,
\end{equation}
where we recall that the dagger denotes the matrix adjoint.

\item[\text{\rm(d)}] The matrix $\mathbf M^+ \mathbf M$ is the  orthogonal projection onto $\text{\rm{Ran}}[\mathbf M^+],$ where we use
$\text{\rm{Ran}}$ to denote the range.

\item[\text{\rm(e)}] The matrix $\mathbf M \mathbf M^+$ is the orthogonal projection onto $\text{\rm{Ran}}[\mathbf M].$

\item[\text{\rm(f)}] The Moore--Penrose inverse of $\mathbf M^\dagger \mathbf M$ is expressed in terms of the Moore--Penrose inverse $\mathbf M^+$ as
in the case of the standard matrix inversion, i.e. we have
\begin{equation}
\label{3.5}
(\mathbf M^\dagger \mathbf M)^+=\mathbf M^+(\mathbf M^\dagger)^+.
\end{equation}

\item[\text{\rm(g)}] For any $n\times n$ unitary matrices $\mathbf U_1$ and $\mathbf U_2,$ the matrices 
$\mathbf M$ and $\mathbf M^+$ satisfy
\begin{equation*}
\left( \mathbf U_1 \,\mathbf M \,\mathbf U_2\right)^+=\mathbf  U_2^\dagger\, \mathbf M^+\, \mathbf U_1^\dagger.
\end{equation*}

\end{enumerate}
\end{theorem}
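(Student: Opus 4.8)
The plan is to take the four Penrose identities in \eqref{3.3} as the defining characterization of $\mathbf M^+$ and to reduce every claim to two foundational facts proved first: existence and uniqueness. For existence I would invoke the singular value decomposition, writing $\mathbf M=\mathbf U\,\Sigma\,\mathbf V^\dagger$ with $\mathbf U,\mathbf V$ unitary and $\Sigma$ diagonal with nonnegative entries, and then exhibit the explicit candidate $\mathbf V\,\Sigma^+\,\mathbf U^\dagger$, where $\Sigma^+$ is obtained by inverting the nonzero diagonal entries of $\Sigma$. Since $\Sigma\,\Sigma^+$ and $\Sigma^+\,\Sigma$ are diagonal projections, a direct substitution verifies all four identities in \eqref{3.3}. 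For uniqueness I would suppose that two matrices $\mathbf X$ and $\mathbf Y$ both satisfy \eqref{3.3} with $\mathbf M$ and run the chain $\mathbf X=\mathbf X\mathbf M\mathbf X=\mathbf X(\mathbf M\mathbf X)^\dagger=\mathbf X\mathbf X^\dagger\mathbf M^\dagger$, then rewrite $\mathbf M^\dagger=(\mathbf M\mathbf Y\mathbf M)^\dagger=\mathbf M^\dagger\mathbf M\mathbf Y$ via the first and third identities for $\mathbf Y$, so that $\mathbf X=\mathbf X\mathbf M\mathbf Y$; a parallel computation gives $\mathbf Y=\mathbf X\mathbf M\mathbf Y$, whence $\mathbf X=\mathbf Y$. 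This proves (a) and, more importantly, makes uniqueness the master tool for the remaining parts.

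With uniqueness in hand, parts (b), (c), and (g) all follow the same template: produce the claimed right-hand side as a candidate inverse, verify that it satisfies the four identities \eqref{3.3} for the relevant matrix, and conclude equality. For (b) one checks that $\mathbf M^{-1}$ trivially satisfies \eqref{3.3} when $\mathbf M$ is invertible, since $\mathbf M\mathbf M^{-1}=\mathbf M^{-1}\mathbf M=I$ is Hermitian. For (c) I would take the adjoint of each of the four identities in \eqref{3.3}: the first two become the corresponding identities for the pair $(\mathbf M^\dagger,(\mathbf M^+)^\dagger)$, while the Hermiticity of $\mathbf M\mathbf M^+$ and $\mathbf M^+\mathbf M$ yields the two projection identities for the new pair, so $(\mathbf M^+)^\dagger$ is the Moore--Penrose inverse of $\mathbf M^\dagger$ and \eqref{3.4} follows. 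For (g), substituting the candidate $\mathbf U_2^\dagger\mathbf M^+\mathbf U_1^\dagger$ and collapsing the unitary factors via $\mathbf U_i^\dagger\mathbf U_i=I$ reduces each of the four checks to the already-known identities \eqref{3.3} for $\mathbf M$.

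For parts (d) and (e) I would argue directly rather than through a candidate inverse. The matrix $\mathbf M^+\mathbf M$ is Hermitian by the fourth identity in \eqref{3.3} and idempotent because $(\mathbf M^+\mathbf M)^2=\mathbf M^+(\mathbf M\mathbf M^+\mathbf M)=\mathbf M^+\mathbf M$ by the first identity, hence it is an orthogonal projection; to identify its range I would note $\text{\rm{Ran}}[\mathbf M^+\mathbf M]\subseteq\text{\rm{Ran}}[\mathbf M^+]$ trivially, while the second identity gives $\mathbf M^+\mathbf M(\mathbf M^+x)=\mathbf M^+x$ for every $x$, so $\mathbf M^+\mathbf M$ fixes $\text{\rm{Ran}}[\mathbf M^+]$ pointwise and the two ranges coincide. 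Part (e) is the identical argument with $\mathbf M$ and $\mathbf M^+$ interchanged, using the third and first identities. For (f) the cleanest route is again the singular value decomposition: with $\mathbf M=\mathbf U\Sigma\mathbf V^\dagger$ one computes $\mathbf M^\dagger\mathbf M=\mathbf V\Sigma^2\mathbf V^\dagger$, so $(\mathbf M^\dagger\mathbf M)^+=\mathbf V(\Sigma^+)^2\mathbf V^\dagger$, and separately $\mathbf M^+(\mathbf M^\dagger)^+=\mathbf V\Sigma^+\mathbf U^\dagger\,\mathbf U\Sigma^+\mathbf V^\dagger=\mathbf V(\Sigma^+)^2\mathbf V^\dagger$, which matches and gives \eqref{3.5}.

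Since essentially every part is a bookkeeping verification against \eqref{3.3}, there is no deep obstacle; the real work is front-loaded into the two foundational facts. The main subtlety is the uniqueness argument, whose algebraic chain must thread the four identities in exactly the right order, and the decision to keep the singular value decomposition as an auxiliary tool so that (f)---where a direct verification is noticeably more tedious because $(\mathbf M^\dagger\mathbf M)^+$ mixes both $\mathbf M$ and $\mathbf M^\dagger$---collapses to a one-line diagonal computation.
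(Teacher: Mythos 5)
Your proposal is correct. Note that the paper itself offers no proof of this theorem: it simply cites Theorems~1.1.1 and 1.2.1 of the Campbell--Meyer reference, so there is no argument in the paper to compare against step by step. What you have written is the standard Penrose-style proof that those references contain: existence via the singular value decomposition, uniqueness via the algebraic chain threading the four identities in \eqref{3.3}, and then each of (b), (c), (g) by exhibiting a candidate and invoking uniqueness, with (d) and (e) handled directly as Hermitian idempotents whose fixed-point sets are identified. All of these steps check out; in particular your uniqueness chain $\mathbf X=\mathbf X\mathbf M\mathbf X=\mathbf X(\mathbf M\mathbf X)^\dagger=\mathbf X\mathbf X^\dagger\mathbf M^\dagger$ together with $\mathbf M^\dagger=\mathbf M^\dagger\mathbf M\mathbf Y$ correctly yields $\mathbf X=\mathbf X\mathbf M\mathbf Y=\mathbf Y$. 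Two small bookkeeping points: your SVD computation for (f) quietly uses (g) (to pass the Moore--Penrose inverse through the unitary conjugation $\mathbf V\Sigma^2\mathbf V^\dagger$) and the fact that the Moore--Penrose inverse of a nonnegative diagonal matrix is computed entrywise, so $(\Sigma^2)^+=(\Sigma^+)^2$; since you establish (g) independently by direct verification against \eqref{3.3}, there is no circularity, but it would be worth stating the dependence explicitly. The choice to keep the SVD available is a reasonable trade-off: it makes existence and (f) one-line diagonal computations at the cost of importing the decomposition, whereas a purely identity-based verification of (f) is possible but, as you say, noticeably more tedious.
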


At times we need to use the $x$-derivative of the Moore--Penrose inverse of a square matrix when that square matrix 
contains the independent variable $x$ in a special way. The following proposition provides the appropriate derivative formula
that we need in our paper in that special case.

\begin{proposition}
\label{proposition3.2}
Assume that $\mathbb C^n$ has the direct-sum decomposition into any two of its subspaces $\mathbb D_1$ and $\mathbb D_2$ as
\begin{equation*}
\mathbb C^n= \mathbb D_1 \oplus \mathbb D_2.
\end{equation*}
Let $\mathbf M(x)$ be an $n\times n$ matrix-valued differentiable function of $x\in\mathbb R^+.$
Suppose that $\mathbf M(x)$ has the direct-sum decomposition given by
\begin{equation*}
\mathbf M(x)=\mathbf M_1(x)\oplus \mathbf M_2, \qquad x \in \mathbb R^+,
\end{equation*}
where the components $\mathbf M_1(x)$ and $\mathbf M_2$ belong to $\mathbb D_1$ and $\mathbb D_2,$ respectively,
in such a way that 
$\mathbf M_1(x)$ is invertible on $\mathbb D_1$ and that $\mathbf M_2$ is independent of $x.$
We have the following:

\begin{enumerate}

\item[\text{\rm(a)}] 

The Moore--Penrose inverse $\mathbf M(x)^+$ has the direct-sum decomposition given by
\begin{equation}\label{3.9}
\mathbf M(x)^+=\mathbf M_1(x)^{-1}\oplus \mathbf M_2^+,
\end{equation}
where $\mathbf M_1(x)^{-1}$denotes the inverse of the matrix $\mathbf M_1(x)$ on $\mathbb D_1$ and
$\mathbf M_2^+$ denotes the Moore--Penrose inverse of the constant matrix $\mathbf M_2.$

\item[\text{\rm(b)}] 
The $x$-derivative $[\mathbf M(x)^+]'$ is expressed in terms of 
$\mathbf M(x)^+$ and the $x$-derivative $\mathbf M'(x)$ as 
\begin{equation}
\label{3.10}
[\mathbf M(x)^+]'=-\mathbf M(x)^+\, \mathbf M'(x)\, \mathbf M(x)^+,
\end{equation}
which can also be written as
\begin{equation}
\label{3.11}
[\mathbf M(x)^+]'= -\left[\mathbf M_1(x)^{-1}\oplus \mathbf M_2^+\right]\left[\mathbf M_1'(x)\oplus 0 \right]  
\left[\mathbf M_1(x)^{-1}\oplus \mathbf M_2^+\right].
\end{equation}

\end{enumerate}

\end{proposition}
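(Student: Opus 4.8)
The plan is to establish part~(a) by verifying the four Penrose equations \eqref{3.3} for an explicit candidate and appealing to uniqueness, and then to obtain part~(b) by differentiating the resulting block form. I read the decomposition $\mathbb C^n=\mathbb D_1\oplus\mathbb D_2$ as orthogonal---as it is in the intended applications, where $\mathbb D_1$ and $\mathbb D_2$ arise as orthogonal kernel complements and the projections $P_j,Q_j$ of \eqref{3.2} are orthogonal---since the Moore--Penrose inverse is defined through the standard Hermitian inner product and respects the block structure only when $\mathbb D_1\perp\mathbb D_2$. Working in an orthonormal basis adapted to this decomposition, the hypothesis on $\mathbf M(x)$ becomes
\begin{equation*}
\mathbf M(x)=\begin{bmatrix}\mathbf M_1(x)&0\\0&\mathbf M_2\end{bmatrix},
\end{equation*}
with $\mathbf M_1(x)$ invertible and $\mathbf M_2$ constant in $x$; in this basis the orthogonal projections onto $\mathbb D_1$ and $\mathbb D_2$ are block-diagonal, which is precisely what transmits the block structure to the pseudoinverse.

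For part~(a) I would set $\mathbf N(x):=\mathbf M_1(x)^{-1}\oplus\mathbf M_2^+$ and check the four identities of \eqref{3.3} for the pair $(\mathbf M(x),\mathbf N(x))$. Since both matrices are block-diagonal in the adapted basis, every product occurring in \eqref{3.3} is again block-diagonal with vanishing off-diagonal blocks, so each Penrose equation decouples into one equation on $\mathbb D_1$ and one on $\mathbb D_2$. On $\mathbb D_1$ they hold trivially because $\mathbf M_1(x)^{-1}$ is a genuine inverse (this is also the instance of part~(b) of Theorem~\ref{theorem3.1} giving $\mathbf M_1(x)^+=\mathbf M_1(x)^{-1}$), while on $\mathbb D_2$ they hold by the definition of $\mathbf M_2^+$. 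Uniqueness of the Moore--Penrose inverse, part~(a) of Theorem~\ref{theorem3.1}, then forces $\mathbf M(x)^+=\mathbf N(x)$, which is \eqref{3.9}.

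For part~(b) I would differentiate \eqref{3.9} directly. Because $\mathbf M_2$, hence $\mathbf M_2^+$, is independent of $x$, only the $\mathbb D_1$ block contributes, and the classical derivative-of-an-inverse identity gives
\begin{equation*}
[\mathbf M(x)^+]'=[\mathbf M_1(x)^{-1}]'\oplus 0=-\mathbf M_1(x)^{-1}\,\mathbf M_1'(x)\,\mathbf M_1(x)^{-1}\oplus 0.
\end{equation*}
On the other hand, inserting \eqref{3.9} together with $\mathbf M'(x)=\mathbf M_1'(x)\oplus 0$ into the right-hand side of \eqref{3.10} and multiplying block by block reproduces exactly this expression, which proves \eqref{3.10}; formula \eqref{3.11} is then merely the block rendering of the same identity, so nothing further is needed.

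The one point that genuinely requires care is that \eqref{3.10} is \emph{not} the general formula for the derivative of a Moore--Penrose inverse: for an arbitrary differentiable family the correct expression carries additional terms built from the complementary projections $I-\mathbf M(x)\mathbf M(x)^+$ and $I-\mathbf M(x)^+\mathbf M(x)$, and these are generically nonzero whenever the range or the kernel of $\mathbf M(x)$ varies with $x$. The structural reason the proposition holds is that here both are constant, namely $\text{\rm{Ran}}[\mathbf M(x)]=\mathbb D_1\oplus\text{\rm{Ran}}[\mathbf M_2]$ and $\text{\rm{Ker}}[\mathbf M(x)]=\text{\rm{Ker}}[\mathbf M_2]$ independently of $x$, so that by parts~(d) and~(e) of Theorem~\ref{theorem3.1} the projections $\mathbf M(x)\mathbf M(x)^+$ and $\mathbf M(x)^+\mathbf M(x)$ do not depend on $x$ and the correction terms drop out. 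I therefore anticipate no serious obstacle; the main thing to pin down is the orthogonality of the decomposition precisely enough to justify the block-diagonal form of the projections, and thereby of the pseudoinverse, in the first place.
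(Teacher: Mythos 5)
Your proof is correct and follows essentially the same route as the paper: part~(a) is established by verifying the four Penrose identities of \eqref{3.3} for the block-diagonal candidate and invoking uniqueness, and part~(b) by differentiating the block form with the standard derivative-of-an-inverse formula. Your additional observation that the decomposition must be read as orthogonal for the Moore--Penrose inverse to respect the block structure (and that the constancy of $\text{\rm{Ran}}[\mathbf M(x)]$ and $\text{\rm{Ker}}[\mathbf M(x)]$ is what kills the extra terms in the general pseudoinverse derivative formula) is a worthwhile precision that the paper leaves implicit.
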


\begin{proof} The proof of \eqref{3.9} is obtained by verifying that the right-hand side of \eqref{3.9} satisfies 
the four equalities in \eqref{3.3} used in the definition of the Moore--Penrose inverse of 
$\mathbf M(x).$ Hence, the proof of (a) is complete. The proof of (b) is obtained as follows.
Since $\mathbf M_1(x)$ is invertible on $\mathbb D_1,$ we use the standard derivative formula for the matrix inverse of
$\mathbf M_1(x)$ and get
 \begin{equation*}
[\mathbf M_1(x)^{-1}]'= -\mathbf M_1(x)^{-1}\, \mathbf M_1'(x)\,\mathbf  M_1(x)^{-1}.
\end{equation*}
Furthermore, since $\mathbf M_2$ is independent of $x,$ its
$x$-derivative is equal to the zero matrix $0.$
Thus, \eqref{3.11} holds and also its equivalent \eqref{3.10} is valid.
Hence, the proof of (b) is also complete.
\end{proof}

Comparing \eqref{3.2} and \eqref{3.3}
we see that $P_j$ is equal to its own Moore--Penrose inverse and that $Q_j$ is also equal to its own Moore--Penrose inverse, i.e. we have
\begin{equation*}
P_j^+=P_j,\quad Q_j^+=Q_j,\qquad 1\le j\le N.
\end{equation*}
We can explicitly construct $P_j$ and $Q_j$ in terms of
any
orthonormal basis $\{v_j^{(l)}\}_{l=1}^{m_j}$ for $\text{\rm{Ker}}[J(i\kappa_j)^\dagger]$
and any
orthonormal basis $\{w_j^{(l)}\}_{l=1}^{m_j}$ for $\text{\rm{Ker}}[J(i\kappa_j)],$ respectively.
We recall that $m_j$ corresponds to the common dimension of
those two kernels.
Thus, we have
\begin{equation}
\label{3.14}
P_j=\ds\sum_{j=1}^{m_j} v_j^{(l)}\,(v^{(l)})^\dagger,\quad
Q_j=\ds\sum_{j=1}^{m_j} w_j^{(l)}\, (w_j^{(l)})^\dagger.
\end{equation}

The kernels $\text{\rm{Ker}}[J(i\kappa_j)^\dagger]$ and $\text{\rm{Ker}}[J(i\kappa_j)]$
play a key role in the analysis of bound states for the matrix Schr\"odinger operator
associated with \eqref{2.1} and \eqref{2.5}. 
Those two kernels satisfy
\begin{equation}
\label{3.15}
\bC^n=\text{\rm{Ker}}[J(i\kappa_j)^\dagger]\oplus \left(\text{\rm{Ker}}[J(i\kappa_j)^\dagger]\right)^\perp,\qquad 1\le j\le N,
\end{equation}
\begin{equation}
\label{3.16}
\bC^n=\text{\rm{Ker}}[J(i\kappa_j)]\oplus \left(\text{\rm{Ker}}[J(i\kappa_j)]\right)^\perp,\qquad 1\le j\le N,
\end{equation}
\begin{equation*}
\left(\text{\rm{Ker}}[J(i\kappa_j)]\right)^\perp=\text{\rm{Ran}}[J(i\kappa_j)^\dagger],\qquad 1\le j\le N,
\end{equation*}
\begin{equation*}
\left(\text{\rm{Ker}}[J(i\kappa_j)^\dagger]\right)^\perp=\text{\rm{Ran}}[J(i\kappa_j)],\qquad 1\le j\le N,
\end{equation*}
where we use the
superscript $\perp$ to denote the orthogonal complement.

Let us consider the Schr\"odinger equation \eqref{2.1} at the bound state $k=i\kappa_j,$ which is given by
\begin{equation} 
\label{3.19}
-\psi''+V(x)\,\psi=-\kappa_j^2\,\psi,\qquad x\in\mathbb R^+,\quad 1\le j\le N.
\end{equation}
We remark that in general the solution 
$f(i\kappa_j,x)$ to \eqref{3.19} does not satisfy the boundary condition \eqref{2.5}. On the other hand, $f(i\kappa_j,x)$ is exponentially decaying
asymptotically, i.e. we have
\begin{equation}
\label{3.20}
f(i\kappa_j,x)=e^{-\kappa_j x}\left[I+o(1)\right],\qquad x\to+\infty,\quad 1\le j\le N,
\end{equation}
which is seen from \eqref{2.9}. Hence, for large $x$-values
the $n$ columns of
$f(i\kappa_j,x)$ are linearly independent exponentially
decaying solutions to \eqref{3.19} although those columns do not, in general, 
satisfy the boundary condition \eqref{2.5}.
It is desirable to construct an $n\times n$ matrix solution to
\eqref{3.19} satisfying the boundary condition \eqref{2.5}. 
Such a matrix solution, denoted by $\Psi_j(x),$ is obtained by constructing \cite{AW2021}
the $n\times n$ 
Marchenko normalization matrix $M_j$ and by letting
\begin{equation}
\label{3.21}
\Psi_j(x):=f(i\kappa_j,x)\,M_j,\qquad 1\le j\le N,
\end{equation}
where we refer to the $n\times n$ matrix
$\Psi_j(x)$ as the Marchenko
normalized  bound-state solution
to the Schr\"odinger equation \eqref{3.19}
at $k=i\kappa_j.$ The Marchenko normalization matrices $M_j$ are constructed
 in such a way that the matrices $\Psi_j(x)$ defined in \eqref{3.21} for $1\le j\le N$
satisfy the boundary condition \eqref{2.5} and also satisfy
the respective normalization and orthogonalization conditions given by
\begin{equation}
\label{3.22}
\int_0^\infty dx\,\Psi_j(x)^\dagger\,\Psi_j(x)=P_j,\qquad 1\le j\le N,
\end{equation}
\begin{equation}
\label{3.23}
\int_0^\infty dx\,\Psi_j(x)^\dagger\,\Psi_l(x)=0,\qquad j\ne l,
\end{equation} 
where we have $1\le j\le N$ and $1\le l\le N.$
We will see
that each $M_j$  is a nonnegative matrix of rank $m_j.$
In other words, exactly $m_j$ eigenvalues of the matrix
$M_j$ are positive and the remaining $n-m_j$ eigenvalues are zero. Hence, 
although each column of the matrix $\Psi_j(x)$ is a solution to
\eqref{3.19} and satisfies \eqref{2.5}, among the $n$ columns of
$M_j$ we only have exactly $m_j$ linearly independent column-vector solutions to \eqref{3.19} satisfying
the boundary condition \eqref{2.5}. We observe that any nontrivial
square-integrable column-vector solution to \eqref{3.19} satisfying \eqref{2.5} can be written as $\Psi_j(x)\,\nu$ for a constant nonzero
column vector $\nu \in \mathbb C^n.$ 

For the construction of the $n\times n$ Marchenko normalization matrix $M_j$, we refer the reader
to \cite{AM1963} when the boundary condition \eqref{2.5} is purely Dirichlet and 
to Section~3.11 of \cite{AW2021} when the boundary condition \eqref{2.5} is the general selfadjoint
boundary condition. We remark that the construction of the matrix $M_j$
in Section~3.11 of \cite{AW2021} is given by assuming that the potential
$V$ in \eqref{3.19} belongs to $L^1_1(\mathbb R^+).$ However, the same proof applies by replacing
that assumption by the weaker assumption $V \in L^1(\mathbb R^+).$ 

The construction of $M_j$ can be summarized as follows.
With the help of the orthogonal projection matrix
$P_j$ and the Jost solution $f(i\kappa_j,x),$ we form the $n\times n$ matrix $\mathbf A_j$ defined as
\begin{equation}
\label{3.24}
\mathbf A_j:=\int_0^\infty dx\,P_j\,f(i\kappa_j,x)^\dagger\,f(i\kappa_j,x)\,P_j,\qquad 1\le j\le N.
\end{equation}
We then define the $n\times n$ matrix $\mathbf B_j$ as
\begin{equation}
\label{3.25}
\mathbf B_j:=I-P_j+\mathbf A_j,\qquad 1\le j\le N.
\end{equation}
From \eqref{3.2}, \eqref{3.24}, and \eqref{3.25} it follows that $\mathbf A_j$ and $\mathbf B_j$ are both
hermitian. Furthermore, it can be shown \cite{AW2021} that $\mathbf B_j$ is a positive matrix.
Thus, there exists a unique positive hermitian matrix $\mathbf B_j^{1/2}$ so that
\begin{equation}
\label{3.26}\mathbf B_j^{1/2}\,\mathbf B_j^{1/2}=\mathbf B_j,\qquad 1\le j\le N.
\end{equation}
Since $\mathbf B_j^{1/2}$ is positive, its inverse exists and is denoted by
$\mathbf B_j^{-1/2}.$ It can be shown \cite{AW2021} that each of
$\mathbf B_j,$ $\mathbf B_j^{1/2},$ and $\mathbf B_j^{-1/2}$ commutes with
$P_j.$ Then, the Marchenko normalization matrix $M_j$ is defined
as
\begin{equation}
\label{3.27}M_j:=\mathbf B_j^{-1/2}\,P_j,\qquad 1\le j\le N.
\end{equation}

We construct the $n\times n$ matrix-valued Gel'fand--Levitan normalized bound-state solutions
in a manner similar to the construction of  the $n\times n$ matrix-valued Marchenko normalized bound-state solutions.
In analogy with \eqref{3.21} we construct the $n\times n$ 
Gel'fand--Levitan normalization matrix $C_j$ and let
\begin{equation}
\label{3.28}
\Phi_j(x):=\varphi(i\kappa_j,x)\,C_j,\qquad 1\le j\le N,
\end{equation}
where we refer to the $n\times n$ matrix
$\Phi_j(x)$ as the Gel'fand--Levitan normalized bound-state solution
 to the Schr\"odinger equation \eqref{3.19} at $k=i\kappa_j.$
In the construction of the matrix $C_j$ presented below, we observe that
each $C_j$  is a nonnegative matrix of rank $m_j.$ We construct the matrices $C_j$ for $1\le j\le N$ in such a way that the matrices
$\Phi_j(x)$ are square integrable and satisfy the respective normalization and orthogonalization
conditions given by
\begin{equation}
\label{3.29}
\int_0^\infty dx\,\Phi_j(x)^\dagger\,\Phi_j(x)=Q_j,\qquad 1\le j\le N,
\end{equation}
\begin{equation}
\label{3.30}
\int_0^\infty dx\,\Phi_j(x)^\dagger\,\Phi_l(x)=0,\qquad j\ne l,
\end{equation} 
which are the analogs of \eqref{3.22} and \eqref{3.23}, respectively.

Although the regular solution $\varphi(i\kappa_j,x)$ to \eqref{3.19} satisfies the boundary condition \eqref{2.5}, unless we have $m_j=n$ that solution becomes unbounded
as $x\to+\infty,$ and we have \cite{AW2021}
\begin{equation}
\label{3.31}
\varphi(i\kappa_j,x)=O(e^{\kappa_j x}),\qquad x\to+\infty.
\end{equation}
We recall that $m_j$ corresponds to the multiplicity of the bound state at $k=i\kappa_j.$ Thus, in the scalar case, i.e. when $n=1$
the regular solution $\varphi(i\kappa_j,x)$ to \eqref{3.19} is exponentially decaying like
$O(e^{-\kappa_j x})$ as $x\to+\infty.$ In the matrix case, i.e. when $n\ge 2,$
that regular solution in general cannot be square integrable in $x\in\mathbb R^+$ unless $m_j=n.$
We know from Propositions~3.2.1, 3.2.2, and 3.2.3 of \cite{AW2021} that, in general, each column of $\varphi(i\kappa_j,x)$ consists of two components
behaving like $O(e^{\kappa_j x})$ and $O(e^{-\kappa_j x}),$
respectively, as $x\to +\infty.$ The normalization matrix 
$C_j$ allows us to form some appropriate linear combinations of the
columns of $\varphi(i\kappa_j,x)$ in such a way that those linear combinations
behave like $O(e^{-\kappa_j x})$ as $x\to+\infty,$ and that there are
exactly $m_j$ such
linearly independent  combinations of  the columns. 
Unless $m_j=n,$ it is possible that none of the $n$ columns of
$\varphi(i\kappa_j,x)$ behave like
$O(e^{-\kappa_j x})$ as $x\to+\infty.$
On the other hand, we have
\begin{equation*}
\Phi_j(x)=O(e^{-\kappa_j x}),\qquad x\to+\infty.
\end{equation*}
Furthermore, $\Phi_j(x)$ satisfies the boundary condition \eqref{2.5}
as a consequence of \eqref{2.10}. We remark that any nontrivial
square-integrable column-vector solution to \eqref{3.19} satisfying \eqref{2.5} can be written as $\Phi_j(x)\,\nu,$ 
where $\nu$ is a constant nonzero
column vector in $\mathbb C^n.$

We present the construction of the $n\times n$ Gel'fand--Levitan normalization matrix $C_j$ in a similar way
to the construction of the Marchenko normalization matrix $M_j$ outlined in \eqref{3.24}--\eqref{3.27}. With the help of the orthogonal projection matrix
$Q_j$ and the regular solution $\varphi(i\kappa_j,x)$ to \eqref{3.19}, we form the $n\times n$ matrix $\mathbf G_j$ as
\begin{equation}
\label{3.33}\mathbf G_j:=\int_0^\infty dx\,Q_j\,\varphi(i\kappa_j,x)^\dagger\,\varphi(i\kappa_j,x)\,Q_j,\qquad 1\le j\le N.
\end{equation}
It follows from Theorem~3.11.1(e) of \cite{AW2021} that the integral on the right-hand side of \eqref{3.33} is finite.
Next, we introduce the matrix $\mathbf H_j$ as
\begin{equation}
\label{3.34}
\mathbf H_j:=I-Q_j+\mathbf G_j,\qquad 1\le j\le N.
\end{equation}
From \eqref{3.2}, \eqref{3.33}, and \eqref{3.34} it follows that the matrices $\mathbf G_j$ and $\mathbf H_j$ are both
hermitian. Moreover, $\mathbf H_j$ is a positive matrix, and the proof for this is similar to the proof \cite{AW2021}
that $\mathbf B_j$ defined in \eqref{3.25} is a positive matrix, which is given in the proof of Proposition~3.11.10 of \cite{AW2021}.
Thus, there exists a unique positive hermitian matrix $\mathbf H_j^{1/2}$ so that
\begin{equation}
\label{3.35}\mathbf H_j^{1/2}\,\mathbf H_j^{1/2}=\mathbf H_j,\qquad 1\le j\le N.
\end{equation}
Since the matrix $\mathbf H_j^{1/2}$ is positive, its inverse exists and is denoted by
$\mathbf H_j^{-1/2}.$ We note that each of
$\mathbf H_j,$ $\mathbf H_j^{1/2},$ and $\mathbf H_j^{-1/2}$ commutes with
$Q_j,$ and the proof for this is similar to the proof \cite{AW2021} regarding the commutation of each of
$\mathbf B_j,$ $\mathbf B_j^{1/2},$ and $\mathbf B_j^{-1/2}$ with
$P_j,$ which is given in the proof of Proposition~3.11.10 of \cite{AW2021}.
The Gel'fand--Levitan normalization matrix $C_j$ is defined
as
\begin{equation}
\label{3.36}
C_j:=\mathbf H_j^{-1/2}\,Q_j,\qquad 1\le j\le N,
\end{equation}
which is analogous to the definition of the Marchenko normalization matrix $M_j$ given in \eqref{3.27}.
 From \eqref{3.36} it follows that $C_j$ is hermitian and nonnegative and has rank
 equal to $m_j,$ which is the same as the rank of $Q_j.$

We recall that $\varphi(i\kappa_j,x)$ for each $j$ with $1\le j\le N$ satisfies the boundary condition \eqref{2.5}. It follows from the second equality in 
\eqref{3.14} and Theorem~3.11.1 of \cite{AW2021} that each column of $\varphi (i\kappa_j,x) Q_j$ decreases exponentially as $O(e^{-\kappa_j x})$
as $x\to +\infty.$ Thus, the integral on the right-hand side of \eqref{3.33} is finite.
The normalization condition
\eqref{3.29} follows from \eqref{3.33}, \eqref{3.34}, and the fact that $Q_j$  is an orthogonal projection that commutes with $\mathbf H_j,$ 
$\mathbf H_j^{1/2},$ and $\mathbf H^{-1/2}.$ The orthogonality in \eqref{3.30} can be established
in a similar way \eqref{3.23} is established in the proof of Proposition~5.10.5 in \cite{AW2021}. 
Since $\varphi(i\kappa_j,x)\,C_j$ satisfies the boundary condition \eqref{2.5} and decays exponentially as 
$O(e^{-\kappa_j x})$ as $x\to+\infty,$ it is square integrable and hence the column vector $\varphi(i\kappa_j,x)\,C_j\, v,$ where 
$v$ is any column vector in $\mathbb C^n,$ is a bound state 
with the energy $-\kappa_j^2.$ From Theorem~3.11.1 in \cite{AW2021} we know that any bound state with the energy $-\kappa_j^2$ is of the form 
$\varphi(i\kappa_j,x)\,C_j \,v,$ where  $v$ is a column vector in $\mathbb C^n.$ Moreover, 
Theorem~3.11.1 of \cite{AW2021} indicates that there are precisely $m_j$ linearly independent such column vectors, and hence $m_j$ is the multiplicity of the bound state 
at $k=i\kappa_j.$ We recall that $m_j$ is the positive integer equal to the dimension of the kernel of $J(i\kappa_j).$ 

We remark that the integral in \eqref{3.24} remains finite even in the absence
of $P_j$ there, i.e. we have
\begin{equation*}
\int_0^\infty dx\,f(i\kappa_j,x)^\dagger\,f(i\kappa_j,x)<+\infty,
\end{equation*}
which can be proved with the help of \eqref{3.20}. On the other hand,
if we remove $Q_j$ in the integral in \eqref{3.33} then the resulting integral
is in general not convergent when $n\ge 2,$ i.e. the integral
\begin{equation*}
\int_0^\infty dx\,\varphi(i\kappa_j,x)^\dagger\,\varphi(i\kappa_j,x),
\end{equation*}
may not be finite because of \eqref{3.31}, unless $m_j=n,$ in which case we have $Q_j=I.$

With the help of \eqref{3.27} and \eqref{3.36}, let us write \eqref{3.21} and \eqref{3.28} as
\begin{equation}
\label{3.39}\Psi_j(x)=f(i\kappa_j,x)\,\mathbf B_j^{-1/2}\,P_j,\quad
\Phi_j(x)=\varphi(i\kappa_j,x)\,\mathbf H_j^{-1/2}\,Q_j,
\end{equation}
where we recall that $\mathbf B_j^{-1/2}$ and $P_j$ commute with each other and that
$\mathbf H_j^{-1/2}$ and $Q_j$ commute with each other.
We can write the orthonormality relations \eqref{3.22}, \eqref{3.23}, \eqref{3.29}, and \eqref{3.30} in a compact form as
\begin{equation*}
\ds\int_0^\infty dx\, \Psi_j(x)^\dagger\,\Psi_j(x)=\delta_{jl}\,P_l,\quad
\ds\int_0^\infty dx\, \Phi_j(x)^\dagger\,\Phi_j(x)=\delta_{jl}\,Q_l,
\end{equation*}
where $\delta_{jl}$ denotes the Kronecker delta,
and we have $1\le j\le N$ and $1\le l\le N.$

\section{The bound-state dependency matrices}
\label{section4}

In this section we assume that
the potential $V(x)$ appearing in \eqref{2.1} satisfies \eqref{2.2} and \eqref{2.3}
and that the corresponding matrix-valued Schr\"odinger operator is associated
with the selfadjoint boundary condition described in \eqref{2.5}--\eqref{2.7}.
Although the Schr\"odinger operator is selfadjoint, the number of bound states
may be infinite. For each bound state, we introduce the concept of the
bound-state dependency matrix, which relates the Gel'fand--Levitan normalized bound-state solution $\Phi_j(x)$ 
to the Marchenko normalized bound-state solution $\Psi_j(x).$
We describe the basic
properties of the bound-state dependency matrix. In the scalar case, i.e. when $n=1,$ it is appropriate to refer to dependency matrices 
as dependency constants.
We refer the reader to \cite{AE2022,AEU2023,AEU2023a} for the use of dependency constants in some scattering problems associated with nonselfadjoint systems,
especially when the bound states are not necessarily simple, and for the use of the time evolutions of dependency constants
arising in the analysis of integrable systems.

We recall that a bound-state solution at $k=i\kappa_j$ is a square-integrable column-vector 
solution to \eqref{3.19} satisfying the boundary condition \eqref{2.5}.
We know that there are exactly $m_j$ linearly independent
bound-state solutions at $k=i\kappa_j,$ where $m_j$ is a positive integer corresponding to the 
multiplicity of the bound state and we have $1\le m_j\le n.$
At the bound state with $k=i\kappa_j,$
 from \eqref{3.21} we know that the $n\times n$ Marchenko normalized bound-state solution
$\Psi_j(x)$ has $m_j$ linearly independent columns and that each column of $\Psi_j(x)$
satisfies \eqref{3.19} and the boundary condition \eqref{2.5}.
Similarly, we know from \eqref{3.28} that that the $n\times n$ Gel'fand--Levitan normalized bound-state solution
$\Phi_j(x)$ has $m_j$ linearly independent columns and that each column of $\Phi_j(x)$
satisfies \eqref{3.19} and the boundary condition \eqref{2.5}.
Thus, there exists a constant $n\times n$ matrix relating the columns of 
the matrix $\Psi_j(x)$ to the columns of the matrix $\Phi_j(x).$
Let us introduce the $n\times n$ matrix $D_j$
in such a way that
\begin{equation}
\label{4.1}
\Phi_j(x)=\Psi_j(x)\,D_j, \qquad 1\le j\le N,
\end{equation}
where we recall that the number $N$ of bound states may be infinite.
Since $D_j$ is an $n\times n$ matrix relating 
the two $n\times n$  matrices with equal ranks $m_j,$ 
the matrix $D_j$ in \eqref{4.1} cannot be unique
unless we impose a further restriction on it.
Without loss of generality, we can choose that
additional restriction as
\begin{equation}
\label{4.2}
D_j=P_j\,D_j,
\end{equation}
where we recall that the $n\times n$ matrix $P_j$ is the orthogonal projection onto $\text{\rm{Ker}}[J(i\kappa_j)^\dagger]$ and
appearing in \eqref{3.2}. We refer to 
$D_j$ appearing in \eqref{4.1} and \eqref{4.2} as the dependency matrix 
associated with the bound state at $k=i\kappa_j.$

We construct the dependency matrix $D_j$ satisfying \eqref{4.1} and \eqref{4.2} as follows. 
From Theorem~3.11.1 of \cite{AW2021} we know that
there exists a bijection $\alpha\mapsto\beta$ from the kernel of $J(i\kappa_j)$ onto the kernel of $J(i\kappa_j)^\dagger$ 
in such a way that
\begin{equation}\label{4.3}
\varphi(i\kappa_j,x)\,\alpha= f(i\kappa_j,x)\,\beta,\qquad 1\le j\le N.
\end{equation}
Let  $\{w_j^{(l)}\}_{l=1}^{m_j}$ be an orthonormal basis for the subspace $\text{\rm{Ker}}[J(i\kappa_j)],$ and let $\beta_j^{(l)}$ be the
column vector in the subspace $\text{Ker}[J(i\kappa_j)^\dagger]$ corresponding to
the column vector $w_j^{(l)}$ under the aforementioned bijection. 
Thus, \eqref{4.3} implies that for $1\le j\le N$ we have
\begin{equation*}
\varphi(i\kappa_j,x)\,w_j^{(l)}= f(i\kappa_j,x)\,\beta_j^{(l)},\qquad 1\le l\le m_j,
\end{equation*}
from which we get
\begin{equation}\label{4.5}
\varphi(i\kappa_j,x)\,w_j^{(l)}\,(w_j^{(l)})^\dagger= f(i\kappa_j,x)\,\beta_j^{(l)}\,(w_j^{(l)})^\dagger,\qquad 1\le l\le m_j.
\end{equation}
Summing over $l$ in \eqref{4.5}, with the help of the second equality of \eqref{3.14}, we obtain
\begin{equation}\label{4.6}
\varphi(i\kappa_j,x)\,Q_j= f(i\kappa_j,x)\ds\sum_{l=1}^{m_j}\beta_j^{(l)}\,(w_j^{(l)})^\dagger,\qquad 1\le j\le N.
\end{equation}
Postmultiplying each side of \eqref{4.6} by the invertible matrix $\mathbf H_j^{-1/2}$ appearing in \eqref{3.36}, we get
\begin{equation}\label{4.7}
\varphi(i\kappa_j,x)\,Q_j\,\mathbf H_j^{-1/2}= f(i\kappa_j,x)\ds\sum_{l=1}^{m_j}\beta_j^{(l)}\,(w_j^{(l)})^\dagger\,\mathbf H_j^{-1/2},\qquad 1\le j\le N.
\end{equation}
We already know that the matrices $Q_j$ and $\mathbf H_j^{-1/2}$ commute, and hence \eqref{4.7} is equivalent to
\begin{equation}\label{4.8}
\varphi(i\kappa_j,x)\,\mathbf H_j^{-1/2}\,Q_j= f(i\kappa_j,x)\ds\sum_{l=1}^{m_j}\beta_j^{(l)}\,(w_j^{(l)})^\dagger\,\mathbf H_j^{-1/2},\qquad 1\le j\le N.
\end{equation}
Postmultiplying both sides of \eqref{4.8} by $Q_j$ and then using $Q_j^2=Q_j,$ we obtain
\begin{equation}\label{4.9}
\varphi(i\kappa_j,x)\,\mathbf H_j^{-1/2}\,Q_j= f(i\kappa_j,x)\ds\sum_{l=1}^{m_j}\beta_j^{(l)}\,(w_j^{(l)})^\dagger\,\mathbf H_j^{-1/2}\,Q_j,\qquad 1\le j\le N.
\end{equation}
Using the second equality of \eqref{3.39} on the left-hand side of \eqref{4.9}, we get
\begin{equation}\label{4.10}
\Phi_j(x)= f(i\kappa_j,x)\ds\sum_{l=1}^{m_j}\beta_j^{(l)}\,(w_j^{(l)})^\dagger\,\mathbf H_j^{-1/2}\,Q_j,\qquad 1\le j\le N.
\end{equation}
Hence, from \eqref{4.1} and \eqref{4.10} we have
\begin{equation}\label{4.11}
\Psi_j(x)\,D_j= f(i\kappa_j,x)\ds\sum_{l=1}^{m_j}\beta_j^{(l)}\,(w_j^{(l)})^\dagger\,\mathbf H_j^{-1/2}\,Q_j,\qquad 1\le j\le N.
\end{equation}
Using \eqref{3.21} and \eqref{3.27} on the left-hand side of \eqref{4.11}, we obtain
\begin{equation}\label{4.12}
 f(i\kappa_j,x)\,\mathbf B_j^{-1/2} \,P_j\,D_j= f(i\kappa_j,x)\ds\sum_{l=1}^{m_j}\beta_j^{(l)}\,(w_j^{(l)})^\dagger\,\mathbf H_j^{-1/2}\,Q_j,\qquad 1\le j\le N.
\end{equation}
From \eqref{3.20} we see that the matrix $f(i\kappa_j,x)$ is invertible for large $x$-values. Hence, \eqref{4.12} yields
\begin{equation}\label{4.13}
\mathbf B_j^{-1/2} \,P_j\,D_j= \ds\sum_{l=1}^{m_j}\beta_j^{(l)}\,(w_j^{(l)})^\dagger\,\mathbf H_j^{-1/2}\,Q_j,\qquad 1\le j\le N.
\end{equation}
We recall that the matrix $\mathbf B_j^{-1/2}$ is invertible and its inverse is given by  $\mathbf B_j^{1/2}.$
Premultiplying both sides of \eqref{4.13} by
$P_j\,\mathbf B_j^{1/2}$ and then using \eqref{4.2} in the resulting equation, we obtain
\begin{equation}\label{4.14}
D_j= P_j \,\mathbf B_j^{1/2} \sum_{l=1}^{m_j} \beta_j^{(l)}  \,(w_j^{(l)})^\dagger \,\mathbf H_j^{-1/2}\, Q_j,\qquad 1\le j\le N.
\end{equation}

The next theorem shows that the dependency matrix
$D_j$ satisfying \eqref{4.1} and \eqref{4.2} is unique, and it presents some basic properties of $D_j.$
We remark that \eqref{4.2} alone cannot uniquely determine
$D_j$ because any constant scalar multiple of $D_j$
satisfying \eqref{4.2} also
satisfies \eqref{4.2}.

 \begin{theorem}
\label{theorem4.1}
Assume that
the potential $V$ appearing in \eqref{2.1} satisfies \eqref{2.2} and \eqref{2.3}
and that the corresponding matrix-valued Schr\"odinger operator is associated
with the selfadjoint boundary condition described in \eqref{2.5}--\eqref{2.7}.
Let $D_j$ be the $n\times n$ dependency matrix satisfying
\eqref{4.1} and \eqref{4.2}
associated with the bound state at $k=i\kappa_j.$ We have the following:

\begin{enumerate}
\item[\text{\rm(a)}] The matrix $D_j$ is uniquely determined.

\item[\text{\rm(b)}] The matrix
 $D_j$ satisfies
\begin{equation}
\label{4.15}
D_j=D_j\,Q_j,
\end{equation}
where we recall that the $n\times n$ matrix $Q_j$ is the orthogonal projection
onto $\text{\rm{Ker}}[J(i\kappa_j)]$ and
satisfies the second set of equalities in \eqref{3.2}.

 \item[\text{\rm(c)}]
The matrix product $D_j^\dagger D_j$
is an orthogonal projection, and we have
\begin{equation}
\label{4.16}
D_j^\dagger D_j=Q_j
.\end{equation}

\item[\text{\rm(d)}]
The matrix product $D_j D_j^\dagger$
is also an orthogonal projection, and we have
\begin{equation}
\label{4.17}
D_j D_j^\dagger=P_j,\end{equation}
where we recall that the $n\times n$ matrix $P_j$ is the orthogonal projection
onto $\text{\rm{Ker}}[J(i\kappa_j)^\dagger]$ and
satisfies the first set of equalities in \eqref{3.2}.

\item[\text{\rm(e)}]
The matrices $D_j$ and $D_j^\dagger$
each have rank $m_j,$ which is the same as the common rank
of the orthogonal projection matrices $P_j$ and $Q_j.$

\item[\text{\rm(f)}] The Moore--Penrose inverse of the matrix $D_j$ is
equal to $D_j^\dagger,$ i.e. we have
\begin{equation}
\label{4.18}
D_j^+=D_j^\dagger.\end{equation}

\item[\text{\rm(g)}] The $n\times n$ Marchenko normalized bound-state solution $\Psi_j(x)$ is related to
the $n\times n$ Gel'fand--Levitan normalized bound-state solution $\Phi_j(x)$ as
\begin{equation}
\label{4.19}
\Psi_j(x)=\Phi_j(x) D_j^\dagger.\end{equation}

\item[\text{\rm(h)}] The matrix $D_j$ satisfies
\begin{equation}
\label{4.20}
D_j D_j^\dagger D_j=D_j,\quad
D_j^\dagger D_j D_j^\dagger=D_j^\dagger.
\end{equation}

\end{enumerate}
\end{theorem}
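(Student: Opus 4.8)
The plan is to reduce the entire theorem to three core identities, \eqref{4.15}, \eqref{4.16}, and \eqref{4.17}, together with the uniqueness claim (a), and then to obtain the remaining parts (e)--(h) as algebraic consequences of these identities and the orthogonal-projection relations \eqref{3.2} and \eqref{4.2}.

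I would first dispatch the easy parts. For (a), if $D_j$ and $D_j'$ both satisfy \eqref{4.1} and \eqref{4.2}, then $E:=D_j-D_j'$ obeys $\Psi_j(x)\,E=0$ for all $x$ and $E=P_jE$; integrating $E^\dagger\Psi_j^\dagger\Psi_jE$ and invoking \eqref{3.22} gives $0=E^\dagger P_jE=(P_jE)^\dagger(P_jE)$, whence $P_jE=0$ and $E=0$. Part (b), namely \eqref{4.15}, is read off from the explicit formula \eqref{4.14}, whose rightmost factor is $Q_j$ with $Q_j^2=Q_j$. Part (c), namely \eqref{4.16}, follows by evaluating \eqref{3.29} through \eqref{4.1}: $Q_j=\int_0^\infty dx\,\Phi_j^\dagger\Phi_j=D_j^\dagger\big(\int_0^\infty dx\,\Psi_j^\dagger\Psi_j\big)D_j=D_j^\dagger P_jD_j=D_j^\dagger D_j$, where the final step uses $P_jD_j=D_j$ from \eqref{4.2}.

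The crux is part (d), the identity \eqref{4.17}, and this is the step I expect to be the main obstacle. Attacking it directly through \eqref{4.14} would force one to control the interaction of $\mathbf B_j^{1/2}$, $\mathbf H_j^{-1/2}$, and the map $\sum_l\beta_j^{(l)}(w_j^{(l)})^\dagger$ across the two distinct kernels, which is awkward. Instead I would first record the cross-integral $\int_0^\infty dx\,\Phi_j^\dagger\Psi_j$, which by \eqref{4.1}, \eqref{3.22}, and the adjoint of \eqref{4.2} equals $D_j^\dagger$. Setting $R(x):=\Psi_j(x)-\Phi_j(x)\,D_j^\dagger$ and expanding $\int_0^\infty dx\,R^\dagger R$ with the help of this cross-integral, its adjoint, \eqref{3.29}, and \eqref{4.15} collapses the result to $P_j-D_jD_j^\dagger$. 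Since the left-hand side is a nonnegative matrix, this gives $D_jD_j^\dagger\le P_j$. On the other hand, \eqref{4.16} shows that the nonzero eigenvalues of the Hermitian matrix $D_jD_j^\dagger$ coincide with those of $D_j^\dagger D_j=Q_j$, so $D_jD_j^\dagger$ has spectrum in $\{0,1\}$ and is an orthogonal projection of rank $m_j$, equal to that of $P_j$. A projection dominated by a projection of the same rank must equal it, yielding \eqref{4.17}. Then $\int_0^\infty dx\,R^\dagger R=P_j-D_jD_j^\dagger=0$ forces $R\equiv0$, which is (g), i.e. \eqref{4.19}; and (e) is immediate from \eqref{4.16}, since $\operatorname{rank}D_j=\operatorname{rank}(D_j^\dagger D_j)=\operatorname{rank}Q_j=m_j$, with the same count for $D_j^\dagger$.

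The remaining parts are purely algebraic. Using \eqref{4.17} with \eqref{4.2} gives $D_jD_j^\dagger D_j=P_jD_j=D_j$, and using \eqref{4.16} with the adjoint of \eqref{4.15} gives $D_j^\dagger D_jD_j^\dagger=Q_jD_j^\dagger=D_j^\dagger$; these are exactly \eqref{4.20}, proving (h). For (f) I would verify the four defining relations \eqref{3.3} with $\mathbf M=D_j$ and $\mathbf M^+=D_j^\dagger$: the first two are the identities just obtained, while $(D_jD_j^\dagger)^\dagger=P_j=D_jD_j^\dagger$ and $(D_j^\dagger D_j)^\dagger=Q_j=D_j^\dagger D_j$ follow from \eqref{4.17}, \eqref{4.16}, and \eqref{3.2}. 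By the uniqueness of the Moore--Penrose inverse recorded in Theorem~\ref{theorem3.1}(a), the identity \eqref{4.18} follows.
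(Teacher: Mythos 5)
Your proposal is correct; all eight parts go through. Parts (a)--(c), (e), (f), and (h) essentially coincide with the paper's argument (for (a) you replace the paper's pointwise step --- invertibility of $f(i\kappa_j,x)$ for large $x$ together with that of $\mathbf B_j^{-1/2}$ --- by the $L^2$ identity $\int_0^\infty dx\, E^\dagger\Psi_j^\dagger\Psi_j E=E^\dagger P_j E$ from \eqref{3.22}; both routes reduce to $P_jE=0$ and then invoke \eqref{4.2}). The genuine divergence is in (d) and (g). The paper proves $(D_jD_j^\dagger)^2=D_jD_j^\dagger$ by the direct computation $D_j(D_j^\dagger D_j)D_j^\dagger=D_jQ_jD_j^\dagger=D_jD_j^\dagger$, obtains the containment $\text{\rm{Ran}}[D_jD_j^\dagger]\subset\text{\rm{Ran}}[P_j]$ purely algebraically from $P_jD_jD_j^\dagger=D_jD_j^\dagger$ (a consequence of \eqref{4.2}), and then closes with the rank count $\text{\rm{rank}}[D_jD_j^\dagger]=\text{\rm{rank}}[D_j]=\text{\rm{rank}}[Q_j]=\text{\rm{rank}}[P_j]$; it then gets (g) separately from $\Psi_jP_j=\Psi_j$ and \eqref{4.17}. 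You instead extract the domination $D_jD_j^\dagger\le P_j$ from the positivity of $\int_0^\infty dx\,R(x)^\dagger R(x)=P_j-D_jD_j^\dagger$ with $R=\Psi_j-\Phi_jD_j^\dagger$, and the projection property of $D_jD_j^\dagger$ from the fact that $D_jD_j^\dagger$ and $D_j^\dagger D_j=Q_j$ share nonzero spectrum, before closing with the same rank count. Your route is slightly heavier analytically but unifies (d) and (g): once \eqref{4.17} is known, the vanishing of $\int_0^\infty dx\,R^\dagger R$ forces $R\equiv0$ and hence \eqref{4.19} at no extra cost. The paper's route is leaner in that it never needs the cross-integral $\int_0^\infty dx\,\Phi_j^\dagger\Psi_j=D_j^\dagger$ and stays entirely within finite-dimensional linear algebra after \eqref{4.16} is in hand. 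Either way the logical skeleton --- establish \eqref{4.15} and \eqref{4.16} first, deduce \eqref{4.17} by a rank argument, and obtain (e), (f), (h) as algebraic corollaries via the four Moore--Penrose identities \eqref{3.3} --- is the same.
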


\begin{proof}
The proof of (a) can be given as follows. The existence of $D_j$ follows from the explicit expression
given in \eqref{4.14}. To prove its uniqueness, 
let us assume that $D_j$ and $\hat D_j$ both satisfy
\eqref{4.1} and \eqref{4.2} so that we have
\begin{equation}
\label{4.21}
\Phi_j(x)=\Psi_j(x)\,D_j,\quad \Phi_j(x)=\Psi_j(x)\,\hat D_j,\end{equation}
\begin{equation}
\label{4.22}
D_j=P_j\,D_j,\quad \hat D_j=P_j\,\hat D_j.\end{equation}
 From \eqref{4.21} we get
\begin{equation}
\label{4.23}
\Psi_j(x)\,(D_j-\hat D_j)=0.\end{equation}
Using the first equality of \eqref{3.39} in \eqref{4.23} we obtain
\begin{equation}
\label{4.24}
f(i\kappa_j,x)\,\mathbf B_j^{-1/2}\,P_j\,(D_j-\hat D_j)=0.\end{equation}
We already know that the matrix 
$\mathbf B_j^{-1/2}$ is invertible.
Furthermore, from \eqref{3.20} it follows that
the matrix $f(i\kappa_j,x)$ for large $x$-values is also invertible.
Hence, \eqref{4.24} yields
\begin{equation}
\label{4.25}
P_j\,(D_j-\hat D_j)=0.\end{equation}
Using \eqref{4.22} in \eqref{4.25} we get
\begin{equation*}
D_j-\hat D_j=0,
\end{equation*}
which yields
$D_j=\hat D_j$ and hence $D_j$ is uniquely determined by \eqref{4.1} and \eqref{4.2}. Thus, the proof of (a) is complete.
For the proof of (b), we proceed as follows. We remark that \eqref{4.14} and the uniqueness proved in (a) imply that (b) holds. Note that (b) can alternatively be proved as follows. 
From the second set of
equalities in \eqref{3.2} and the second equality in \eqref{3.39}, we see that
$\Phi_j(x)\,Q_j=\Phi_j(x).$ Hence, postmultiplying
\eqref{4.1} by $Q_j$ we obtain
\begin{equation}
\label{4.27}
\Phi_j(x)=\Psi_j(x)\,D_j\,Q_j.
\end{equation}
Subtracting \eqref{4.27} from \eqref{4.1} we get
\begin{equation}
\label{4.28}
\Psi_j(x)\left(D_j-D_j\,Q_j\right)=0.\end{equation}
Proceeding as in the steps involving \eqref{4.23}--\eqref{4.25}, from \eqref{4.28} we have
\begin{equation}
\label{4.29}
P_j\left(D_j-D_j\,Q_j\right)=0.\end{equation}
Finally, using \eqref{4.2} in \eqref{4.29} we obtain
\eqref{4.15}, which completes the proof of (b).
For the proof of (c) we proceed as follows.
Using \eqref{4.1} in \eqref{3.29}, with the help of
\eqref{3.22} we get
\begin{equation}
\label{4.30}
D_j^\dagger\,P_j\,D_j=Q_j.\end{equation}
Using \eqref{4.2} in \eqref{4.30} we
obtain \eqref{4.16}. Since $Q_j$ is an orthogonal
projection, we see that the proof of (c) is complete.
To prove (d) we first need to show that the matrix product $D_j D_j^\dagger$ is an orthogonal projection, i.e.
we need to show that $D_j D_j^\dagger$ is selfadjoint and it satisfies
\begin{equation}
\label{4.31}
\left(D_j D_j^\dagger\right)\left(D_j D_j^\dagger\right)=D_j D_j^\dagger.\end{equation}
The selfadjointness of $D_j D_j^\dagger$ is self evident, and
\eqref{4.31} is verified by using the set of equalities
\begin{equation*}
D_j\left(D_j^\dagger D_j\right)D_j^\dagger=D_j\,Q_j D_j^\dagger=D_j D_j^\dagger,
\end{equation*}
which are obtained by using \eqref{4.16} and then \eqref{4.15}.
Since the matrices $D_j D_j^\dagger$ and $P_j$ are orthogonal projections and have the same domain $\mathbb C^n,$ in order to
complete the proof of (d) we need to show that the matrices
$D_j D_j^\dagger$ and $P_j$ have the same range.
For this we proceed as follows. With the help of \eqref{4.2} we see
that 
\begin{equation*}
P_j D_j D_j^\dagger=D_j D_j^\dagger,
\end{equation*}
from which we obtain the range relations given by
\begin{equation*}
\text{\rm{Ran}}[D_j D_j^\dagger]=\text{\rm{Ran}}[P_j D_j D_j^\dagger]\subset \text{\rm{Ran}}[P_j].
\end{equation*}
We have
\begin{equation*}
\text{\rm{Ker}}[D_j D_j^\dagger]=\text{\rm{Ker}}[D_j^\dagger],
\end{equation*}
and hence by using the rank-nullity theorem we conclude that
\begin{equation}
\label{4.36}
\text{\rm{rank}}[D_j\,D_j^\dagger]=\text{\rm{rank}}[D_j^\dagger].\end{equation}
Since
$D_j$ and $D_j^\dagger$ have the same rank, from
\eqref{4.36} we get
\begin{equation}
\label{4.37}
\text{\rm{rank}}[D_j D_j^\dagger]=\text{\rm{rank}}[D_j].\end{equation}
 From \eqref{4.37} we conclude that
 the two matrices $D_j D_j^\dagger$
 and $D_j$ have the same nullity.
On the other hand, \eqref{4.16} implies that
\begin{equation}
\label{4.38}
\text{\rm{Ker}}[Q_j]=\text{\rm{Ker}}[D_j^\dagger D_j]=\text{\rm{Ker}}[D_j],\end{equation}
and hence we conclude
that the rank of $D_j$ is the same as the rank of
$Q_j,$ which is also equal to
the rank of $P_j.$ Then, from
\eqref{4.37} and \eqref{4.38} we get
\begin{equation}
\label{4.39}
\text{\rm{rank}}[D_j D_j^\dagger]=\text{\rm{rank}}[P_j],\end{equation}
and hence \eqref{4.39} implies \eqref{4.17}.
Thus, the proof of (d) is complete.
The proof of (e) directly follows from
\eqref{4.37}--\eqref{4.39}.
The proof of (f) is obtained by using
$\mathbf M=D_j$ and $\mathbf M^+=D_j^\dagger$
in \eqref{3.3} and by verifying that the four equalities listed
there are satisfied.
Note that the last two equalities in \eqref{3.3} are satisfied
because we have $\mathbf M^+=\mathbf M^\dagger$ there. The first two equalities in
\eqref{3.3} are also satisfied, and this can be proved as follows.
Using \eqref{4.16} and then \eqref{4.15}, we have
\begin{equation}
\label{4.40}
D_j (D_j^\dagger D_j)=D_j Q_j=D_j,\end{equation}
and hence the first equality in \eqref{3.3} is satisfied.
Similarly, using \eqref{4.15}, \eqref{4.16}, and the property
$Q_j^\dagger=Q_j$ listed in \eqref{3.2}, we obtain
\begin{equation}
\label{4.41}
(D_j^\dagger D_j)\,D_j^\dagger=Q_j D_j^\dagger=(D_j Q_j)^\dagger=D_j^\dagger,\end{equation}
which verifies the second equality in
\eqref{3.3}. Consequently, with the help of the uniqueness of
the Moore--Penrose inverse, we confirm \eqref{4.18}. 
This  completes the proof of (f). Note that \eqref{4.19} follows from \eqref{4.1}, \eqref{4.17}, and the first equality in \eqref{3.39}, which completes
the proof of (g).
Note also that
\eqref{4.20} directly follows from \eqref{4.40} and \eqref{4.41},
and hence the proof of (h) is also complete.
\end{proof}

In the previous theorem we have established the uniqueness of the dependency matrix $D_j$ satisfying \eqref{4.1}
and \eqref{4.2}. We have presented an explicit formula 
for $D_j$ in \eqref{4.14}. We would like to show that $D_j$ can also be expressed explicitly in terms of the Jost solution
$f(i\kappa_j,x),$ the Marchenko normalization matrix $M_j$ appearing in \eqref{3.27}, and the Gel'fand--Levitan bound-state normalized
solution defined in \eqref{3.28} as
\begin{equation}
\label{4.42}
D_j= M^+_j\, f(i\kappa_j,x)^{-1}\,   \Phi_j(x),
\end{equation}
where we evaluate the right-hand side at any $x$-value at which the matrix $f(i\kappa_j,x)$ is invertible and 
we recall that  $M^+_j$ denotes the Moore--Penrose inverse of $M_j.$ The invertibility of  $f(i\kappa_j,x)$
is assured for large $x$-values, as seen from \eqref{3.20}. 
We can establish
\eqref{4.42} as follows. As seen from \eqref{3.2} the orthogonal projection matrix $P_j$ satisfies $P_j^2=P_j,$
and we already know that $P_j$ commutes with $\mathbf B_j^{-1/2}$ appearing in \eqref{3.27}. Hence, from \eqref{3.27} we have
\begin{equation}
\label{4.43}
M_j= P_j M_j P_j,
\end{equation}
Using \eqref{3.15} we obtain the decomposition for $M_j$ given by
\begin{equation}
\label{4.44}
M_j=[M_j]_1\oplus 0,
\end{equation}
where $[M_j]_1$ is defined as the restriction of $M_j$ to $\textrm{Ker}[ J(i\kappa_j)^\dagger].$ 
From \eqref{3.27} it follows that the matrix $[M_j]_1$ is invertible in $\textrm{Ker}[ J(i\kappa_j)^\dagger].$ Thus, from \eqref{3.9} we obtain
\begin{equation}\label{4.45}
M_j^+=\left([M_j]_1\right)^{-1}\oplus 0.
\end{equation}
Using \eqref{3.21} and the invertibility of $f(i\kappa_j,x)$ assured for large $x,$ from \eqref{4.1} we get
\begin{equation}\label{4.46}
M^+_j M_j D_j= M^+_j f(i\kappa_j,x)^{-1}  \Phi_j(x).
\end{equation}
From \eqref{4.44} and \eqref{4.45} we have
\begin{equation}
\label{4.47}
M^+_j M_j= I\oplus 0,
\end{equation}
which indicates that 
\begin{equation}
\label{4.48}
M^+_j M_j=P_j.
\end{equation}
Using \eqref{4.48} in \eqref{4.46} we get
\begin{equation}\label{4.49}
P_j D_j= M^+_j f(i\kappa_j,x)^{-1}\,  \Phi_j(x),
\end{equation}
and using \eqref{4.2} on the right-hand side of \eqref{4.49}
we obtain \eqref{4.42}.

If the matrix $f(i\kappa_j,x)$ is not invertible
at a particular $x$-value
but the matrix
$f'(i\kappa_j,x)$ is invertible there, then we can derive an expression for
$D_j$ in an analogous manner we have obtained \eqref{4.42}, and we get
\begin{equation}
\label{4.50}
D_j= M^+_j f'(i\kappa_j,x)^{-1} \,  \Phi'_j(x).
\end{equation}
The derivation of \eqref{4.50} is straightforward and can be established by taking the
$x$-derivative of both sides of \eqref{4.1} and then by using the analogs of the 
steps involving \eqref{4.43}--\eqref{4.49}.

The dependency matrix $D_j$ appearing in
\eqref{4.1} and \eqref{4.2} and its adjoint $D_j^\dagger$ are related to
certain partial isometries. In order to see this we proceed as follows.
We recall \cite{K1980} that a linear map $L$ on $\mathbb C^n$ is an isometry if 
$|Lv|=|v|$ for all vectors $v\in\mathbb C^n,$ where we use $|\cdot|$ to denote the standard norm on
$\mathbb C^n.$ On the other hand, $L$ is a partial isometry if 
$|Lv|=|v|$ for all vectors $v$ in $\left(\text{\rm{Ker}}[L]\right)^\perp,$
where we recall that the superscript $\perp$ denotes the orthogonal complement.
It is known \cite{K1980} that the linear map
$L$ is a partial isometry if and only if
$\langle Lv,Lw\rangle=\langle v,w\rangle$ for all vectors $v$ and
$w$ in $\left(\text{\rm{Ker}}[L]\right)^\perp,$
where we use $\langle\cdot,\cdot\rangle$ to denote the standard inner product on
$\mathbb C^n.$
If the linear map $L$
is a partial isometry, then the subspace
$\left(\text{\rm{Ker}}[L]\right)^\perp$
is called the initial subspace of $\bC^n$ and the subspace
$\text{\rm{Ran}}[L]$ is called the final subspace of $\bC^n.$
It is also known \cite{K1980} that a linear map $L$ is partially isometric if and only if 
the adjoint linear map $L^\dagger$ is a partial isometry.

The relevant properties of a partial isometry are known \cite{K1980},
and we list them
in the following theorem without a proof.

\begin{theorem}
\label{theorem4.2}
Let $L$
be a linear map on $\bC^n.$ Then, the following are equivalent:

\begin{enumerate}
\item[\text{\rm(a)}]
The map $L$ is a partial isometry,
with the understanding that the initial subspace is
$\left(\text{\rm{Ker}}[L]\right)^\perp$ and the final subspace is
$\text{\rm{Ran}}[L].$

\item[\text{\rm(b)}]
The adjoint map $L^\dagger$ is a partial isometry, with
the understanding that the initial subspace is
$\left(\text{\rm{Ker}}[L^\dagger]\right)^\perp$ and the final subspace is
$\text{\rm{Ran}}[L^\dagger].$

\item[\text{\rm(c)}]
The map 
$L L^\dagger$ is the orthogonal projection
onto $\text{\rm{Ran}}[L].$

\item[\text{\rm(d)}]
The map $L^\dagger L$ is the orthogonal projection onto
$\left(\text{\rm{Ker}}[L]\right)^\perp$.

\item[\text{\rm(e)}]
We have $L L^\dagger L=L.$

\item[\text{\rm(f)}]
We have $L^\dagger L L^\dagger=L^\dagger.$

\end{enumerate}
\end{theorem}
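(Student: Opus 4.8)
The plan is to regard items (a)–(f) as a single web of equivalences organized around the two self-adjoint matrices $L^\dagger L$ and $L L^\dagger$, and to exploit the symmetry of the whole list under the interchange $L\leftrightarrow L^\dagger$, which sends (a) to (b), (c) to (d), and (e) to (f). Before anything else I would record the standard orthogonality relation $\text{\rm{Ran}}[L^\dagger]=\left(\text{\rm{Ker}}[L]\right)^\perp$, the same kernel--range identity underlying \eqref{3.15}--\eqref{3.16}; it is what ties the algebraic conditions to the geometric description of the initial and final subspaces, and it is used repeatedly in what follows.

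The pivotal step is the equivalence (a)$\Leftrightarrow$(d), which is the only place the metric definition of a partial isometry is genuinely invoked. Set $P:=L^\dagger L$, a self-adjoint matrix with $\text{\rm{Ran}}[P]\subset\text{\rm{Ran}}[L^\dagger]=\left(\text{\rm{Ker}}[L]\right)^\perp$. Assuming (a), I would use the bilinear (polarized) form of the hypothesis, namely $\langle Lv,Lw\rangle=\langle v,w\rangle$ for all $v,w\in\left(\text{\rm{Ker}}[L]\right)^\perp$, to show that $\langle Pv-v,w\rangle=0$ for all such $v,w$; since $Pv-v$ itself lies in $\left(\text{\rm{Ker}}[L]\right)^\perp$, choosing $w=Pv-v$ forces $Pv=v$ on that subspace, while $P$ annihilates $\text{\rm{Ker}}[L]$ by construction, so $P$ is exactly the orthogonal projection onto $\left(\text{\rm{Ker}}[L]\right)^\perp$, which is (d). The converse (d)$\Rightarrow$(a) is immediate, since for $v\in\left(\text{\rm{Ker}}[L]\right)^\perp$ one has $|Lv|^2=\langle L^\dagger L v,v\rangle=\langle v,v\rangle=|v|^2$.

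The remaining links are pure matrix algebra. From (d) I obtain (f) by postmultiplying $L^\dagger L=P$ by $L^\dagger$ and using that the projection $P$ fixes $\text{\rm{Ran}}[L^\dagger]$, so $L^\dagger L L^\dagger=PL^\dagger=L^\dagger$; taking adjoints exchanges (e) and (f). The ties to (c) and back to (d) rest on one idempotent observation: if $L L^\dagger L=L$, then $Q:=L L^\dagger$ is self-adjoint and $Q^2=(L L^\dagger L)L^\dagger=L L^\dagger=Q$, so $Q$ is an orthogonal projection; its range is contained in $\text{\rm{Ran}}[L]$, and it fixes every $Lv$ by (e), so its range equals $\text{\rm{Ran}}[L]$, giving (c); conversely a projection onto $\text{\rm{Ran}}[L]$ fixes every $Lv$, which is (e). The identical argument applied to $L^\dagger L$ turns (f) into (d), closing the loop so that (a), (c), (d), (e), (f) are all equivalent. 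Finally (b) requires no separate work: applying the already-proven equivalence (a)$\Leftrightarrow$(e) with $L$ replaced by $L^\dagger$ shows that $L^\dagger$ is a partial isometry if and only if $L^\dagger L L^\dagger=L^\dagger$, which is precisely (f), hence equivalent to all the rest.

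The content is entirely elementary, and the only delicate step is the bridge (a)$\Rightarrow$(d): one must convert the norm-preservation hypothesis, which is posited only on the subspace $\left(\text{\rm{Ker}}[L]\right)^\perp$, into the global statement that $L^\dagger L$ acts as the identity there, and this conversion genuinely relies on polarization together with the containment $\text{\rm{Ran}}[L^\dagger L]\subset\left(\text{\rm{Ker}}[L]\right)^\perp$. Everything else follows from the self-adjointness and idempotency of $L^\dagger L$ and $L L^\dagger$ and from the symmetry of the six conditions, so I expect no obstacle beyond keeping the range identifications straight.
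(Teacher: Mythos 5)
Your argument is correct and complete: the pivot (a)$\Leftrightarrow$(d) via polarization and the containment $\text{\rm{Ran}}[L^\dagger L]\subset\left(\text{\rm{Ker}}[L]\right)^\perp$ is sound, and the algebraic loop (d)$\Rightarrow$(f)$\Leftrightarrow$(e)$\Leftrightarrow$(c) and (f)$\Rightarrow$(d), with (b) obtained by applying (a)$\Leftrightarrow$(e) to $L^\dagger$, closes all six equivalences. The paper itself states this theorem without proof, citing the literature, so there is no in-paper argument to compare against; your self-contained proof fills that gap correctly.
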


In the next theorem, with the help of Theorem~\ref{theorem4.2},
we  relate the bound-state dependency matrix $D_j$ appearing in
\eqref{4.1} and \eqref{4.2} to a certain partial
isometry and to a bijection from
$\text{\rm{Ker}}[J(i\kappa_j)]$
 onto $\text{\rm{Ker}}[J(i\kappa_j)^\dagger].$

  \begin{theorem}
\label{theorem4.3}
Assume that the potential $V$ appearing in \eqref{2.1} satisfies \eqref{2.2} and \eqref{2.3} and that
the corresponding half-line matrix Schr\"odinger operator is associated with the
selfadjoint boundary condition described in \eqref{2.5}--\eqref{2.7}.
Let $D_j$ be the $n\times n$ dependency matrix satisfying
\eqref{4.1} and \eqref{4.2} associated with
the bound state at $k=i\kappa_j.$
We have the following:

\begin{enumerate}
\item[\text{\rm(a)}]
 The matrix $D_j$ is a partial isometry
with the initial subspace $\text{\rm{Ker}}[J(i\kappa_j)]$ and the
final subspace $\text{\rm{Ker}}[J(i\kappa_j)^\dagger].$

\item[\text{\rm(b)}] The adjoint matrix $D_j^\dagger$ is a partial isometry
with the initial subspace $\text{\rm{Ker}}[J(i\kappa_j)^\dagger]$ and the
final subspace $\text{\rm{Ker}}[J(i\kappa_j)].$

\end{enumerate}
\end{theorem}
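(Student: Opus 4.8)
The plan is to derive both parts directly from the algebraic identities already recorded in Theorem~\ref{theorem4.1} together with the abstract characterization of partial isometries in Theorem~\ref{theorem4.2}; no genuinely new computation should be needed, since the content of the statement is precisely the translation of those identities into the language of partial isometries.

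For part (a), first I would observe that the first equality in \eqref{4.20}, namely $D_j D_j^\dagger D_j = D_j$, is exactly condition (e) of Theorem~\ref{theorem4.2} with $L = D_j$; hence $D_j$ is a partial isometry. It then remains to identify its initial and final subspaces. By Theorem~\ref{theorem4.2}(d) the initial subspace $(\text{\rm{Ker}}[D_j])^\perp$ is the range of the orthogonal projection $D_j^\dagger D_j$. But \eqref{4.16} gives $D_j^\dagger D_j = Q_j$, and $Q_j$ is by definition the orthogonal projection onto $\text{\rm{Ker}}[J(i\kappa_j)]$, so its range is $\text{\rm{Ker}}[J(i\kappa_j)]$. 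Symmetrically, Theorem~\ref{theorem4.2}(c) identifies the final subspace $\text{\rm{Ran}}[D_j]$ with the range of $D_j D_j^\dagger$, which by \eqref{4.17} equals $P_j$, the orthogonal projection onto $\text{\rm{Ker}}[J(i\kappa_j)^\dagger]$. This yields the claimed initial and final subspaces for (a).

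For part (b), the fact that $D_j^\dagger$ is itself a partial isometry is immediate from the equivalence of (a) and (b) in Theorem~\ref{theorem4.2} (equivalently, the second identity in \eqref{4.20} is condition (f) there). Identifying the subspaces is the mirror image of the argument above: applying Theorem~\ref{theorem4.2}(d) now with $L = D_j^\dagger$, the initial subspace $(\text{\rm{Ker}}[D_j^\dagger])^\perp$ is the range of $L^\dagger L = D_j D_j^\dagger = P_j$, namely $\text{\rm{Ker}}[J(i\kappa_j)^\dagger]$; and by Theorem~\ref{theorem4.2}(c) with $L = D_j^\dagger$, the final subspace $\text{\rm{Ran}}[D_j^\dagger]$ is the range of $L L^\dagger = D_j^\dagger D_j = Q_j$, namely $\text{\rm{Ker}}[J(i\kappa_j)]$.

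The whole argument is routine once Theorem~\ref{theorem4.1} is in hand, so there is no substantive obstacle; the only place requiring care is the bookkeeping of which product $D_j^\dagger D_j$ or $D_j D_j^\dagger$ plays the role of $L^\dagger L$ versus $L L^\dagger$ when passing between $L = D_j$ and $L = D_j^\dagger$, and correspondingly which kernel each projection $P_j$ or $Q_j$ ranges onto. One must also invoke the elementary fact that the range of the orthogonal projection onto a subspace is that subspace itself, which is what converts the statements about the ranges of $P_j$ and $Q_j$ into statements about $\text{\rm{Ker}}[J(i\kappa_j)^\dagger]$ and $\text{\rm{Ker}}[J(i\kappa_j)]$.
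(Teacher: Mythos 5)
Your proof is correct and follows essentially the same route as the paper: both invoke the identity $D_jD_j^\dagger D_j=D_j$ to trigger Theorem~\ref{theorem4.2}(e), then use \eqref{4.16} and \eqref{4.17} together with Theorem~\ref{theorem4.2}(c)--(d) to identify the initial and final subspaces, and obtain (b) from the equivalence of (a) and (b) in Theorem~\ref{theorem4.2}. The bookkeeping of $L^\dagger L$ versus $LL^\dagger$ when passing to $L=D_j^\dagger$ is handled correctly.
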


\begin{proof}
From the first equality of \eqref{4.20} we see that
Theorem~\ref{theorem4.2}(e) holds with $L=D_j.$ Hence, all items in Theorem~\ref{theorem4.2} hold when
$L=D_j.$ In particular, Theorem~\ref{theorem4.2}(a) implies that
$D_j$ is a partial isometry with the initial subspace $\left(\text{\rm{Ker}}[D_j]\right)^\perp$ and the final subspace $\text{\rm{Ran}}[D_j].$
On the other hand, from \eqref{4.16} we know that $D_j^\dagger D_j$ is equal to $Q_j.$ 
By Theorem~\ref{theorem4.2}(d) we know that
$D_j^\dagger\, D_j$
is the orthogonal projection onto $\left(\text{\rm{Ker}}[D_j]\right)^\perp$ and we already know  that
$Q_j$ is the orthogonal projection onto
the kernel of $J(i\kappa_j).$ Thus, we conclude that
\begin{equation*}
\left(\text{\rm{Ker}}[D_j]\right)^\perp
=\text{\rm{Ker}}[J(i\kappa_j)].
\end{equation*}
Moreover, by Theorem~\ref{theorem4.2}(c) we know that $D_j D_j^\dagger$ is the orthogonal projection
onto $\text{\rm{Ran}}[D_j],$
which is equal to $\text{\rm Ker}[D_j^\dagger]^\perp.$ On the other hand, \eqref{4.17} shows that $D_j D_j^\dagger$ is the 
orthogonal projection onto $\textrm{Ker}[J(i\kappa_j)^\dagger].$
Hence, we have
\begin{equation*}
\text{\rm{Ran}}[D_j]=\left(\text{\rm{Ker}}[D_j^\dagger]\right)^\perp
=\text{\rm{Ker}}[J(i\kappa_j)^\dagger].
\end{equation*}
Consequently, with the help of Theorem~\ref{theorem4.2}(a) we conclude that
$D_j$ is a partial isometry with the initial subspace
$\text{\rm{Ker}}[J(i\kappa_j)]$ and the final subspace
$\text{\rm{Ker}}[J(i\kappa_j)^\dagger],$
which completes the proof of (a).
With the help of the equivalences stated in Theorem~\ref{theorem4.2},
we see that (b) is equivalent to (a), and hence the proof of (b) is also complete. 
\end{proof}

\section{The Gel'fand--Levitan system of integral equations}
\label{section5}

In this section we introduce the spectral measure related to the half-line matrix
Schr\"odinger operator described by \eqref{2.1} and \eqref{2.5},
provide Parseval's equality involving the regular solution and
the Gel'fand--Levitan normalized bound-state solutions, and
derive the corresponding matrix-valued Gel'fand--Levitan system of linear equations.

 When the potential $V$ in \eqref{2.1} satisfies \eqref{2.2} and belongs to $L^1_1(\mathbb R^+),$
as indicated in Section~\ref{section3}, the
 Schr\"odinger operator associated with \eqref{2.1} and \eqref{2.5} has at most a finite number of distinct bound states
occurring at $k=i\kappa_j$ for $1\le j\le N.$ It is understood that $N=0$ if there are no
bound states.
The spectral measure $d\rho$ associated with the corresponding selfadjoint Schr\"odinger operator
 related to \eqref{2.1} and \eqref{2.5} is given by
\begin{equation}
\label{5.1}
d\rho=\begin{cases}
\ds\frac{\sqrt{\lambda}}{\pi}\,\left(J(k)^\dagger\,J(k)\right)^{-1}\,d\lambda,\qquad \lambda\ge 0,
\\
\noalign{\medskip}
\ds\sum_{j=1}^N C_j^2\,\delta(\lambda-\lambda_j)\,d\lambda,
\qquad \lambda<0,\end{cases}
\end{equation}
where $\delta(\cdot)$ denotes the Dirac delta distribution and we recall that 
$J(k)$ is the Jost matrix
given in \eqref{2.11}, the quantity $C_j$ 
is the Gel'fand--Levitan normalization matrix defined in \eqref{3.36}, and we let
$\lambda:=k^2$ and $\lambda_j:=-\kappa_j^2$ for $1\le j\le N.$

In the next theorem we present Parseval's equality for the half-line matrix
Schr\"odinger operator associated with \eqref{2.1} and \eqref{2.5}.

\begin{theorem}
\label{theorem5.1} We assume that the potential $V$ appearing in \eqref{2.1} satisfies \eqref{2.2} and
belongs to $L^1_1(\mathbb R^+).$ Associated with the 
 half-line selfadjoint
Schr\"odinger operator described by \eqref{2.1} and \eqref{2.5},
we have the completeness relation known as Parseval's equality, which is given by
\begin{equation}
\label{5.2}
\ds\int_{\lambda\in\mathbb R}  \varphi(k,x)\,d\rho\,
\varphi(k,y)^\dagger=\delta(x-y)\,I,
\end{equation}
with the integration  from $\lambda=-\infty$ to $\lambda=+\infty$
covering the domain of the spectral measure $d\rho.$
Parseval's equality can also be written as
\begin{equation}
\label{5.3}
\ds\int_{\lambda\in\mathbb R^+} \varphi(k,x)\,d\rho\,
\varphi(k,y)^\dagger+\ds\sum_{j=1}^N \Phi_j(x)\,\Phi_j(y)^\dagger=\delta(x-y)\,I,
\end{equation}
where the integration is from $\lambda=0$ to $\lambda=+\infty,$ and we recall that
$\varphi(k,x)$ is the regular solution to \eqref{2.1} satisfying the initial
conditions \eqref{2.10}, $d\rho$ is the spectral measure appearing
in \eqref{5.1}, $\Phi_j(x)$ is the $n\times n$ Gel'fand--Levitan normalized bound-state  solution at $k=i\kappa_j$ and appearing in \eqref{3.28}, and $N$ is the
number of bound states without counting the multiplicities.

\end{theorem}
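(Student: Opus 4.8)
The plan is to establish the completeness relation by appealing to the spectral theorem for the selfadjoint Schr\"odinger operator associated with \eqref{2.1} and \eqref{2.5}, using the fact that the spectral measure $d\rho$ in \eqref{5.1} encodes the spectral decomposition with respect to the generalized eigenfunctions $\varphi(k,x)$. Since \eqref{5.2} and \eqref{5.3} are equivalent once one splits the spectral measure $d\rho$ into its absolutely continuous part (supported on $\lambda\ge 0$) and its point-mass part (the sum over the bound states at $\lambda_j=-\kappa_j^2$), the substantive content is the verification that the regular solutions $\varphi(k,x)$ form a complete orthonormal family in $L^2(\mathbb R^+)$ with respect to $d\rho$. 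I would first reduce \eqref{5.3} to \eqref{5.2} by inserting the explicit form of $d\rho$ from \eqref{5.1}; the Dirac-delta contributions $\sum_{j=1}^N \varphi(i\kappa_j,x)\,C_j^2\,\varphi(i\kappa_j,y)^\dagger$ become $\sum_{j=1}^N \Phi_j(x)\,\Phi_j(y)^\dagger$ precisely because $\Phi_j(x)=\varphi(i\kappa_j,x)\,C_j$ from \eqref{3.28}, together with the hermiticity $C_j^\dagger=C_j$ recorded after \eqref{3.36}. The even-in-$k$ symmetry of $\varphi(k,x)$ noted after \eqref{2.15} ensures that the continuous-spectrum integral over $\lambda\ge 0$ is well defined in the variable $\lambda=k^2$.

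The core of the proof is the completeness statement \eqref{5.2}, which I would establish by the standard spectral-function argument adapted to the matrix case. The approach is to construct the spectral measure $d\rho$ as the matrix-valued measure arising from the Weyl--Titchmarsh $m$-function (or equivalently from the resolvent of the operator), and then to invoke the general theory of selfadjoint ordinary differential operators: the map that sends a function $g\in L^2(\mathbb R^+)$ to its generalized Fourier transform $\hat g(\lambda)=\int_0^\infty \varphi(k,x)^\dagger\,g(x)\,dx$ is a unitary equivalence onto $L^2(d\rho)$, and \eqref{5.2} is the kernel-level expression of $\mathcal F^\dagger\mathcal F = I$. The absolutely continuous part of $d\rho$ is identified from the boundary behavior of the resolvent on $\lambda\ge 0$, giving the weight $\frac{\sqrt\lambda}{\pi}\left(J(k)^\dagger J(k)\right)^{-1}$ in \eqref{5.1}; this uses the relation \eqref{2.14} between $\varphi$, the physical solution $\Psi$, and the Jost matrix $J(k)$, together with the invertibility of $J(k)$ for $k\in\mathbb R\setminus\{0\}$ guaranteed by the absence of bound states and spectral singularities on the continuous spectrum. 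The point masses are located at the zeros of $\det[J(k)]$ on the positive imaginary axis, and their residues yield the projections encoded by $C_j^2$; here the normalization \eqref{3.29} and orthogonality \eqref{3.30} of the $\Phi_j(x)$ are exactly what make the discrete contributions consistent with a resolution of the identity.

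The hard part will be the rigorous justification that the point-mass weights in $d\rho$ are precisely $C_j^2$ and that no additional singular-continuous contributions appear. For the latter I would cite Theorem~4.3.3 of \cite{AW2021}, which asserts that the operator has no singular continuous spectrum and has absolutely continuous spectrum equal to $[0,+\infty)$, so that $d\rho$ consists only of the displayed a.c.\ part and finitely many point masses under the hypothesis $V\in L^1_1(\mathbb R^+)$ (which by Theorem~3.11.1 of \cite{AW2021} forces $N<\infty$ and excludes a bound state at $\lambda=0$). For the former, the key technical step is a residue computation: one writes the resolvent kernel in terms of $\varphi(k,x)$ and the Jost solution $f(k,x)$, and extracts the residue of the matrix-valued spectral density at each $\lambda_j=-\kappa_j^2$. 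The residue is a rank-$m_j$ nonnegative matrix, and one must check that it coincides with $C_j^2=\mathbf H_j^{-1/2}Q_j\mathbf H_j^{-1/2}Q_j$ as defined through \eqref{3.36}. This identification is where the careful construction of $C_j$ via the matrix $\mathbf H_j$ in \eqref{3.33}--\eqref{3.36} does its work: the normalization \eqref{3.29} $\int_0^\infty \Phi_j(x)^\dagger\Phi_j(x)\,dx=Q_j$ is exactly the statement that the discrete eigenprojection has the correct trace-class normalization, and matching it against the residue of the resolvent completes the argument. I expect that, rather than redoing the full spectral analysis, the cleanest route is to invoke the matrix analogue of the classical Gel'fand--Levitan/Marchenko spectral-function construction (as in \cite{NJ1955} for the Dirichlet case, extended to the general selfadjoint boundary condition via \cite{AW2021}), with \eqref{5.2}--\eqref{5.3} following as the completeness relation attached to that spectral function.
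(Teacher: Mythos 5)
Your proposal takes a genuinely different, and much heavier, route than the paper. The paper does not re-derive completeness from the spectral theorem at all: it starts from the already-established Parseval equality in the Marchenko normalization, namely \eqref{5.4} quoted from (5.9.1) of \cite{AW2021}, which is expressed in terms of the physical solution $\Psi(k,x)$ and the Marchenko normalized bound-state solutions $\Psi_j(x).$ The continuous part is then converted using \eqref{2.14}, i.e. $\Psi(k,x)=-2ik\,\varphi(k,x)\,J(k)^{-1},$ which turns $\frac{1}{2\pi}\,dk\,\Psi(k,x)\Psi(k,y)^\dagger$ into exactly $\varphi(k,x)\,d\rho\,\varphi(k,y)^\dagger$ on $\lambda\ge 0$; and the discrete part is converted using the dependency matrix: $\Phi_j(x)\Phi_j(y)^\dagger=\Psi_j(x)\,D_jD_j^\dagger\,\Psi_j(y)^\dagger=\Psi_j(x)\,P_j\,\Psi_j(y)^\dagger=\Psi_j(x)\Psi_j(y)^\dagger,$ by \eqref{4.17} and the first equality of \eqref{3.39}. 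That two-line identity is precisely what replaces the step you correctly single out as ``the hard part.''

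This matters because in your write-up that hard part is not actually carried out. You reduce everything to the claim that the residue of the matrix-valued spectral density at $\lambda_j=-\kappa_j^2$ equals $C_j^2,$ and then say that matching it against \eqref{3.29} ``completes the argument.'' But that identification is the entire nontrivial content of the discrete part of \eqref{5.2}: one must show that the rank-$m_j$ eigenprojection kernel, written in terms of $\varphi(i\kappa_j,x),$ carries the weight $C_j^2=\mathbf H_j^{-1}Q_j$ and not some other nonnegative matrix of the same rank, and in the matrix case with a general selfadjoint boundary condition this requires an argument (the paper supplies it indirectly through Theorem~\ref{theorem4.1}(d), whose proof is itself a page of work with the projections $P_j$ and $Q_j$). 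Your overall strategy (resolvent, Weyl--Titchmarsh construction, residues) is a legitimate classical route and would in principle also reprove the Marchenko-form statement \eqref{5.4} that the paper takes as input, so it buys more generality at the cost of redoing the spectral analysis of \cite{AW2021}; but as written, the proposal asserts rather than proves the one step that distinguishes \eqref{5.2} from a generic completeness relation. If you want to keep your route, you need to compute the residue explicitly and verify it equals $\varphi(i\kappa_j,x)\,C_j^2\,\varphi(i\kappa_j,y)^\dagger$; if you want the short proof, use \eqref{5.4} together with \eqref{2.14} and $D_jD_j^\dagger=P_j.$
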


\begin{proof} We already know from (5.9.1) of \cite{AW2021} that Parseval's equality can be expressed in terms of the
physical solution $\Psi(k,x)$ appearing in \eqref{2.13}
and the $n\times n$ Marchenko normalized bound-state  solutions $\Psi_j(x)$
appearing in \eqref{3.21}, and we have
\begin{equation}
\label{5.4}
\ds\frac{1}{2\pi} \ds\int_0^\infty
dk\, \Psi(k,x)\,
\Psi(k,y)^\dagger+\ds\sum_{j=1}^N \Psi_j(x)\,\Psi_j(y)^\dagger=\delta(x-y)\,I.
\end{equation}
From \eqref{2.14} we obtain
\begin{equation}
\label{5.5}
\Psi(k,x)=-2ik\,\varphi(k,x)\,J(k)^{-1},\qquad
k\in\mathbb R,
\end{equation}
where the existence of
$J(k)^{-1}$
for $x\in\mathbb R\setminus \{0\}$ and the continuity
of $k\,J(k)^{-1}$ at $k=0$ are assured \cite{AW2021} under the stated assumptions on the potential
and the selfadjoint boundary condition. From \eqref{5.5} we get
\begin{equation*}
\Psi(k,x)\, \Psi(k,y)^\dagger=4 k^2\,\varphi(k,x)\,J(k)^{-1}
\left[J(k)^{-1}\right]^\dagger \varphi(k,y)^\dagger,\qquad
k\in\mathbb R,
\end{equation*}
which yields
\begin{equation}
\label{5.7}
\ds\frac{dk}{2\pi}\,\Psi(k,x)\, \Psi(k,y)^\dagger=\ds\frac{2 k^2\,dk}{\pi}\,\varphi(k,x)\left[J(k)^\dagger\,J(k)\right]^{-1}\varphi(k,y)^\dagger,\qquad
k\in\mathbb R.\end{equation}
Since $\lambda=k^2,$ we can write \eqref{5.7} in an equivalent form as
\begin{equation}
\label{5.8}
\ds\frac{dk}{2\pi}\,\Psi(k,x)\, \Psi(k,y)^\dagger=
\ds\frac{\sqrt{\lambda}\,d\lambda} {\pi}\,\varphi(k,x)\left[J(k)^\dagger\,J(k)\right]^{-1}\varphi(k,y)^\dagger,\qquad
\lambda\ge 0.\end{equation}
Using
the first line of \eqref{5.1} on the right-hand side of
\eqref{5.8} and then integrating both sides of the resulting equality,
we obtain
\begin{equation}
\label{5.9}
\ds\frac{1}{2\pi} \ds\int_0^\infty
dk\, \Psi(k,x)\,
\Psi(k,y)^\dagger=
\ds\int_{\lambda\in\mathbb R^+} \varphi(k,x)\,d\rho\,
\varphi(k,y)^\dagger.\end{equation}
On the other hand, from \eqref{4.1} we have
\begin{equation}
\label{5.10}
\Phi_j(x)\,\Phi_j(y)^\dagger=\Psi_j(x)\,D_j\,D_j^\dagger \Psi_j(y)^\dagger.\end{equation}
Using \eqref{4.17} in \eqref{5.10} we obtain
\begin{equation}
\label{5.11}
\Phi_j(x)\,\Phi_j(y)^\dagger=\Psi_j(x) P_j \Psi_j(y)^\dagger.\end{equation}
 From the first equality in \eqref{3.39} we see
 that 
\begin{equation*}
\Psi_j(x)\,P_j=\Psi_j(x),
\end{equation*}
 and hence \eqref{5.11} yields
\begin{equation*}
\Phi_j(x)\,\Phi_j(y)^\dagger=\Psi_j(x)\,\Psi_j(y)^\dagger,
\end{equation*}
which implies that
\begin{equation}
\label{5.14}
\ds\sum_{j=1}^N
\Phi_j(x)\,\Phi_j(y)^\dagger=\ds\sum_{j=1}^N
\Psi_j(x)\,\Psi_j(y)^\dagger.\end{equation}
Adding \eqref{5.9} and \eqref{5.14} we obtain the equivalent
of the left-hand side of \eqref{5.4}, and hence
we get \eqref{5.3}.
Using \eqref{3.28}  and \eqref{5.1} on the
left-hand side of \eqref{5.3}, we observe
that \eqref{5.3} is equivalent to \eqref{5.2}.
\end{proof}

Let us view $V(x),$ $\varphi(k,x),$ $(A,B),$ and $d\rho$
as the potential, the regular solution, the pair of boundary
matrices, and the spectral measure corresponding to the unperturbed problem.
Thus, the regular solution $\varphi(k,x)$ 
satisfies the unperturbed Schr\"odinger equation
\begin{equation}
\label{5.15}
\varphi''(k,x)=\left[V(x)-k^2\right]\varphi(k,x),\end{equation}
and the unperturbed initial conditions \eqref{2.10}.
Let us view $\tilde V(x),$
$\tilde\varphi(k,x),$ $(\tilde A,\tilde B),$ and
$d\tilde\rho$ as the potential, the regular solution, the pair of boundary
matrices, and the spectral measure corresponding to the perturbed problem.
Thus, the perturbed regular solution $\tilde\varphi(k,x)$ 
satisfies the perturbed Schr\"odinger equation
\begin{equation}
\label{5.16}
\tilde\varphi''(k,x)=\left[\tilde V(x)-k^2\right]\tilde\varphi(k,x),\end{equation}
and the perturbed initial conditions
\begin{equation}
\label{5.17}
\tilde\varphi(k,0)=\tilde A,\quad \tilde\varphi'(k,0)=\tilde B.\end{equation}
Let us assume that the unperturbed and perturbed potentials 
satisfy \eqref{2.2} and  belong to $L^1_1(\mathbb R^+),$ and let us also assume that
the unperturbed and perturbed boundary conditions 
expressed by the respective unperturbed and perturbed boundary matrices
are compatible with the selfadjoint boundary condition described in \eqref{2.5}--\eqref{2.7}.

We can  specify the perturbation relating the unperturbed problem and the perturbed problem 
in various different ways. In the rest of this section we specify the perturbation by relating the perturbed regular solution $\tilde\varphi(k,x)$
 to the unperturbed regular solution $\varphi(k,x)$ through a matrix-valued quantity $\mathcal A(x,y)$
via the relation given by 
\begin{equation}
\label{5.18}
\tilde\varphi(k,x)=\varphi(k,x)+\int_0^x dy\,\mathcal A(x,y)\,\varphi(k,y).
\end{equation}
We would like to find out how the perturbed potential $\tilde V(x)$ is related
to the unperturbed potential $V(x)$ through the quantity $\mathcal A(x,y).$
Similarly, we would like to relate the perturbed boundary matrices
$\tilde A$ and $\tilde B$ to the unperturbed 
boundary matrices
$A$ and $B$ through the quantity $\mathcal A(x,y).$
In the next theorem we present 
the appropriate restrictions on $\mathcal A(x,y)$ to obtain the aforementioned relations.

 \begin{theorem}
\label{theorem5.2}
Assume that the $n\times n$ matrix-valued potential $V$ appearing in \eqref{2.1} satisfies \eqref{2.2} and
is locally integrable on $x\in [0,+\infty).$ For $k\in\mathbb C,$ let $\varphi(k,x)$ be the $n\times n$ matrix-valued solution to \eqref{2.1}
with the initial values specified in \eqref{2.10}, and assume that $\varphi(k,x)$ is continuously differentiable in $x \in [0,+\infty).$ Suppose
 that the quantity $\mathcal A(x,y)$ appearing in \eqref{5.18} is continuously differentiable on the set $\mathbb D$ defined as
\begin{equation}
\label{5.19}
\mathbb D:=\left\{(x,y):  0\le y \le x<+\infty\right\},
\end{equation}
and that the second derivatives $\mathcal A_{xx}(x,y)$
and $\mathcal A_{yy}(x,y)$ are locally integrable on $\mathbb D,$ 
where the subscripts denote the appropriate partial derivatives.
Further, assume that the 
quantity $\mathcal A(x,y)$ satisfies the second-order matrix-valued  partial differential
equation
\begin{equation}
\label{5.20}
\mathcal A_{xx}(x,y)-\mathcal A_{yy}(x,y)=\tilde V(x)\,\mathcal A(x,y)-
\mathcal A(x,y)\,V(y),
\end{equation}
  with the quantity $\tilde V(x)$ given by
\begin{equation}
\label{5.21}
\tilde V(x):=V(x)+2\,\ds\frac{d\, \mathcal A(x,x)}{dx},
\end{equation}
 where we use $\mathcal A(x,x)$ to denote $\mathcal A(x,x^-).$ 
Then, we have the following:

\begin{enumerate}

\item[\text{\rm(a)}]
The  $n\times n$ matrix-valued function $\tilde\varphi(k,x)$ appearing in \eqref{5.18} satisfies \eqref{2.1} with the potential $\tilde V(x)$
specified in \eqref{5.21}.

\item[\text{\rm(b)}]
The $n\times n$ constant matrices $A$ and $B$ appearing in \eqref{2.10}
are related to each other via the matrix $\mathcal A(x,y)$ as
\begin{equation}
\label{5.22}
-\mathcal A_{y}(x,0)\,A
+
\mathcal A(x,0)\,B=0.\end{equation}

\item[\text{\rm(c)}]
The $n\times n$ constant matrices $\tilde A$ and $\tilde B$ appearing in 
\eqref{5.17} are related to
the constant matrices $A$ and $B$ appearing in \eqref{2.10} as
\begin{equation}
\label{5.23}
\tilde A=A,\quad 
\tilde B=B+\mathcal A(0,0)\,A.
\end{equation}
\end{enumerate}
\end{theorem}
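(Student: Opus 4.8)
The plan is to verify part~(a) by differentiating the Volterra-type representation \eqref{5.18} twice, and to read off parts~(b) and~(c) from the boundary terms that this computation produces. The regularity hypotheses on $\mathcal{A}$ (continuous differentiability on $\mathbb{D}$ and local integrability of $\mathcal{A}_{xx},\mathcal{A}_{yy}$) are precisely what legitimize the Leibniz differentiation under the integral sign and the integration by parts used below. Part~(c) is the easiest and I would dispose of it first: evaluating \eqref{5.18} at $x=0$ gives $\tilde\varphi(k,0)=\varphi(k,0)=A$, and differentiating \eqref{5.18} once and setting $x=0$ gives $\tilde\varphi'(k,0)=\varphi'(k,0)+\mathcal{A}(0,0)\varphi(k,0)=B+\mathcal{A}(0,0)A$; comparing with \eqref{5.17} yields \eqref{5.23}. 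Once part~(a) is established, uniqueness for the initial value problem identifies $\tilde\varphi$ as the regular solution of the perturbed equation, so these endpoint values are exactly the perturbed boundary matrices.

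For part~(a), first I would differentiate \eqref{5.18} using Leibniz's rule, which picks up the diagonal term $\mathcal{A}(x,x)\varphi(k,x)$, and then differentiate again. In $\tilde\varphi''$ I would replace $\varphi''(k,x)$ using the unperturbed equation \eqref{5.15}, and inside the surviving integral replace $\mathcal{A}_{xx}(x,y)$ by means of \eqref{5.20}, which turns $\int_0^x \mathcal{A}_{xx}\varphi\,dy$ into $\int_0^x \mathcal{A}_{yy}\varphi\,dy$ together with $\tilde V(x)\int_0^x \mathcal{A}\varphi\,dy-\int_0^x \mathcal{A}(x,y)\,V(y)\varphi(k,y)\,dy$. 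Writing $V(y)\varphi(k,y)=\varphi''(k,y)+k^2\varphi(k,y)$ in the last term and then applying the matrix Green (Lagrange) identity
\[
\int_0^x\!\big[\mathcal{A}_{yy}(x,y)\varphi(k,y)-\mathcal{A}(x,y)\varphi''(k,y)\big]\,dy=\big[\mathcal{A}_y(x,y)\varphi(k,y)-\mathcal{A}(x,y)\varphi'(k,y)\big]_{y=0}^{y=x}
\]
produces boundary contributions at $y=x$ and $y=0$. Here I must keep the matrix order intact throughout: $\mathcal{A}$ always multiplies $\varphi$ on the left and the potentials enter as $\tilde V(x)\mathcal{A}$ and $\mathcal{A}\,V(y)$, exactly matching the placement in \eqref{5.20}.

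The decisive bookkeeping is with the boundary terms. The $y=x$ contributions combine with the diagonal terms generated by Leibniz's rule; using $\tfrac{d}{dx}\mathcal{A}(x,x)=\mathcal{A}_x(x,x)+\mathcal{A}_y(x,x)$, the $\varphi'(k,x)$ pieces cancel and the $\varphi(k,x)$ pieces collapse to $2\,\tfrac{d}{dx}\mathcal{A}(x,x)\,\varphi(k,x)$, which by \eqref{5.21} equals $[\tilde V(x)-V(x)]\varphi(k,x)$; added to $\varphi''=[V(x)-k^2]\varphi$ this upgrades the potential to $\tilde V(x)$. The integral remainder reassembles into $[\tilde V(x)-k^2]\int_0^x\mathcal{A}\varphi\,dy$, so that altogether
\[
\tilde\varphi''(k,x)-[\tilde V(x)-k^2]\,\tilde\varphi(k,x)=-\mathcal{A}_y(x,0)\,A+\mathcal{A}(x,0)\,B,
\]
after inserting $\varphi(k,0)=A$ and $\varphi'(k,0)=B$ from \eqref{2.10} into the $y=0$ boundary term. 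This single identity contains both remaining claims: the right-hand side is exactly the expression in \eqref{5.22}, so its vanishing is part~(b), and that vanishing is precisely the condition under which the left-hand side is zero, i.e.\ part~(a). I expect the main obstacle to be this boundary-term accounting—correctly pairing the Leibniz diagonal terms with the $y=x$ endpoint of the Green identity and invoking \eqref{5.21} with the right factor of $2$—rather than any single step in isolation; the matrix non-commutativity makes it essential to track the left/right placement of every factor while doing so.
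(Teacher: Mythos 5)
Your computation for part (c) and your reduction of part (a) to the boundary identity
\begin{equation*}
\tilde\varphi''(k,x)-\left[\tilde V(x)-k^2\right]\tilde\varphi(k,x)=-\mathcal A_y(x,0)\,A+\mathcal A(x,0)\,B
\end{equation*}
match the paper's argument step for step, including the pairing of the Leibniz diagonal terms with the $y=x$ endpoint of the Lagrange identity and the use of \eqref{5.36} and \eqref{5.21}. The problem is what you do next: you observe that the right-hand side is the expression in \eqref{5.22}, declare that ``its vanishing is part (b)'' and that ``that vanishing is precisely the condition under which the left-hand side is zero, i.e.\ part (a).'' That establishes only the equivalence (a) $\Leftrightarrow$ (b); it proves neither. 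Both (a) and (b) are asserted as conclusions of the theorem, so you still owe an independent proof that $-\mathcal A_y(x,0)\,A+\mathcal A(x,0)\,B$ actually vanishes. Nothing in the differentiation argument forces this: an arbitrary kernel $\mathcal A$ satisfying the hypotheses you have used so far need not annihilate that combination, and the selfadjointness data $(A,B)$ satisfying \eqref{2.6} has not yet entered your argument anywhere.

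The missing ingredient is the paper's use of Parseval's equality. Multiplying \eqref{5.18} on the right by $d\rho\,\varphi(k,z)^\dagger$ and integrating over the spectral parameter, the completeness relation \eqref{5.2} collapses the two $\varphi\,d\rho\,\varphi^\dagger$ integrals to delta functions and yields the representation $\mathcal A(x,y)=\int_{\lambda\in\mathbb R}\tilde\varphi(k,x)\,d\rho\,\varphi(k,y)^\dagger$ for $0\le y<x.$ Evaluating at $y=0$ and using the initial conditions \eqref{2.10} gives $\mathcal A(x,0)=\int\tilde\varphi\,d\rho\,A^\dagger$ and $\mathcal A_y(x,0)=\int\tilde\varphi\,d\rho\,B^\dagger,$ whence $-\mathcal A_y(x,0)\,A+\mathcal A(x,0)\,B=\int\tilde\varphi\,d\rho\,(-B^\dagger A+A^\dagger B)=0$ by the selfadjointness condition \eqref{2.6}. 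This is where the hypothesis that $(A,B)$ describes a selfadjoint boundary condition does its work, and without some argument of this kind your proof of (a) and (b) is incomplete.
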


\begin{proof} 
Using the quantity $\tilde\varphi(k,x)$
specified in \eqref{5.18}, by taking the first and second $x$-derivatives, respectively, we obtain
\begin{equation*}
\tilde\varphi'(k,x)=\varphi'(k,x)+\mathcal A(x,x)\,\varphi(k,x)+\int_0^x dy\,\mathcal A_x(x,y)\,\varphi(k,y),
\end{equation*}
\begin{equation}
\label{5.25}
\tilde\varphi''(k,x)=q_1(k,x),
\end{equation}
where we have defined
\begin{equation}
\label{5.26}
q_1(k,x):=
\varphi''(k,x)+\mathcal A_x(x,x)\,\varphi(k,x)+\mathcal A_y(x,x)\,\varphi(k,x)+q_2(x),
\end{equation}
\begin{equation*}
q_2(x):=
\mathcal A(x,x)\,\varphi'(k,x)+
\mathcal A_x(x,x)\,\varphi(k,x)+\int_0^x dy\,\mathcal A_{xx}(x,y)\,\varphi(k,y).
\end{equation*}
Using \eqref{5.15}, we write \eqref{5.26} as
\begin{equation}
\label{5.28}
q_1(k,x)=q_3(x)+q_2(x)
-\int_0^x dy\,\mathcal A_{yy}(x,y)\,\varphi(k,y)+
\int_0^x dy\,\mathcal A_{yy}(x,y)\,\varphi(k,y),
\end{equation}
where we have defined
\begin{equation*}
q_3(x):=\left[V(x)-k^2\right]\varphi(k,x)
+\mathcal A_x(x,x)\,\varphi(k,x)+\mathcal A_y(x,x)\,\varphi(k,x).
\end{equation*}
The integration by parts in the last term
of \eqref{5.28} yields
\begin{equation}
\label{5.30}
\int_0^x dy\,\mathcal A_{yy}(x,y)\,\varphi(k,y)=
\mathcal A_{y}(x,x)\,\varphi(k,x)-
\mathcal A_{y}(x,0)\,\varphi(k,0)-
\int_0^x dy\,\mathcal A_{y}(x,y)\,\varphi'(k,y).\end{equation}
Using integration by parts in the last integral on the right-hand side of \eqref{5.30}, we have
\begin{equation}
\label{5.31}
\int_0^x dy\,\mathcal A_{y}(x,y)\,\varphi'(k,y)=
\mathcal A(x,x)\,\varphi'(k,x)-
\mathcal A(x,0)\,\varphi'(k,0)\\
-
\int_0^x dy\,\mathcal A(x,y)\,\varphi''(k,y).\end{equation}
Using \eqref{5.15} in the last term on the right-hand side of \eqref{5.31}, we can write \eqref{5.31} as
\begin{equation}
\label{5.32}
\int_0^x dy\,\mathcal A_{y}(x,y)\,\varphi'(k,y)=q_4(x),
\end{equation}
where we have defined
\begin{equation*}
q_4(x):=
\mathcal A(x,x)\,\varphi'(k,x)-
\mathcal A(x,0)\,\varphi'(k,0)\\
-
\int_0^x dy\,\mathcal A(x,y)\left[V(y)-k^2\right]\varphi(k,y).
\end{equation*}
Then,
using \eqref{5.30}, \eqref{5.31}, and \eqref{5.32} on the right-hand side of
\eqref{5.28}, we obtain
\begin{equation}
\label{5.34}
q_1(k,x)=q_3(x)+q_2(x)+q_5(x)-q_4(x),
\end{equation}
where we have defined
\begin{equation*}
q_5(x):=-\ds\int_0^x dy\,\mathcal A_{yy}(x,y)\,\varphi(k,y)+
\mathcal A_y(x,x)\,\varphi(k,x)-
\mathcal A_y(x,0)\,\varphi(k,0).
\end{equation*}
We have the identity
\begin{equation}
\label{5.36}
\mathcal A_x(x,x)+\mathcal A_y(x,x)=\ds\frac{d\, \mathcal A(x,x)}{dx},\end{equation}
where we recall that the values evaluated at $y=x$ are obtained
by using the values at $y=x^-.$ Using \eqref{5.36} in \eqref{5.34}, with the help of \eqref{5.25} we get
\begin{equation}
\label{5.37}
\aligned\tilde\varphi''(k,x)=&
\left[V(x)+2\,\ds\frac{d\, \mathcal A(x,x)}{dx}-k^2\right]\varphi(k,x)
-
\mathcal A_{y}(x,0)\,\varphi(k,0)
+
\mathcal A(x,0)\,\varphi'(k,0)\\&
+
\int_0^x dy\,\left[\mathcal A_{xx}(x,y)-\mathcal A_{yy}(x,y)+
\mathcal A(x,y)\,\left[V(y)-k^2\right]\right]\varphi(k,y).\endaligned
\end{equation}
From \eqref{5.18}, \eqref{5.20}, \eqref{5.25}, and \eqref{5.37} we obtain 
\begin{equation*}
\tilde\varphi''(k,x)= \left[V(x)+2\,\ds\frac{d\, \mathcal A(x,x)}{dx}-k^2\right]\tilde\varphi(k,x)-
\mathcal A_{y}(x,0)\,\varphi(k,0)
+
\mathcal A(x,0)\,\varphi'(k,0).
\end{equation*}
From \eqref{5.37} we see that the proof of (a) is complete provided we have
\begin{equation}
\label{5.39}
-
\mathcal A_{y}(x,0)\,\varphi(k,0)
+
\mathcal A(x,0)\,\varphi'(k,0)=0.
\end{equation}
We establish \eqref{5.39} as follows.
Because of \eqref{2.10} we observe that
\eqref{5.39} is equivalent to \eqref{5.22}.
We next show that \eqref{5.22} is automatically satisfied.
By multiplying \eqref{5.18} on the right by $d\rho\,\varphi(k,z)^\dagger,$
we obtain
\begin{equation}
\label{5.40}
\tilde\varphi(k,x)\,d\rho\,\varphi(k,z)^\dagger=\varphi(k,x)\,
d\rho\,\varphi(k,z)^\dagger\\
+\int_0^x dy\,\mathcal A(x,y)\,\varphi(k,y)\,d\rho\,\varphi(k,z)^\dagger.
\end{equation}
Integrating both sides of \eqref{5.40} over $\lambda\in\bR,$ we get
\begin{equation}
\label{5.41}
\ds\int_{\lambda\in\mathbb R}\tilde\varphi(k,x)\,d\rho\,\varphi(k,z)^\dagger=
\delta(x-z)\,I+\int_0^x dy\,\mathcal A(x,y)\,\delta(y-z)\,I,\end{equation}
where we have used Parseval's equality given in
\eqref{5.2}.
 From \eqref{5.41} we have
\begin{equation*}
\mathcal A(x,y)=\ds\int_{\lambda\in\mathbb R}\tilde\varphi(k,x)\,d\rho\,\varphi(k,y)^\dagger,
\qquad 0\le y<x,
\end{equation*}
 which yields
\begin{equation}
\label{5.43}
\mathcal A(x,0)=\ds\int _{\lambda\in\mathbb R}\tilde\varphi(k,x)\, d\rho\,\varphi(k,0)^\dagger,\end{equation}
\begin{equation*}
\mathcal A_y(x,y)=\ds\int_{\lambda\in\mathbb R} \tilde\varphi(k,x)\, d\rho\,\varphi'(k,y)^\dagger,
\end{equation*}
\begin{equation}
\label{5.45}
\mathcal A_y(x,0)=\ds\int _{\lambda\in\mathbb R}\tilde\varphi(k,x)\, d\rho\,\varphi'(k,0)^\dagger.\end{equation}
Using \eqref{2.10} in \eqref{5.43} and \eqref{5.45}, we have
\begin{equation}
\label{5.46}
\mathcal A(x,0)=\ds\int_{\lambda\in\mathbb R} \tilde\varphi(k,x)\, d\rho\,A^\dagger,\end{equation}
\begin{equation}
\label{5.47}
\mathcal A_y(x,0)=\ds\int _{\lambda\in\mathbb R}\tilde\varphi(k,x)\, d\rho\,B^\dagger.\end{equation}
Hence, from \eqref{5.46} and \eqref{5.47} we obtain
\begin{equation}
\label{5.48}
-\mathcal A_y(x,0) A+\mathcal A(x,0) B=\ds\int_{\lambda\in\mathbb R} \tilde\varphi(k,x)\, d\rho
\,(-B^\dagger A+A^\dagger B).\end{equation}
Using \eqref{2.6} in \eqref{5.48} we get \eqref{5.22}. This completes the proofs of (a) and (b).
Since $\varphi(k,x)$ satisfies the initial conditions \eqref{2.10}
and $\tilde\varphi(k,x)$ satisfies the initial conditions \eqref{5.17}, we obtain \eqref{5.23}
by evaluating \eqref{5.18} and its $x$-derivative at $x=0.$
Thus, the proof of (c) is also complete.
 \end{proof}

We have the following remark.
Recall that an input data set $\mathbf D$ consists of
the potential and the boundary condition, or equivalently
it is described by the set $\{V,A,B\}.$ The perturbation of the
potential is expressed in \eqref{5.21}, and the perturbation of
the boundary matrices $A$ and $B$ are expressed in
\eqref{5.23}. The perturbation of the wavefunction
is given by \eqref{5.18}, and as the Gel'fand--Levitan method indicates
it is appropriate to describe the perturbation at the wavefunction level
by providing the perturbation of the regular solution given in \eqref{5.18}.
This is because the regular solution is the wavefunction directly related to
the input data set, as it is related to the potential $V(x)$
through the Schr\"odinger equation \eqref{5.15} and the boundary matrices
$A$ and $B$ through the initial conditions \eqref{2.10}. In particular, in the evaluation of the transformations
on the half line it is appropriate to
describe how the regular solution $\varphi(k,x)$
is transformed. For the half-line
Schr\"odinger equation, the Jost solution $f(k,x)$ is not the appropriate
wavefunction to express the perturbation at the wavefunction level.
This is because $f(k,x)$ is only affected by the potential $V(x)$ but not
by the boundary matrices $A$ and $B.$

In the next theorem we show that
the quantity $\mathcal A(x,y)$ appearing in
\eqref{5.18},  \eqref{5.20}, and \eqref{5.21} satisfies the matrix-valued
Gel'fand--Levitan system of integral equations.

 \begin{theorem}
\label{theorem5.3} Assume that the unperturbed and perturbed potentials $V$ and $\tilde V$ appearing in 
\eqref{5.15} and \eqref{5.16}, respectively, both satisfy
\eqref{2.2} and belong to $L^1_1(\mathbb R^+).$ Furthermore, assume that the unperturbed boundary 
matrix pair $(A,B)$ and
the perturbed boundary matrix pair $(\tilde A,\tilde B)$
appearing in \eqref{2.10} and \eqref{5.17}, respectively, both yield
a selfadjoint boundary condition described in \eqref{2.5}--\eqref{2.7}.
Let the perturbed regular solution $\tilde\varphi(k,x)$ be related to the
unperturbed regular solution $\varphi(k,x)$ through the $n\times n$ matrix $\mathcal A(x,y)$
as in \eqref{5.18}. Then, $\mathcal A(x,y)$ satisfies the Gel'fand--Levitan system of integral equations given by
\begin{equation}
\label{5.49}
\mathcal A(x,y)+G(x,y)+\int_0^x dz\,\mathcal A(x,z)\,G(z,y)=0,\qquad 0\le y< x,\end{equation}
where we have defined
\begin{equation}
\label{5.50}
G(x,y):=\int_{\lambda\in\mathbb R} \varphi(k,x)\left[d\tilde\rho-d\rho\right]\varphi(k,y)^\dagger,
\end{equation}
with $d\rho$ and $d\tilde\rho$ denoting the unperturbed and perturbed spectral measures, respectively.
\end{theorem}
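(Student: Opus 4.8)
The plan is to play the unperturbed Parseval equality \eqref{5.2} against its perturbed counterpart. The perturbed equality holds by applying Theorem~\ref{theorem5.1} to the perturbed data $\{\tilde V,\tilde A,\tilde B\}$, which is legitimate because $\tilde V$ satisfies \eqref{2.2}, belongs to $L^1_1(\mathbb R^+)$, and $(\tilde A,\tilde B)$ yields a selfadjoint boundary condition; it reads
\begin{equation*}
\ds\int_{\lambda\in\mathbb R}\tilde\varphi(k,x)\,d\tilde\rho\,\tilde\varphi(k,y)^\dagger=\delta(x-y)\,I.
\end{equation*}
The key device is the auxiliary mixed quantity
\begin{equation*}
\mathcal I(x,y):=\ds\int_{\lambda\in\mathbb R}\tilde\varphi(k,x)\,d\tilde\rho\,\varphi(k,y)^\dagger,
\end{equation*}
which I evaluate in two different ways for $0\le y<x$.

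First I would compute $\mathcal I(x,y)$ by substituting the transformation \eqref{5.18} for the left factor $\tilde\varphi(k,x)$, splitting $d\tilde\rho=d\rho+(d\tilde\rho-d\rho)$ in the resulting $\lambda$-integrals, and invoking the unperturbed Parseval equality \eqref{5.2} together with the definition \eqref{5.50} of $G$. After interchanging the $\lambda$-integration with the inner $z$-integration (a Fubini step to be justified from the integrability hypotheses), the identities $\int_{\lambda}\varphi(k,x)\,d\rho\,\varphi(k,y)^\dagger=\delta(x-y)\,I$ and $\int_{\lambda}\varphi(k,z)\,d\rho\,\varphi(k,y)^\dagger=\delta(z-y)\,I$ produce Dirac deltas; the delta $\delta(z-y)$ sits in the open interior of $[0,x]$ since $y<x$, so it sifts out $\mathcal A(x,y)$, while $\delta(x-y)$ vanishes. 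This yields
\begin{equation*}
\mathcal I(x,y)=\mathcal A(x,y)+G(x,y)+\ds\int_0^x dz\,\mathcal A(x,z)\,G(z,y),\qquad 0\le y<x.
\end{equation*}

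Second, I would show that $\mathcal I(x,y)=0$ for $0\le y<x$, and this is the step I expect to be the main obstacle. The transformation \eqref{5.18} is a Volterra (lower-triangular) transformation with a continuous kernel, so it admits an inverse of the same form, namely $\varphi(k,y)=\tilde\varphi(k,y)+\int_0^y dz\,\tilde{\mathcal A}(y,z)\,\tilde\varphi(k,z)$ for a suitable continuous kernel $\tilde{\mathcal A}$. Substituting this expression for $\varphi(k,y)$ into $\mathcal I(x,y)$ and applying the perturbed Parseval equality term by term gives $\mathcal I(x,y)=\delta(x-y)\,I+\int_0^y dz\,\delta(x-z)\,\tilde{\mathcal A}(y,z)^\dagger$; for $0\le y<x$ both $\delta(x-y)$ and every $\delta(x-z)$ with $z\in[0,y]$ vanish, since their supports lie strictly below $x$, so $\mathcal I(x,y)=0$.

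Equating the two evaluations for $0\le y<x$ gives $\mathcal A(x,y)+G(x,y)+\int_0^x dz\,\mathcal A(x,z)\,G(z,y)=0$, which is exactly \eqref{5.49}. The principal technical burdens are the existence and triangular structure of the inverse transformation (together with the continuity of $\tilde{\mathcal A}$), the rigorous handling of the distributional sifting against the matrix-valued measures $d\rho$ and $d\tilde\rho$, and the Fubini interchange; the strict inequality $y<x$ is precisely what forces all boundary delta contributions to drop out and thereby closes the argument.
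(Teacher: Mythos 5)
Your proposal is correct and follows essentially the same route as the paper: the paper likewise evaluates the mixed quantity $\int_{\lambda\in\mathbb R}\tilde\varphi(k,x)\,d\tilde\rho\,\varphi(k,z)^\dagger$ in two ways, showing it vanishes for $0\le z<x$ via the perturbed Parseval equality combined with the Volterra inversion of \eqref{5.18}, and expanding it via \eqref{5.18} and the splitting $d\tilde\rho=(d\tilde\rho-d\rho)+d\rho$ together with the unperturbed Parseval equality. The technical points you flag (existence of the triangular inverse kernel and the sifting of the boundary deltas under the strict inequality $y<x$) are exactly the ones the paper addresses.
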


\begin{proof} Let us multiply \eqref{5.18} on the right by
$d\tilde\rho\,\varphi(k,z)^\dagger$ and integrate over $\lambda\in\bR,$
where we recall that $\lambda:=k^2.$ We get
\begin{equation}
\label{5.51}
\ds\int _{\lambda\in\mathbb R}\tilde\varphi(k,x)\, d\tilde\rho\,\varphi(k,z)^\dagger
\\
=
\ds\int_{\lambda\in\mathbb R} \varphi(k,x)\,d\tilde\rho\,\varphi(k,z)^\dagger
+\int_0^x dy\,\mathcal A(x,y)\,\int_{\lambda\in\mathbb R} \varphi(k,y)\,d\tilde\rho\,\varphi(k,z)^\dagger.
\end{equation}
We have
\begin{equation}
\label{5.52}
\ds\int _{\lambda\in\mathbb R}\tilde\varphi(k,x)\,d\tilde\rho\,\varphi(k,z)^\dagger=0,
\qquad 0\le z<x,\end{equation}
which is obtained with the help of
Parseval's equality given by
\begin{equation*}
\ds\int_{\lambda\in\mathbb R} \tilde\varphi(k,x)\,d\tilde\rho\,\tilde\varphi(k,z)^\dagger=\delta(x-z)\,I,
\end{equation*}
and by using the fact that for each fixed $k\in\bC$ and fixed
$x\in\bR^+$ we can express
$\varphi(k,z)$ in terms of
the functions in the set
$\{ \tilde\varphi(k,y): y\in [0,z]\}.$ To  prove this we remark that we can consider \eqref{5.18} as an integral 
equation where $\varphi(k,x)$ is the unknown quantity and $\tilde\varphi(k,x)$ is the nonhomogeneous term.
We further observe that the integral operator on the right-hand side of \eqref{5.18} is a Volterra integral operator,
and hence the integral equation \eqref{5.18} can be inverted into an integral equation of a similar form, which is given by
\begin{equation*}
\varphi(k,x)= \tilde\varphi(k,x)+ \ds\int_0^x dy\, \tilde{\mathcal A}(x,y)\, \tilde\varphi(k,y),
\end{equation*}
where the quantity $\tilde{\mathcal A}(x,y)$ corresponds to the kernel, $\varphi(k,x)$
is the nonhomogeneous term, and $\tilde\varphi(k,x)$ is the unknown quantity.
We write the first integral on the right-hand side of \eqref{5.51} as
\begin{equation}
\label{5.55}
\ds\int_{\lambda\in\mathbb R} \varphi(k,x)\,
 d\tilde\rho\,\varphi(k,z)^\dagger\\
=
\ds\int_{\lambda\in\mathbb R} \varphi(k,x)\left[d\tilde\rho-d\rho\right]\varphi(k,z)^\dagger
+
\ds\int _{\lambda\in\mathbb R}\varphi(k,x)\,d\rho\,\varphi(k,z)^\dagger.\end{equation}
Using \eqref{5.2} and \eqref{5.50} on the right-hand side of \eqref{5.55}, we obtain
\begin{equation}
\label{5.56}
\ds\int_{\lambda\in\mathbb R} \varphi(k,x)\,d\tilde\rho\,\varphi(k,z)^\dagger=
G(x,z)+\delta(x-z)\,I.\end{equation}
Thus, from \eqref{5.56} we get
\begin{equation}
\label{5.57}
\ds\int _{\lambda\in\mathbb R}\varphi(k,x)\,d\tilde\rho\,\varphi(k,z)^\dagger=
G(x,z),\qquad 0\le z<x.\end{equation}
The second integral on the right-hand side
of \eqref{5.51} can be written as
\begin{equation}
\label{5.58}
\begin{split}
\int_0^x dy\,\mathcal A(x,y)\ds\int_{\lambda\in\mathbb R} \varphi(k,y)\,
d\tilde\rho\,\varphi(k,z)^\dagger
=&
\int_0^x dy\,\mathcal A(x,y)\ds\int_{\lambda\in\mathbb R} \varphi(k,y)\left[d\tilde\rho-d\rho\right]\varphi(k,z)^\dagger\\
&+
\int_0^x dy\,\mathcal A(x,y)\int_{\lambda\in\mathbb R} \varphi(k,y)\,d\rho\,\varphi(k,z)^\dagger.
\end{split}\end{equation}
Using \eqref{5.2} in the second integral and using \eqref{5.50}
in the first integral on the
right-hand side of
\eqref{5.58}, we obtain
\begin{equation}
\label{5.59}
\int_0^x dy\,\mathcal A(x,y)\int_{\lambda\in\mathbb R} \varphi(k,y)\,d\tilde\rho\,\varphi(k,z)^\dagger=
\int_0^x dy\,\mathcal A(x,y)\,G(y,z)
+
\mathcal A(x,z),\qquad 0\le z<x.
\end{equation}
Thus, using \eqref{5.52}, \eqref{5.57}, and \eqref{5.59} in \eqref{5.51} we get
\eqref{5.49}. \end{proof}

 \section{The transformation to remove a bound state}
\label{section6}

We recall that we consider the half-line matrix
Schr\"odinger operator described by \eqref{2.1} and \eqref{2.5}, and 
we change the unperturbed
problem into the perturbed problem without changing the continuous spectrum but only
by adding or removing a finite number of
bound states or by decreasing or increasing the multiplicities of the bound states. 
We are interested in finding the transformations describing the change in the potential, the boundary matrices, 
the regular solution, and in other relevant quantities.
In this case, it is convenient to use 
\eqref{5.18}, \eqref{5.21}, and \eqref{5.23}
to describe the transformations for the regular solution, the potential, and the boundary matrices, respectively, 
 when the spectral measures
$d\tilde\rho$ and $d\rho$ differ from each other only in the bound states.
Since those transformations involve the quantity $\mathcal A(x,y)$ appearing in 
the Gel'fand--Levitan system \eqref{5.49} with the input given in \eqref{5.50},
it is natural to use the Gel'fand--Levitan method to determine the relevant
transformations.

Since the removal or addition of bound states or changing their multiplicities can be performed in succession, there is no loss of generality in considering only the case when one bound state is removed or added or when its multiplicity is decreased or increased.
In this section, we consider the case where a bound state is completely removed from the spectrum of the half-line Schr\"odinger operator associated with
\eqref{2.1} and \eqref{2.5}.

Let us start with the unperturbed Schr\"odinger operator described by \eqref{2.1} and \eqref{2.5},
where the potential $V$ satisfies \eqref{2.2} and belongs to  $L^1_1(\mathbb R^+)$ and the
boundary matrices $A$ and $B$ satisfy \eqref{2.5}--\eqref{2.7}.
Thus, for the unperturbed problem, we have the potential
$V(x),$ the boundary matrices $A$ and $B,$ the regular solution
$\varphi(k,x),$ the spectral measure $d\rho,$ 
the Jost matrix $J(k),$ the scattering matrix $S(k),$ and
$N$ bound states with energies $-\kappa_j^2,$ 
the Gel'fand--Levitan normalization matrices $C_j,$ the orthogonal projections
$Q_j$ onto $\text{\rm{Ker}}[J(i\kappa_j)],$ the orthogonal projections
$P_j$ onto $\text{\rm{Ker}}[J(i\kappa_j)^\dagger],$ 
the Gel'fand--Levitan normalized bound-state solutions $\Phi_j(x),$ the Marchenko normalized
bound-state solutions $\Psi_j(x),$ the dependency matrices $D_j,$ and the multiplicities $m_j$ of the bound states
for $1\le j\le N.$

We would like to remove any one of the bound states at $\lambda=\lambda_j,$ or equivalently at $k=i\kappa_j,$ with
the Gel'fand--Levitan normalization matrix $C_j.$
Since we do not order the distinct positive constants $\kappa_j$ in any increasing or decreasing manner, without loss
of generality we can assume that we remove the bound state
with $k=i\kappa_N$ and the Gel'fand--Levitan normalization matrix $C_N.$
We choose the Gel'fand--Levitan kernel $G(x,y)$ given in \eqref{5.50} as
\begin{equation}
\label{6.1}
G(x,y)=- \varphi(k,x) \,C_N^2\, \varphi(k,y)^\dagger.
\end{equation}
Next, we use the input \eqref{6.1} in the Gel'fand--Levitan system \eqref{5.49} and we obtain
its solution $\mathcal A(x,y).$ With the help of $\mathcal A(x,y)$ we then form
the perturbed Schr\"odinger operator with 
the potential
$\tilde V(x),$ the boundary matrices $\tilde A$ and $\tilde B,$ the regular solution
$\tilde\varphi(k,x),$ the spectral measure $d\tilde\rho,$ 
the Jost matrix $\tilde J(k),$ the scattering matrix $\tilde S(k),$ and
$N-1$ bound states with energies $-\kappa_j^2,$ 
the Gel'fand--Levitan normalization matrices $\tilde C_j,$ the orthogonal projections
$\tilde Q_j$ onto $\text{\rm{Ker}}[\tilde J(i\kappa_j)],$ the orthogonal projections
$\tilde P_j$ onto $\text{\rm{Ker}}[\tilde J(i\kappa_j)^\dagger],$ 
the Gel'fand--Levitan normalized bound-state solutions $\tilde\Phi_j(x),$ the Marchenko normalized
bound-state solutions $\tilde\Psi_j(x),$ the dependency matrices $\tilde D_j,$ and the multiplicities $m_j$ of the bound states
for $1\le j\le N-1.$
Our goal is to express the perturbed quantities distinguished with a tilde in their notations
in terms of the unperturbed quantities not containing a tilde and the perturbation identified by $\kappa_N$ and $C_N.$

In the next theorem, we present the solution to the Gel'fand--Levitan system \eqref{5.50} with the input \eqref{6.1}.

\begin{theorem}\label{theorem6.1}
Consider the unperturbed Schr\"odinger operator associated with \eqref{2.1} and \eqref{2.5},
where the potential
$V$ satisfies \eqref{2.2} and belongs to
$L^1_1(\mathbb R^+)$
and the boundary matrices $A$ and $B$ appearing in \eqref{2.5} satisfy 
\eqref{2.6} and \eqref{2.7}. Let $\varphi(k,x)$ be the corresponding regular solution
satisfying the initial conditions
 \eqref{2.10},
and each of $k=i\kappa_j$ for $1\le j\le N$ correspond to the bound state with the energy $-\kappa_j^2,$ 
the Gel'fand--Levitan normalization matrix $C_j,$ 
the orthogonal projection
$Q_j$ onto $\text{\rm{Ker}}[J(i\kappa_j)],$ 
and
the Gel'fand--Levitan normalized bound-state solution $\Phi_j(x).$
If the kernel $G(x,y)$ of the matrix-valued Gel'fand--Levitan system \eqref{5.49} is chosen as in \eqref{6.1},
then the solution $\mathcal A(x,y)$ to \eqref{5.49} is given by 
\begin{equation}\label{6.2}
\mathcal A(x,y)= \Phi_N(x) \,W_N(x)^+\,\Phi_N(y)^\dagger, \qquad 0 \le y < x, 
\end{equation}
with $W_N(x)$ defined as
\begin{equation}
\label{6.3}
W_N(x):=\int_x^\infty dz\, \Phi_N(z)^\dagger\,\Phi_N(z),
\end{equation}
where we recall that $W_N(x)^+$ denotes the Moore--Penrose inverse of $W_N(x).$
\end{theorem}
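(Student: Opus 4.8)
The plan is to exploit that the kernel \eqref{6.1} is degenerate (separable of rank $m_N$) and to collapse the Gel'fand--Levitan system \eqref{5.49} into a single algebraic matrix equation solvable with the Moore--Penrose calculus of Section~\ref{section3}. First I would rewrite the kernel: in \eqref{6.1} we have $k=i\kappa_N$, and since $\Phi_N(x)=\varphi(i\kappa_N,x)\,C_N$ by \eqref{3.28} with $C_N$ hermitian by \eqref{3.36}, the kernel is $G(x,y)=-\Phi_N(x)\,\Phi_N(y)^\dagger$. I would also record the identities $\Phi_N(x)\,Q_N=\Phi_N(x)$ and $Q_N\,\Phi_N(x)^\dagger=\Phi_N(x)^\dagger$, which follow from $C_N=C_N\,Q_N=Q_N\,C_N$ (a consequence of $C_N=\mathbf H_N^{-1/2}\,Q_N$ and the commutation of $\mathbf H_N^{-1/2}$ with $Q_N$).

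Because $G(z,y)$ depends on $y$ only through the right factor $\Phi_N(y)^\dagger$, solving \eqref{5.49} for $\mathcal A(x,y)$ shows immediately that $\mathcal A(x,y)=\Theta(x)\,\Phi_N(y)^\dagger$, where $\Theta(x):=\Phi_N(x)+\int_0^x dz\,\mathcal A(x,z)\,\Phi_N(z)$. Substituting this back into the definition of $\Theta(x)$ and invoking the normalization \eqref{3.29} in the form $\int_0^x dz\,\Phi_N(z)^\dagger\Phi_N(z)=Q_N-W_N(x)$, with $W_N(x)$ as in \eqref{6.3}, the integral equation reduces to the algebraic identity $\Theta(x)\,\big[I-Q_N+W_N(x)\big]=\Phi_N(x)$. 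The task is thereby reduced to inverting $I-Q_N+W_N(x)$.

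The crux --- and what I expect to be the main obstacle --- is the structure of $W_N(x)$. From $\Phi_N(z)\,Q_N=\Phi_N(z)$ one gets $W_N(x)=Q_N\,W_N(x)\,Q_N$, so $W_N(x)$ is hermitian, nonnegative, and supported on $\text{\rm{Ker}}[J(i\kappa_N)]=\text{\rm{Ran}}[Q_N]$. I would then argue that $W_N(x)$ is strictly positive, hence invertible, on this subspace: for $0\ne v\in\text{\rm{Ran}}[Q_N]$ we have $v^\dagger W_N(x)\,v=\int_x^\infty dz\,|\Phi_N(z)\,v|^2$, and since $\int_0^\infty dz\,|\Phi_N(z)\,v|^2=v^\dagger Q_N v=|v|^2>0$ by \eqref{3.29}, the vector-valued solution $\Phi_N(\cdot)\,v$ of \eqref{3.19} is nontrivial; by uniqueness for the linear second-order ODE it cannot vanish on all of $(x,\infty)$, so the tail integral is positive. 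Thus $W_N(x)$ has the direct-sum form (invertible on $\text{\rm{Ran}}[Q_N]$, zero on its complement) of Proposition~\ref{proposition3.2}, and \eqref{3.9} together with Theorem~\ref{theorem3.1}(d),(e) yields $W_N(x)^+\,W_N(x)=W_N(x)\,W_N(x)^+=Q_N$ and $W_N(x)^+\,Q_N=W_N(x)^+$. These identities give that $I-Q_N+W_N(x)$ is invertible with inverse $(I-Q_N)+W_N(x)^+$.

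Finally I would solve the algebraic equation as $\Theta(x)=\Phi_N(x)\big[(I-Q_N)+W_N(x)^+\big]$, and since $\Phi_N(x)\,(I-Q_N)=\Phi_N(x)-\Phi_N(x)\,Q_N=0$ this collapses to $\Theta(x)=\Phi_N(x)\,W_N(x)^+$, whence $\mathcal A(x,y)=\Phi_N(x)\,W_N(x)^+\,\Phi_N(y)^\dagger$ for $0\le y<x$, which is \eqref{6.2}; uniqueness is inherited from the unique solvability of the Gel'fand--Levitan system. As a direct consistency check, substituting \eqref{6.2} into \eqref{5.49} reduces the bracket to $W_N(x)^+-I-W_N(x)^+\,Q_N+W_N(x)^+\,W_N(x)=Q_N-I$, and $\Phi_N(x)\,(Q_N-I)\,\Phi_N(y)^\dagger=0$ again by $\Phi_N(x)\,Q_N=\Phi_N(x)$.
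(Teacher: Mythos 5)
Your proof is correct and follows essentially the same route as the paper: exploit the separable kernel to reduce \eqref{5.49} to an algebraic matrix equation, establish that $W_N(x)$ restricted to $\text{\rm{Ran}}[Q_N]$ is invertible, and then solve via the Moore--Penrose calculus of Proposition~\ref{proposition3.2}. Two tactical points differ, both legitimately. First, the paper imposes the normalization $\alpha_N(x)=\alpha_N(x)\,Q_N$ on the ansatz and works with $\alpha_N(x)\,W_N(x)=\Phi_N(x)$ together with $W_N(x)\,W_N(x)^+=Q_N$, whereas you keep the full equation $\Theta(x)\left[I-Q_N+W_N(x)\right]=\Phi_N(x)$ and invert the genuinely invertible matrix $I-Q_N+W_N(x)$ explicitly; your version avoids having to justify the side condition and gives uniqueness of $\Theta(x)$ for free. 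Second, for the positivity of $[W_N(x)]_1$ the paper propagates $\Phi_N(y)\,\nu\equiv 0$ back to $x=0$ and invokes the boundary-matrix positivity \eqref{2.7} through \eqref{6.19}, while you short-circuit this by citing the already-established normalization \eqref{3.29}, which forces $\int_0^\infty dz\,|\Phi_N(z)v|^2=|v|^2>0$; your argument is a bit more economical, the paper's makes the role of \eqref{2.7} visible. Your closing consistency check, with the bracket collapsing to $Q_N-I$ and $\Phi_N(x)\left(Q_N-I\right)=0$, is a nice touch the paper omits.
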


\begin{proof}
From  \eqref{3.28} we see that \eqref{6.1} is equivalent to
\begin{equation}
\label{6.4}
G(x,y)=- \Phi_N(x)\,\Phi_N(y)^\dagger.\end{equation}
Using the input \eqref{6.4} in \eqref{5.49}, we write the Gel'fand--Levitan system of integral equations as
\begin{equation}
\label{6.5}
\mathcal A(x,y)-\Phi_N(x)\,\Phi_N(y)^\dagger-\int_0^x dz\,\mathcal A(x,z)\,\Phi_N(z)\,\Phi_N(y)^\dagger=0,
\qquad 0\le y\le x.
\end{equation}
The specific $y$-dependence of the second and third terms in \eqref{6.5} indicates that
the solution $\mathcal A(x,y)$ to \eqref{6.5} must have the form
\begin{equation}
\label{6.6}
\mathcal A(x,y)= \alpha_N(x)\,\Phi_N(y)^\dagger,\end{equation}
for some $\alpha_N(x)$ satisfying
\begin{equation}
\label{6.7}
\alpha_N(x)=\alpha_N(x)\,Q_N.\end{equation}
Using \eqref{6.6} in \eqref{6.5}, we get
\begin{equation*}
\alpha_N(x)-\Phi_N(x)-\alpha_N(x)\int_0^x dz\, \Phi_N(z)^\dagger\,\Phi_N(z)=0,
\end{equation*}
or equivalently
\begin{equation}
\label{6.9}
\alpha_N(x)\left[Q_N-\int_0^x dz\, \Phi_N(z)^\dagger\,\Phi_N(z)\right]=
\Phi_N(x).\end{equation}
From \eqref{3.29} and \eqref{6.3} we see that we can write $W_N(x)$ also as
\begin{equation}
\label{6.10}
W_N(x)=Q_N-\int_0^x dz\, \Phi_N(z)^\dagger\,\Phi_N(z).\end{equation}
Hence, \eqref{6.9} is equivalent to
\begin{equation*}
\alpha_N(x)\,W_N(x)=\Phi_N(x),\end{equation*}
 from which we get
\begin{equation}
\label{6.12}
\alpha_N(x)\, W_N(x)\,W_N(x)^+ =\Phi_N(x)\, W_N(x)^+.\end{equation}
Since $Q_N$ is the orthogonal projection onto $\text{\rm{Ker}}[J(i\kappa_N)],$ with the help of \eqref{3.16} and \eqref{3.28} we can
express $W_N(x)$ as a direct sum and obtain
\begin{equation}\label{6.13}
W_N(x)= [W_N(x)]_1\oplus 0,
\end{equation}
where we use $[W_N(x)]_1$ to denote the restriction of $W_N(x)$ to $\text{\rm{Ker}}[J(i\kappa_N)].$ 
We would like to prove that $[W_N(x)]_1$ is invertible for every $x\ge 0,$ 
and for this we proceed as follows.  If $[W_N(x)]_1$ were not invertible, then we would have a nonzero column vector $\nu$
in $\text{\rm{Ker}}[J(i\kappa_N)]$ satisfying
$W_N(x)\,\nu=0,$ which would yield $\nu^\dagger\,W_N(x)\,\nu=0.$
From \eqref{6.3} we would obtain
\begin{equation}
\label{6.14}
\nu^\dagger\,W_N(x)\,\nu=\ds\int_x^\infty dy\, |\Phi_N(y)\,\nu|^2,
\end{equation}
and hence  we see that the left-hand side of \eqref{6.14} would be equal to zero
if and only 
\begin{equation}
\label{6.15}
\Phi_N(y)\,\nu=0,\qquad y \ge x.
\end{equation}
On the other hand, since  $\Phi_N(x)\,\nu$ is a column-vector solution to \eqref{2.1}, the equality in
\eqref{6.15} would hold if and only if we had
\begin{equation}
\label{6.16}
\Phi_N(x)\,\nu=0,\qquad x \ge 0.
\end{equation}
However, 
\eqref{6.16} would imply that
\begin{equation}
\label{6.17}
\Phi_N(0)\,\nu=0,\quad \Phi'_N(0)\,\nu=0.
\end{equation}
From \eqref{2.10}, \eqref{3.36}, and the second equality of \eqref{3.39}, we observe that
\eqref{6.17} would imply
\begin{equation*}
A\,\mathbf H_N^{-1/2} Q_N \nu=0,\quad B\,\mathbf H_N^{-1/2} Q_N \nu=0,
\end{equation*}
which in turn would yield
\begin{equation}
\label{6.19}
\left(A^\dagger A+B^\dagger B\right)\mathbf H_N^{-1/2} Q_N \nu=0.
\end{equation}
From \eqref{2.7} we know that the matrix
$A^\dagger A+B^\dagger B$ is invertible and from Section~\ref{section3} we know that
$\mathbf H_N^{-1/2}$ is invertible. Thus, \eqref{6.19} would yield
$Q_N \nu=0.$ Since $Q_N$ is the orthogonal projection onto
$\text{\rm{Ker}}[J(i\kappa_N)]$ the equality $Q_N \nu=0$ would imply that $\nu=0.$
This would contradict the assumption that $\nu$ is a nonzero vector in $\text{\rm{Ker}}[J(i\kappa_N)].$ 
Thus, we have proved that $[W_N(x)]_1$ is invertible for every $x\ge 0.$ 
We note that \eqref{3.9} and \eqref{6.13} imply that
\begin{equation}\label{6.20}
W_N(x)^+= \left([W_N(x)]_1\right)^{-1}\oplus 0,
\end{equation}
\begin{equation}\label{6.21}
W_N(x)\,W_N(x)^+=W_N(x)^+\,W_N(x)= I\oplus 0,
\end{equation}
and hence from \eqref{6.21} we conclude that
\begin{equation}\label{6.22}
W_N(x)\,W_N(x)^+= W_N(x)^+\,W_N(x)=Q_N.
\end{equation}
Using \eqref{6.22} in \eqref{6.12}, with the help of \eqref{6.7}, we get
\begin{equation}
\label{6.23}
\alpha_N(x)=\Phi_N(x)\, W_N(x)^+.\end{equation}
Finally, using \eqref{6.23} in \eqref{6.6}, we obtain \eqref{6.2}.
\end{proof}

In the following theorem we present some relevant properties of the quantity $\mathcal A(x,y)$ appearing in \eqref{6.2}.

\begin{theorem} \label{theorem6.2}
Consider the unperturbed Schr\"odinger operator associated with \eqref{2.1} and \eqref{2.5},
where the potential
$V$ satisfies \eqref{2.2} and belongs to
$L^1_1(\mathbb R^+)$
and the boundary matrices $A$ and $B$ appearing in \eqref{2.5} satisfy 
\eqref{2.6} and \eqref{2.7}. Let $\varphi(k,x)$ be the corresponding regular solution
satisfying the initial conditions
 \eqref{2.10}, 
each of $k=i\kappa_j$ for $1\le j\le N$ correspond to the bound state with the energy $-\kappa_j^2,$ 
the Gel'fand--Levitan normalization matrix $C_j,$ 
the orthogonal projection
$Q_j$ onto $\text{\rm{Ker}}[J(i\kappa_j)],$ 
and
the Gel'fand--Levitan normalized bound-state solution $\Phi_j(x).$
Let  $\mathcal A(x,y)$ be the matrix-valued quantity given in \eqref{6.2}.
Let the matrix-valued perturbed potential $\tilde V(x)$ be given by
\begin{equation}
\label{6.24}
\tilde V(x):=V(x)+2\,\ds\frac{d}{dx}\left[
\Phi_N(x) \,W_N(x)^+\,\Phi_N(x)^\dagger
\right],\end{equation}
where $W_N(x)$ is the quantity defined in \eqref{6.3}.
We have the following:

\begin{enumerate}

\item[\text{\rm(a)}] The quantity  $\mathcal A(x,y)$
is continuously differentiable in the set $\mathbb D$ defined in \eqref{5.19}.

\item[\text{\rm(b)}] The second partial derivatives $\mathcal A_{xx}(x,y)$
and $\mathcal A_{yy}(x,y)$ are locally integrable in $\mathbb D.$

\item[\text{\rm(c)}] 
The quantity $\mathcal A(x,y)$ satisfies the second-order matrix-valued partial differential equation \eqref{5.20}.

\item[\text{\rm(d)}] The perturbed potential $\tilde V$ defined in \eqref{6.24} is related to
the unperturbed potential $V$ and the quantity $\mathcal A(x,y)$ as
in \eqref{5.21}.
\end{enumerate}

\end{theorem}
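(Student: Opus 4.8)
The plan is to read (a), (b), (d) off the explicit kernel \eqref{6.2} and to concentrate the real work on (c), which I will reduce to a single Wronskian-type symmetry of $\Phi_N$. Throughout I would abbreviate $R(x):=\Phi_N(x)\,W_N(x)^+$, so that $\mathcal A(x,y)=R(x)\,\Phi_N(y)^\dagger$, and I would first collect the facts used repeatedly: the bound-state equation $\Phi_N''(x)=[V(x)+\kappa_N^2]\,\Phi_N(x)$ holding almost everywhere; the asymptotics $\Phi_N(x),\Phi_N'(x)=O(e^{-\kappa_N x})$ as $x\to+\infty$; the relation $W_N'(x)=-\Phi_N(x)^\dagger\Phi_N(x)$ from \eqref{6.3}; and the decomposition $W_N(x)=[W_N(x)]_1\oplus 0$ with $[W_N(x)]_1$ invertible for all $x\ge 0$ established inside the proof of Theorem~\ref{theorem6.1}. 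The last fact lets me invoke Proposition~\ref{proposition3.2}(b) to conclude that $W_N(x)^+$ is continuously differentiable with
\begin{equation*}
[W_N(x)^+]'=-W_N(x)^+\,W_N'(x)\,W_N(x)^+=W_N(x)^+\,\Phi_N(x)^\dagger\,\Phi_N(x)\,W_N(x)^+,
\end{equation*}
and I would also record that $W_N(x)^+$ is hermitian and that $\Phi_N(x)\,Q_N=\Phi_N(x)$ and $W_N(x)^+\,Q_N=W_N(x)^+$.

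For (a) I would simply observe that $\mathcal A$ is a product of the $C^1$ factors $\Phi_N(x)$, $W_N(x)^+$, and $\Phi_N(y)^\dagger$, hence $C^1$ on $\mathbb D$. For (b), the $y$-direction is immediate: differentiating $\Phi_N(y)^\dagger$ twice and using the bound-state equation gives $\mathcal A_{yy}(x,y)=\mathcal A(x,y)\,[V(y)+\kappa_N^2]$, which lies in $L^1_{\mathrm{loc}}$ since $\mathcal A$ is continuous and $V\in L^1(\mathbb R^+)$. In the $x$-direction I would write $\mathcal A_{xx}(x,y)=R''(x)\,\Phi_N(y)^\dagger$ and note that expanding $R''$ via the formula for $[W_N^+]'$ above produces the single locally integrable term $[V(x)+\kappa_N^2]\,R(x)$ (coming from $\Phi_N''$) plus products of the continuous functions $\Phi_N,\Phi_N',W_N^+$; hence $\mathcal A_{xx}\in L^1_{\mathrm{loc}}(\mathbb D)$.

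The core is (c). Since $\mathcal A_{yy}=\mathcal A\,[V(y)+\kappa_N^2]$ is already in hand, the target identity \eqref{5.20} is equivalent to the single matrix identity $\mathcal A_{xx}=[\tilde V(x)+\kappa_N^2]\,\mathcal A$, i.e. to $R''=[\tilde V+\kappa_N^2]\,R$ with $\tilde V=V+2K'$ and $K(x):=\mathcal A(x,x)=\Phi_N(x)\,W_N(x)^+\,\Phi_N(x)^\dagger$. Carrying out the two $x$-differentiations of $R=\Phi_N W_N^+$ and expanding $K'$ by the product rule, every term cancels against $(V+\kappa_N^2)R+2K'R$ except for one cross term, leaving the sole requirement
\begin{equation*}
\Phi_N(x)\,W_N(x)^+\,\Phi_N(x)^\dagger\,\Phi_N'(x)\,W_N(x)^+=\Phi_N(x)\,W_N(x)^+\,\Phi_N'(x)^\dagger\,\Phi_N(x)\,W_N(x)^+,
\end{equation*}
which holds provided $\Phi_N(x)^\dagger\Phi_N'(x)=\Phi_N'(x)^\dagger\Phi_N(x)$. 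I expect this symmetry to be the main obstacle, and I would dispatch it by the standard Wronskian argument: the matrix $\mathcal W(x):=\Phi_N'(x)^\dagger\Phi_N(x)-\Phi_N(x)^\dagger\Phi_N'(x)$ satisfies $\mathcal W'=(\Phi_N'')^\dagger\Phi_N-\Phi_N^\dagger\Phi_N''=\Phi_N^\dagger[V+\kappa_N^2]\Phi_N-\Phi_N^\dagger[V+\kappa_N^2]\Phi_N=0$ by the selfadjointness \eqref{2.2} of $V$, so $\mathcal W$ is constant in $x$; the decay $\Phi_N(x),\Phi_N'(x)=O(e^{-\kappa_N x})$ forces $\mathcal W(+\infty)=0$, whence $\mathcal W\equiv 0$ and the symmetry follows. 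Subtracting $\mathcal A_{yy}$ then yields \eqref{5.20}, completing (c).

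Finally, for (d) I would note that by part (a) the value $\mathcal A(x,x)$, understood as $\mathcal A(x,x^-)$, equals $\Phi_N(x)\,W_N(x)^+\,\Phi_N(x)^\dagger$, so the definition \eqref{6.24} of $\tilde V$ is word for word \eqref{5.21}; thus (d) is immediate.
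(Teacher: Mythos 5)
Your proposal is correct and follows essentially the same route as the paper: parts (a), (b), and (d) are read off the explicit kernel together with the derivative formula \eqref{6.25} for $W_N(x)^+$, and part (c) is reduced to the cancellation identity \eqref{6.35}, which rests on the Wronskian symmetry \eqref{6.36}. The only (immaterial) difference is that you fix the constant Wronskian by letting $x\to+\infty$ and invoking the decay of $\Phi_N$ and $\Phi_N'$, whereas the paper evaluates it at $x=0$ using the initial values \eqref{6.38} together with \eqref{2.6}; both evaluations are valid.
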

 
 \begin{proof} From Proposition~3.2.9 of \cite{AW2021} and \eqref{3.28} we see that $\Phi_N(x)$ is continuously differentiable on $[0,+\infty).$
 Furthermore, $\Phi''_N(x)$ is locally integrable
 on $[0,+\infty)$ because $\Phi_N(x)$ is a solution to \eqref{3.19}.
Using \eqref{3.10}, \eqref{6.13}, and \eqref{6.20}, we obtain the $x$-derivative of $W_N(x)^+$ as
\begin{equation}
\label{6.25}
\left[W_N(x)^+\right]'=W_N(x)^+\,\Phi_N(x)^\dagger\,\Phi_N(x)\,W_N(x)^+.
\end{equation} 
With the help of \eqref{6.20} and \eqref{6.25}, we see that $W_N(x)^+$ is continuously differentiable on $[0,+\infty).$ Hence, $\mathcal A(x,y)$ 
given in \eqref{6.2} is continuously differentiable  
on $\mathbb D.$
This completes the proof of (a).
For the proof of (b), we proceed as follows.
By taking the derivative of both sides of \eqref{6.25}, and using  \eqref{6.25} in the resulting equation, we prove that $[W_N^+(x)]''$ is continuous on $[0,+\infty).$
Consequently, using \eqref{6.2}, we conclude that 
the second derivatives $\mathcal A_{xx}(x,y)$
and $\mathcal A_{yy}(x,y)$ are locally integrable in 
$\mathbb D.$ Thus, the proof of (b) is complete. The proof of (c) is obtained as follows.
From \eqref{6.2}, by taking the
appropriate derivatives we obtain
\begin{equation*}
\mathcal A_x(x,y)=\left[\Phi_N'(x) \,W_N(x)^++\Phi_N(x) \,(W_N(x)^+)'\right]\Phi_N(y)^\dagger,\end{equation*}
\begin{equation*}
\mathcal A_y(x,y)= \Phi_N(x) \,W_N(x)^+\,\Phi_N'(y)^\dagger,\end{equation*}
\begin{equation}
\label{6.28}
\mathcal A_{xx}(x,y)=\left[\Phi_N''(x) \,W_N(x)^++2\, \Phi_N'(x) \,(W_N(x)^+)'
+ \Phi_N(x) \,(W_N(x)^+)''\right]\Phi_N(y)^\dagger,\end{equation}
\begin{equation}
\label{6.29}
\mathcal A_{yy}(x,y)= \Phi_N(x) \,W_N(x)^+\,{\Phi_N''}(y)^\dagger.\end{equation}
Using \eqref{5.15}, we can express each of \eqref{6.28} and \eqref{6.29} in their equivalent forms as
\begin{equation}
\label{6.30}
\begin{split}
\mathcal A_{xx}(x,y)
=&\left[V(x)+\kappa_N^2\right]\Phi_N(x) \,W_N(x)^+\,\Phi_N(y)^\dagger
\\
&+\left[2\, \Phi_N'(x) \,(W_N(x)^+)'
+ \Phi_N(x) \,(W_N(x)^+)''\right]\Phi_N(y)^\dagger,\end{split}\end{equation}
\begin{equation}
\label{6.31}
\mathcal A_{yy}(x,y)= \Phi_N(x) \,W_N(x)^+\,\Phi_N'(y)^\dagger\left[V(y)+\kappa_N^2\right].\end{equation}
From \eqref{6.2} and \eqref{6.24} we also get
\begin{equation}
\label{6.32}
\tilde V(x)\,\mathcal A(x,y)=\left[V(x)+2\left(
\Phi_N(x) \,W_N(x)^+\,\Phi_N(x)^\dagger\right)'\right]\Phi_N(x) \,W_N(x)^+\,\Phi_N(y)^\dagger,
\end{equation}
\begin{equation}
\label{6.33}
\mathcal A(x,y)\,V(y)=\Phi_N(x) \,W_N(x)^+\,\Phi_N(y)^\dagger\,V(y).\end{equation}
From \eqref{6.31} and \eqref{6.33} we obtain
\begin{equation*}
-\mathcal A_{yy}(x,y)+\mathcal A(x,y)\,V(y)= -\kappa_N^2\,\Phi_N(x) \,W_N(x)^+\,\Phi_N(y)^\dagger.\end{equation*}
Using \eqref{6.30}, \eqref{6.32}, and \eqref{6.33}, we observe that \eqref{5.20} holds provided we have
\begin{equation}
\label{6.35}
2 \Phi_N'(x) \,(W_N(x)^+)'+ \Phi_N(x) \,(W_N(x)^+)''=2\,[
\Phi_N(x) \,W_N(x)^+\,\Phi_N(x)^\dagger]' \,\Phi_N(x) \,W_N(x)^+.\end{equation}
We prove \eqref{6.35} by using \eqref{6.25} and the identity
\begin{equation}
\label{6.36}
\Phi_N'(x)^\dagger\,\Phi_N(x)-\Phi_N(x)^\dagger\,\Phi_N'(x)\equiv 0.
\end{equation}
The proof of \eqref{6.36} can be given as follows. From \eqref{3.28} and \eqref{5.15} we get
\begin{equation}\label{6.37} 
\Phi''_N(x)= \left[V(x)+\kappa_N^2\right] \Phi_N(x).
\end{equation}
By taking the $x$-derivative of the left-hand side of \eqref{6.36} and using \eqref{6.37}, we establish that
the left-hand side of \eqref{6.36} is constant, and hence that constant value can be evaluated at $x=0.$ 
For this, we proceed as follows. With the help of \eqref{2.10} and \eqref{3.28} 
we get
\begin{equation}
\label{6.38}
\Phi_N(0)=A\, C_N,\quad \Phi_N'(0)=B \,C_N.
\end{equation}
Next, using \eqref{6.38} and the fact that $C_N$ is selfadjoint, we evaluate the left-hand side of \eqref{6.36} as
\begin{equation}
\label{6.39}
\Phi_N'(0)^\dagger\,\Phi_N(0)-\Phi_N(0)^\dagger\,\Phi_N'(0)=C_N\left(B^\dagger A-A^\dagger B\right) C_N.
\end{equation}
Using \eqref{2.6} on the right-hand side of \eqref{6.39} we obtain
\begin{equation*}
\Phi_N'(0)^\dagger\,\Phi_N(0)-\Phi_N(0)^\dagger\,\Phi_N'(0)=0,
\end{equation*}
which ensures that \eqref{6.36} holds.
With the help of \eqref{6.25} and \eqref{6.36}, we verify that \eqref{6.35} indeed holds, and hence $\mathcal A(x,y)$ satisfies \eqref{5.20}.
Thus, the proof of (c) is complete. The proof of (d) follows by comparing \eqref{5.21} and \eqref{6.24}
and by using the fact that the two expressions agree when we evaluate
$\mathcal A(x,x)$ by letting $y=x^-$ in \eqref{6.2}. Hence, the proof is complete.
\end{proof}

In the next theorem we show that the quantity $\tilde\varphi(k,x)$ in \eqref{5.18} with
$\mathcal A(x,y)$ as in \eqref{6.2} is the perturbed regular solution to \eqref{2.1}
with the potential $\tilde V(x)$ in \eqref{6.24} and
with an appropriate pair of boundary matrices $\tilde A$ and $\tilde B.$

\begin{theorem}\label{theorem6.3}
Consider the unperturbed Schr\"odinger operator associated with \eqref{2.1} and \eqref{2.5},
where the potential
$V$ satisfies \eqref{2.2} and belongs to
$L^1_1(\mathbb R^+)$
and the boundary matrices $A$ and $B$ appearing in \eqref{2.5} satisfy 
\eqref{2.6} and \eqref{2.7}. Let $\varphi(k,x)$ be the corresponding regular solution
satisfying the initial conditions
 \eqref{2.10}, 
each of $k=i\kappa_j$ for $1\le j\le N$ correspond to the bound state with the energy$-\kappa_j^2,$ 
the Gel'fand--Levitan normalization matrix $C_j,$ 
the orthogonal projection
$Q_j$ onto $\text{\rm{Ker}}[J(i\kappa_j)],$ 
and
the Gel'fand--Levitan normalized bound-state solution $\Phi_j(x).$
Let $\tilde\varphi(k,x)$ be the quantity given in \eqref{5.18} with $\mathcal A(x,y)$ expressed as in \eqref{6.2}.  
Then, we have the following:

\begin{enumerate}

\item[\text{\rm(a)}] The quantity  
$\tilde\varphi(k,x)$ is a solution to the perturbed Schr\"odinger equation \eqref{2.1}
 with the potential $\tilde V(x)$ given in \eqref{6.24}. We can express $\tilde\varphi(k,x)$ also as
 \begin{equation}\label{6.41}
\tilde\varphi(k,x)=\varphi(k,x)+ \Phi_{N}(x)\, W^+(x) \int_0^x dy\,\Phi_{N}(y)^\dagger\,\varphi(k,y).
\end{equation}

\item[\text{\rm(b)}] The perturbed potential $\tilde V(x)$ given in \eqref{6.24} 
satisfies \eqref{2.2}. Moreover, the potential increment $\tilde V(x)-V(x)$ has the asymptotic behavior
\begin{equation}\label{6.42}
\tilde V(x)-V(x)= O\left(q_6(x)\right), \qquad x \to +\infty,
\end{equation}
where we have defined $q_6(x)$ as
\begin{equation}
\label{6.43}
q_6(x):=\int_x^\infty dy \,|V(y)|.
\end{equation}
If the unperturbed potential $V$ is further restricted to $L^1_{1+\epsilon}(\mathbb R^+)$ for some fixed $\epsilon \ge 0,$ then 
the perturbed potential $\tilde V$ belongs to $L^1_\epsilon(\mathbb R^+).$

\item[\text{\rm(c)}] If the unperturbed potential $V(x)$ is further restricted to satisfy
\begin{equation}\label{6.44}
|V(x)| \le c\,e^{-\alpha x}, \qquad x \ge x_0,
\end{equation}
for some positive constants $\alpha$ and $x_0$ and with
$c$ denoting a generic constant, then the potential increment $\tilde V(x)-V(x)$ also satisfies
\begin{equation}\label{6.45}
|\tilde V(x)-V(x)| \le c\,e^{-\alpha x}, \qquad x \ge x_0.
\end{equation}

\item[\text{\rm(d)}]
If the support of the unperturbed potential $V$ is contained in the interval $[0,x_0]$
for some positive $x_0,$ then the support of 
the perturbed potential $\tilde V$ is also contained in $[0,x_0].$ 

\item[\text{\rm(e)}] For $ k \neq i \kappa_N,$  the perturbed quantity $\tilde\varphi(k,x)$ can be expressed as
\begin{equation}
\label{6.46}
\tilde\varphi(k,x)=\varphi(k,x)+\ds\frac{1}{k^2+\kappa_N^2}
\,\Phi_N(x) \,W_N(x)^+\left[\Phi'_N(x)^\dagger\,\varphi(k,x)-\Phi_N(x)^\dagger
\,\varphi'(k,x)\right].
\end{equation}

\item[\text{\rm(f)}] The perturbed quantity $\tilde\varphi(k,x)$ satisfies the initial conditions \eqref{5.17}
where the matrices $\tilde A$ and $\tilde B$
are expressed in terms of the unperturbed
boundary matrices $A$ and $B$ and
the Gel'fand--Levitan normalization matrix $C_N$ for the bound state at $k=i \kappa_N$ as
\begin{equation}\label{6.47}
\tilde A=A, \quad \tilde B=B+A\,C_N^2\, A^\dagger A.
\end{equation}

\item[\text{\rm(g)}] 
The matrices $\tilde A$ and $\tilde B$ appearing in \eqref{6.47} satisfy \eqref{2.6} and \eqref{2.7}. Hence, 
as a consequence of (a) and (f), 
the quantity $\tilde\varphi(k,x)$ corresponds to the regular solution 
to the matrix Schr\"odinger equation with the potential $\tilde V(x)$ in \eqref{6.24} 
and with the selfadjoint
boundary condition \eqref{2.5} with $A$ and $B$ there replaced with
$\tilde A$ and $\tilde B,$ respectively.

\end{enumerate}

\end{theorem}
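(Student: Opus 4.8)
The plan is to obtain parts (a), (e), (f), (g) from Theorem~\ref{theorem5.2} and Theorem~\ref{theorem6.2} together with short algebraic computations, and to treat the asymptotic statements (b)--(d) as the genuinely new analytic content, the decay rate in part~(b) being the main obstacle.

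\textbf{Parts (a) and (f).} Theorem~\ref{theorem6.2} has already verified that $\mathcal A(x,y)$ of \eqref{6.2} is continuously differentiable on $\mathbb D$, has locally integrable second derivatives, satisfies the partial differential equation \eqref{5.20}, and produces the potential \eqref{6.24} in the form \eqref{5.21}. These are precisely the hypotheses of Theorem~\ref{theorem5.2}, so Theorem~\ref{theorem5.2}(a) shows at once that $\tilde\varphi(k,x)$ from \eqref{5.18} solves \eqref{2.1} with the potential $\tilde V(x)$ of \eqref{6.24}; formula \eqref{6.41} then follows by inserting \eqref{6.2} into \eqref{5.18} and pulling the $x$-dependent factor $\Phi_N(x)\,W_N(x)^+$ out of the $y$-integral. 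Since the same hypotheses hold, Theorem~\ref{theorem5.2}(c) gives $\tilde A=A$ and $\tilde B=B+\mathcal A(0,0)\,A$. To finish (f) I would compute $\mathcal A(0,0)$: evaluating \eqref{6.3} at $x=0$ and using the normalization \eqref{3.29} gives $W_N(0)=Q_N$, hence $W_N(0)^+=Q_N$ because $Q_N$ is an orthogonal projection; then with $\Phi_N(0)=A\,C_N$ from \eqref{6.38} and the identity $C_N\,Q_N=C_N$ coming from \eqref{3.36}, one gets $\mathcal A(0,0)=\Phi_N(0)\,Q_N\,\Phi_N(0)^\dagger=A\,C_N^2\,A^\dagger$, which substituted into $\tilde B=B+\mathcal A(0,0)A$ yields \eqref{6.47}.

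\textbf{Part (e).} Starting from \eqref{6.41}, I would evaluate the integral $\int_0^x \Phi_N(y)^\dagger\varphi(k,y)\,dy$ by a Lagrange-identity argument. Using $\Phi_N''=(V+\kappa_N^2)\Phi_N$ from \eqref{6.37}, the selfadjointness \eqref{2.2} of $V$, and $\varphi''=(V-k^2)\varphi$, the integrand is a total derivative,
\[
\frac{d}{dy}\left[\Phi_N'(y)^\dagger\,\varphi(k,y)-\Phi_N(y)^\dagger\,\varphi'(k,y)\right]=(k^2+\kappa_N^2)\,\Phi_N(y)^\dagger\,\varphi(k,y).
\]
Integrating from $0$ to $x$ and dividing by $k^2+\kappa_N^2$ (legitimate for $k\ne i\kappa_N$), the boundary contribution at $y=0$ equals $C_N\left(B^\dagger A-A^\dagger B\right)$ by \eqref{6.38} and \eqref{2.10}, and this vanishes by \eqref{2.6}. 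Substituting the resulting value of the integral back into \eqref{6.41} gives \eqref{6.46}.

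\textbf{Parts (b)--(d).} Selfadjointness of $\tilde V$ is immediate: $W_N(x)$ is hermitian, so $W_N(x)^+$ is hermitian by Theorem~\ref{theorem3.1}(c), whence $\Omega_N(x):=\Phi_N(x)\,W_N(x)^+\,\Phi_N(x)^\dagger$ is hermitian, its $x$-derivative is hermitian, and \eqref{6.24} gives $\tilde V^\dagger=\tilde V$, so \eqref{2.2} holds for $\tilde V$. For the decay estimate \eqref{6.42} I would write, using \eqref{6.25},
\[
\Omega_N'(x)=\Phi_N'(x)\,W_N(x)^+\Phi_N(x)^\dagger+\Omega_N(x)^2+\Phi_N(x)\,W_N(x)^+\Phi_N'(x)^\dagger,
\]
so that $\tilde V-V=2\Omega_N'$. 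Into this I would feed the refined large-$x$ asymptotics of the Jost solution obtained from the Volterra integral equation behind \eqref{2.9}, namely $\Phi_N(x)=e^{-\kappa_N x}[\,\Xi_N+O(q_6(x))\,]$ and $\Phi_N'(x)+\kappa_N\Phi_N(x)=e^{-\kappa_N x}O(q_6(x))$ where $\Phi_N(x)=f(i\kappa_N,x)\,\Xi_N$ with $\Xi_N$ the constant matrix in \eqref{4.10}, together with the companion expansion $W_N(x)^+=2\kappa_N e^{2\kappa_N x}[\,(\Xi_N^\dagger\Xi_N)^+ +O(q_6(x))\,]$ on the invertible subspace. The main obstacle is precisely here: the leading exponential factors cancel, and the would-be constant contributions (those that would produce the pure $e^{-2\kappa_N x}$ rate erroneously claimed in the literature, cf. Example~\ref{example8.9}) also cancel, so that the true rate is governed by the tail $q_6(x)$; carrying out this cancellation with all error terms controlled by $q_6$ yields \eqref{6.42}. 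The $L^1_\epsilon$ claim then follows from $|\tilde V-V|\le c\,q_6(x)$ by Fubini,
\[
\int_0^\infty dx\,(1+x)^\epsilon\,q_6(x)=\int_0^\infty dy\,|V(y)|\int_0^y dx\,(1+x)^\epsilon\le c\int_0^\infty dy\,(1+y)^{1+\epsilon}\,|V(y)|,
\]
which is finite when $V\in L^1_{1+\epsilon}(\bR^+)$. Part~(c) is then a direct corollary, since \eqref{6.44} gives $q_6(x)\le (c/\alpha)e^{-\alpha x}$ for $x\ge x_0$ and \eqref{6.42} converts this to \eqref{6.45}. For part~(d), if $V$ vanishes on $(x_0,\infty)$ then the Volterra equation forces $f(i\kappa_N,x)=e^{-\kappa_N x}I$ exactly there, hence $\Phi_N(x)=e^{-\kappa_N x}\Xi_N$ for $x\ge x_0$; a direct substitution shows the exponential factors in $\Omega_N(x)=\Phi_N(x)W_N(x)^+\Phi_N(x)^\dagger$ cancel, so $\Omega_N$ is constant for $x\ge x_0$, giving $\tilde V-V=2\Omega_N'=0$ there and $\operatorname{supp}(\tilde V)\subset[0,x_0]$.

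\textbf{Part (g).} Finally I would verify \eqref{2.6} and \eqref{2.7} for $\tilde A=A$, $\tilde B=B+A\,C_N^2\,A^\dagger A$. Expanding $-\tilde B^\dagger\tilde A+\tilde A^\dagger\tilde B$ with $C_N$ hermitian, the $C_N$-terms cancel and the expression collapses to $-B^\dagger A+A^\dagger B$, which is zero by \eqref{2.6} for $(A,B)$; this establishes \eqref{2.6}. For \eqref{2.7} I would use the equivalent rank condition: if $\tilde A v=0$ and $\tilde B v=0$ then $A v=0$, whence $A\,C_N^2\,A^\dagger A v=0$ and therefore $\tilde B v=B v=0$; since $(A,B)$ satisfies \eqref{2.7}, this forces $v=0$, so \eqref{2.7} holds for $(\tilde A,\tilde B)$. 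Combining this with parts~(a) and~(f), $\tilde\varphi(k,x)$ solves \eqref{2.1} with potential $\tilde V$ and meets the selfadjoint initial conditions $\tilde\varphi(k,0)=\tilde A$, $\tilde\varphi'(k,0)=\tilde B$ of \eqref{5.17}, hence it is the regular solution of the perturbed operator, completing the proof.
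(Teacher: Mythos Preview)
Your proposal is correct and follows essentially the same route as the paper: parts (a), (e) via the Wronskian identity, the asymptotic analysis for (b)--(d) built on the $q_6(x)$-controlled expansions of $\Phi_N$, $\Phi_N'$, and $W_N^+$, and the algebraic checks for (f) and (g) all mirror the paper's argument. The only tactical differences are that for (f) you read off $\mathcal A(0,0)$ directly from Theorem~\ref{theorem5.2}(c), whereas the paper instead evaluates \eqref{6.46} and its $x$-derivative at $x=0$, and for the positivity in (g) you use the equivalent rank criterion while the paper works with a Cauchy--Schwarz estimate on $v^\dagger(\tilde A^\dagger\tilde A+\tilde B^\dagger\tilde B)v$; both of your shortcuts are valid and slightly more economical.
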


\begin{proof} The proof of (a) is obtained by using
Theorems~\ref{theorem5.2} and \ref{theorem6.2}. 
For the proof of (b) we proceed as follows.
From \eqref{2.2} we know that $V(x)$ satisfies \eqref{2.2}.
Furthermore, from \eqref{6.3} we know that $W_N(x)$ is hermitian. Hence,
from \eqref{6.4} we see that $\tilde V(x)$ also satisfies \eqref{2.2}. The proof of \eqref{6.42}
is obtained as follows. From (3.2.3) and (3.2.30) of \cite{AW2021} we conclude that the unperturbed
Jost solution $f(k,x)$ satisfies
\begin{equation}\label{6.48}
 f(k,x)= e^{ikx }\left[I+ O\left(q_6(x)\right)\right],\qquad x \to +\infty.
 \end{equation}
Similarly, with the help of (3.2.7) of \cite{AW2021} we obtain
 \begin{equation}\label{6.49}
 f'(k,x)=ik\,  e^{ikx }\left[I+ O\left(q_6(x)\right)\right],\qquad x \to +\infty.
 \end{equation}
With the help of
\eqref{4.1}, \eqref{4.2}, \eqref{4.15}, \eqref{6.48}, \eqref{6.49},
and the first equality in \eqref{3.39},
we get
 \begin{equation}\label{6.50}
 \Phi_N(x)= e^{- \kappa_N x }\left[I+ O\left(q_6(x)\right) \right]\mathbf B_N^{-1/2}\, D_N\, Q_N ,\qquad x \to +\infty, \end{equation}
 \begin{equation}\label{6.51}
 \Phi'_N(x)= -\kappa_N \, e^{-\kappa_N x }\left[(I+ O\left(q_6(x)\right) \right] \mathbf B_N^{-1/2}\, D_N\, Q_N,\qquad x \to +\infty.
 \end{equation} 
We have the estimate
 \begin{equation}\label{6.52}
 \int_x^\infty dy\,  e^{-2\kappa_N y} \, O\left(q_6(y)\right) \le c  \int_x^\infty dy \, e^{-2\kappa_N y} 
q_6(y),\qquad x\to+\infty,
\end{equation}
for some generic constant $c.$
Using integration by parts on the right-hand side of \eqref{6.52}, we get
\begin{equation}
\label{6.53}
\int_x^\infty dy \, e^{-2\kappa_N y} 
q_6(y)
= \ds\frac{1}{2\,\kappa_N} \left[ e^{-2 \kappa_N  x}q_6(x)-
  \int_x^\infty dy\, e^{-2 \kappa_N y} \, |V(y)|\right]. 
\end{equation}
Using \eqref{6.52} and \eqref{6.53} we conclude that
  \begin{equation}\label{6.54}
 \int_x^\infty dy\,  e^{-2\kappa_N y} \, O\left(q_6(y)\right)\le
 \ds\frac{c}{2\,\kappa_N} \, e^{-2 \kappa_N  x} q_6(x),\qquad x\to+\infty.\end{equation}
With the help of \eqref{6.3}, \eqref{6.13}, \eqref{6.50}, and \eqref{6.54}, we obtain
 \begin{equation}\label{6.55}
 W_N(x)=\ds \frac{1}{2\,\kappa_N} e^{-2\kappa_N x}Q_N\left[D_N^\dagger \mathbf B_N^{-1} D_N +O\left(q_6(x)\right) \right]Q_N \oplus 0,
\qquad x\to+\infty,
 \end{equation}
which yields
\begin{equation}\label{6.56}
 W_N(x)=\ds \frac{1}{2\,\kappa_N} e^{-2\kappa_N x}Q_N\left[D_N^\dagger \mathbf B_N^{-1} D_N +O\left(q_6(x) \right) \right]Q_N.
\qquad x\to+\infty.
 \end{equation}
From \eqref{4.15} we conclude that the $n\times n$ matrix 
 $D_N^\dagger\, \mathbf B_N^{-1} D_N$ has the decomposition analogous to
the decomposition in \eqref{3.16}. Let us use
 $[D_N^\dagger\, \mathbf B_N^{-1} D_N]_1$ to denote the restriction of
$D_N^\dagger\, \mathbf B_N^{-1} D_N$ to the subspace  $Q_N \,\mathbb C^n.$
From Theorem~\ref{theorem4.3}(a) we know that
$D_N$ is an isometry from $\text{\rm{Ker}}[J(i\kappa_N)]$ to
$\text{\rm{Ker}}[J(i\kappa_N)^\dagger],$ and hence it is an isometry
from $Q_N\,\mathbb C^n$ to $P_n\,\mathbb C^n.$ Similarly, from
Theorem~\ref{theorem4.3}(b) we know that
$D_N^\dagger$ is an isometry from $\text{\rm{Ker}}[J(i\kappa_N)^\dagger]$ to
$\text{\rm{Ker}}[J(i\kappa_N)]$ 
and hence it is an isometry
from $P_N\,\mathbb C^n$ to $Q_n\,\mathbb C^n.$
We also already know that
$\mathbf B_N^{-1}$ is invertible and commutes with $P_N.$
Thus, $D_N^\dagger\, \mathbf B_N^{-1}  D_N$ is an invertible map 
from $Q_N\,\mathbb C^n$ to $P_n\,\mathbb C^n.$ Consequently, 
$[D_N^\dagger\, \mathbf B_N^{-1}  D_N]_1$
is invertible in $Q_N \,\mathbb C^n.$
Hence, we have
 \begin{equation}\label{6.57}
 \left(D_N^\dagger\, \mathbf B_N^{-1} D_N\right)^+= \left([D_N^\dagger\, \mathbf B_N^{-1} D_N]_1\right)^{-1}\oplus 0.
 \end{equation}
 With the help of \eqref{6.13}, \eqref{6.20}, \eqref{6.56}, and \eqref{6.57}, we get
\begin{equation}\label{6.58}
 W_N(x)^+= 2\,\kappa_N\, e^{2\,\kappa_N x}\,Q_N\left[[D_N^\dagger\,\mathbf B^{-1} D_N]_1^{-1}  +O\left(q_6(x)\right)\right]
Q_N\oplus 0,\qquad x\to+\infty.
 \end{equation}
Using \eqref{6.57} in \eqref{6.58}, after some simplification we obtain
\begin{equation}\label{6.59}
 W_N(x)^+= 2\,\kappa_N\, e^{2\,\kappa_N x}\left[\left(D_N^\dagger\,\mathbf B^{-1} D_N\right)^+  +O\left(q_6(x) \right)\right]
,\qquad x\to+\infty.
\end{equation}
Next, using  \eqref{6.25}, \eqref{6.56}, \eqref{6.59}, and the second equality of \eqref{3.3}, we get
\begin{equation}\label{6.60}
\left[W_N(x)^+\right]'= 4\,\kappa_N^2\,e^{2\,\kappa_N\, x} \left[\left(D_N^\dagger\, \mathbf B_N^{-1} D_N\right)^+  +  O\left(q_6(x)
\right)\right], \qquad x \to +\infty.
\end{equation}
Then, with the help of \eqref{6.50}, \eqref{6.51}, \eqref{6.59}, and \eqref{6.60}, we obtain
\begin{equation}\label{6.61}
2\,\ds\frac{d}{dx}\left[
\Phi_N(x) \,W_N(x)^+\,\Phi_N(x)^\dagger
\right]= O\left(q_6(x)\right), \qquad x \to +\infty.
\end{equation}
Using \eqref{6.61}, from \eqref{6.24} we observe that the proof of \eqref{6.42} is complete.
For any $\epsilon\ge 0,$ we have
\begin{equation}
\label{6.62}
\int_0^\infty dx\, (1+x)^\epsilon\, q_6(x)=\ds \frac{1}{1+\epsilon}\int_0^\infty dx\left( (1+x)^{1+\epsilon}-1\right)|V(x)|.
\end{equation}
Using \eqref{6.24}, \eqref{6.61}, and \eqref{6.62}, we conclude that
$\tilde V$ belongs to $L^1_\epsilon(\mathbb R^+)$ if $V$ belongs to $L^1_{1+\epsilon}(\mathbb R^+).$
This completes the proof of (b).
For the proof of (c) we proceed as follows.
If \eqref{6.44} holds, then \eqref{6.50} and \eqref{6.51} imply the respective asymptotics
  \begin{equation}\label{6.63}
 \Phi_N(x)= e^{- \kappa_N\, x }\left[I+ O\left(e^{-\alpha\,x}\right) \right]\mathbf B_N^{-1/2} D_N\, Q_N ,\qquad x \to +\infty,
 \end{equation}
 \begin{equation}\label{6.64}
 \Phi'_N(x)=-\kappa_N \, e^{-\kappa_N \,x }\left[I+ O\left(e^{-\alpha\, x}\right) \right] \mathbf B_N^{-1/2} D_N\, Q_N,\qquad x \to +\infty.
 \end{equation} 
Comparing \eqref{6.63} and \eqref{6.64} with
\eqref{6.50} and \eqref{6.51}, respectively, we observe that we can repeat the argument in \eqref{6.55}--\eqref{6.61}
by using $O(e^{-\alpha x})$ in place of $O\left(q_6(x)\right).$
Consequently, in analogy with \eqref{6.61} we get
 \begin{equation}\label{6.65}
\left | 2\,\ds\frac{d}{dx}\left[
\Phi_N(x) \,W_N(x)^+\,\Phi_N(x)^\dagger
\right]  \right | \le c\, e^{-\alpha x}, \qquad x \ge x_0,
\end{equation}
where $c$ is a generic constant.
Then, using \eqref{6.24}, \eqref{6.44}, \eqref{6.65}, we obtain \eqref{6.45}.
Thus, the proof of (c) is complete. To prove (d), we proceed as follows. 
If $V(x)=0$ for $x\ge x_0,$ then we have
\begin{equation}\label{6.66}
\Phi_N(x)=R\, e^{-\kappa_N x}, \qquad x \ge x_0,
\end{equation}
where $R$ is a constant $n\times n$ matrix. Using \eqref{6.66} in \eqref{6.3} we get
\begin{equation*}
W_N(x)=\ds\frac{1}{2\kappa_N}\,R^\dagger R\, e^{-2\kappa_N x}, \qquad x \ge x_0,
\end{equation*}
which yields
\begin{equation}\label{6.68}
W_N(x)^+=2\kappa_N(R^\dagger R)^+ e^{2\kappa_N x}, \qquad x \ge x_0.
\end{equation}
From \eqref{6.66} and \eqref{6.68} we have
\begin{equation}\label{6.69}
\Phi_N(x)^\dagger\,W_N(x)^+\,\Phi_N(x)=2\kappa_NR^\dagger(R^\dagger R)^+ R, \qquad x \ge x_0.
\end{equation}
Since the right-hand side in \eqref{6.69} is independent of $x,$ using \eqref{6.69} in \eqref{6.24} we obtain
\begin{equation}\label{6.70}
\tilde V(x)=V(x),\qquad x \ge x_0.
\end{equation}
We see that (d) follows from \eqref{6.70}.
For the proof of (e) we proceed as follows.
 From \eqref{5.15} and \eqref{6.37}, we respectively have
\begin{equation}
\label{6.71}
-\Phi_N(x)^\dagger\,\varphi''(k,x)+\Phi_N(x)^\dagger\,V(x)\,\varphi(k,x)
=k^2\,\Phi_N(x)^\dagger\,\varphi(k,x),
\end{equation}
\begin{equation}
\label{6.72}
-\Phi''_N(x)^\dagger\,\varphi(k,x)+\Phi_N(x)^\dagger\,V(x)\,\varphi(k,x)
=-\kappa_N^2\,\Phi_N(x)^\dagger\,\varphi(k,x).
\end{equation}
 From \eqref{6.71} and \eqref{6.72}, we get
\begin{equation}
\label{6.73}
\ds\frac{d}{dx}\left[
\Phi'_N(x)^\dagger\,\varphi(k,x)-
\Phi_N(x)^\dagger\,\varphi'(k,x)
\right]=\left(k^2+\kappa_N^2\right)
\Phi_N(x)^\dagger\,\varphi(k,x).
 \end{equation}
By integrating \eqref{6.73}, we obtain
\begin{equation}
\label{6.74}
\begin{split}
\Phi'_N(x)^\dagger\,\varphi(k,x)-
\Phi_N(x)^\dagger\,\varphi'(k,x)
-&\Phi'_N(0)^\dagger\,\varphi(k,0)+
\Phi_N(0)^\dagger\,\varphi'(k,0)
\\
& =
\left(k^2+\kappa_N^2\right)\ds\int_0^x
dy\,\Phi_N(y)^\dagger\,\varphi(k,y).
\end{split}
\end{equation}
Using \eqref{2.10} and \eqref{3.28}, with the help of \eqref{2.6} we get
\begin{equation}
\label{6.75}
\Phi'_N(0)^\dagger\,\varphi(k,0)-
\Phi_N(0)^\dagger\,\varphi'(k,0)=0.
\end{equation}
From \eqref{6.74} and \eqref{6.75} we have
\begin{equation}
\label{6.76}
\ds\int_0^x
dy\,\Phi_N(y)^\dagger\,\varphi(k,y)=
\ds\frac{1}{k^2+\kappa_N^2}
\left[\Phi'_N(x)^\dagger\,\varphi(k,x)-
\Phi_N(x)^\dagger\,\varphi'(k,x)
\right].
\end{equation}
Finally, using \eqref{6.76} in \eqref{6.41} we obtain \eqref{6.46}.
Thus, the proof of (e) is complete.
To prove (f), we proceed as follows. Using \eqref{3.29} when $j=N,$ from \eqref{6.3} we get
\begin{equation}
\label{6.77}
W_N(0)=Q_N,
\end{equation}
which implies
\begin{equation}
\label{6.78}
W_N(0)^+=Q_N,
\end{equation}
based on the fact that $Q_N$ is its own Moore--Penrose inverse.
Evaluating both sides of \eqref{6.46} at $x=0,$ with the help of the first equality of \eqref{2.10},
the first equality of \eqref{5.17}, and \eqref{6.75} we establish the first equality of \eqref{6.47}.
Taking the derivative of both sides of \eqref{6.46}, we get
\begin{equation}
\label{6.79}
\begin{split}
\tilde\varphi'(k,x)=&\varphi'(k,x)+\Phi_N(x)\,W_N(x)^+\,\Phi_N(x)^\dagger \,\varphi(k,x)
\\&+\ds\frac{\left[\Phi_N(x)\,W_n(x)^+\right]'}{k^2+\kappa_N^2}\left[\Phi_N'(x)^\dagger\,\varphi(k,x)-\Phi_N(x)^\dagger\,\varphi'(k,x)\right],
\end{split}
\end{equation}
where we have used \eqref{6.73}.
By evaluating both sides of \eqref{6.79} at $x=0,$ with the help of the second equality of \eqref{2.10},
the second equality of \eqref{5.17}, \eqref{6.75}, and \eqref{6.78}, we get
\begin{equation}
\label{6.80}
\tilde B=B+A \,C_N \,Q_N\, C_N^\dagger  A^\dagger A.
\end{equation}
Using the properties of the Gel'fand--Levitan normalization matrix $C_N$ given by
\begin{equation*}
C_N\, Q_N=C_N, \quad C_N^\dagger=C_N,
\end{equation*}
on the right-hand side of \eqref{6.80}, we obtain the second equality of \eqref{6.47}.
Hence, the proof of (f) is complete. 
For the proof of (g), we proceed as follows. Using \eqref{6.47}, after some minor simplification we obtain
\begin{equation}
\label{6.82}
\tilde A^\dagger \tilde B-\tilde B^\dagger\tilde A
= A^\dagger B- B^\dagger A.
\end{equation}
The right-hand side of \eqref{6.82} vanishes as a result of \eqref{2.6}.
Hence, \eqref{2.6} is satisfied if we replace
$A$ and $B$ by $\tilde A$ and $\tilde B,$ respectively.
For any column vector $v$ in $\mathbb C^n,$ we have
\begin{equation}\label{6.83}
v^\dagger\left(\tilde A^\dagger \tilde A+\tilde B^\dagger \tilde B\right)v=|Av|^2+|Bv|^2.
\end{equation}
The right-hand side of \eqref{6.83} shows that the matrix
$\tilde A^\dagger\tilde A+ \tilde B^\dagger \tilde B$ is nonnegative. 
Hence, in order to prove that $\tilde A^\dagger\tilde A+ \tilde B^\dagger \tilde B$ is a positive matrix, it is enough to show
that its kernel contains only the zero vector.
We prove this as follows. From \eqref{6.47} we obtain
\begin{equation}\label{6.84}
\tilde A^\dagger \tilde A+ \tilde B^\dagger \tilde B= A^\dagger A+ B^\dagger B+ (A C_N^2 A^\dagger A)^\dagger  (A C_N^2 A^\dagger A) + B^\dagger  (A C_N^2 A^\dagger A)+ (A C_N^2 A^\dagger A )^\dagger B.
\end{equation}
For any column vector $v$ in $\mathbb C^n,$ from \eqref{6.84} we get
\begin{equation}\label{6.85}
 v^\dagger (\tilde A^\dagger \tilde A+ \tilde B^\dagger \tilde B) v=| Av|^2+| Bv|^2+
| (A C_N^2 A^\dagger A)v|^2+ 2\, \text{\rm{Re}}[\langle Bv,  (A C_N^2 A^\dagger A)v\rangle],
\end{equation}
where the last term on the right-hand side involves the real part of the standard scalar product.
With the help of a Cauchy--Schwarz inequality, we have
\begin{equation}
\label{6.86}
-2\,|Bv| \left|(A C_N^2 A^\dagger A)v\right|\le 
2\, \text{\rm{Re}}[\langle Bv,  (A C_N^2 A^\dagger A)v\rangle] \le 2\,|Bv| \left|(A C_N^2 A^\dagger A)v\right|.
\end{equation}
From \eqref{6.86} we obtain
\begin{equation}
\label{6.87}
-2\, \text{\rm{Re}}[\langle Bv,  (A C_N^2 A^\dagger A)v\rangle] \ge -2\,|Bv| \left|(A C_N^2 A^\dagger A)v\right|.
\end{equation}
If $(\tilde A^\dagger \tilde A+ \tilde B^\dagger \tilde B) v=0,$ then
from \eqref{6.85} and \eqref{6.87} we get
\begin{equation*}
|Av|^2+ \left( |Bv|-|(A C_N^2 A^\dagger A)v|   \right)^2\le 0,
\end{equation*}
which implies that $Av=0$ and $Bv=0.$ From \eqref{2.7}
we know that $A^\dagger A+ B^\dagger B >0,$ and hence
the simultaneous equalities $Av=0$ and $Bv=0$ can hold only when $v$ is the zero column vector.
Thus, the proof of (g) is complete.
\end{proof}

In the next theorem, we describe how the Jost matrix, the scattering 
matrix, and the Jost solution transform when we change the potential, the boundary matrices, and the regular solution
as described in Theorem~\ref{theorem6.3}. The theorem also establishes the fact that such a transformation only removes the bound state 
with the energy $-\kappa_N^2,$  leaves the remaining bound states  and their multiplicities unchanged, and introduces no new bound states.
We note that, in the theorem, we impose the stronger assumption
that the unperturbed potential $V$ belongs to
$L^1_2(\mathbb R^+)$ rather than $L^1_1(\mathbb R^+)$ so that the perturbed potential $\tilde V$  
belongs to $L^1_1(\mathbb R^+),$ which is assured by Theorem~\ref{theorem6.3}(b).
We remark that this is a sufficiency assumption because the asymptotic estimates 
in Theorem~\ref{theorem6.3}(b)
on the potentials
are not necessarily sharp.

\begin{theorem}
\label{theorem6.4} Consider
the unperturbed Schr\"odinger operator with the potential
$V$ satisfying \eqref{2.2} and belonging to $L^1_2(\mathbb R^+),$
the selfadjoint
boundary condition \eqref{2.5} described by the boundary matrices $A$ and $B$ satisfying
\eqref{2.6} and \eqref{2.7}, the regular solution $\varphi(k,x)$ satisfying the initial
conditions \eqref{2.10}, the Jost solution $f(k,x)$ satisfying \eqref{2.9}, and with 
$N$ bound states with  the energies $-\kappa_j^2,$ 
the Gel'fand--Levitan normalization matrices $C_j,$ the orthogonal projections
$Q_j$ onto $\text{\rm{Ker}}[J(i\kappa_j)],$ the orthogonal projections
$P_j$ onto $\text{\rm{Ker}}[J(i\kappa_j)^\dagger],$ 
the Gel'fand--Levitan normalized bound-state solutions $\Phi_j(x),$ the Marchenko normalized
bound-state solutions $\Psi_j(x),$ the dependency matrices $D_j,$ and the multiplicities $m_j$
for the bound states for $1\le j\le N.$ 
Let us use a tilde to identify the quantities
associated with the perturbed Schr\"odinger operator
with the potential
$\tilde V(x)$ expressed as in \eqref{6.24}
and the boundary condition \eqref{2.5} where
the boundary matrices $A$ and $B$ are replaced with
$\tilde A$ and $\tilde B,$ respectively, given in \eqref{6.47}.
Thus, 
we use $\tilde f(k,x)$ for the perturbed Jost solution
satisfying the asymptotics \eqref{2.9},
 $\tilde\varphi(k,x)$ for the perturbed regular solution satisfying the initial
conditions \eqref{5.17}, $\tilde J(k)$ for the perturbed Jost matrix
defined as
\begin{equation}
\label{6.89}
\tilde J(k):=\tilde f(-k^*,0)^\dagger\,\tilde B-\tilde f'(-k^*,0)^\dagger\,\tilde A,\qquad k\in\bCpb,
\end{equation}
and we use $\tilde S(k)$ for the perturbed scattering matrix defined as
\begin{equation}
\label{6.90}
\tilde S(k):=-\tilde J(-k)\,\tilde J(k)^{-1},\qquad k\in\bR.
\end{equation}
We then have the following:

\begin{enumerate}

\item[\text{\rm(a)}] The unperturbed Jost matrix $J(k)$ is transformed into the perturbed Jost matrix $\tilde J(k)$ as
 \begin{equation}
\label{6.91}
\tilde J(k)=\left[I+\ds\frac{2i\kappa_N}{k-i\kappa_N}\,P_N\right] J(k),\qquad
k\in\overline{\mathbb C^+}.\end{equation}

\item[\text{\rm(b)}] The perturbation does not change
the matrix product $J(k)^\dagger J(k)$ for $k\in\mathbb R,$ 
i.e. we have
 \begin{equation}
\label{6.92}
\tilde J(k)^\dagger \tilde J(k)=J(k)^\dagger J(k),\qquad k\in\mathbb R,\end{equation}
and hence the continuous part of the spectral measure $d\rho$ does not change
under the perturbation.

\item[\text{\rm(c)}] Under the perturbation, the determinant of the Jost matrix is transformed as
 \begin{equation}
\label{6.93}
\det[\tilde J(k)]=\left(\ds\frac{k+i\kappa_N}{k-i\kappa_N}\right)^{m_N} \det[J(k)],\qquad
k\in\overline{\mathbb C^+},\end{equation}
where we recall
that $m_N$ is the multiplicity of the bound state at $k=i\kappa_N$ for the unperturbed
Schr\"odinger operator.

\item[\text{\rm(d)}] Under the perturbation,
the bound state with the energy $-\kappa_N^2$ is removed without adding any new bound states in such a way that
the remaining bound states with the energies $-\kappa_j^2$ and multiplicities $m_j$ for $1\le j\le N-1$
are unchanged.

\item[\text{\rm(e)}] The perturbed scattering matrix $\tilde S(k)$ is related to the
unperturbed scattering matrix $S(k)$ as
 \begin{equation}
\label{6.94}
\tilde S(k)=\left[I-\ds\frac{2i\kappa_N}{k+i\kappa_N}\,P_N\right] S(k)\left[I-\ds\frac{2i\kappa_N}{k+i\kappa_N}\,P_N\right],
\qquad k\in\mathbb R.\end{equation}

\item[\text{\rm(f)}] Under the perturbation, the determinant of the scattering matrix  is transformed as 
 \begin{equation}
\label{6.95}
\det[\tilde S(k)]=\left(\ds\frac{k-i\kappa_N}{k+i\kappa_N}\right)^{2 m_N} \det[S(k)],
\qquad k\in\mathbb R.
\end{equation}

\item[\text{\rm(g)}] 
For $k \in \overline{\mathbb C^+},$ the perturbed Jost solution $\tilde f(k,x)$ is related to the
unperturbed Jost solution $f(k,x)$ as
\begin{equation}
\label{6.96}
\tilde f(k,x)=\left[ f(k,x)  +\ds\frac{1}{k^2+\kappa_N^2}
\,\Phi_N(x) \,W_N(x)^+\,q_7(k,x)\right]
\left[ I+ \ds\frac{2i\kappa_N}{k-i\kappa_N} P_N \right],
\end{equation}
where we have defined
\begin{equation}
\label{6.97}
q_7(k,x):=\Phi'_N(x)^\dagger\,f(k,x)-\Phi_N(x)^\dagger
\,f'(k,x).
\end{equation}
The transformation in \eqref{6.96} can equivalently be expressed as
\begin{equation}
\label{6.98}
\tilde f(k,x)=\left[ f(k,x)-\Phi_N(x) \,W_N(x)^+\ds\int_x^\infty dy\,\Phi_N(y)^\dagger\, f(k,y)\right]
\left[ I+ \ds\frac{2i\kappa_N}{k-i\kappa_N} \,P_N \right].
\end{equation}

\end{enumerate}
\end{theorem}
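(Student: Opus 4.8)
The plan is to establish part (a) first, since the multiplicative rule \eqref{6.91} for the Jost matrix is the crux from which (b), (c), (e), (f), and most of (d) follow by short algebraic manipulations, while (g) is treated separately at the end. The engine throughout is the large-$x$ asymptotics of the normalized bound-state solution together with the partial-isometry structure of $D_N$ from Theorem~\ref{theorem4.3}. I would avoid computing $\tilde f$ to get (a), and instead match asymptotics in the already-established representation \eqref{6.46} of the perturbed \emph{regular} solution, which is self-contained and needs nothing from (g).

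For part (a), I would first record, from \eqref{6.50}, \eqref{6.51}, and \eqref{6.59}, that $\Phi_N(x)\,W_N(x)^+ = 2\kappa_N\,e^{\kappa_N x}\,\mathbf B_N^{-1/2}\,D_N\,(D_N^\dagger\,\mathbf B_N^{-1}\,D_N)^+\,[I+o(1)]$ as $x\to+\infty$. The key algebraic identity is $\mathbf B_N^{-1/2}\,D_N\,(D_N^\dagger\,\mathbf B_N^{-1}\,D_N)^+\,D_N^\dagger\,\mathbf B_N^{-1/2}=P_N$: writing $R:=\mathbf B_N^{-1/2}D_N$, the left-hand side is $R\,(R^\dagger R)^+\,R^\dagger=R\,R^+$, the orthogonal projection onto $\mathrm{Ran}[R]$, and Theorem~\ref{theorem4.3}(a) identifies $\mathrm{Ran}[R]=\mathrm{Ker}[J(i\kappa_N)^\dagger]$, whose projection is $P_N$. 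Substituting the asymptotics of $\varphi(k,x)$ coming from \eqref{2.15} into \eqref{6.46} and collecting the coefficients of $e^{ikx}$ and $e^{-ikx}$, then comparing with the perturbed analog of \eqref{2.15} (whose leading behavior is governed by $\tilde f(\pm k,x)\sim e^{\pm ikx}$), yields $\tilde J(k)=[I+\frac{2i\kappa_N}{k-i\kappa_N}P_N]\,J(k)$ for real $k$; the two coefficient equations are consistent after simplifying $\kappa_N\pm ik$, and the extension to $\bCpb$ follows because $P_N\,J(i\kappa_N)=0$ renders the apparent pole at $k=i\kappa_N$ removable.

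The consequences are then routine. For (b) I would check that $I+\frac{2i\kappa_N}{k-i\kappa_N}P_N$ is unitary for $k\in\bR$ (using $P_N^\dagger=P_N^2=P_N$ and a one-line cancellation), so $\tilde J(k)^\dagger\tilde J(k)=J(k)^\dagger J(k)$ and the continuous part of $d\rho$ in \eqref{5.1} is preserved. For (c) and (f) I would use $\mathrm{rank}\,P_N=m_N$ to get $\det[I+cP_N]=(1+c)^{m_N}$, giving \eqref{6.93} and, after squaring, \eqref{6.95}. For (e) I would insert \eqref{6.91} into \eqref{6.90} and exploit that the prefactor evaluated at $-k$ equals the inverse of the prefactor at $k$. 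Part (d) combines (c) with the kernel comparison $\mathrm{Ker}[\tilde J(i\kappa_j)]=\mathrm{Ker}[J(i\kappa_j)]$ for $j\ne N$ (the prefactor is invertible there) and the cancellation of the order-$m_N$ zero of $\det[J]$ at $i\kappa_N$, with no new zeros created in $\bCpb$.

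Part (g) is the most delicate point and the main obstacle. Because $f(k,x)$ does not satisfy the boundary condition \eqref{2.5} at $x=0$, the intertwining of Theorem~\ref{theorem5.2} cannot be applied to $f$ as written; instead I would base the transformation on the integral from $x$ to $+\infty$, setting $g(k,x):=f(k,x)-\Phi_N(x)\,W_N(x)^+\int_x^\infty \Phi_N(y)^\dagger f(k,y)\,dy$. The Wronskian identity $\frac{d}{dx}\!\left[\Phi_N'(x)^\dagger f(k,x)-\Phi_N(x)^\dagger f'(k,x)\right]=(k^2+\kappa_N^2)\,\Phi_N(x)^\dagger f(k,x)$, integrated over $[x,\infty)$ where the endpoint contribution vanishes by the decay of $\Phi_N$, shows at once the equivalence of \eqref{6.96} and \eqref{6.98}. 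To prove that $g$ solves the perturbed equation I would establish the auxiliary fact that $\Phi_N(x)\,W_N(x)^+$ itself solves the perturbed equation at energy $-\kappa_N^2$; combined with the same Wronskian identity evaluated at $x=0$, this converts the residual boundary term left over from the $[0,x]$-intertwining into an exact cancellation. Finally, the asymptotics above give $g(k,x)=e^{ikx}\left[I-\frac{2i\kappa_N}{k+i\kappa_N}P_N+o(1)\right]$, so right-multiplication by $I+\frac{2i\kappa_N}{k-i\kappa_N}P_N$ restores the normalization $e^{ikx}[I+o(1)]$, and uniqueness of the Jost solution identifies the result with $\tilde f(k,x)$. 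The genuine difficulty here is bookkeeping the competing exponentials—$W_N(x)^+$ grows like $e^{2\kappa_N x}$ while $\Phi_N(x)$ decays like $e^{-\kappa_N x}$—and correctly collapsing the surviving constant matrix to $P_N$ via the partial-isometry identity.
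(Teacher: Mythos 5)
Your proposal is correct and follows essentially the same route as the paper: part (a) by matching large-$x$ asymptotics in \eqref{6.46} through \eqref{2.15} and its perturbed analog, with the same key projection identity $R\,(R^\dagger R)^+\,R^\dagger=P_N$ for $R=\mathbf B_N^{-1/2}D_N$ (the paper verifies it by checking the four Moore--Penrose equalities, you read it off as $RR^+$ being the orthogonal projection onto $\mathrm{Ran}[R]=\text{\rm{Ker}}[J(i\kappa_N)^\dagger]$); parts (b)--(f) by the same algebra with $\det[I+cP_N]=(1+c)^{m_N}$ and the unitarity of the prefactor; and part (g) by the same auxiliary function, Wronskian identity, and asymptotic normalization. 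The only cosmetic difference is that the paper verifies the auxiliary function $\tilde h(k,x)$ solves the perturbed equation by direct differentiation rather than via your intermediate fact about $\Phi_N(x)\,W_N(x)^+.$
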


\begin{proof}
From the $x$-derivative of \eqref{2.15} we have
\begin{equation}
\label{6.99}
\varphi'(k,x)=\ds\frac{1}{2ik}\left[f'(k,x)\,J(-k)-f'(-k,x)\,J(k)\right],
\qquad k\in\mathbb R.
\end{equation}
The analog of \eqref{2.15} for the perturbed problem yields
\begin{equation}
\label{6.100}
\tilde\varphi(k,x)=\ds\frac{1}{2ik}\left[\tilde f(k,x)\,\tilde J(-k)-\tilde f(-k,x)\,\tilde J(k)\right],
\qquad k\in\mathbb R.
\end{equation}
Using \eqref{2.15}, \eqref{6.99}, and \eqref{6.100} in \eqref{6.46}, we obtain
\begin{equation}
\label{6.101}
\Delta(k,x)=\Delta(-k,x),
\qquad k\in\mathbb R,
\end{equation}
where we have defined
\begin{equation}
\label{6.102}
\Delta(k,x):=\tilde f(k,x)\,\tilde J(-k)-f(k,x)\,J(-k)
-\ds\frac{\Phi_N(x)\,W_N(x)^+}{k^2+\kappa_N^2}
\,q_7(k,x)\,
J(-k),
\qquad k\in\mathbb R.
\end{equation}
As we will show below, 
from \eqref{6.102} we get
\begin{equation}
\label{6.103}
\Delta(k,x)=e^{ikx}\,q_8(-k)\left[I+o(1)\right],
\qquad  x \to +\infty, \quad k\in\mathbb R,
\end{equation}
where the $n\times n$ matrix $q_8(k)$ is defined as
\begin{equation}
\label{6.104}
q_8(k):=\tilde J(k)-J(k)-\ds\frac{2i\kappa_N}{k-i\kappa_N}\,P_N\,J(k),
\qquad k\in\mathbb R.
\end{equation}
Note that \eqref{6.101} and \eqref{6.103} yield
\begin{equation*}
e^{ikx}\, q_8(-k) [ I+o(1)]= e^{-ikx} \,q_8(k) \left[I+o(1)\right], \qquad  x \to +\infty, \quad k\in\mathbb R,
\end{equation*}
from which we conclude that $q_8(k)\equiv 0.$ Hence, \eqref{6.104} yields \eqref{6.91} for
$k\in\mathbb R.$ Using the fact \cite{AW2021} that
the Jost matrix $J(k)$ has an analytic extension from $k\in\mathbb R$ to the upper-half complex plane
$\mathbb C^+,$ we conclude that \eqref{6.91} holds for $k\in\overline{\mathbb C^+}.$
Thus, the proof of (a) is complete, provided we show that \eqref{6.103} holds with $q_8(k)$ as in \eqref{6.104}.
To establish \eqref{6.103} we proceed as follows.
From \eqref{4.1}, when $j=N$ we get
\begin{equation}
\label{6.106}
\Phi_N(x)=\Psi_N(x)\,D_N,
\end{equation}
and from the first equality of \eqref{3.39} we have
\begin{equation}
\label{6.107}
\Psi_N(x)=f(i\kappa_N,x)\, \mathbf B_N^{-1/2}\,P_N.
\end{equation}
Using \eqref{6.107} in \eqref{6.106} we obtain
\begin{equation}
\label{6.108}
\Phi_N(x)=f(i\kappa_N,x)\, \mathbf B_N^{-1/2}\,P_N\,D_N,
\end{equation}
and using \eqref{2.9} in \eqref{6.108} we get
\begin{equation}
\label{6.109}
\Phi_N(x)=e^{-\kappa_N x}\,\mathbf B_N^{-1/2}\,P_N \left[I+o(1)\right], \qquad x\to+\infty,
\end{equation}
\begin{equation}
\label{6.110}
\Phi'_N(x)=e^{-\kappa_N x} \,\mathbf B_N^{-1/2}\,P_N\,D_N\left[- \kappa_N I+o(1)\right],
\qquad x\to+\infty.
\end{equation}
By taking the matrix adjoints, from \eqref{6.109} and \eqref{6.110} we obtain the respective asymptotics
\begin{equation}
\label{6.111}
\Phi_N(x)^\dagger=  e^{-\kappa_N x}\, D_N^\dagger\,P_N\,\mathbf B_N^{-1/2}\left[I+o(1)\right],\qquad
x\to+\infty,
\end{equation}
\begin{equation}
\label{6.112}
\Phi'_N(x)^\dagger=   e^{-\kappa_N x}\, D_N^\dagger\,P_N\,\mathbf B_N^{-1/2}\left[-\kappa_N I+o(1)\right],
\qquad x\to+\infty,
\end{equation}
where we have used the fact that each of $P_N,$ $Q_N,$ and $\mathbf B_N^{-1/2}$ is an $n\times n$ selfadjoint matrix.
By taking the $x$-derivative of both sides of \eqref{6.3} and then using \eqref{6.109}--\eqref{6.112} in the resulting
equality, we obtain
\begin{equation}
\label{6.113}
W_N'(x)=-   e^{-2\kappa_N x} D_N^\dagger P_N\,\mathbf B_N^{-1} P_N D_N\left[I+o(1)\right] ,\qquad x\to+\infty.
\end{equation}
Using \eqref{4.2} in \eqref{6.113} we get
\begin{equation}
\label{6.114}
W_N'(x)=- e^{-2\kappa_N x}  D_N^\dagger\,\mathbf B_N^{-1} D_N\left[I+o(1)\right],\qquad x\to+\infty.
\end{equation}
From \eqref{6.3} we have
\begin{equation}
\label{6.115}
W_N(x)=  o(1),\qquad x\to+\infty,
\end{equation}
and hence \eqref{6.114} and \eqref{6.115} imply
\begin{equation}
\label{6.116}
W_N(x)=\ds  \frac{1}{2\kappa_N}\,e^{-2\kappa_N x}  D_N^\dagger\,\mathbf B_N^{-1} D_N\left[I+o(1)\right] ,\qquad x\to+\infty.
\end{equation}
From \eqref{6.3} with the help of the second equality in \eqref{3.39}, we get
\begin{equation*}
W_N(x)= Q_N \,W_N(x)\,Q_N.
\end{equation*}
Using  \eqref{4.15} and after premultiplying and postmultiplying both sides of \eqref{6.116}  by $Q_N,$ we obtain
\begin{equation}\label{6.118}
W_N(x)= \ds  \frac{1}{2\kappa_N}\,e^{-2\kappa_N x} D_N^\dagger\,\mathbf B_N^{-1} D_N\left[I+\displaystyle Q_N \,o(1)\, Q_N\right] ,\qquad x\to+\infty.
\end{equation}
From \eqref{6.118} we conclude that
$[W_N(x)]_1$ defined in \eqref{6.13} has the asymptotics
\begin{equation*}
[W_N(x)]_1=\ds  \frac{1}{2\kappa_N}\,e^{-2\kappa_N x} D_N^\dagger\,\mathbf B_N^{-1} D_N\left[I+ \displaystyle Q_N \,o(1)  \,Q_N\right] ,\qquad x\to+\infty,
\end{equation*}
and with the help of \eqref{6.20} we obtain
\begin{equation}
\label{6.120}
W_N(x)^+=2\kappa_N\,e^{2\kappa_N x}\left(D_N^\dagger\,\mathbf B_N^{-1} D_N\right)^+\left[I+Q_N\,o(1)\,Q_N\right]\oplus 0,
\qquad x\to+\infty.
\end{equation}
From \eqref{6.120} we get
\begin{equation}
\label{6.121}
W_N(x)^+=
2\kappa_N\,e^{2\kappa_N x}\left(D_N^\dagger\,\mathbf B_N^{-1} D_N\right)^+\left[I+o(1) \right],
\qquad x\to+\infty,
\end{equation}
where we recall that the superscript $+$ denotes the Moore--Penrose inverse.
With the help of \eqref{2.9}, \eqref{6.111}, and \eqref{6.112}, as $x\to+\infty$ we obtain
\begin{equation}
\label{6.122}
\Phi_N'(x)^\dagger\,f(k,x)-\phi_N(x)^\dagger\,f'(k,x)=-(\kappa_N+ik)\,e^{ikx-\kappa_N x} D_N^\dagger P_N \mathbf B_N^{-1/2}\left[I+o(1)\right].
\end{equation}
Finally, using \eqref{2.9} for $f(k,x),$ the analog of \eqref{2.9} for $\tilde f(k,x),$ \eqref{6.109}, \eqref{6.110}, \eqref{6.121}, and \eqref{6.122}
on the right-hand side of \eqref{6.102}, we establish \eqref{6.103} with $q_8(k)$ defined as in \eqref{6.104}, provided we can show that 
\begin{equation*}
\mathbf B_N^{-1/2} P_N\,D_N\left(D_N^\dagger\,\mathbf B_N^{-1} D_N\right)^+ D_N^\dagger P_N \mathbf B_N^{-1/2}=P_N,
\end{equation*}
or equivalently if we can show that
\begin{equation}
\label{6.124}
P_N\,\mathbf B_N^{-1/2} D_N\left(D_N^\dagger\,\mathbf B_N^{-1} D_N\right)^+ D_N^\dagger \mathbf B_N^{-1/2} P_N=P_N,
\end{equation}
because $P_N$ commutes with $\mathbf B_N^{-1/2}.$
We establish \eqref{6.124} as follows. We know from Section~\ref{section3} that the matrix $\mathbf B_N^{-1/2}$ is selfadjoint, and hence we have
\begin{equation}
\label{6.125}
D_N^\dagger \mathbf B_N^{-1} D_N=
\left(\mathbf B_N^{-1/2} D_N\right)^\dagger \left(\mathbf B_N^{-1/2} D_N\right).
\end{equation}
For the Moore--Penrose inverse of the product $\mathbf B_N^{-1/2} D_N,$ we have
\begin{equation}\label{6.126}
\left(\mathbf B_N^{-1/2} D_N\right)^+ = D_N^+\,\mathbf B_N^{1/2},
\end{equation}
which can be verified by showing that the four equalities in \eqref{3.3} hold if we let 
$\mathbf M=\mathbf B_N^{-1/2} D_N$ and 
$\mathbf M^+=D_N^+\,\mathbf B_N^{1/2}$ there.
The verification of the first equality in \eqref{3.3} is obtained with the help of \eqref{4.18} and the first equality of \eqref{4.20}.
The second equality of \eqref{3.3} is verified with the help of \eqref{4.18} and
the second equality of \eqref{4.20}.
The verification of the third equality of \eqref{3.3} is obtained by using \eqref{4.17}, \eqref{4.18}, the fact that
$P_N$ commutes with $\mathbf B_N^{1/2}$ and $\mathbf B_N^{-1/2},$ and that the matrix product
$\mathbf M \mathbf M^+$ becomes equal to
the selfadjoint matrix $P_N.$ The fourth equality of \eqref{3.3} is verified
by using \eqref{4.16}, \eqref{4.18}, and the fact that $\mathbf M^+ \mathbf M$ becomes equal to the selfadjoint matrix
$Q_N.$ By expressing the right-hand side of \eqref{6.125} and then the left-hand side of \eqref{6.124}
in terms of
$\mathbf M$ and $\mathbf M^\dagger,$ we see that \eqref{6.124} is equivalent to
\begin{equation}
\label{6.127}
P_N\,\mathbf M\left( \mathbf M^\dagger \,\mathbf M\right)^+\mathbf M^\dagger P_N=P_N.
\end{equation} 
Using \eqref{3.5}, we observe that
\eqref{6.127} is equivalent to
\begin{equation*}
P_N\,\mathbf M\,\mathbf M^+ \left(\mathbf M^\dagger\right)^+ \mathbf M^\dagger P_N=P_N,
\end{equation*} 
which, after using \eqref{3.4}, can also be written as
\begin{equation}
\label{6.129}
P_N\,\mathbf M\,\mathbf M^+ \left(\mathbf M\, \mathbf M^+\right)^\dagger P_N=P_N.
\end{equation} 
Using $\mathbf M\,\mathbf M^+=P_N$ and the fact that $P_N$ is an orthogonal projection, we
see that \eqref{6.129} indeed holds. Thus, the proof of (a) is complete.
 For the proof of (b) we proceed as follows.
 By taking the matrix adjoint of \eqref{6.91}, we obtain
 \begin{equation}
\label{6.130}
\tilde J(k)^\dagger=J(k)^\dagger \left[I+\ds\frac{2i\kappa_N}{k-i\kappa_N}\,P_N\right]^\dagger.
\end{equation}
 Since $\kappa_N$ is positive and $P_N$ is an orthogonal projection, we can directly verify that
  \begin{equation}
\label{6.131}
 \left[I+\ds\frac{2i\kappa_N}{k-i\kappa_N}\,P_N\right]^\dagger=\left[I+\ds\frac{2i\kappa_N}{k-i\kappa_N}\,P_N\right]^{-1}=
 \left[I-\ds\frac{2i\kappa_N}{k+i\kappa_N}\,P_N\right].
\end{equation}
Using \eqref{6.91} and \eqref{6.130} on the left-hand side of \eqref{6.92}, with the help of
\eqref{6.131} we establish \eqref{6.92}.
Hence, the proof of (b) is complete.
We now turn to the proof of (c). Since the orthogonal projection matrix $P_N$ has $m_N$ eigenvalues
equal to $+1$ and the remaining eigenvalues all equal to zero, we have 
\begin{equation}
\label{6.132}
 \det\left[I+\ds\frac{2i\kappa_N}{k-i\kappa_N}\,P_N\right]=\left( \ds\frac{k+i\kappa_N}{k-i\kappa_N}\right)^{m_N}.
\end{equation}
Using \eqref{6.91} and \eqref{6.132} we obtain \eqref{6.93}. Thus, the proof of (c) is complete. 
From Theorems~3.11.1 and 3.11.16 of \cite{AW2021} we know that
the bound states for the perturbed and unperturbed Schr\"odinger operators correspond to the zeros of 
$\textrm{det} [\tilde J(k)]$ and $\textrm{det} [{J}(k)],$ respectively, on the positive imaginary axis and the multiplicities of the bound states are determined
 by the order of the zeros of $\textrm{det} [\tilde J(k)]$ and $\textrm{det} [{J}(k)],$
respectively.  Hence, (d) follows from \eqref{6.93}.
We establish (e) 
by using \eqref{6.91} in \eqref{6.90}
and by simplifying the right-hand side of the resulting equation with the help of \eqref{2.12} and the second equality of \eqref{6.131}.
We obtain the proof of (f) by taking the determinant of both sides of \eqref{6.94}
and using the analog of \eqref{6.132} when $\kappa_N$ there is replaced with $-\kappa_N.$
Finally, we prove (g) as follows. 
Using \eqref{2.1} for $f(k,x)$ and the adjoint of \eqref{6.37}, we obtain
\begin{equation*}
\Phi''_N(x)^\dagger \, f(k,x)- \Phi_N(x)^\dagger\, f''(k,x)= (k^2+\kappa_N^2) \,\Phi_N(x)^\dagger \,f(k,x),
\end{equation*}
which is equivalent to
\begin{equation}
\label{6.134}
q'_7(k,x)
= (k^2+\kappa_N^2)\, \Phi_N(x)^\dagger\, f(k,x),
\end{equation}
where $q'_7(k,x)$ is the $x$-derivative of the quantity $q_7(k,x)$ defined in \eqref{6.97}.
Integrating \eqref{6.134} and using the asymptotics listed in \eqref{2.9}, \eqref{6.50}, and \eqref{6.51}, we
get
\begin{equation}
\label{6.135}
\ds\frac{1}{k^2+\kappa_N^2}\,
q_7(k,x)
=-\int_x^\infty dy\,\Phi_N(y)^\dagger\, f(k,y).
\end{equation}
We introduce the quantity $\tilde h(k,x)$ as
\begin{equation}
\label{6.136}
\tilde h(k,x):=f(k,x)  +\ds\frac{1}{k^2+\kappa_N^2}
\,\Phi_N(x) \,W_N(x)^+\,q_7(k,x),
\end{equation}
which differs from the right-hand side of \eqref{6.96} by the second term in the brackets there.
In other words, the quantity $\tilde h(k,x)$ is related to the quantity
$\tilde f(k,x)$ appearing in \eqref{6.96} as 
\begin{equation}\label{6.137}
\tilde f(k,x)=\tilde h(k,x)  \left[ I+ \frac{2i\kappa_N}{k-i\kappa_N} \,P_N \right].
\end{equation}
From \eqref{6.135} and \eqref{6.136} we observe that $\tilde h(k,x)$ does not have a singularity
at $k=i\kappa_N.$
We would like to show that 
$\tilde h(k,x)$ satisfies the Schr\"odinger equation \eqref{2.1} with the potential $\tilde V(x)$ given in \eqref{6.24}. 
For this we proceed as follows.
The $x$-derivative of \eqref{6.136}, after using \eqref{6.25} and \eqref{6.134}, is evaluated as
\begin{equation}
\label{6.138}
\tilde h'(k,x)= f'(k,x)+\ds \frac{\Phi'_N(x) \,W_N(x)^+\, q_7(k,x)}{k^2+\kappa_N^2} 
+\Phi_N(x) \,W_N(x)^+\, \Phi_N(x)^\dagger\, \tilde h(k,x).
\end{equation}
By taking the $x$-derivative of \eqref{6.138}, we obtain
the expression for $\tilde h''(k,x).$ Using that expression for $\tilde h''(k,x),$ the expression for $f''(k,x)$ obtained by using $f(k,x)$ in
\eqref{2.1}, the expression for $\Phi_N''(x)$ obtained from \eqref{6.37}, the expression for $\tilde h'(k,x)$ from
\eqref{6.138}, the expression for $q'_7(k,x)$ from \eqref{6.134},  and the expression for $\tilde V(x)$
from \eqref{6.24}, 
we prove that
\begin{equation}
\label{6.139}
-\tilde h''(x)+\tilde V(x)\,\tilde h(k,x)-k^2\,\tilde h(k,x)=0,
\end{equation}
which confirms that 
$\tilde h(k,x)$ indeed satisfies the Schr\"odinger equation \eqref{2.1} when
$V(x)$ there is replaced with $\tilde V(x)$ in \eqref{6.24}.
Next, using \eqref{2.9} and \eqref{6.109}--\eqref{6.112}, we obtain the asymptotics
\begin{equation}\label{6.140}
\tilde h(k,x)= e^{ikx} \left[ I- \frac{2i\kappa_N}{k+i\kappa_N} P_N \right] \left(I+o(1)\right), \qquad x \to +\infty.
\end{equation}
From \eqref{6.137}, \eqref{6.139}, and \eqref{6.140}, we see that the quantity $\tilde f(k,x)$ appearing in \eqref{6.96}
is indeed the Jost solution to the Schr\"odinger equation \eqref{2.1} when the
potential $V(x)$ there is replaced with the perturbed potential $\tilde V(x).$ Using \eqref{6.132}, \eqref{6.135}, and \eqref{6.136}, we see that
the perturbed Jost solution $\tilde f(k,x)$ has the alternate expression given in \eqref{6.98}.
Thus, the proof of (g) is complete.
\end{proof}

In the following remark, we emphasize that the perturbed Jost solution $\tilde f(k,x)$ appearing in \eqref{6.96} and \eqref{6.98} does not have a singularity
at $k=i\kappa_N$ despite the factor $k-i\kappa_N$ appearing in the denominators in \eqref{6.96} and \eqref{6.98}.

\begin{remark}\label{remark6.5}
{\rm
As stated in the proof of Theorem~\ref{theorem6.4}, the quantity $\tilde h(k,x)$ introduced in \eqref{6.136}
does not have a singularity at $k=i\kappa_N.$ On the other hand, \eqref{6.137} suggests that the perturbed Jost solution
$\tilde f(k,x)$ might have a pole at $k=i\kappa_N$ due to the factor $k-i\kappa_N$ in the denominator on the right-hand side of
\eqref{6.137}. To prove that the factor $k-i\kappa_N$ in the denominator in \eqref{6.137} yields a removable singularity, we proceed as follows.
From \eqref{6.140} we have
\begin{equation*}
\tilde h(i\kappa_N,x)= e^{-\kappa_N x} \left[ I- P_N \right]\left(I+o(1)\right),\qquad x \to +\infty,
\end{equation*}
which yields
\begin{equation}\label{6.142}
\tilde h(i\kappa_N,x)\,P_N= e^{-\kappa_N x} \,o(1),\qquad x \to +\infty.
\end{equation}
From Proposition~3.2.2 of \cite{AW2021} it follows that that \eqref{2.1} with the potential
$V(x)$ there replaced with $\tilde V(x),$ for each fixed $k\in\overline{\mathbb C^+}\setminus\{0\},$ has an $n\times n$ matrix solution $\tilde g(k,x)$ satisfying
\begin{equation}
\label{6.143}
\tilde g(k,x)=e^{-ikx}\left[I+o(1)\right],\quad \tilde g'(k,x)=-e^{-ikx}\left[ik\,I+o(1)\right],
\qquad x\to+\infty.
\end{equation}
From \eqref{6.143} we see that $\tilde g(i\kappa_N,x)$ increases exponentially as $x\to+\infty$ because it has the asymptotics
\begin{equation}\label{6.144}
\tilde g(i\kappa_N,x)= e^{\kappa_Nx}\left(I+o(1)\right),\qquad x \to+\infty.
\end{equation} 
In contrast to $\tilde g(i\kappa_N,x),$ the perturbed Jost solution $\tilde f(i\kappa_N,x)$
decreases exponentially as $x\to+\infty$ because it has the asymptotics
\begin{equation}\label{6.145}
\tilde f(i\kappa_N,x)= e^{-\kappa_Nx}\left(I+o(1)\right),\qquad x \to+\infty.
\end{equation} 
From Proposition~3.2.2 of \cite{AW2021} we know that $\tilde f(i\kappa_N,x)$
and $\tilde g(i\kappa_N,x)$ form a fundamental set of solutions to \eqref{2.1} at $k=i\kappa_N$ with the potential $\tilde V(x).$
By expressing $\tilde h(i\kappa_N,x)\,P_N$ as a linear combination
of $\tilde f(i\kappa_N,x)$ and $\tilde g(i\kappa_N,x),$ from \eqref{6.144} and \eqref{6.145} we conclude that
there exists a constant $n\times n$ matrix $M$
such that
\begin{equation}\label{6.146}
\tilde h(i\kappa_N,x)\,P_N= \tilde f(i\kappa_N,x) \,M.
\end{equation}
From \eqref{6.142}, \eqref{6.145}, and \eqref{6.146}, we see that we must have $M=0$ in \eqref{6.146}.
Hence, \eqref{6.146} yields
\begin{equation}\label{6.147}
\tilde h(i\kappa_N,x)\,P_N=0.
\end{equation}
Using \eqref{6.147} we can write \eqref{6.137} in the equivalent form as
\begin{equation}\label{6.148}
\tilde f(k,x)= \tilde h(k,x) + \ds \frac{2i\kappa_N}{k-i\kappa_N} \left (\tilde h(k,x)- \tilde h(i\kappa_N,x)\right)P_N.
\end{equation}
Since we have
\begin{equation*}
\tilde h(k,x)- \tilde h(i\kappa_N,x)=O(k-i\kappa_N),\qquad k\to i\kappa_N,
\end{equation*}
it follows that the right-hand side of \eqref{6.148} remains bounded as $ k \to i\kappa_N.$ 
Consequently, the singularity of \eqref{6.137} at $k=i\kappa_N$ is a removable
singularity.

}
\end{remark}

In the next  theorem, when we remove a bound state from the Schr\"odinger operator, we show that the projection matrices and
the Gel'fand--Levitan normalization matrices associated with the remaining bound states stay unaffected. We also show how the 
spectral measure changes when a bound state is removed. We recall that the bound-state
energies are not required to be ordered in an increasing or decreasing order, and hence without any loss of generality
we assume that we remove the bound state at $k=i\kappa_N$ with the Gel'fand--Levitan normalization matrix
$C_N$ and that the remaining bound states occur at $k=i\kappa_j$ with
the associated Gel'fand--Levitan normalization matrices $C_j$ for $1\le j\le N-1.$
As in Theorem~\ref{theorem6.4}, we impose the stronger assumption
that the unperturbed potential $V$ belongs to
$L^1_2(\mathbb R^+)$ rather than $L^1_1(\mathbb R^+)$ so that the perturbed potential $\tilde V$  
belongs to $L^1_1(\mathbb R^+),$ which is assured by Theorem~\ref{theorem6.3}(b).
We recall that this is a sufficiency assumption because the asymptotic estimates used
in Theorem~\ref{theorem6.3}(b)
on the potentials
are not necessarily sharp.

\begin{theorem}
\label{theorem6.6} Consider
the unperturbed Schr\"odinger operator associated with
\eqref{2.1} and \eqref{2.5}, where the potential $V$ satisfies \eqref{2.2} and belongs to $L^1_2(\mathbb R^+)$ and the boundary matrices $A$ and $B$ appearing in \eqref{2.5} satisfy
\eqref{2.6} and \eqref{2.7}.
Let $\varphi(k,x)$ be the corresponding regular solution satisfying
the initial conditions \eqref{2.10}, $J(k)$ be the Jost matrix defined in \eqref{2.11}, $d\rho$ be the spectral
measure given in \eqref{4.1}, 
and each of $k=i\kappa_j$ for $1\le j\le N$ correspond to the bound state with the energy $-\kappa_j^2,$ 
the Gel'fand--Levitan normalization matrix $C_j,$ 
the orthogonal projection
$Q_j$ onto $\text{\rm{Ker}}[J(i\kappa_j)],$ 
and
the Gel'fand--Levitan normalized bound-state solution $\Phi_j(x).$
Let us use a tilde to identify the quantities
associated with the perturbed Schr\"odinger operator
with the potential
$\tilde V(x)$ expressed as in \eqref{6.24}
and the boundary condition \eqref{2.5} where
the boundary matrices $A$ and $B$ are replaced with
$\tilde A$ and $\tilde B,$ respectively, given in \eqref{6.47}.
We use 
$\tilde\varphi(k,x)$ for the perturbed regular solution satisfying the initial
conditions \eqref{5.17}, the quantity $\tilde J(k)$ for the perturbed Jost matrix
defined as in \eqref{6.89}, the matrix
$\tilde Q_j$ for the orthogonal projection onto $\text{\rm{Ker}}[\tilde J(i\kappa_j)],$
the quantity
$\tilde C_j$ for the Gel'fand--Levitan normalization matrix defined
as in \eqref{3.33}, \eqref{3.34}, and \eqref{3.36} but
by using $\tilde Q_j$ instead of $Q_j$ and
by using $\tilde\varphi(i\kappa_j,x)$ instead of $\varphi(i\kappa_j,x)$ there for $1\le j\le N-1,$ and we define
the
perturbed spectral measure $d\tilde\rho$ 
as
\begin{equation}
\label{6.150}
d\tilde\rho=\begin{cases}
\ds\frac{\sqrt{\lambda}}{\pi}\,\left(\tilde J(k)^\dagger\,\tilde J(k)\right)^{-1}\,d\lambda,\qquad \lambda\ge 0,
\\
\noalign{\medskip}
\ds\sum_{j=1}^{N-1} \tilde C_j^2\,\delta(\lambda-\lambda_j)\,d\lambda,
\qquad \lambda<0.\end{cases}
\end{equation}
We then have the following:

\begin{enumerate}

\item[\text{\rm(a)}] Under the perturbation,
the rest of the projection matrices $Q_j$ are unchanged, i.e. we have
 \begin{equation}
\label{6.151}
\tilde Q_j=Q_j,\qquad 1\le j\le N-1.\end{equation}

\item[\text{\rm(b)}] Under the perturbation, the rest of the Gel'fand--Levitan normalization matrices are
 unchanged, i.e. we have
 \begin{equation}
\label{6.152}
\tilde C_j=C_j,\qquad 1\le j\le N-1.\end{equation}

\item[\text{\rm(c)}] The perturbed spectral measure $d\tilde\rho$ is related to the unperturbed spectral measure $d\rho$ as
\begin{equation}
\label{6.153}
d\tilde\rho=d\rho-C_N^2\,\delta(\lambda-\lambda_N)\,d\lambda.
\end{equation}
\end{enumerate}

\end{theorem}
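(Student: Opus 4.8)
The plan is to prove (a) first, since (b) is built on it, and then to obtain (c) as a short consequence of (b) together with the invariance of the continuous spectral data already established in Theorem~\ref{theorem6.4}(b). For (a), I would invoke the Jost-matrix transformation \eqref{6.91}. For any $j$ with $1\le j\le N-1$ we have $\kappa_j\ne\kappa_N,$ so evaluating the multiplier $I+\frac{2i\kappa_N}{k-i\kappa_N}\,P_N$ at $k=i\kappa_j$ yields a finite matrix whose inverse is given by the second equality in \eqref{6.131}; hence that multiplier is invertible at $k=i\kappa_j.$ Since $\tilde J(i\kappa_j)$ differs from $J(i\kappa_j)$ only by left multiplication by this invertible matrix, the two matrices share the same kernel, and therefore the orthogonal projections onto those kernels coincide, which is \eqref{6.151}.

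For (b) the goal is to show that the Gel'fand--Levitan construction \eqref{3.33}--\eqref{3.36} returns the same normalization matrix for each surviving bound state. By (a) we may use $Q_j$ in place of $\tilde Q_j,$ so it suffices to prove that $\tilde{\mathbf G}_j:=\int_0^\infty dx\,Q_j\,\tilde\varphi(i\kappa_j,x)^\dagger\,\tilde\varphi(i\kappa_j,x)\,Q_j$ equals $\mathbf G_j$ from \eqref{3.33}; then $\tilde{\mathbf H}_j=\mathbf H_j$ and \eqref{6.152} follows from \eqref{3.36}. I would set $u_j(x):=\varphi(i\kappa_j,x)\,Q_j$ and $\tilde u_j(x):=\tilde\varphi(i\kappa_j,x)\,Q_j,$ so that \eqref{6.41} gives $\tilde u_j(x)=u_j(x)+\Phi_N(x)\,W_N(x)^+\,c_j(x)$ with $c_j(x):=\int_0^x dy\,\Phi_N(y)^\dagger\,u_j(y).$ Expanding $\tilde u_j^\dagger\tilde u_j-u_j^\dagger u_j,$ using the selfadjointness of $W_N(x)^+,$ the relations $c_j'(x)=\Phi_N(x)^\dagger u_j(x)$ and $\Phi_N(x)^\dagger\Phi_N(x)=-W_N'(x)$ coming from \eqref{6.3}, and the derivative formula \eqref{6.25}, I expect the integrand to collapse to the total derivative
\begin{equation*}
\tilde u_j(x)^\dagger\tilde u_j(x)-u_j(x)^\dagger u_j(x)=\frac{d}{dx}\left[c_j(x)^\dagger\,W_N(x)^+\,c_j(x)\right],
\end{equation*}
so that integration over $[0,\infty)$ reduces the whole question to boundary terms.

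The boundary evaluation carries the real content, and it is the step I expect to be the main obstacle. At $x=0$ the term vanishes because $c_j(0)=0.$ At $x=\infty$ I would use the orthogonality \eqref{3.30}: writing $\Phi_j=u_j\,\mathbf H_j^{-1/2}$ (which follows from the second equality of \eqref{3.39} and the commutation of $Q_j$ with $\mathbf H_j^{-1/2}$) and invoking the invertibility of $\mathbf H_j^{-1/2},$ the relation $\int_0^\infty \Phi_N^\dagger\Phi_j=0$ forces $c_j(\infty)=\int_0^\infty \Phi_N(y)^\dagger u_j(y)\,dy=0.$ Combining this with the exponential decay \eqref{6.50} for $\Phi_N,$ the analogous $O(e^{-\kappa_j x})$ decay of $u_j,$ and the growth $W_N(x)^+=O(e^{2\kappa_N x})$ from \eqref{6.121}, one checks that $c_j(x)^\dagger\,W_N(x)^+\,c_j(x)=O(e^{-2\kappa_j x})\to0,$ so the upper boundary term also vanishes. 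Hence $\tilde{\mathbf G}_j=\int_0^\infty u_j^\dagger u_j=\mathbf G_j,$ which proves \eqref{6.152}.

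Finally, (c) is immediate. The continuous parts of \eqref{5.1} and \eqref{6.150} agree because $\tilde J(k)^\dagger\tilde J(k)=J(k)^\dagger J(k)$ by \eqref{6.92}, while for the discrete parts part (b) gives $\tilde C_j=C_j$ for $1\le j\le N-1.$ Subtracting the two measures therefore annihilates every surviving bound-state mass and leaves exactly the single discarded term $-C_N^2\,\delta(\lambda-\lambda_N)\,d\lambda,$ which is \eqref{6.153}.
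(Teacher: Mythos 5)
Your proposal is correct and follows essentially the same route as the paper: part (a) from the invertibility of the multiplier in \eqref{6.91} at $k=i\kappa_j,$ part (b) by reducing the invariance of $\mathbf G_j$ to a total derivative whose boundary terms vanish, and part (c) from \eqref{6.92} together with (b). The only cosmetic difference is that you parametrize the exact derivative by the integral $c_j(x)=\int_0^x\Phi_N^\dagger u_j$ and kill the upper boundary term via the orthogonality \eqref{3.30}, whereas the paper works with the equivalent Wronskian quantity $q_9(x)$ (related to $c_j$ through \eqref{6.76}) and uses its decay estimate \eqref{6.172} directly; both the identity you state and the boundary evaluations check out.
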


\begin{proof}
The proof of (a) directly follows from \eqref{6.91}
and the fact that for
$1\le j\le N-1$  the matrices $Q_j$ and $\tilde Q_j$ are the orthogonal
projections onto $\text{\rm{Ker}}[J(i\kappa_j)]$
and $\text{\rm{Ker}}[\tilde J(i\kappa_j)],$
respectively, 
which indicate that any nonzero vector
in the kernel of $J(i\kappa_j)$ is also in the kernel of $\tilde J(i\kappa_j)$ and vice versa.
For the proof of (b), we proceed as follows. From \eqref{3.36} and \eqref{6.151} 
we see that \eqref{6.152} holds if and only if $\mathbf H_j$ is unchanged. With the help of
\eqref{3.34} we see that
$\mathbf H_j$ is unchanged if and only if $\mathbf G_j$ is unchanged.
Thus, the proof of (b) can be obtained by showing that 
the integral on the right-hand side of \eqref{3.33} is unchanged for $1\le j\le N-1.$   
Let us define the $n\times n$ matrices $\mathbf S_j(x)$ and $\tilde{\mathbf S}_j(x)$ as
\begin{equation}
\label{6.154}
\mathbf S_j(x):=\varphi(i\kappa_j,x)\,Q_j,\quad
\tilde{\mathbf S}_j(x):=\tilde\varphi(i\kappa_j,x)\,Q_j,
\qquad 1\le j\le N-1.
\end{equation}
From the second equality of \eqref{3.39}, by using the fact that
$\mathbf H_j^{-1/2}$ and $Q_j$ commute, we obtain
\begin{equation}
\label{6.155}
\Phi_j(x):=\varphi(i\kappa_j,x)\,Q_j\,\mathbf H_j^{-1/2},
\qquad 1\le j\le N-1.
\end{equation}
From \eqref{6.154} and \eqref{6.155} it follows that
\begin{equation*}
\mathbf S_j(x)=\Phi_j(x)\,\mathbf H_j^{1/2},
\qquad 1\le j\le N-1.
\end{equation*}
Evaluating \eqref{6.46} at $k=i\kappa_j$ and postmultiplying 
the resulting equation with $Q_j,$ we obtain
\begin{equation}
\label{6.157}
\tilde{\mathbf S}_j(x)=\mathbf S_j(x)
+\ds\frac{1}{\kappa_N^2-\kappa_j^2}
\,\Phi_N(x) \,W_N(x)^+\,q_9(x)\,\mathbf H_j^{1/2},\qquad 1\le j\le N-1,
\end{equation}
where 
we have defined
\begin{equation*}
q_9(x):=\Phi'_N(x)^\dagger\,\Phi_j(x)-\Phi_N(x)^\dagger\,\Phi_j'(x).
\end{equation*}
By taking the matrix adjoint of \eqref{6.157}, we get
\begin{equation}
\label{6.159}
\tilde{\mathbf S}_j(x)^\dagger=\mathbf S_j(x)^\dagger
+\ds\frac{1}{\kappa_N^2-\kappa_j^2}
\, \mathbf H_j^{1/2}\,q_9(x)^\dagger\,W_N(x)^+\,\Phi_N(x)^\dagger,\qquad 1\le j\le N-1,
\end{equation}
where we have used the fact that
$\mathbf H_j^{1/2}$ and $W_N(x)^+$ are both selfadjoint.
The selfadjointness of $W_N(x)$ is seen from \eqref{6.3}.  
From \eqref{3.4} we know that the operations for the adjoint and the Moore--Penrose inverse 
for any square matrix commute. Consequently, from \eqref{6.3} we
conclude that $W_N(x)^+$
is also selfadjoint.
From \eqref{6.157} and \eqref{6.159} we get
\begin{equation}
\label{6.160}
\mathbf H_j^{-1/2}\left[\tilde{\mathbf S}_j(x)^\dagger\,
\tilde{\mathbf S}_j(x)-\mathbf S_j(x)^\dagger\,\mathbf S_j(x)\right]\mathbf H_j^{-1/2}=\ds\frac{1}{\kappa_N^2-\kappa_j^2}
\left[q_{10}(x)+q_{11}(x)+q_{12}(x)\right],
\end{equation}
where we have defined
\begin{equation}
\label{6.161}
q_{10}(x):=\Phi_j(x)^\dagger\,\Phi_N(x) \,W_N(x)^+\,
q_9(x),
\end{equation}
\begin{equation}
\label{6.162}
q_{11}(x):=
q_9(x)^\dagger\,W_N(x)^+\,\Phi_N(x)^\dagger\,\Phi_j(x),
\end{equation}
\begin{equation}
\label{6.163}
q_{12}(x):=q_9(x)^\dagger\,
\ds\frac{W_N(x)^+\,\Phi_N(x)^\dagger\,\Phi_N(x)\, W_N(x)^+}{\kappa_N^2-\kappa_j^2}\,
q_9(x).
\end{equation}
We have
\begin{equation}
\label{6.164}
\Phi_j(x)^\dagger\,\Phi_N(x) =
\ds\frac{1}{\kappa_N^2-\kappa_j^2}\,\ds\frac{d}{dx}\left[q_9(x)^\dagger
\right],
\end{equation}
\begin{equation}
\label{6.165}
W_N(x)^+\,\Phi_N(x)^\dagger\,\Phi_N(x)\, W_N(x)^+=\ds\frac{d W_N(x)^+}{dx},
\end{equation}
\begin{equation}
\label{6.166}
\Phi_N(x)^\dagger\,\Phi_j(x) =
\ds\frac{1}{\kappa_N^2-\kappa_j^2}\,\ds\frac{d}{dx}\left[q_9(x)
\right].
\end{equation}
Note that \eqref{6.165} follows from \eqref{6.25}. We can verify \eqref{6.164} and \eqref{6.166} 
with the help of \eqref{6.37} and the corresponding Schr\"odinger equation
\begin{equation*}
-\Phi_j''(x)+V(x)=-\kappa_j^2\,\Phi_j(x),\qquad 1\le j\le N-1.
\end{equation*}
Using \eqref{6.164}, \eqref{6.165}, and \eqref{6.166} in
\eqref{6.161}, \eqref{6.162}, and \eqref{6.163}, respectively, we observe that the right-hand side of
\eqref{6.160} is proportional to the derivative of the product of three matrix-valued functions.
Thus, we can write \eqref{6.160} as
\begin{equation}
\label{6.168}
\mathbf H_j^{-1/2}\left[\tilde{\mathbf S}_j(x)^\dagger\,
\tilde{\mathbf S}_j(x)-\mathbf S_j(x)^\dagger\,\mathbf S_j(x)\right]\mathbf H_j^{1/2}=\ds\frac{1}{\left(\kappa_N^2-\kappa_j^2\right)^2}\,\ds\frac{d\Upsilon(x)}{dx},
\end{equation}
where we have defined
\begin{equation}
\label{6.169}
\Upsilon(x):=q_9(x)^\dagger\,
W_N(x)^+\,q_9(x).
\end{equation}
Because $\mathbf H_j^{-1/2}$ is an invertible matrix and $\kappa_j\ne \kappa_N$
for $1\le j\le N-1,$ from \eqref{6.168} we conclude that
\begin{equation}
\label{6.170}
\ds\int_0^\infty dx\,\left[\tilde{\mathbf S}_j(x)^\dagger\,
\tilde{\mathbf S}_j(x)-\mathbf S_j(x)^\dagger\,\mathbf S_j(x)\right]=0,\end{equation}
if and only if we have
\begin{equation}
\label{6.171}\Upsilon(+\infty)=\Upsilon(0).
\end{equation}
With the help of \eqref{6.50},  \eqref{6.51}, their adjoints,  and their analogs for $\Phi_j(x),$ we get
\begin{equation}
\label{6.172}
q_9(x)=
O\left(e^{-(\kappa_N+\kappa_j)x}\right),\qquad x\to+\infty,
\end{equation}
\begin{equation}
\label{6.173}
q_9(x)^\dagger=
O\left(e^{-(\kappa_j+\kappa_N)x}\right),\qquad x\to+\infty.
\end{equation}
On the other hand, from \eqref{6.59} we have
\begin{equation}
\label{6.174}
W_N(x)^+=O\left(e^{2\kappa_N x}\right),\qquad x\to+\infty.
\end{equation}
Using \eqref{6.172}, \eqref{6.173}, \eqref{6.174}, we obtain
\begin{equation*}
\Upsilon(x)=O\left(e^{-2\kappa_j x}\right),\qquad x\to+\infty,
\end{equation*}
which yields
\begin{equation}
\label{6.176}
\Upsilon(+\infty)=0.
\end{equation}
Note that we have the analogs of
\eqref{6.38} given by
\begin{equation}
\label{6.177}
\Phi_j(0)=A C_j,\quad \Phi_j'(0)=B C_j,\qquad 1\le j\le N-1.
\end{equation}
Using \eqref{6.38}, \eqref{6.77}, \eqref{6.177} on the right-hand side of \eqref{6.169}, we obtain
\begin{equation}
\label{6.178}
\Upsilon(0)=C_j\left(A^\dagger B-B^\dagger A\right) C_N\,Q_N\,C_N \left(B^\dagger A-A^\dagger B\right) C_j,\qquad
1\le j\le N-1,
\end{equation}
where we have used the fact that $C_j$ and $C_N$ are selfadjoint matrices.
Using \eqref{2.6} on the right-hand side of \eqref{6.178}, we see that
\begin{equation}
\label{6.179}
\Upsilon(0)=0,
\end{equation}
and hence \eqref{6.176} and \eqref{6.179} yield \eqref{6.171}, which in turn implies that \eqref{6.170} holds.
Comparing \eqref{6.170} with \eqref{3.33}, we see that the matrix $\mathbf G_j$ is unchanged and we have
\begin{equation*}
\tilde{\mathbf G}_j=\mathbf G_j,\qquad 1\le j\le N-1.
\end{equation*}
Thus, \eqref{6.152} holds and the proof of (b) is complete. Finally, we establish \eqref{6.153}
by using \eqref{5.1}, \eqref{6.92},  (d) of Theorem~\ref{theorem6.4}, \eqref{6.150}, and \eqref{6.152}.
Hence, the proof of (c) is complete.
\end{proof}

\section{The  transformation to decrease the multiplicity of a bound state}
\label{section7}

In Section~\ref{section6} we have presented a method to completely remove an eigenvalue from the discrete spectrum
of the matrix-valued Schr\"odinger operator with the general selfadjoint boundary condition, without changing
the continuous spectrum. We have provided
the transformations for all relevant quantities resulting from the removal. Without loss of generality, we have removed
the bound state at $k=i\kappa_N$ with the multiplicity $m_N$ and the Gel'fand--Levitan normalization matrix $C_N$
so that after the removal we have left with the remaining bound states at $k=i\kappa_j$ for $1\le j\le N-1.$ We have seen that
the remaining bound-state energies $-\kappa_j^2,$ multiplicities $m_j,$ and Gel'fand--Levitan normalization matrices $C_j$
are not affected by the removal. 
In the scalar case, i.e. when $n=1,$ the multiplicity $m_N$ is equal to $1,$ and hence the reduction of the multiplicity
$m_N$ is always equivalent to the complete removal of the bound state at $k=i\kappa_N.$
On the other hand, when $n\ge 2$
the multiplicity $m_N$ is a fixed positive integer satisfying $1\le m_N\le n$ and hence the reduction in the 
multiplicity $m_N$ is not necessarily the same as the complete removal of the bound state.

This section is complementary to Section~\ref{section6}. Instead of completely removing the bound state
at $k=i\kappa_N,$ we reduce its multiplicity from $m_N$ to $\tilde m_N$ so that
we have $1\le \tilde m_N<m_N.$ Thus, we assume that the number $N$ of bound states is at least $1$ and
that the  multiplicity $m_N$ is at least $2.$
As in the previous sections, we use a tilde to identify the perturbed
quantities obtained after the multiplicity of the bound state is reduced. We assume that our
unperturbed potential $V(x)$
satisfies \eqref{2.2} and belongs
to $L^1_1(\mathbb R^+).$ 
Our unperturbed regular solution is
$\varphi(k,x),$ unperturbed Jost solution is $f(k,x),$ 
unperturbed Jost matrix is $J(k),$ unperturbed scattering matrix is $S(k),$ unperturbed 
matrix $Q_N$ corresponds to the
orthogonal projection onto the kernel of
$J(i\kappa_N),$ and 
unperturbed
boundary matrices are given by $A$ and $B.$
Our perturbed potential is $\tilde V(x),$ perturbed regular solution is
$\tilde\varphi(k,x),$ perturbed Jost solution is $\tilde f(k,x),$ 
perturbed Jost matrix is $\tilde J(k),$ perturbed scattering matrix is $\tilde S(k),$ 
and perturbed
boundary matrices are given by $\tilde A$ and $\tilde B.$

We assume that the multiplicity of the bound state at $k=i\kappa_N$ is reduced by $m_{N{\text{\rm{r}}}},$ and hence we define the positive
integer representing the reduction in the multiplicity of the bound state at $k=i\kappa_N$ as
\begin{equation}
\label{7.1}
m_{N{\text{\rm{r}}}}:=m_N-\tilde m_N.
\end{equation}
The subscript $N{\text{\rm{r}}}$ indicates that we refer to the $N$th bound state at $k=i\kappa_N$ and that we reduce
its multiplicity.
From \eqref{7.1} we see that $m_{N{\text{\rm{r}}}}$ satisfies the inequality
$1\le m_{N{\text{\rm{r}}}}\le n-1.$
The reduction of the multiplicity of the bound state at $k=i\kappa_N$
by $m_{N{\text{\rm{r}}}}$ is achieved 
by introducing the matrix $Q_{N{\text{\rm{r}}}}$ in such a way that
$Q_{N{\text{\rm{r}}}}$ is an orthogonal projection onto
a proper subspace of 
$Q_N\, \mathbb C^n,$ it has its rank equal to $m_{N{\text{\rm{r}}}},$
and it satisfies $Q_{N{\text{\rm{r}}}}\le Q_N.$
This last matrix inequality is equivalent to 
$Q_N-Q_{N{\text{\rm{r}}}}\ge 0,$ which indicates that
the matrix $Q_N-Q_{N{\text{\rm{r}}}}$ is nonnegative. In other words,
the eigenvalues of the matrix
 $Q_N-Q_{N{\text{\rm{r}}}}$ are all real and nonnegative.

The transformations for the relevant quantities are obtained as follows. We omit the proofs in this section
because they are similar to those presented  in Sections~\ref{section3} and \ref{section6}.
We note that we have
\begin{equation*}
Q_{N{\text{\rm{r}}}}\, Q_N=
 Q_N\, Q_{N{\text{\rm{r}}}}= Q_{N{\text{\rm{r}}}}.
 \end{equation*}
Consequently, the columns of the matrix $\varphi(i\kappa_N,x)\,Q_{N{\text{\rm{r}}}}$ are square integrable because
the columns of the matrix $\varphi(i\kappa_N,x)\,Q_N$ are square integrable.
Analogous to \eqref{3.33} and \eqref{3.34}, we let
\begin{equation}\label{7.3}
 \mathbf G_{N{\text{\rm{r}}}}:= \int_0^\infty dx\,Q_{N{\text{\rm{r}}}}\, \varphi(i\kappa_N,x)^\dagger \,\varphi(i\kappa_N,x) \,Q_{N{\text{\rm{r}}}},
 \end{equation}
 \begin{equation}\label{7.4}
 \mathbf H_{N{\text{\rm{r}}}}:= I-Q_{N{\text{\rm{r}}}}+\mathbf  G_{N{\text{\rm{r}}}}.
 \end{equation}
As in Section~\ref{section3}, we prove that the matrix
$\mathbf H_{N{\text{\rm{r}}}}$ is positive and hence invertible. Thus, the positive matrix
$\mathbf H_{N{\text{\rm{r}}}}^{1/2}$
and its inverse
$\mathbf H_{N{\text{\rm{r}}}}^{-1/2}$
are uniquely defined.
Furthermore, we have
\begin{equation*}
\mathbf H_{N{\text{\rm{r}}}}\,Q_{N{\text{\rm{r}}}}= Q_{N{\text{\rm{r}}}}\, \mathbf H_{N{\text{\rm{r}}}}.
 \end{equation*}
Analogous to \eqref{3.36}, with the help of \eqref{7.3} and \eqref{7.4} we introduce the $n\times n$ matrix $C_{N{\text{\rm{r}}}}$ as
\begin{equation*}
C_{N{\text{\rm{r}}}}:= \mathbf H_{N{\text{\rm{r}}}}^{-1/2} \,Q_{N{\text{\rm{r}}}}.
\end{equation*}
In a manner similar to \eqref{3.28} we introduce
the $n\times n$ matrix solution $\Phi_{N{\text{\rm{r}}}}(x)$ to the Schr\"odinger equation \eqref{3.19} as
\begin{equation}\label{7.7} 
\Phi_{N{\text{\rm{r}}}}(x):= \varphi(i\kappa_N, x) \,C_{N{\text{\rm{r}}}}.
\end{equation}
Analogous to \eqref{3.29} we obtain the normalization property
\begin{equation*}
\int_0^\infty dx\, \Phi_{N{\text{\rm{r}}}}(x)^\dagger\, \Phi_{N{\text{\rm{r}}}}(x)= Q_{N{\text{\rm{r}}}}.
\end{equation*}
Let $\{w^{(l)}_N\}_{l=1}^{m_{N{\text{\rm{r}}}}}$ be an orthonormal basis for $Q_{N{\text{\rm{r}}}}\, \mathbb C^n.$ Analogous to the
second equality in \eqref{3.14}, we have
\begin{equation}\label{7.9}
Q_{N{\text{\rm{r}}}}= \ds\sum_{l=1}^{m_{N{\text{\rm{r}}}}} w^{(l)}_N (w^{(l)}_N)^\dagger.
\end{equation} 
Each column of $\varphi(i\kappa_N,x) \,Q_{N{\text{\rm{r}}}}$ is square integrable and
satisfies the boundary condition \eqref{2.1}. Consequently, from Theorem~3.11.1 of \cite{AW2021} it follows that
for each column vector $w^{(l)}_N$ there exists a unique column vector $\beta^{(l)}_N$
in $\text{\rm{Ker}}[J(i\kappa_N)^\dagger]$ such that
\begin{equation}\label{7.10}
 \varphi(i\kappa_N,x)\, w^{(l)}_N=f(i\kappa_N,x)\,\beta^{(l)}_N, \qquad 1\le l\le m_{N{\text{\rm{r}}}}. 
\end{equation}
From \eqref{7.9} we have
\begin{equation*}
Q_{N{\text{\rm{r}}}}\,w^{(l)}_N=w^{(l)}_N,
\end{equation*}
and hence \eqref{7.10} is equivalent to
\begin{equation}\label{7.12}
\varphi(i\kappa_N,x)\,  Q_{N{\text{\rm{r}}}} \,w^{(l)}_N=f(i\kappa_N,x)\,\beta^{(l)}_N,\qquad 1\le l\le m_{N{\text{\rm{r}}}}.  
\end{equation}

Let us use $P_{N{\text{\rm{r}}}}$ to denote the orthogonal projection onto the subspace of $\text{\rm{Ker}}[J(i\kappa_N)^\dagger]$ generated by
 the orthonormal set $\{\beta^{(l)}_N\}_{l=1}^{m_{N{\text{\rm{r}}}}}.$ From Proposition~3.11.1
 of \cite{AW2021} it follows that the map $w^{(l)}_N \mapsto \beta^{(l)}_N$ is one-to-one.
 Consequently, the rank of  $P_{N{\text{\rm{r}}}}$ is equal to $m_{N{\text{\rm{r}}}}.$  Consider
any column vector
$w$ in $Q_{N{\text{\rm{r}}}}\, \mathbb C^n,$ which can be expressed as
 \begin{equation}
 \label{7.13}
 w= \sum_{l=1}^{m_{N{\text{\rm{r}}}}} a_l \,w^{(l)}_N,
\end{equation}
 for some appropriate coefficients $a_l.$
 By letting
 \begin{equation}\label{7.14}
v:=\sum_{l=1}^{m_{N{\text{\rm{r}}}}} a_l  \,\beta^{(l)}_N ,
\end{equation}
 we see that $v\in P_{N{\text{\rm{r}}}}\, \mathbb C^n$ and we also have $P_{N{\text{\rm{r}}}}\,v=v.$ Consequently,
 using \eqref{7.10}, \eqref{7.12}, \eqref{7.13}, and
 \eqref{7.14}, we get
  \begin{equation}\label{7.15}
\varphi(i\kappa_N,x)\, Q_{N{\text{\rm{r}}}}\, w
 = f(i\kappa_N, x)\, P_{N{\text{\rm{r}}}}\, v.
 \end{equation}
 As already indicated, each column of the matrix
$ \varphi(i\kappa_N,x)\, Q_{N{\text{\rm{r}}}}$ is a square-integrable solution to \eqref{3.19} and satisfies the boundary
 condition \eqref{2.5}.
 From \eqref{7.15} it follows that
 each column of the matrix $f(i\kappa_N, x)\, P_{N{\text{\rm{r}}}}$
 is a square-integrable solution to \eqref{3.19} and satisfies the boundary
 condition \eqref{2.5}. Hence, we can use
 the matrix $f(i\kappa_N, x)\, P_{N{\text{\rm{r}}}}$ to construct an $n\times n$ matrix
 $M_{N{\text{\rm{r}}}}$ as in \eqref{3.27}. For this, we use the analogs of \eqref{3.24} and \eqref{3.25} by letting
\begin{equation*}
\mathbf A_{N{\text{\rm{r}}}}:= \int_0^\infty dx\, P_{N{\text{\rm{r}}}}\, f(i\kappa_N,x)^\dagger\, f(i\kappa_N, x) \,P_{N{\text{\rm{r}}}},
\end{equation*}
\begin{equation*}
\mathbf B_{N{\text{\rm{r}}}}:= I-P_{N{\text{\rm{r}}}}+\mathbf A_{N{\text{\rm{r}}}}.
\end{equation*}
The matrix $\mathbf B_{N{\text{\rm{r}}}}$ is positive and hence invertible, and it commutes with
$P_{N{\text{\rm{r}}}}.$ Withe the help of the analogs of \eqref{3.26} and \eqref{3.27}, we define
the $n\times n$ matrix $M_{N{\text{\rm{r}}}}$ as
\begin{equation*}
M_{N{\text{\rm{r}}}}:=\mathbf B_{N{\text{\rm{r}}}}^{-1/2} \,P_{N{\text{\rm{r}}}},
\end{equation*}
where $\mathbf B_{N{\text{\rm{r}}}}^{-1/2}$ is the positive matrix corresponding to the
inverse of the positive matrix $\mathbf B_{N{\text{\rm{r}}}}^{1/2}.$ Analogous to \eqref{3.21}, we define
the $n\times n$ matrix solution $\Psi_{N{\text{\rm{r}}}}$ to \eqref{3.19} as
\begin{equation}\label{7.19}
\Psi_{N{\text{\rm{r}}}}:= f(i\kappa_N, x) \,M_{N{\text{\rm{r}}}}.
\end{equation}
In a manner similar to \eqref{4.1} and \eqref{4.2}, using \eqref{7.7} and \eqref{7.19} we introduce the
dependency matrix $D_{N{\text{\rm{r}}}}$ so that
\begin{equation*}
\Phi_{N{\text{\rm{r}}}} = \Psi_{N{\text{\rm{r}}}}\, D_{N{\text{\rm{r}}}},
\end{equation*}
\begin{equation*}
D_{N{\text{\rm{r}}}}=   P_{N{\text{\rm{r}}}}\,   D_{N{\text{\rm{r}}}} .
\end{equation*}
Analogous to \eqref{4.14}, we prove that the dependency matrix $D_{N{\text{\rm{r}}}}$ satisfies
\begin{equation*}
D_{N{\text{\rm{r}}}}= P_{N{\text{\rm{r}}}}\, \mathbf B_{N{\text{\rm{r}}}}^{1/2}\ds \sum_{l=1}^{m_{N{\text{\rm{r}}}}} \beta^{(l)}_N  \,[w^{(l)}_N]^\dagger \,
\mathbf H_{N{\text{\rm{r}}}}^{-1/2} \,Q_{N{\text{\rm{r}}}}.
\end{equation*}

We recall that we assume that the unperturbed potential $V$ satisfies \eqref{2.2} and belongs to $L^1_1(\mathbb R^+).$
As in Section~\ref{section6} we prove the following:

\begin{enumerate}

\item[\text{\rm(1)}]  We establish the appropriate analogs of
(a)--(h) of Theorem~\ref{theorem4.1} for the quantities $D_{N{\text{\rm{r}}}},$
$Q_{N{\text{\rm{r}}}},$ $P_{N{\text{\rm{r}}}},$ $\Psi_{N{\text{\rm{r}}}},$ and $\Phi_{N{\text{\rm{r}}}},$
where $m_{N{\text{\rm{r}}}}$ corresponds to the common rank of the orthogonal projections
$Q_{N{\text{\rm{r}}}}$ and $P_{N{\text{\rm{r}}}}.$

\item[\text{\rm(2)}]  We show that the analogs of  \eqref{4.42} and \eqref{4.50} hold.
This is done by proving that 
\begin{equation*}
D_{N{\text{\rm{r}}}}=M_{N{\text{\rm{r}}}}^+\ f(i\kappa_N,x)^{-1} \,  \Phi_{N{\text{\rm{r}}}}(x),
\end{equation*}
where the right-hand side is evaluated at any $x$-value at which the matrix  $f(i\kappa_N,x)$ is invertible.
We also show that
\begin{equation*}
D_{N{\text{\rm{r}}}}= M_{N{\text{\rm{r}}}}^+\, f' (i\kappa_N,x)^{-1}  \,\Phi'_{N{\text{\rm{r}}}}(x),
\end{equation*} 
where the right-hand side is evaluated at any $x$-value at which the matrix  $f'(i\kappa_N,x)$ is invertible.

\item[\text{\rm(3)}] 
We show that the analog of Theorem~\ref{theorem4.3} holds. In other words, we establish that

\begin{enumerate}
\item[\text{\rm(a)}]
 The matrix $D_{N{\text{\rm{r}}}}$ is a partial isometry
with the initial subspace $Q_{N{\text{\rm{r}}}}\,\mathbb C^n$ and the
final subspace  $P_{N{\text{\rm{r}}}}\, \mathbb C^n.$

\item[\text{\rm(b)}] 
 The adjoint matrix $D_{N{\text{\rm{r}}}}^\dagger$ is a partial isometry
with the initial subspace $P_{N{\text{\rm{r}}}}\, \mathbb C^n$ and the
final subspace  $Q_{N{\text{\rm{r}}}}\, \mathbb C^n.$ 

\end{enumerate}

\item[\text{\rm(4)}]
We solve the  Gel'fand--Levitan system of integral equations \eqref{5.49} with the kernel $G(x,y)$ defined as
\begin{equation*}
G(x,y):= - \varphi(k,x)\, C_{N{\text{\rm{r}}}}^2\, \varphi(k,y)^\dagger,
\end{equation*}
which is the analog of \eqref{6.1}.
Then, by proceeding as in the proof of Theorem~\ref{theorem6.1}, we show that the solution $\mathcal A(x,y)$ to \eqref{5.49} is given by
\begin{equation}\label{7.26}
\mathcal A(x,y)= \Phi_{N{\text{\rm{r}}}}(x)^\dagger \,W_{N{\text{\rm{r}}}}(x)^+ \,\Phi_{N{\text{\rm{r}}}}(y), \qquad  0 \le y <x,
\end{equation}
which is the analog of \eqref{6.2}. Here, we have defined 
\begin{equation*}
W_{N{\text{\rm{r}}}}(x):=\int_x^\infty dy\,\Phi_{N{\text{\rm{r}}}}(y)^\dagger\, \Phi_{N{\text{\rm{r}}}}(y),
\end{equation*}
which is the analog of \eqref{6.3}.
We recall that $W_{N{\text{\rm{r}}}}(x)^+$ denotes the Moore--Penrose inverse of  $W_{N{\text{\rm{r}}}}(x).$

\item[\text{\rm(5)}]
We define the perturbed potential $\tilde V(x)$ as
\begin{equation}
\label{7.28}
\tilde V(x):=V(x)+2\,\ds\frac{d}{dx}\left[
\Phi_{N{\text{\rm{r}}}}(x) \,W_{N{\text{\rm{r}}}}(x)^+\,\Phi_{N{\text{\rm{r}}}}(x)^\dagger
\right],
\end{equation} 
which is the analog of \eqref{6.24}.
We then show that
$\mathcal A(x,y)$ and $\tilde V(x)$ satisfy (a)--(d) of Theorem~\ref{theorem6.2}, and this is done
by proceeding as in the proof of Theorem~\ref{theorem6.2}.

\item[\text{\rm(6)}]
By proceeding as in the proof of Theorem~\ref{theorem6.3}, we establish the following:
\begin{enumerate}

\item[\text{\rm(a)}] We construct the quantity $\tilde\varphi(k,x)$ as in \eqref{5.18} but by using \eqref{7.26}, and we obtain
\begin{equation}
\label{7.29}
\tilde\varphi(k,x)=\varphi(k,x)+\int_0^x dy\,\mathcal A(x,y)\,\varphi(k,y).
\end{equation}
We show that $\tilde\varphi(k,x)$ given in \eqref{7.29}  is a solution to the perturbed Schr\"odinger equation \eqref{2.1}
 with the potential $\tilde V(x)$ in \eqref{7.28}. Furthermore, we show that $\tilde\varphi(k,x)$ given in \eqref{7.29} 
 satisfies the analog of \eqref{6.41}, and hence we have
 \begin{equation*}
\tilde\varphi(k,x)=\varphi(k,x)+ \Phi_{N{\text{\rm{r}}}}(x) \,W_{N{\text{\rm{r}}}}(x)^+ \int_0^x dy\,\Phi_{N{\text{\rm{r}}}}(y)^\dagger\,\varphi(k,y).
\end{equation*}

\item[\text{\rm(b)}] The perturbed potential $\tilde V(x)$ appearing in \eqref{7.28} 
satisfies \eqref{2.2}, and we have the asymptotics for the potential increment $\tilde V(x)-V(x)$ given by
\begin{equation*}
\tilde V(x)-V(x)= O\left( q_6(x) \right), \qquad x \to +\infty,
\end{equation*}
which is the analog of \eqref{6.42}
and we recall that $q_6(x)$ is the quantity defined in
\eqref{6.43}.
We further show that, if the unperturbed potential
$V$ belongs  to $L^1_{1+\epsilon}(\mathbb R^+)$ for some fixed $\epsilon\ge 0,$ then the perturbed
potential $\tilde V$ belongs to
 $L^1_{\epsilon}(\mathbb R^+).$

\item[\text{\rm(c)}] If the unperturbed potential $V(x)$ is further restricted to satisfy
\begin{equation}\label{7.32}
|V(x)| \le c\,e^{-\alpha x}, \qquad x \ge x_0,
\end{equation}
for some positive constants $\alpha$ and $x_0,$ then the potential increment $\tilde V(x)-V(x)$ satisfies
\begin{equation}\label{7.33}
|\tilde V(x)-V(x)| \le c\,e^{-\alpha x}, \qquad x \ge x_0.
\end{equation}
Here, $c$ denotes a generic constant not necessarily taking the same value in different appearances.
We remark that \eqref{7.32} and \eqref{7.33} are the analogs of \eqref{6.44} and \eqref{6.45}, respectively.

\item[\text{\rm(d)}]
If the support of $V$ is contained in the interval $[0,x_0],$ then the support of $\tilde V$ is also contained in $[0,x_0].$ 

\item[\text{\rm(e)}] For $ k \ne i \kappa_N,$  the perturbed quantity $\tilde\varphi(k,x)$ can be expressed as
\begin{equation*}
\begin{split}
\tilde\varphi(k,x)=&\varphi(k,x)\\
&+\ds\frac{1}{k^2+\kappa_N^2}
\,\Phi_{N{\text{\rm{r}}}}(x) \,W_{N{\text{\rm{r}}}}(x)^+\left[\Phi'_{N{\text{\rm{r}}}}(x)^\dagger\,\varphi(k,x)-\Phi_{N{\text{\rm{r}}}}(x)^\dagger
\,\varphi'(k,x)\right],
\end{split}
\end{equation*}
which is the analog of \eqref{6.46}.

\item[\text{\rm(f)}] The perturbed quantity $\tilde\varphi(k,x)$ satisfies the initial conditions \eqref{5.17}
with the matrices $\tilde A$ and $\tilde B$
expressed in terms of the unperturbed
boundary matrices $A$ and $B$ and
the matrix $C_{N{\text{\rm{r}}}}$ as
\begin{equation}\label{7.35}
\tilde A=A, \quad \tilde B=B+A\,C_{N{\text{\rm{r}}}}^2 A^\dagger A,
\end{equation}
which is the analog of \eqref{6.47}.

\item[\text{\rm(g)}] 
The matrices $\tilde A$ and $\tilde B$ appearing in \eqref{7.35} satisfy \eqref{2.6} and \eqref{2.7}. Hence, 
as a consequence of (a) and (f), it follows that 
the quantity $\tilde\varphi(k,x)$ is the regular solution 
to the  Schr\"odinger equation with the potential $\tilde V(x)$ in \eqref{7.28} 
and with the selfadjoint
boundary condition \eqref{2.5} with $A$ and $B$ there replaced with
$\tilde A$ and $\tilde B,$ respectively.
In other words, $\tilde \varphi(k,x)$ satisfies \eqref{5.16} and \eqref{5.17}.

\end{enumerate}

\item[\text{\rm(7)}]  If we further assume  that $V$ belongs to $L^1_{2}(\mathbb R^+),$  then we proceed
as in the proof of Theorem~\ref{theorem6.4} and establish the following:

\begin{enumerate}

\item[\text{\rm(a)}] The unperturbed Jost matrix $J(k)$ is transformed into the perturbed Jost matrix $\tilde J(k)$ as
 \begin{equation*}
\tilde J(k)=\left[I+\ds\frac{2i\kappa_N}{k-i\kappa_N}\,P_{N{\text{\rm{r}}}}\right] J(k),\qquad
k\in\overline{\mathbb C^+},\end{equation*}
which is the analog of \eqref{6.91}.

\item[\text{\rm(b)}] The matrix product $J(k)^\dagger \,J(k)$ for $k\in\mathbb R$ does not change under
the perturbation,
i.e. we have
 \begin{equation*}
\tilde J(k)^\dagger\,\tilde J(k)=J(k)^\dagger\,J(k),\qquad k\in\mathbb R,\end{equation*}
which is the analog of \eqref{6.92}.
Consequently, the continuous part of the spectral measure $d\rho$ does not change
under the perturbation.

\item[\text{\rm(c)}] Under the perturbation, the determinant of the Jost matrix is transformed as
 \begin{equation}
\label{7.38}
\det[\tilde J(k)]=\left(\ds\frac{k+i\kappa_N}{k-i\kappa_N}\right)^{m_{N{\text{\rm{r}}}}} \det[J(k)],\qquad
k\in\overline{\mathbb C^+},\end{equation}
which is the analog of \eqref{6.93}.
We recall
that $m_{N{\text{\rm{r}}}}$ corresponds to the rank of the orthogonal projection  $Q_{N{\text{\rm{r}}}}.$

\item[\text{\rm(d)}] Under the perturbation,
the bound state with the energy $-\kappa_N^2$ remains but its multiplicity is reduced to $m_N-m_{N{\text{\rm{r}}}}.$ 
No new bound states are added, and the  bound states with the energies $-\kappa_j^2$ and multiplicities $m_j$ for $1\le j\le N-1$
are unchanged. We confirm that the multiplicity of the bound state with the energy $-\kappa_N^2$ is reduced to $m_N-m_{N{\text{\rm{r}}}}$ 
by the following observation. 
The unperturbed Schr\"odinger equation has a bound state with the energy $-\kappa_N^2$ and multiplicity $m_N.$ 
From Theorem~3.11.6 of \cite{AW2021} we know that
$\det[J(k)]$ has a zero of order $m_N$ at $k=i\kappa_N.$ Moreover, from \eqref{7.38} we see that $\det[\tilde J(k)]$
has a zero of order $m_N-m_{N{\text{\rm{r}}}}$ at $k= i\kappa_N.$
\item[\text{\rm(e)}] Under the perturbation, the scattering matrix $S(k)$ undergoes the transformation
 \begin{equation*}
\tilde S(k)=\left[I-\ds\frac{2i\kappa_N}{k+i\kappa_N}\,P_{N{\text{\rm{r}}}}\right] S(k)\left[I-\ds\frac{2i\kappa_N}{k+i\kappa_N}\,P_{N{\text{\rm{r}}}}\right],
\qquad k\in\mathbb R,\end{equation*}
which is the analog of \eqref{6.94}.

\item[\text{\rm(f)}] Under the perturbation, the determinant of the scattering matrix  is transformed as 
 \begin{equation*}
\det[\tilde S(k)]=\left(\ds\frac{k-i\kappa_N}{k+i\kappa_N}\right)^{2\, m_{N{\text{\rm{r}}}}} \det[S(k)],
\qquad k\in\mathbb R,\end{equation*}
which is the analog of \eqref{6.95}.

\item[\text{\rm(g)}] Under the perturbation, for
$k\in\overline{\mathbb C^+}$
the unperturbed Jost solution $f(k,x)$ is transformed into the perturbed Jost solution $\tilde f(k,x)$ as
\begin{equation}\label{7.41}
\tilde f(k,x)=\left[f(k,x)+\ds\frac{1}{k^2+\kappa_N^2}
\,\Phi_{N{\text{\rm{r}}}}(x) \,W_{N{\text{\rm{r}}}}(x)^+ \,q_{13}(x) \right]  \left[ I+ \ds\frac{2i\kappa_N}{k-i\kappa_N} \,P_{N{\text{\rm{r}}}} \right],
\end{equation}
where we have defined
\begin{equation*}
q_{13}(x):=\Phi'_{N{\text{\rm{r}}}}(x)^\dagger\,f(k,x)-\Phi_{N{\text{\rm{r}}}}(x)^\dagger
\,f'(k,x),
\end{equation*}
which are the analogs of \eqref{6.96} and
\eqref{6.97}, respectively.
We can express \eqref{7.41} equivalently as
\begin{equation*}
\tilde f(k,x)=\left[ f(k,x)-
\Phi_{N{\text{\rm{r}}}}(x) \,W_{N{\text{\rm{r}}}}(x)^+\int_x^\infty dy\,
\Phi_{N{\text{\rm{r}}}}(y)^\dagger \,f(k,y)  \right] \left[ I+ \frac{2i\kappa_N}{k-i\kappa_N} P_{N{\text{\rm{r}}}} \right],
\end{equation*}
which is the analog of \eqref{6.98}.

\end{enumerate}

\item[\text{\rm(8)}] 
Further assuming that the unperturbed potential $V$ belongs to $L^1_{2}(\mathbb R^+),$ we proceed as in the proof of
Theorem~\ref{theorem6.6}, and we establish the following:

\begin{enumerate}

\item[\text{\rm(a)}] Under the perturbation,
the projection matrices $Q_j$ for $1\le j\le N-1$ remain unchanged, i.e. we have
 \begin{equation*}
\tilde Q_j=Q_j,\qquad 1\le j\le N-1.\end{equation*}

\item[\text{\rm(b)}] Under the perturbation, the Gel'fand--Levitan normalization matrices $C_j$ remain
 unchanged for $1\le j\le N-1,$ i.e. we have
 \begin{equation*}
\tilde C_j=C_j,\qquad 1\le j\le N-1.\end{equation*}

\end{enumerate}
\end{enumerate}

 \section{The transformation  to add a bound state}
\label{section8}

In Section~\ref{section6} the spectrum of the matrix Schr\"odinger operator has been changed by completely removing
a bound state from the discrete spectrum without changing the continuous spectrum. In this section, we add a new bound state
to the spectrum of the matrix Schr\"odinger operator without changing the existing bound states and without changing
 the continuous spectrum. We determine the transformations of all relevant quantities when the new bound state is added. Since several bound
states can be added in succession, the technique presented in this section can be used to add any number of new bound states to the discrete spectrum.

We start with the unperturbed Schr\"odinger operator with the potential $V$ satisfying \eqref{2.2}
and belonging to 
$L^1_1(\mathbb R^+),$  
the boundary matrices
$A$ and $B$ describing the boundary condition as in \eqref{2.5}--\eqref{2.7}, the regular solution
$\varphi(k,x)$ satisfying the initial conditions \eqref{2.10}, the Jost solution $f(k,x)$ satisfying \eqref{2.9},
the spectral measure $d\rho$ described in \eqref{5.1}, the Jost matrix $J(k)$ defined in \eqref{2.11}, the scattering matrix $S(k)$ in \eqref{2.12}, and $N$ bound states with the 
energies $-\kappa_j^2,$ the Gel'fand--Levitan normalization matrices 
$C_j,$ the orthogonal projections
$Q_j$ onto $\text{\rm{Ker}}[J(i\kappa_j)],$ the orthogonal projections
$P_j$ onto $\text{\rm{Ker}}[J(i\kappa_j)^\dagger],$ 
the Gel'fand--Levitan normalized bound-state solutions $\Phi_j(x),$ the Marchenko normalized
bound-state solutions $\Psi_j(x),$ the dependency matrices $D_j,$ and the bound-state multiplicities $m_j$
for $1\le j\le N.$
We then add a new bound state with the energy $-\tilde\kappa_{N+1}^2$
with the Gel'fand--Levitan normalization matrix $\tilde C_{N+1},$ where
the $n\times n$ matrix $\tilde C_{N+1}$ is hermitian and nonnegative and has rank $\tilde m_{N+1}.$
We use a tilde to identify the quantities obtained after the addition of the bound state at $k=i\tilde\kappa_{N+1}$
with the Gel'fand--Levitan normalization matrix $\tilde C_{N+1}.$
Thus, the perturbed Schr\"odinger operator involves the potential $\tilde V(x),$ 
the boundary matrices
$\tilde A$ and $\tilde B,$ the regular solution
$\tilde\varphi(k,x),$ the Jost solution $\tilde f(k,x),$
the spectral measure $d\tilde\rho,$ the Jost matrix $\tilde J(k),$ the scattering matrix $\tilde S(k),$ and $N+1$ bound states with the energies $-\tilde\kappa_j^2,$ 
the Gel'fand--Levitan normalization matrices 
$\tilde C_j,$ the orthogonal projections
$\tilde Q_j$ onto $\text{\rm{Ker}}[\tilde J(i\tilde\kappa_j)],$ the orthogonal projections
$\tilde P_j$ onto $\text{\rm{Ker}}[\tilde J(i\tilde\kappa_j)^\dagger],$ 
the Gel'fand--Levitan normalized bound-state solutions $\tilde\Phi_j(x),$ the Marchenko normalized
bound-state solutions $\tilde\Psi_j(x),$ the dependency matrices $\tilde D_j,$ and the bound-state multiplicities $\tilde m_j$
for $1\le j \le N+1.$ Since the first $N$ bound states are unchanged, we know that
$\tilde\kappa_j=\kappa_j$ and $\tilde m_j=m_j$ for $1\le j\le N.$
Initially, it may be unclear if we have $\tilde Q_j=Q_j$ and $\tilde C_j=C_j$ for $1\le j\le N.$
In Theorem~\ref{theorem8.8} we confirm that we indeed have
 $\tilde Q_j=Q_j$ and $\tilde C_j=C_j$ for $1\le j\le N.$

To determine the transformations of all relevant quantities resulting from the
addition of the new bound state, we use the Gel'fand--Levitan method described in Section~\ref{section5} with
the input consisting the unperturbed quantities and the perturbation specified by $\tilde\kappa_{N+1}$ and $\tilde C_{N+1}.$
The Gel'fand--Levitan method enables us to explicitly determine \eqref{5.18},  \eqref{5.21}, and \eqref{5.23} 
when the spectral measures $d\tilde\rho$ and $d\rho$ differ from each other only in the bound states.
By specifying $\tilde C_{N+1}$ we also specify the positive integer $\tilde m_{N+1}$
satisfying $1\le \tilde m_{N+1}\le n.$ The specification of
$\tilde C_{N+1}$ also yields an orthogonal projection matrix $\tilde Q_{N+1}$ with rank
$\tilde m_{N+1}$ in such a way that 
\begin{equation}\label{8.1} 
 \tilde C_{N+1}\,\tilde Q_{N+1}=\tilde Q_{N+1}\,\tilde C_{N+1}=
\tilde C_{N+1}.
\end{equation}
The matrix $\tilde C_{N+1}$ must be related to 
the orthogonal projection $\tilde Q_{N+1}$ and 
a hermitian positive matrix $\tilde{\mathbf H}_{N+1}^{-1/2}$ as
\begin{equation}\label{8.2} 
\tilde C_{N+1}=
\tilde{\mathbf H}_{N+1}^{-1/2}\,\tilde Q_{N+1},
\end{equation}
which is the analog of \eqref{3.36}.
For the construction of the $n\times n$ matrix $\tilde{\mathbf H}_{N+1}^{-1/2},$
we proceed as follows.
We use an $n\times n$ hermitian nonnegative matrix $\tilde{\mathbf G}_{N+1}$ satisfying the equalities
\begin{equation}\label{8.3} 
\tilde{\mathbf G}_{N+1} \,\tilde Q_{N+1}=\tilde Q_{N+1}\, \tilde{\mathbf G}_{N+1} = \tilde{\mathbf G}_{N+1},
\end{equation}
in such a way that the  restriction of $\tilde{\mathbf G}_{N+1}$ to $\tilde Q_{N+1}\, \mathbb C^n$  is invertible.  We then define
the matrix $\tilde{\mathbf H}_{N+1}$ as
\begin{equation}\label{8.4}
\tilde{\mathbf H}_{N+1}:= I- \tilde Q_{N+1}+\tilde{\mathbf G}_{N+1}.
\end{equation}
From \eqref{8.4} it follows that $\tilde{\mathbf H}_{N+1}$ is indeed hermitian and nonnegative.
In fact, $\tilde{\mathbf H}_{N+1}$ is positive and hence invertible. The positivity can be established by
showing that the kernel of $\tilde{\mathbf H}_{N+1}$ contains only the zero vector in $\mathbb C^n.$
For the proof we proceed as follows. Since $\tilde{\mathbf G}_{N+1}$ is nonnegative, there exists a nonnegative matrix
$\tilde{\mathbf G}_{N+1}^{1/2}$ so that
$\tilde{\mathbf G}_{N+1}^{1/2}\tilde{\mathbf G}_{N+1}^{1/2}=\tilde{\mathbf G}_{N+1}.$ Thus, we can write
\eqref{8.4} in the equivalent form
\begin{equation}\label{8.5}
\tilde{\mathbf H}_{N+1}= I- \tilde Q_{N+1}+\tilde{\mathbf G}_{N+1}^{1/2}\,\tilde{\mathbf G}_{N+1}^{1/2}.
\end{equation}
For any column vector $v$ in $\mathbb C^N,$ from \eqref{8.5} we get
\begin{equation}
\label{8.6}
v^\dagger\, \tilde{\mathbf H}_{N+1} \,v= \left|(I- \tilde Q_{N+1})\, v\right|^2 +\left|\tilde{\mathbf G}_{N+1}^{1/2}\,v\right|^2.
\end{equation}
If $v$ belongs to the kernel of $\tilde{\mathbf H}_{N+1},$ then the left-hand side in \eqref{8.6} is zero, and hence the right-hand side yields
\begin{equation}\label{8.7}
 v=\tilde Q_{N+1} v,\quad \tilde{\mathbf G}_{N+1}^{1/2}\,v=0.
\end{equation}
After premultiplying the second equality in \eqref{8.7} by $\tilde{\mathbf G}_{N+1}^{1/2},$ we conclude that $\tilde{\mathbf G}_{N+1}\,v=0$ when $v$ belongs to
$\tilde Q_{N+1}\, \mathbb C^n.$ On the other hand, we know that the restriction of
$\tilde{\mathbf G}_{N+1}$ to $\tilde Q_{N+1}\, \mathbb C^n$  is invertible. Thus, the column vector
$v$ must be the zero vector. This establishes the fact that $\tilde{\mathbf H}_{N+1}$ is invertible.
We remark that, with the help of \eqref{8.4}, we prove that $\tilde{\mathbf H}_{N+1}$ satisfies
\begin{equation}\label{8.8}
\tilde{\mathbf H}_{N+1} \,\tilde Q_{N+1}=\tilde Q_{N+1}\, \tilde{\mathbf H}_{N+1}.
\end{equation}

In the next proposition we show that the orthogonal projection $\tilde Q_{N+1},$ the matrix $\tilde{\mathbf G}_{N+1},$
and the matrix $\tilde{\mathbf H}_{N+1}$ are uniquely determined by $\tilde C_{N+1}.$

\begin{proposition}
\label{proposition8.1}
Assume that a nonnegative matrix $\tilde C_{N+1}$ is related to
a positive invertible matrix $\tilde{\mathbf H}_{N+1}$ and
an orthogonal projection $\tilde Q_{N+1}$ as in \eqref{8.2}
in such a way that \eqref{8.1}, \eqref{8.3}, and \eqref{8.4} hold, with 
$\tilde{\mathbf G}_{N+1}$ being the matrix appearing in \eqref{8.3} and \eqref{8.4}.
Then, $\tilde C_{N+1}$ uniquely determines the
matrices  $\tilde Q_{N+1},$ $\tilde{\mathbf G}_{N+1},$
and $\tilde{\mathbf H}_{N+1}.$
\end{proposition}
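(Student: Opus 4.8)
The plan is to exhibit each of the three matrices $\tilde Q_{N+1}$, $\tilde{\mathbf G}_{N+1}$, and $\tilde{\mathbf H}_{N+1}$ explicitly as a function of $\tilde C_{N+1}$ alone; since such formulas leave no freedom, uniqueness is immediate. The natural order is to recover $\tilde Q_{N+1}$ first, and then to recover $\tilde{\mathbf H}_{N+1}$ (equivalently $\tilde{\mathbf G}_{N+1}$) from it together with \eqref{8.2}.

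First I would pin down $\tilde Q_{N+1}$ as the orthogonal projection onto $\text{\rm{Ran}}[\tilde C_{N+1}]$. Because $\tilde{\mathbf H}_{N+1}^{-1/2}$ is invertible, \eqref{8.2} gives $\text{\rm{Ker}}[\tilde C_{N+1}]=\text{\rm{Ker}}[\tilde Q_{N+1}]$, since a vector is annihilated by $\tilde C_{N+1}$ if and only if it is annihilated by $\tilde Q_{N+1}$. The matrix $\tilde C_{N+1}$ is hermitian (being nonnegative, or directly from \eqref{8.2} together with the commutation \eqref{8.8}), while $\tilde Q_{N+1}$ is an orthogonal projection, so each has range equal to the orthogonal complement of its kernel. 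Taking orthogonal complements in the kernel identity therefore yields $\text{\rm{Ran}}[\tilde C_{N+1}]=\text{\rm{Ran}}[\tilde Q_{N+1}]$. An orthogonal projection is uniquely determined by its range, so $\tilde Q_{N+1}$ is forced; concretely I would record $\tilde Q_{N+1}=\tilde C_{N+1}\,\tilde C_{N+1}^+$, which is the orthogonal projection onto $\text{\rm{Ran}}[\tilde C_{N+1}]$ by Theorem~\ref{theorem3.1}(e).

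Next I would recover $\tilde{\mathbf H}_{N+1}$ and $\tilde{\mathbf G}_{N+1}$ by a block-diagonal argument with respect to the orthogonal decomposition $\mathbb{C}^n=\tilde Q_{N+1}\mathbb{C}^n\oplus(\tilde Q_{N+1}\mathbb{C}^n)^\perp$. All three matrices commute with $\tilde Q_{N+1}$, by \eqref{8.1}, \eqref{8.3}, and \eqref{8.8} respectively, so each is block diagonal in this decomposition. On $(\tilde Q_{N+1}\mathbb{C}^n)^\perp$ we have $\tilde C_{N+1}=0$ and $\tilde{\mathbf G}_{N+1}=0$, whence \eqref{8.4} forces $\tilde{\mathbf H}_{N+1}=I$ there, so the only remaining freedom lives on $\tilde Q_{N+1}\mathbb{C}^n$. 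On that subspace $\tilde Q_{N+1}$ acts as the identity, so \eqref{8.2} reduces to $[\tilde C_{N+1}]_1=([\tilde{\mathbf H}_{N+1}]_1)^{-1/2}$, where the subscript $1$ denotes the restriction to $\tilde Q_{N+1}\mathbb{C}^n$. Since $\tilde C_{N+1}$ is positive definite on its range, this restriction is invertible, and squaring and inverting gives $[\tilde{\mathbf H}_{N+1}]_1=([\tilde C_{N+1}]_1)^{-2}$. Assembling the two blocks and using Proposition~\ref{proposition3.2}(a) to write $\tilde C_{N+1}^+=([\tilde C_{N+1}]_1)^{-1}\oplus 0$, I would record the explicit formulas
\begin{equation*}
\tilde{\mathbf G}_{N+1}=\left(\tilde C_{N+1}^+\right)^2,\qquad
\tilde{\mathbf H}_{N+1}=I-\tilde Q_{N+1}+\left(\tilde C_{N+1}^+\right)^2,
\end{equation*}
both depending on $\tilde C_{N+1}$ alone, which completes the uniqueness proof.

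The main obstacle is the first step, namely identifying $\tilde Q_{N+1}$ from $\tilde C_{N+1}$ without circularity: the relation \eqref{8.2} involves the unknown $\tilde{\mathbf H}_{N+1}^{-1/2}$, so one cannot read off $\tilde Q_{N+1}$ directly. The resolution is that only the kernel, equivalently the range, of $\tilde C_{N+1}$ is needed to fix $\tilde Q_{N+1}$, and this is insensitive to the invertible factor $\tilde{\mathbf H}_{N+1}^{-1/2}$; the hermiticity and nonnegativity of $\tilde C_{N+1}$ are precisely what guarantee that its range is the orthogonal complement of its kernel, so that the orthogonal projection onto it is unambiguous. Once $\tilde Q_{N+1}$ is fixed, the remaining block inversion that produces $\tilde{\mathbf H}_{N+1}$ and $\tilde{\mathbf G}_{N+1}$ is routine.
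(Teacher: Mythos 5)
Your proof is correct and follows essentially the same route as the paper's: both identify $\tilde Q_{N+1}$ as the orthogonal projection onto $\text{\rm{Ran}}[\tilde C_{N+1}]$ and then recover $\tilde{\mathbf G}_{N+1}$ and $\tilde{\mathbf H}_{N+1}$ from the block decomposition of all the matrices involved with respect to $\tilde Q_{N+1}.$ The only difference is that you record the explicit closed-form expressions $\tilde Q_{N+1}=\tilde C_{N+1}\,\tilde C_{N+1}^+$ and $\tilde{\mathbf G}_{N+1}=(\tilde C_{N+1}^+)^2,$ which the paper leaves implicit.
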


\begin{proof}
Let us decompose $\mathbb C^n$ as a direct sum given by
\begin{equation}\label{8.9}
\mathbb C^n=[\tilde Q_{N+1} \mathbb C^n]  \oplus[(I-\tilde Q_{N+1}) \,\mathbb C^n].
\end{equation}
From \eqref{8.3} we get
\begin{equation}\label{8.10}
\tilde{\mathbf G}_{N+1} = [\tilde{\mathbf G}_{N+1}]_1\oplus 0,
\end{equation}
where we use $[\tilde{\mathbf G}_{N+1}]_1$ to denote the restriction of $\tilde{\mathbf G}_{N+1}$ to $\tilde Q_{N+1}\, \mathbb C^n$ 
and use $0$ to denote the zero operator on $[(I-\tilde Q_{N+1})\, \mathbb C^n].$
From  \eqref{8.4} and\eqref{8.10} we see that we have
\begin{equation}\label{8.11}
\tilde{\mathbf H}_{N+1} = [\tilde{\mathbf G}_{N+1}]_1\oplus I,
\end{equation}
where $I$ on the right-hand side denotes the identity operator on $[(I-\tilde Q_{N+1})\, \mathbb C^n].$
Since $\tilde{\mathbf H}_{N+1}$ is positive and invertible in $\mathbb C^n,$ it follows that
the positive matrix $\tilde{\mathbf H}_{N+1}^{1/2}$ is well defined by
using the analog of \eqref{3.35}. Hence, the positive matrix
$\tilde{\mathbf H}_{N+1}^{-1/2}$ is also well defined as the inverse of $\tilde{\mathbf H}_{N+1}^{1/2}.$
From \eqref{8.11} we obtain
\begin{equation}\label{8.12}
\tilde{\mathbf H}_{N+1}^{-1/2}=([\tilde{\mathbf G}_{N+1}]_1)^{-1/2} \oplus I,
\end{equation}
where we used the fact that $[\tilde{\mathbf G}_{N+1}]_1$ is invertible on $\tilde Q_{N+1}\, \mathbb C^n.$
From \eqref{8.2} and \eqref{8.12} we observe the decomposition
\begin{equation}\label{8.13}
\tilde C_{N+1}= ([\tilde{\mathbf G}_{N+1}]_1)^{-1/2} \oplus 0.
\end{equation}
Since $([\tilde{\mathbf G}_{N+1}]_1)^{-1/2}$ is bijective on $\tilde Q_{N+1}\,\mathbb C^n,$ it follows from \eqref{8.13} that  
$\tilde C_{N+1}$ is onto $\tilde Q_{N+1}\,\mathbb C^n.$ Consequently, we conclude that the range of $\tilde C_{N+1}$ is 
$\tilde Q_{N+1} \mathbb C^n$  and that
$\tilde Q_{N+1}$ is the orthogonal projection onto the range of $\tilde C_{N+1}.$
Thus, $\tilde Q_{N+1}$ is uniquely determined by $\tilde C_{N+1}.$
From \eqref{8.13} it follows that $([\tilde{\mathbf G}_{N+1}]_1)^{-1/2}$ is uniquely determined by
 $\tilde C_{N+1},$ and hence $[\tilde{\mathbf G}_{N+1}]_1$ is also uniquely determined by $\tilde C_{N+1}.$ Moreover,
 from \eqref{8.10} we see that $\tilde{\mathbf G}_{N+1}$ is also uniquely determined by $\tilde C_{N+1}.$
Finally, with the help of \eqref{8.4} we see that 
$\tilde{\mathbf H}_{N+1}$ is also uniquely determined by $\tilde C_{N+1}.$
\end{proof}

Having related $\tilde C_{N+1}$ to $\tilde{\mathbf H}_{N+1}$ and $\tilde Q_{N+1}$ as in \eqref{8.2}, we form the Gel'fand--Levitan kernel $G(x,y)$ as
\begin{equation}\label{8.14}
G(x,y)= \varphi(i\tilde\kappa_{N+1},x) \,\tilde C_{N+1}^2\, \varphi(i \tilde\kappa_{N+1},y)^\dagger,
\end{equation}
where we recall that $\varphi(k,x)$ is the unperturbed regular solution.
The next theorem shows how we obtain the solution to
the  Gel'fand--Levitan system of integral equations \eqref{5.49} with the kernel $G(x,y)$ given in \eqref{8.14}.

\begin{theorem}\label{theorem8.2}
Consider
the unperturbed Schr\"odinger operator with the potential
$V$ satisfying \eqref{2.2} and belonging to $L^1_1(\mathbb R^+),$  with the selfadjoint
boundary condition \eqref{2.5} described by the boundary matrices $A$ and $B$ satisfying
\eqref{2.6} and \eqref{2.7}, 
with the regular solution $\varphi(k,x)$ satisfying the initial conditions
 \eqref{2.10}, and
 containing
$N$ bound states with the energies $-\kappa_j^2$ and
the Gel'fand--Levitan normalization matrices $C_j$ 
for $1\le j\le N.$ 
Choose the kernel $G(x,y)$ of the  Gel'fand--Levitan system of integral equations \eqref{5.49} as in \eqref{8.14},
where $\tilde\kappa_{N+1}$ is a positive constant
distinct from $\kappa_j$ for $1\le j\le N$ and
$\tilde C_{N+1}$ is the  $n\times n$   selfadjoint nonnegative matrix  in \eqref{8.2}.
Then, the corresponding solution $\mathcal A(x,y)$ to \eqref{5.49} is given by 
\begin{equation}
\label{8.15}
\mathcal A(x,y)=-\xi_{N+1}(x)\,\Omega_{N+1}(x)^+\,\xi_{N+1}(y)^\dagger,\qquad 0\le y < x,\end{equation}
with the $n\times n$ matrices $\xi_{N+1}(x)$  and $\Omega_{N+1}(x)$  defined as
\begin{equation}
\label{8.16}
\xi_{N+1}(x):=\varphi(i\tilde\kappa_{N+1},x)\,\tilde C_{N+1},\end{equation}
\begin{equation}
\label{8.17}
\Omega_{N+1}(x):=
\tilde Q_{N+1}+\int_0^x dy\,\xi_{N+1}(y)^\dagger\,
\xi_{N+1}(y),
\end{equation}
where we recall that $\Omega_{N+1}(x)^+$ denotes the Moore--Penrose inverse of $\Omega_{N+1}(x).$

\end{theorem}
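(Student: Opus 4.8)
The plan is to mirror the proof of Theorem~\ref{theorem6.1}, adapting it to the reversed sign of the Gel'fand--Levitan kernel and to the orthogonal projection $\tilde Q_{N+1}$ which now plays the role previously played by $Q_N$. First I would exploit the selfadjointness of $\tilde C_{N+1}$ to rewrite the kernel \eqref{8.14} in the factored form $G(x,y)=\xi_{N+1}(x)\,\xi_{N+1}(y)^\dagger$, with $\xi_{N+1}$ as in \eqref{8.16}; this is immediate since $\tilde C_{N+1}^2=\tilde C_{N+1}\,\tilde C_{N+1}^\dagger$. The $y$-dependence on the right-hand side of the Gel'fand--Levitan system \eqref{5.49} then enters only through $\xi_{N+1}(y)^\dagger$, which forces the ansatz $\mathcal A(x,y)=\gamma(x)\,\xi_{N+1}(y)^\dagger$ for some matrix-valued $\gamma(x)$ subject to the normalization $\gamma(x)=\gamma(x)\,\tilde Q_{N+1}$, exactly as in \eqref{6.6}--\eqref{6.7}.

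Substituting this ansatz into \eqref{5.49} and factoring $\xi_{N+1}(y)^\dagger$ out on the right, I would reduce the integral equation to the algebraic identity $\gamma(x)\bigl[I+\int_0^x dz\,\xi_{N+1}(z)^\dagger\,\xi_{N+1}(z)\bigr]=-\xi_{N+1}(x)$. The key bookkeeping step is to convert the leading $I$ into $\tilde Q_{N+1}$: from \eqref{8.1} one has $\xi_{N+1}(x)\,\tilde Q_{N+1}=\xi_{N+1}(x)$, whence $\tilde Q_{N+1}\,\xi_{N+1}(z)^\dagger=\xi_{N+1}(z)^\dagger$, so the integral in \eqref{8.17} lives entirely in $\tilde Q_{N+1}\,\mathbb C^n$; combined with $\gamma(x)=\gamma(x)\,\tilde Q_{N+1}$ this turns the bracket into $\Omega_{N+1}(x)$ and yields $\gamma(x)\,\Omega_{N+1}(x)=-\xi_{N+1}(x)$.

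To invert this relation, I would establish the direct-sum structure $\Omega_{N+1}(x)=[\Omega_{N+1}(x)]_1\oplus 0$ with respect to $\mathbb C^n=\tilde Q_{N+1}\,\mathbb C^n\oplus(I-\tilde Q_{N+1})\,\mathbb C^n$, where $[\Omega_{N+1}(x)]_1$ equals the identity plus a nonnegative operator on $\tilde Q_{N+1}\,\mathbb C^n$ and is therefore invertible there. Proposition~\ref{proposition3.2}(a) then delivers $\Omega_{N+1}(x)^+=([\Omega_{N+1}(x)]_1)^{-1}\oplus 0$ together with the idempotent identity $\Omega_{N+1}(x)\,\Omega_{N+1}(x)^+=\Omega_{N+1}(x)^+\,\Omega_{N+1}(x)=\tilde Q_{N+1}$, in close analogy with \eqref{6.20}--\eqref{6.22}. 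Postmultiplying $\gamma(x)\,\Omega_{N+1}(x)=-\xi_{N+1}(x)$ by $\Omega_{N+1}(x)^+$ and using this identity gives $\gamma(x)\,\tilde Q_{N+1}=-\xi_{N+1}(x)\,\Omega_{N+1}(x)^+$, and the normalization $\gamma(x)=\gamma(x)\,\tilde Q_{N+1}$ then produces $\gamma(x)=-\xi_{N+1}(x)\,\Omega_{N+1}(x)^+$, which substituted into the ansatz yields \eqref{8.15}. I expect the main obstacle to be confirming that this $\gamma(x)$ genuinely solves \eqref{5.49} rather than merely satisfying a necessary condition; this I would settle by checking $\Omega_{N+1}(x)^+\,\tilde Q_{N+1}=\Omega_{N+1}(x)^+$ so that $\gamma(x)$ respects its imposed normalization, and by running the substitution backwards, using $\xi_{N+1}(x)\,\tilde Q_{N+1}=\xi_{N+1}(x)$ to verify that $\gamma(x)\,\Omega_{N+1}(x)=-\xi_{N+1}(x)$ reproduces the integral equation exactly.
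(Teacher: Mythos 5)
Your proposal is correct and follows essentially the same route as the paper: the paper's proof of Theorem~\ref{theorem8.2} likewise establishes the direct-sum decomposition $\Omega_{N+1}(x)=[\Omega_{N+1}(x)]_1\oplus 0,$ proves invertibility of $[\Omega_{N+1}(x)]_1$ via the positivity identity $v^\dagger[\Omega_{N+1}(x)]_1 v=|v|^2+\int_0^x dy\,|\xi_{N+1}(y)v|^2,$ deduces $\Omega_{N+1}(x)\,\Omega_{N+1}(x)^+=\Omega_{N+1}(x)^+\,\Omega_{N+1}(x)=\tilde Q_{N+1},$ and then repeats the ansatz-and-factorization steps of Theorem~\ref{theorem6.1} with the sign reversed. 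Your sign bookkeeping and the conversion of $I$ to $\tilde Q_{N+1}$ via the normalization $\gamma(x)=\gamma(x)\,\tilde Q_{N+1}$ match the paper's argument exactly.
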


\begin{proof} 
The proof is similar to the proof of
Theorem~\ref{theorem6.1}. We decompose $\mathbb C^n$ into the direct sum as in \eqref{8.9}.
From \eqref{8.17} we observe that $\Omega_{N+1}(x)$ maps $\tilde Q_{N+1}\, \mathbb C^n$ into $\tilde Q_{N+1}\, \mathbb C^n $ and that
 it acts as the zero mapping on  $(I-\tilde Q_{N+1}) \,\mathbb C^n.$ Hence, we can decompose $\Omega_{N+1}(x)$ as
\begin{equation}\label{8.18}
\Omega_{N+1}(x)= [\Omega_{N+1}(x)]_1\oplus 0,
\end{equation}
where we use $[\Omega_{N+1}(x)]_1$ to denote the restriction of $\Omega_{N+1}(x)$ to $\tilde Q_{N+1}\, \mathbb C^n.$ 
Next, we prove that $[\Omega_{N+1}(x)]_1$ is invertible on $\tilde Q_{N+1}\, \mathbb C^n$ 
by proceeding as follows. If $v$ is a column vector in $\tilde Q_{N+1}\, \mathbb C^n,$ then as seen from \eqref{8.18} we have
$\Omega_{N+1}(x)\, v=[\Omega_{N+1}(x)]_1\, v.$
Hence, using $v$ in \eqref{8.17} we obtain
\begin{equation}
\label{8.19}
v^\dagger [\Omega_{N+1}(x)]_1\, v= | v|^2+ \int_0^x dy \,\left| \xi_{N+1}(y)\,v\right|^2.
\end{equation}
If $v$ belongs to the kernel of $[\Omega_{N+1}(x)]_1,$ then the left-hand side in \eqref{8.19} must
be zero. Each of the two terms on the right-hand side is nonnegative and hence we must also have $|v|^2=0,$ which
implies that $v$ is the zero vector in  $\tilde Q_{N+1}\, \mathbb C^n.$ 
This establishes the fact that $[\Omega_{N+1}(x)]_1$ is invertible on $\tilde Q_{N+1}\, \mathbb C^n.$ 
We then use the analog of \eqref{3.9} and conclude that
\begin{equation}\label{8.20}
\Omega_{N+1}(x)^+=[\Omega_{N+1}(x)]_1^{-1}\oplus 0.
\end{equation}
From \eqref{8.18} and \eqref{8.20} we observe that
\begin{equation}\label{8.21}
\Omega_{N+1}(x)\, \Omega_{N+1}(x)^+=   \Omega_{N+1}(x)^+ \,\Omega_{N+1}(x)=I\oplus 0= \tilde Q_{N+1},
\end{equation}  
which yields the analog of \eqref{6.21} and \eqref{6.22}.
The proof of the theorem is completed by exploiting the analogy
between $\Omega_{N+1}(x)$ defined in \eqref{8.17}
and $W_N(x)$ defined in \eqref{6.10} and by using the steps
given in the proof of Theorem~\ref{theorem6.1}.
\end{proof}

In the next theorem we present some relevant properties of the quantity $\mathcal A(x,y)$ 
appearing in \eqref{8.15}.
This theorem is the counterpart of Theorem~\ref{theorem6.2}, and hence their proofs are similar.

\begin{theorem} \label{theorem8.3}
Consider
the unperturbed Schr\"odinger operator with the potential
$V$ satisfying \eqref{2.2} and belonging to $L^1_1(\mathbb R^+),$  with the selfadjoint
boundary condition \eqref{2.5} described by the boundary matrices $A$ and $B$ satisfying
\eqref{2.6} and \eqref{2.7}, and
 containing $N$ bound states with the energies $-\kappa_j^2$ and
the Gel'fand--Levitan normalization matrices $C_j$ for $1\le j\le N.$ 
Let $\mathcal A(x,y)$ be the matrix-valued quantity given in \eqref{8.15}, and define the
 perturbed potential $\tilde V(x)$ as
\begin{equation}
\label{8.22}
\tilde V(x):=V(x)-2\,\ds\frac{d}{dx}\left[ \xi_{N+1}(x)\,\Omega_{N+1}(x)^+\,\xi_{N+1}(x)^\dagger\right],\end{equation}
where $\xi_{N+1}(x)$ and $\Omega_{N+1}(x)$ are the $n\times n$ matrices defined in \eqref{8.16} and \eqref{8.17}, respectively.
Then, we have the following:

\begin{enumerate}

\item[\text{\rm(a)}] The quantity  $\mathcal A(x,y)$
is continuously differentiable in the set $\mathbb D,$
where $\mathbb D$ is the subset of $\mathbb R^2$ defined in \eqref{5.19}.

\item[\text{\rm(b)}] The second partial derivatives $\mathcal A_{xx}(x,y)$
and $\mathcal A_{yy}(x,y)$ are locally integrable in $\mathbb D.$

\item[\text{\rm(c)}] 
The quantity $\mathcal A(x,y)$ satisfies the second-order matrix-valued partial differential equation \eqref{5.20}.

\item[\text{\rm(d)}] The perturbed potential $\tilde V(x)$ defined in \eqref{8.22} is related to
the unperturbed potential $V(x)$ and the quantity $\mathcal A(x,y)$ as
in \eqref{5.21}.

\end{enumerate}
\end{theorem}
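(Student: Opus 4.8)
The plan is to transcribe the proof of Theorem~\ref{theorem6.2}, replacing $\Phi_N(x)$ by the matrix $\xi_{N+1}(x)$ of \eqref{8.16} and $W_N(x)$ by $\Omega_{N+1}(x)$ of \eqref{8.17}, while tracking two sign changes relative to the removal case: the overall minus sign in \eqref{8.15}, and the fact that $\Omega_{N+1}(x)$ is built from the integral $\int_0^x$ rather than $\int_x^\infty,$ so that $\Omega_{N+1}'(x)=\xi_{N+1}(x)^\dagger\,\xi_{N+1}(x).$ First I would record that $\xi_{N+1}(x)$ is continuously differentiable in $x$ and, by \eqref{5.15} evaluated at $k=i\tilde\kappa_{N+1},$ satisfies
\begin{equation*}
\xi_{N+1}''(x)=\left[V(x)+\tilde\kappa_{N+1}^2\right]\xi_{N+1}(x),
\end{equation*}
so that $\xi_{N+1}''$ is locally integrable. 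Combining \eqref{3.10}, \eqref{8.18}, and \eqref{8.20} with the formula for $\Omega_{N+1}'(x)$ yields the counterpart of \eqref{6.25},
\begin{equation*}
\left[\Omega_{N+1}(x)^+\right]'=-\Omega_{N+1}(x)^+\,\xi_{N+1}(x)^\dagger\,\xi_{N+1}(x)\,\Omega_{N+1}(x)^+,
\end{equation*}
which shows that $\Omega_{N+1}(x)^+$ is continuously differentiable and hence, via \eqref{8.15}, that $\mathcal A(x,y)$ is continuously differentiable on the set $\mathbb D$ of \eqref{5.19}. This gives (a). For (b) I would differentiate the preceding display once more and substitute the same formula back to see that $\left[\Omega_{N+1}(x)^+\right]''$ is continuous, and conclude from \eqref{8.15} that $\mathcal A_{xx}$ and $\mathcal A_{yy}$ are locally integrable on $\mathbb D.$

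For (c) I would compute $\mathcal A_{xx}(x,y)$ and $\mathcal A_{yy}(x,y)$ directly from \eqref{8.15}, using the Schr\"odinger equation for $\xi_{N+1}$ and its adjoint $\xi_{N+1}''(y)^\dagger=\xi_{N+1}(y)^\dagger\left[V(y)+\tilde\kappa_{N+1}^2\right]$ (valid because $V$ is selfadjoint by \eqref{2.2}) to eliminate the second-derivative terms. The $\tilde\kappa_{N+1}^2$ contributions cancel, leaving
\begin{equation*}
\mathcal A_{xx}(x,y)-\mathcal A_{yy}(x,y)=V(x)\,\mathcal A(x,y)-\mathcal A(x,y)\,V(y)-\left[2\,\xi_{N+1}'(x)\,(\Omega_{N+1}^+)'+\xi_{N+1}(x)\,(\Omega_{N+1}^+)''\right]\xi_{N+1}(y)^\dagger.
\end{equation*}
Comparing this with $\tilde V(x)\,\mathcal A(x,y)-\mathcal A(x,y)\,V(y)$ and inserting $\tilde V(x)$ from \eqref{8.22}, equation \eqref{5.20} reduces to the identity
\begin{equation*}
2\,\xi_{N+1}'(x)\,(\Omega_{N+1}^+)'+\xi_{N+1}(x)\,(\Omega_{N+1}^+)''=-2\left[\xi_{N+1}(x)\,\Omega_{N+1}(x)^+\,\xi_{N+1}(x)^\dagger\right]'\,\xi_{N+1}(x)\,\Omega_{N+1}(x)^+,
\end{equation*}
which is the counterpart of \eqref{6.35}, the minus sign on the right being the net effect of the two sign reversals noted above. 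I would prove this identity exactly as \eqref{6.35} is proved, using the derivative formula for $(\Omega_{N+1}^+)'$ together with the Wronskian-type identity
\begin{equation*}
\xi_{N+1}'(x)^\dagger\,\xi_{N+1}(x)-\xi_{N+1}(x)^\dagger\,\xi_{N+1}'(x)\equiv 0,
\end{equation*}
the analog of \eqref{6.36}.

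To establish this last identity I would show that its $x$-derivative vanishes by means of the Schr\"odinger equation for $\xi_{N+1}$ and the selfadjointness of $V,$ so that the left-hand side is constant, and then evaluate it at $x=0$ using $\xi_{N+1}(0)=A\,\tilde C_{N+1}$ and $\xi_{N+1}'(0)=B\,\tilde C_{N+1}$ (from \eqref{2.10} and \eqref{8.16}) together with the selfadjointness of $\tilde C_{N+1}$ and the boundary relation \eqref{2.6}, which forces the constant to be zero. Finally, (d) follows by comparing \eqref{5.21} with \eqref{8.22} and noting that the two agree once $\mathcal A(x,x)$ is evaluated by letting $y=x^-$ in \eqref{8.15}, namely $\mathcal A(x,x)=-\xi_{N+1}(x)\,\Omega_{N+1}(x)^+\,\xi_{N+1}(x)^\dagger.$ The main obstacle I anticipate is the verification of the reduced identity, the counterpart of \eqref{6.35}: the two sign reversals relative to the removal case must be handled carefully so that the $\tilde\kappa_{N+1}^2$ terms cancel and the surviving terms assemble into a single total $x$-derivative; everything else is a direct transcription of the proof of Theorem~\ref{theorem6.2}.
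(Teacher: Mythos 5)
Your proposal is correct and is exactly the route the paper takes: the published proof consists of establishing the derivative formula $[\Omega_{N+1}(x)^+]'=-\Omega_{N+1}(x)^+\,\xi_{N+1}(x)^\dagger\,\xi_{N+1}(x)\,\Omega_{N+1}(x)^+$ and then invoking the steps of Theorem~\ref{theorem6.2} verbatim, which is what you do. Your sign bookkeeping (the overall minus in \eqref{8.15} together with $\Omega_{N+1}'=+\xi_{N+1}^\dagger\xi_{N+1}$), the reduced identity replacing \eqref{6.35}, and the Wronskian identity evaluated at $x=0$ via \eqref{2.6} all check out.
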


\begin{proof}
By arguing as in the proof of \eqref{6.25}, we prove that the $x$-derivative of $\Omega_{N+1}(x)^+$ is given by
\begin{equation}\label{8.23} 
\left[\Omega_{N+1}(x)^+\right]'=-\Omega_{N+1}(x)^+\,\xi_{N+1}(x)^\dagger\,\xi_{N+1}(x)\,\Omega_{N+1}(x)^+.
\end{equation}
We then use the steps analogous to the steps in the proof of
Theorem~\ref{theorem6.2}
and establish the results stated in (a)--(d).
\end{proof}

The results presented in the next proposition are needed later on. Those results are related to the large spacial asymptotics
of the unperturbed regular solution $\varphi(k,x)$ and an exponentially growing matrix solution to
\eqref{2.1} when $k=i\tilde\kappa_{N+1}$ with the unperturbed potential $V(x).$

\begin{proposition}
\label{proposition8.4}
Consider
the unperturbed Schr\"odinger operator with the potential
$V$ satisfying \eqref{2.2} and belonging to $L^1_1(\mathbb R^+),$  with the selfadjoint
boundary condition \eqref{2.5} described by the boundary matrices $A$ and $B$ satisfying
\eqref{2.6} and \eqref{2.7}, 
with the regular solution $\varphi(k,x)$ satisfying the initial conditions
 \eqref{2.10}, and
 containing
$N$ bound states with the eigenvalues $-\kappa_j^2$ and
the Gel'fand--Levitan normalization matrices $C_j$ 
for $1\le j\le N.$ 
Let $\tilde \kappa_{N+1}$ be a positive constant distinct from $\kappa_j$ for $1\le j\le N.$
We have the following:

\begin{enumerate}

\item[\text{\rm(a)}] For each nonnegative constant $a$ and each fixed $k\in\overline{\mathbb C^+}\setminus\{0\},$ the Schr\"odinger equation
\eqref{2.1} has an $n\times n$ matrix-valued solution $g(k,x)$ 
satisfying
\begin{equation*}
g(k,x)=e^{-ikx}\left[I+o(1)\right],\quad g'(k,x)=-e^{-ikx}\left[ik\,I+o(1)\right],
\qquad x\to+\infty.
\end{equation*}
Furthermore,
for each fixed $k\in\overline{\mathbb C^+},$ the Jost solution $f(k,x)$
and the solution $g(k,x)$ form a fundamental set of solutions to \eqref{2.1}.

\item[\text{\rm(b)}] For each fixed parameter
$\varepsilon$ with $0<\varepsilon<1,$ the quantity $g(i\tilde\kappa_{N+1},x)$ has the asymptotics 
\begin{equation}\label{8.25}
g(i\tilde\kappa_{N+1},x)=  e^{\tilde\kappa_{N+1} x}\left[I+  O\left(q_6(x)+q_{14}(a,\varepsilon,x)+q_{15}(\varepsilon,x)
\right)\right], \qquad x \to +\infty,
\end{equation}
where $q_6(x)$ is the quantity in \eqref{6.43}
 and we have defined
\begin{equation}
\label{8.26}
q_{14}(a,\varepsilon,x):= e^{-2 \varepsilon \tilde\kappa_{N+1} x}  \int_a^{(1-\varepsilon)x} dy\, |V(y)|,
\end{equation}
\begin{equation}
\label{8.27}
q_{15}(\varepsilon,x):= \int_{(1-\varepsilon) x}^x dy \, e^{-2 \tilde\kappa_{N+1} (x-y)}\, |V(y)|.
\end{equation}

\item[\text{\rm(c)}] The $x$-derivative $g'(i\tilde\kappa_{N+1},x)$ has the asymptotics
\begin{equation}\label{8.28}
g'(i\tilde\kappa_{N+1},x )=\tilde\kappa_{N+1}\, e^{\tilde\kappa_{N+1}x} \left[ I+  O\left( q_6(x)+ q_{14}(a,\varepsilon,x)+q_{15}(\varepsilon,x)
\right)\right],\qquad x\to+\infty.
\end{equation}

\item[\text{\rm(d)}] The quantity $\varphi(i\tilde\kappa_{N+1},x)$ has the representation
\begin{equation}\label{8.29}
\varphi(i\tilde\kappa_{N+1},x)= f(i\tilde\kappa_{N+1},x)\,K_{N+1}+  g(i\tilde\kappa_{N+1},x)\,L_{N+1},
\end{equation}
for some $n \times n$ matrices $K_{N+1}$ and $L_{N+1},$ where the matrix $L_{N+1}$ is invertible.

\end{enumerate}

\begin{proof}
We remark that (a) follows from Propositions~3.2.2 and 3.2.3 of \cite{AW2021}. 
For any $a\ge 0,$ from (3.2.39) and (3.2.63) of \cite{AW2021} we have
\begin{equation}\label{8.30}
 e^{-\tilde\kappa_{N+1} x}g(i\tilde\kappa_{N+1},x)= I+ O\left( q_6(x)
+q_{16}(a,x)\right), \qquad x \to +\infty,
\end{equation}
where we have defined
$q_{16}(a,x)$ as
\begin{equation*}
 q_{16}(a,x):= \int_a^x dy \, e^{-2 \tilde\kappa_{N+1} (x-y)} \, |V(y)| .
\end{equation*}
We remark that, even though the result of (3.2.63) in \cite{AW2021} is derived for a large positive $a,$
it also holds for any smaller $a$-value, and hence \eqref{8.30} holds for any nonnegative $a.$
We also remark that we have suppressed the $a$-dependence in $g(k,x)$ in our notation.
Using any $\varepsilon$ satisfying $0<\varepsilon<1,$ we can write \eqref{8.30} as
\begin{equation}\label{8.32}
 e^{-\tilde\kappa_{N+1} x}g(i\tilde\kappa_{N+1},x)=  I +O\left( q_6(x)
+q_{17}(a,\varepsilon,x)+
 q_{15}(\varepsilon,x)\right), \qquad x \to +\infty,
\end{equation}
where we have defined $q_{17}(a,\varepsilon,x)$ as
\begin{equation*}
q_{17}(a,\varepsilon,x):=  \int_a^{(1-\varepsilon) x} dy\, e^{-2\tilde\kappa_{N+1}(x-y)}|V(y)|.
\end{equation*}
We note that \eqref{8.25} follows from \eqref{8.32}
after using $y\le(1-\varepsilon)x,$ and hence the 
proof of (b) is complete. For the proof of (c), we use 
\begin{equation}\label{8.34}
 \frac{d}{dx}\left[  e^{-\tilde\kappa_{N+1} x} g(i\tilde\kappa_{N+1},x)\right]=
O\left(q_{16}(a,x)\right), \qquad x \to +\infty, 
\end{equation}
which is implied by (3.2.71) of \cite{AW2021}.
By proceeding as in the proof of \eqref{8.25}, with the help of \eqref{8.32} and \eqref{8.34} we obtain \eqref{8.28}, which completes the proof
of (c). For the proof of (d) we proceed as follows.
From (a) we know that $f(i\tilde\kappa_{N+1},x)$ and $g(i \tilde\kappa_{N+1},x)$ form a fundamental set of solutions to \eqref{2.1} 
when $k= i\tilde\kappa_{N+1}.$ Hence, $\varphi(i\tilde\kappa_{N+1},x)$ can be expressed as in \eqref{8.29}.
The invertibility of $L_{N+1}$ is proved by showing that the kernel of $L_{N+1}$ contains only
the zero vector in $\mathbb C^n,$ and for the proof we proceed as follows. For any column vector $v$ satisfying $L_{N+1}v=0,$ from \eqref{8.29} we get
\begin{equation}\label{8.35}
\varphi(i\tilde\kappa_{N+1},x)\,v= f(i\tilde\kappa_{N+1},x)\,K_{N+1}\,v.
\end{equation}
From the first equality of \eqref{2.9} we observe that the right-hand side
of \eqref{8.35} behaves as $O(e^{-\tilde\kappa_{N+1}x})$ as $x\to+\infty,$ and hence that right-hand side is
a square-integrable solution to \eqref{2.1} at $k=i\tilde\kappa_{N+1}.$
On the other hand, the left-hand side of \eqref{8.35} cannot be a square-integrable
solution to \eqref{2.1} because otherwise we would have a bound-state at $k=i\tilde\kappa_{N+1}$
for the unperturbed problem unless $v$ is the zero vector. Thus, the proof of (d) is complete. 
\end{proof}
\end{proposition}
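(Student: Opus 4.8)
The plan is to handle the four parts in order, resting on the decaying- and growing-solution theory of \cite{AW2021} for the basic existence statement and the base asymptotic estimates, and supplying the elementary integral-splitting and linear-algebra arguments that adapt those estimates to the present needs. Part~(a) I would obtain directly from Propositions~3.2.2 and 3.2.3 of \cite{AW2021}, which, for each fixed $k\in\overline{\mathbb{C}^+}\setminus\{0\}$, construct a matrix solution $g(k,x)$ of \eqref{2.1} with the prescribed asymptotics through a Volterra integral equation anchored at the point $a$. To see that $f(k,x)$ and $g(k,x)$ form a fundamental set, I would consider the $2n\times 2n$ block Wronskian $\begin{bmatrix} f & g\\ f' & g'\end{bmatrix}$; since \eqref{2.1} carries no first-order term, Liouville's formula shows its determinant is independent of $x$, and passing to the limit $x\to+\infty$ with \eqref{2.9} and the asymptotics of $g$ shows this determinant equals $(-2ik)^n$, which is nonzero for $k\neq 0$. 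Hence the two solutions are linearly independent.

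For parts~(b) and (c), the starting point is the Volterra representation of the growing solution in \cite{AW2021}: combining (3.2.39) and (3.2.63) there yields, at $k=i\tilde\kappa_{N+1}$, that $e^{-\tilde\kappa_{N+1}x}\,g(i\tilde\kappa_{N+1},x)=I+O\bigl(q_6(x)+\int_a^x dy\,e^{-2\tilde\kappa_{N+1}(x-y)}|V(y)|\bigr)$ as $x\to+\infty$, while (3.2.71) of \cite{AW2021} gives the analogous bound for the $x$-derivative. The only additional work is to dominate the remaining integral by the two terms in \eqref{8.26} and \eqref{8.27}. I would split it at $(1-\varepsilon)x$: on $[a,(1-\varepsilon)x]$ one has $x-y\ge\varepsilon x$, so the exponential factor is at most $e^{-2\varepsilon\tilde\kappa_{N+1}x}$ and the contribution is $O\bigl(q_{14}(a,\varepsilon,x)\bigr)$; on $[(1-\varepsilon)x,x]$ the integral equals $q_{15}(\varepsilon,x)$. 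This produces \eqref{8.25}, and the identical splitting applied to the derivative estimate produces \eqref{8.28}.

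For part~(d), the representation \eqref{8.29} is immediate from part~(a), since $f(i\tilde\kappa_{N+1},x)$ and $g(i\tilde\kappa_{N+1},x)$ are a fundamental set and therefore every solution, in particular $\varphi(i\tilde\kappa_{N+1},x)$, is a right linear combination of them with constant coefficient matrices $K_{N+1}$ and $L_{N+1}$. The substance is the invertibility of $L_{N+1}$, which I would prove by showing $\operatorname{Ker}L_{N+1}=\{0\}$. If $L_{N+1}v=0$, then \eqref{8.29} collapses to $\varphi(i\tilde\kappa_{N+1},x)\,v=f(i\tilde\kappa_{N+1},x)\,K_{N+1}\,v$, which decays like $O(e^{-\tilde\kappa_{N+1}x})$ by the first asymptotics in \eqref{2.9} and is thus square integrable; moreover $\varphi(k,x)$ always satisfies the boundary condition \eqref{2.5}. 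Such a column vector would be a bound state at the energy $-\tilde\kappa_{N+1}^2$, contradicting the hypothesis that $\tilde\kappa_{N+1}$ is distinct from every $\kappa_j$ unless $v=0$. This invertibility is exactly where the distinctness hypothesis enters, and it is the main obstacle: it guarantees that the growing mode $g$ genuinely contributes to $\varphi(i\tilde\kappa_{N+1},x)$, which is what makes the subsequent addition of the new bound state nondegenerate. By comparison, the estimates of parts~(b) and (c) are the most delicate technically but amount to bookkeeping once the base estimates of \cite{AW2021} are invoked.
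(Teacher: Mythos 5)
Your proposal is correct and follows essentially the same route as the paper's proof: parts (a)--(c) rest on the same cited results of \cite{AW2021} with the identical splitting of the integral at $(1-\varepsilon)x$, and part (d) uses the same kernel-of-$L_{N+1}$ argument via the nonexistence of a bound state at $k=i\tilde\kappa_{N+1}$. The only addition is your explicit Wronskian computation for the fundamental-set claim in (a), which the paper instead delegates to the cited propositions.
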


The next theorem is the counterpart of Theorem~\ref{theorem6.3}.
It shows that the quantity $\tilde\varphi(k,x)$ in \eqref{5.18} with $\mathcal A(x,y)$ in \eqref{8.15} is the perturbed regular solution
to \eqref{2.1} with the perturbed potential
$\tilde V(x)$ in \eqref{8.22} and
with the appropriately determined boundary matrices $\tilde A$ and $\tilde B.$

\begin{theorem}\label{theorem8.5}
Consider the unperturbed Schr\"odinger operator with the potential
$V$ satisfying \eqref{2.2} and belonging to $L^1_1(\mathbb R^+),$   with the selfadjoint
boundary condition \eqref{2.5} described by the boundary matrices $A$ and $B$ satisfying
\eqref{2.6} and \eqref{2.7},  with the regular solution $\varphi(k,x)$ satisfying the initial conditions \eqref{2.10},
and with $N$ bound states with the energies $-\kappa_j^2$ and the Gel'fand--Levitan normalization matrices
$C_j$ for $1\le j\le N.$  
Let $\tilde\varphi(k,x)$ be the quantity defined in \eqref{5.18} with $\mathcal A(x,y)$ expressed as in \eqref{8.15}.  
Then, we have the following:

\begin{enumerate}

\item[\text{\rm(a)}] The quantity  
$\tilde\varphi(k,x)$ is a solution to the perturbed Schr\"odinger equation \eqref{2.1}
 with the potential $\tilde V(x)$ given in \eqref{8.22}. We can express
$\tilde\varphi(k,x)$ also as
 \begin{equation}\label{8.36}
\tilde\varphi(k,x)=\varphi(k,x)- \xi_{N+1}(x)\,\Omega_{N+1}(x)^+ \int_0^x dy\,\xi_{N+1}(y)^\dagger\, \varphi(k,y).
\end{equation}

\item[\text{\rm(b)}] The perturbed potential $\tilde V(x)$ appearing in \eqref{8.22} 
satisfies \eqref{2.2}. Moreover, for every nonnegative constant $a$ and
the parameter $\varepsilon$ satisfying $0<\varepsilon<1,$ the potential increment $\tilde V(x)-V(x)$ has the
asymptotic behavior
\begin{equation*}
\tilde V(x)- V(x)=O\left(q_{18}(a,\varepsilon,x)\right),\qquad x\to+\infty,
\end{equation*}
with $q_{18}(a,\varepsilon,x)$ defined as
\begin{equation*}
q_{18}(a,\varepsilon,x):=  x \, e^{-2\tilde\kappa_{N+1}x}+ q_6(x)
+  q_{14}(a,\varepsilon,x)
+q_{15}(\varepsilon,x),
\end{equation*}
and where we recall that $q_6(x),$ $q_{14}(a,\varepsilon,x),$ and $q_{15}(\varepsilon,x)$
are the quantities in \eqref{6.43}, \eqref{8.26}, and \eqref{8.27}, respectively.
If the unperturbed potential $V$ is further restricted to  $L^1_{1+\epsilon}(\mathbb R^+)$ for some fixed $\epsilon \ge 0,$ then 
the perturbed potential $\tilde V$ belongs to $L^1_\epsilon(\mathbb R^+).$

\item[\text{\rm(c)}] If the unperturbed potential $V(x)$ is further restricted to satisfy
\begin{equation*}
|V(x)| \le c\,e^{-\alpha x}, \qquad x \ge x_0,
\end{equation*}
for some positive constants $\alpha$ and $x_0$ and with $c$
denoting a generic constant, then for every $0<\varepsilon<1$  the potential increment
$\tilde V(x)-V(x)$ has the asymptotic behavior as $x\to+\infty$ given by
\begin{equation}\label{8.40}
\tilde V(x) -V(x) =\begin{cases}
O\left(e^{-\alpha x}+ e^{-2 \varepsilon\tilde\kappa_{N+1}x}\right),\qquad  \alpha\le 2 \tilde\kappa_{N+1}, \\
\noalign{\medskip}
O\left( e^{-2\varepsilon\tilde\kappa_{N+1} x}\right), \qquad \alpha > 2  \tilde\kappa_{N+1}.
\end{cases}
\end{equation}

\item[\text{\rm(d)}]
If the unperturbed potential $V(x)$ has compact support, then the perturbed potential $\tilde V(x)$ has the asymptotic behavior
\begin{equation*}
\tilde V(x)= O\left(x\, e^{-2\tilde\kappa_{N+1} x}\right), \qquad x \to +\infty.
\end{equation*}

\item[\text{\rm(e)}]
 For $ k\ne i\tilde\kappa_{N+1},$ the perturbed quantity $\tilde\varphi(k,x)$ can be expressed as
\begin{equation}
\label{8.42}
\tilde\varphi(k,x)=\varphi(k,x)-\ds\frac{1}{k^2+\tilde\kappa_{N+1}^2}
\,\xi_{N+1}(x) \,\Omega_{N+1}(x)^+\left[\xi'_{N+1}(x)^\dagger\,\varphi(k,x)-\xi_{N+1}(x)^\dagger
\,\varphi'(k,x)\right].
\end{equation}

\item[\text{\rm(f)}] The perturbed quantity $\tilde\varphi(k,x)$ satisfies the initial conditions \eqref{5.17}
where the matrices $\tilde A$ and $\tilde B$
are expressed in terms of the unperturbed
boundary matrices $A$ and $B$ and
the Gel'fand--Levitan normalization matrix $\tilde C_{N+1}$  in \eqref{8.2} as
\begin{equation}\label{8.43}
\tilde A=A, \quad \tilde B=B-A\,\tilde C_{N+1}^2 A^\dagger A.
\end{equation}

\item[\text{\rm(g)}] 
The matrices $\tilde A$ and $\tilde B$ appearing in \eqref{8.43} satisfy \eqref{2.6} and \eqref{2.7}. Hence, 
as a consequence of (a) and (f), 
the quantity $\tilde\varphi(k,x)$ corresponds to the regular solution 
to the matrix Schr\"odinger equation \eqref{2.1} with the potential $\tilde V(x)$ in \eqref{8.22} and
with the selfadjoint
boundary condition \eqref{2.5} with $A$ and $B$ there replaced with
$\tilde A$ and $\tilde B,$ respectively.

\end{enumerate}

\end{theorem}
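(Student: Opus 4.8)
The plan is to establish (a)--(g) largely by exploiting the parallel with Theorem~\ref{theorem6.3}, the essential new ingredient being the exponential-growth analysis of Proposition~\ref{proposition8.4}. First I would dispose of (a) by invoking Theorems~\ref{theorem5.2} and \ref{theorem8.3}: the latter guarantees that $\mathcal A(x,y)$ in \eqref{8.15} is continuously differentiable with locally integrable second derivatives and satisfies the partial differential equation \eqref{5.20} together with the potential relation \eqref{5.21} for $\tilde V$ as in \eqref{8.22}, so Theorem~\ref{theorem5.2}(a) yields that $\tilde\varphi(k,x)$ solves the perturbed equation. The alternate form \eqref{8.36} then follows by inserting \eqref{8.15} into \eqref{5.18} and factoring out $\int_0^x dy\,\xi_{N+1}(y)^\dagger\,\varphi(k,y)$ on the right.

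For (b) the selfadjointness is immediate: $\Omega_{N+1}(x)$ is hermitian by \eqref{8.17}, hence $\Omega_{N+1}(x)^+$ is hermitian by \eqref{3.4}, so $\xi_{N+1}(x)\,\Omega_{N+1}(x)^+\,\xi_{N+1}(x)^\dagger$ is hermitian and \eqref{8.22} preserves \eqref{2.2}. The asymptotics are the heart of the matter, and I would build them from Proposition~\ref{proposition8.4}. Writing $\varphi(i\tilde\kappa_{N+1},x)$ through the representation \eqref{8.29} with $L_{N+1}$ invertible, the growing part $g(i\tilde\kappa_{N+1},x)\,L_{N+1}\sim e^{\tilde\kappa_{N+1}x}$ dominates, so $\xi_{N+1}(x)\sim e^{\tilde\kappa_{N+1}x}$ up to the error terms $q_6,q_{14},q_{15}$ of \eqref{6.43}, \eqref{8.26}, \eqref{8.27}; integrating $\xi_{N+1}^\dagger\xi_{N+1}$ gives $\Omega_{N+1}(x)\sim \tfrac{1}{2\tilde\kappa_{N+1}}e^{2\tilde\kappa_{N+1}x}$ on $\tilde Q_{N+1}\,\mathbb C^n$, and by the direct-sum structure \eqref{8.18}--\eqref{8.20} its Moore--Penrose inverse decays like $e^{-2\tilde\kappa_{N+1}x}$. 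Feeding these into \eqref{8.23} and then into the derivative in \eqref{8.22}, the exponentials cancel and the potential increment is controlled by $q_{18}$; the polynomial factor $x\,e^{-2\tilde\kappa_{N+1}x}$ arises precisely because the linear-in-$x$ growth of $\Omega_{N+1}$ (an antiderivative of a constant leading term) competes with the exponential matching. The membership $\tilde V\in L^1_\epsilon(\mathbb R^+)$ when $V\in L^1_{1+\epsilon}(\mathbb R^+)$ then follows by integrating this bound exactly as in the proof of Theorem~\ref{theorem6.3}(b), using the identity \eqref{6.62}. Parts (c) and (d) are specializations: under $|V(x)|\le c\,e^{-\alpha x}$ the terms $q_6,q_{14},q_{15}$ simplify to the two regimes in \eqref{8.40}, and under compact support they vanish for large $x,$ leaving only the $x\,e^{-2\tilde\kappa_{N+1}x}$ term.

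For (e) I would derive a Wronskian identity by the same device used for \eqref{6.73}--\eqref{6.76}: using \eqref{2.1} for $\varphi(k,x)$ and the Schr\"odinger equation at $k=i\tilde\kappa_{N+1}$ satisfied by $\xi_{N+1}(x),$ the quantity $\xi'_{N+1}(x)^\dagger\varphi(k,x)-\xi_{N+1}(x)^\dagger\varphi'(k,x)$ has $x$-derivative equal to $(k^2+\tilde\kappa_{N+1}^2)\,\xi_{N+1}(x)^\dagger\varphi(k,x).$ Integrating from $0$ to $x$ and killing the boundary term at $0$ through the boundary-condition compatibility \eqref{2.6} (with $\xi_{N+1}(0)=A\,\tilde C_{N+1}$) converts the integral in \eqref{8.36} into the bracketed expression of \eqref{8.42}. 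Part (f) is then read off by evaluating \eqref{8.42} and its $x$-derivative at $x=0,$ using $\xi_{N+1}(0)=A\,\tilde C_{N+1},$ $\Omega_{N+1}(0)=\tilde Q_{N+1}$ from \eqref{8.17}, the vanishing boundary term, and the identities $\tilde C_{N+1}\,\tilde Q_{N+1}=\tilde C_{N+1}$ and $\tilde C_{N+1}^\dagger=\tilde C_{N+1}$ from \eqref{8.1} and \eqref{8.2}; the sign difference relative to \eqref{6.47} is produced by the minus sign in \eqref{8.15}. Finally, (g) is verified exactly as in Theorem~\ref{theorem6.3}(g): the relation \eqref{2.6} for $(\tilde A,\tilde B)$ reduces to that for $(A,B),$ and the positivity \eqref{2.7} follows from the Cauchy--Schwarz estimate \eqref{6.86}, which forces $\tilde A^\dagger\tilde A+\tilde B^\dagger\tilde B$ to have trivial kernel. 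The main obstacle throughout is the asymptotic bookkeeping in (b), where the cancellation of the competing exponentials must be tracked carefully enough to expose the genuine $x\,e^{-2\tilde\kappa_{N+1}x}$ leading behavior.
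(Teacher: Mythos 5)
Your proposal is correct and follows essentially the same route as the paper: (a) via Theorems~\ref{theorem5.2} and \ref{theorem8.3}, (b)--(d) via the asymptotics of $\xi_{N+1},$ $\Omega_{N+1},$ and $\Omega_{N+1}^+$ built from Proposition~\ref{proposition8.4} and the invertibility of $L_{N+1},$ (e) via the Wronskian identity analogous to \eqref{6.73}--\eqref{6.76}, and (f), (g) by evaluation at $x=0$ and the Cauchy--Schwarz argument of Theorem~\ref{theorem6.3}(g). Your identification of the source of the $x\,e^{-2\tilde\kappa_{N+1}x}$ term and of the sign flip in \eqref{8.43} matches the paper's computation.
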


\begin{proof}
The proof of (a) is obtained 
by using Theorems~\ref{theorem5.2} and \ref{theorem8.3}. The proof of (d) is obtained by using (b).
The proofs of (e), (f), (g) are similar to the proofs of (e), (f), (g) of Theorem~\ref{theorem6.3}.
In order to prove (b) and (c), we first obtain some estimates by proceeding as follows.
With the help of \eqref{6.48}, \eqref{6.49}, \eqref{8.16}, \eqref{8.25}, \eqref{8.28}, and \eqref{8.29}, 
we get the asymptotics for $\xi_{N+1}(x)$ defined in \eqref{8.16} as
\begin{equation}\label{8.44}
\xi_{N+1}(x)= e^{\tilde\kappa_{N+1}x}  \left[ L_{N+1}+ O\left(q_{19}(a,\varepsilon,x)
\right)\right] \tilde C_{N+1},\qquad x\to+\infty,
\end{equation}
\begin{equation}\label{8.45}
\xi'_{N+1}(x)=\tilde\kappa_{N+1} e^{\tilde\kappa_{N+1}x}  \left[ L_{N+1}+ O\left(q_{19}(a,\varepsilon,x)\right)\right] \tilde C_{N+1},\qquad x\to+\infty,
\end{equation}
where we have defined
\begin{equation*}
q_{19}(a,\varepsilon,x):=e^{-2\tilde\kappa_{N+1}x}+ q_6(x)
+  q_{14}(a,\varepsilon,x)
+q_{15}(\varepsilon,x),
\end{equation*}
by recalling that 
$L_{N+1}$ is the invertible matrix appearing in \eqref{8.29}.
In terms of the parameters $a$ and $\varepsilon,$ let us choose a positive constant $b$ satisfying
$b \ge a/(1-\varepsilon)$ and
\begin{equation}\label{8.46}
q_{19}(a,\varepsilon,x)
< 1, \qquad x \ge b.
\end{equation}
Using \eqref{8.44}--\eqref{8.46}, for the quantity $\Omega_{N+1}(x)$ given in
\eqref{8.17} we obtain
the asymptotics as $x\to+\infty$ given by
\begin{equation}\label{8.47}
\Omega_{N+1}(x)-\Omega_{N+1}(b)= \frac{1}{2\tilde\kappa_{N+1}}\left(e^{2\tilde\kappa_{N+1}x} -e^{2\tilde\kappa_{N+1}b}\right)\tilde C_{N+1}
 L_{N+1}^\dagger L_{N+1}\tilde C_{N+1}
+O\left(q_{20}(b,\varepsilon,x) \right),
\end{equation}
where $q_{20}(b,\varepsilon,x)$ is some quantity with the asymptotic behavior
\begin{equation}\label{8.48}
O\left(q_{20}(b,\varepsilon,x) \right)=
 \tilde C_{N+1}\int_b^x dy\, e^{2\tilde\kappa_{N+1}y}\left[  O\left(q_{19}(a,\varepsilon,y)
\right)\right] \tilde C_{N+1}
,\qquad x\to+\infty.
\end{equation}
Using integration by parts, we obtain
\begin{equation}\label{8.49}
\int_b^x dy\, e^{2\tilde\kappa_{N+1}y}  q_6(y)= \frac{1}{2\tilde\kappa_{N+1}}\left[ e^{2\tilde\kappa_{N+1}x} q_6(x)
  -  e^{2\tilde\kappa_{N+1}b}\, q_6(b) +q_{21}(b,x)
 \right],
\end{equation}
\begin{equation}\label{8.50}
\begin{split}
\int_b^x dy\, e^{2\tilde\kappa_{N+1}y} &q_{14}(a,\varepsilon,y)=-\frac{1}{2\tilde\kappa_{N+1}}q_{22}(b,\varepsilon,x)\\
&+\frac{1}{2(1-\varepsilon)\tilde\kappa_{N+1}}
\left[  e^{2\tilde\kappa_{N+1}x} q_{14}(a,\varepsilon,x)-e^{2\tilde\kappa_{N+1}b} q_{14}(a,\varepsilon,b)\right],
\end{split}
\end{equation}
\begin{equation}\label{8.51}
\begin{split}
\int_b^x dy\, &e^{2\tilde\kappa_{N+1}y} q_{15}(\varepsilon,y)\\
&=
  \frac{1}{2\tilde\kappa_{N+1}} \left[e^{2\tilde\kappa_{N+1}x}  q_{15}(\varepsilon,x)-e^{2\tilde\kappa_{N+1}b} 
  q_{15}(\varepsilon,b)-q_{21}(b,x)+(1-\varepsilon)\,q_{22}(b,\varepsilon,x)\right],
  \end{split}
\end{equation}
where we have defined
\begin{equation*}
q_{21}(b,x):=\int_b^x dy\, e^{2\tilde\kappa_{N+1}y} \,|V(y)|,
\end{equation*}
\begin{equation*}
q_{22}(b,\varepsilon,x):=\int_b^x dy\, e^{2(1-\varepsilon)\tilde\kappa_{N+1}y} \,|V((1-\varepsilon)y)|.
\end{equation*}
Using \eqref{8.47}--\eqref{8.51}, we get the asymptotics for $\Omega_{N+1}(x)$ as
\begin{equation}\label{8.55}
\Omega_{N+1}(x)= \frac{1}{2\tilde\kappa_{N+1}}e^{2\tilde\kappa_{N+1}x} \tilde C_{N+1}\left[ L_{N+1}^\dagger L_{N+1}+
O\left(q_{18}(a,\varepsilon,x)\right)\right]\tilde C_{N+1},\qquad x\to+\infty.
\end{equation}
From \eqref{8.2} we see that we can decompose $\tilde C_{N+1} L_{N+1}^\dagger L_{N+1} \tilde C_{N+1}$
as in the direct sum given in \eqref{8.9}, where
we use $[\tilde C_{N+1} L_{N+1}^\dagger L_{N+1}
 \tilde C_{N+1}]_1$ to denote the restriction of $\tilde C_{N+1} L_{N+1}^\dagger L_{N+1} \tilde C_{N+1}$ to $\tilde Q_{N+1}\mathbb C^n.$ 
Thus, we have
\begin{equation}\label{8.56}
\tilde C_{N+1} L_{N+1}^\dagger L_{N+1} \tilde C_{N+1}=[\tilde C_{N+1} L_{N+1}^\dagger L_{N+1} \tilde C_{N+1}]_1\oplus 0.
\end{equation}
We remark that $[\tilde C_{N+1} L_{N+1}^\dagger L_{N+1} \tilde C_{N+1}]_1$ is invertible, and this is proved by showing that the
only vector $v$ in $\mathbb C^n$ satisfying  
\begin{equation}\label{8.57}
[\tilde C_{N+1} L_{N+1}^\dagger L_{N+1} \tilde C_{N+1}]_1\,v=0,
\end{equation}
is the zero vector. For the proof we proceed as follows. From \eqref{8.1} and \eqref{8.57}, we see that
we equivalently need to show that
\begin{equation}\label{8.58}
\tilde C_{N+1} L_{N+1}^\dagger L_{N+1} \tilde C_{N+1} \,v=0,\qquad v\in \tilde Q_{N+1}\,\mathbb C^n,
\end{equation}
is possible only when $v$ is the zero vector. From \eqref{8.58}, with the help of \eqref{8.1} and \eqref{8.2} we obtain
\begin{equation}\label{8.59}
 v^\dagger \tilde C_{N+1} L_{N+1}^\dagger L_{N+1} \tilde C_{N+1} v= \left| L_{N+1} \tilde{\mathbf H}_{N+1}^{-1/2} v\right|^2.
\end{equation}
As a result of \eqref{8.58}, the left-hand side of \eqref{8.59} is zero. Hence, the right-hand side of \eqref{8.59} implies
$ L_{N+1} \tilde{\mathbf H}_{N+1}^{-1/2} v=0.$ Since the matrices $L_{N+1}$ and
$\tilde{\mathbf H}_{N+1}^{-1/2}$ are both invertible, we conclude that $v$ must be the zero vector.
With the help of \eqref{3.9} and \eqref{8.56}, we get
\begin{equation}\label{8.60}
\left(\tilde C_{N+1} L_{N+1}^\dagger L_{N+1} \tilde C_{N+1}\right)^+=\left([\tilde C_{N+1} L_{N+1}^\dagger L_{N+1} 
\tilde C_{N+1}]_1\right)^{-1} \oplus 0.
\end{equation}
Using \eqref{3.9}, \eqref{8.55}, and \eqref{8.60}, we obtain the asymptotics for $\Omega_{N+1}(x)^+$ as $x\to+\infty$ as
\begin{equation*}
\Omega_{N+1}(x)^+= 2\tilde\kappa_{N+1} e^{-2\tilde\kappa_{N+1}x}\tilde Q_{N+1}\left[\left(\tilde C_{N+1} L_{N+1}^\dagger L_{N+1}  \tilde C_{N+1}\right)^+
+
O\left(q_{18}(a,\varepsilon,x)\right)\right]\tilde Q_{N+1}\oplus 0,
\end{equation*}
which is equivalent to the asymptotics as $x\to+\infty$ given by
\begin{equation}\label{8.62}
\Omega_{N+1}(x)^+= 2\tilde\kappa_{N+1} e^{-2\tilde\kappa_{N+1}x}\tilde Q_{N+1}\left[\left(\tilde C_{N+1} L_{N+1}^\dagger L_{N+1}  \tilde C_{N+1}\right)^+
+ 
O\left(q_{18}(a,\varepsilon,x)\right)\right]\tilde Q_{N+1}.
\end{equation}
Next, by using \eqref{8.23}, \eqref{8.44}, \eqref{8.62}, and the second equality of \eqref{3.3}, we obtain the asymptotics as $x\to+\infty$ for
the $x$-derivative of $\Omega_{N+1}(x)^+$ as
\begin{equation}\label{8.63}
 \left[   \Omega(x)^+_{N+1}\right]' =
- 4 \tilde\kappa_{N+1}^2 e^{-2 \tilde\kappa_{N+1}} \tilde Q_{N+1}
 \left[\left(\tilde C_{N+1} L_{N+1}^\dagger L_{N+1}  \tilde C_{N+1}\right)^+ + O\left(q_{18}(a,\varepsilon,x)\right)
\ds \right] \tilde Q_{N+1}.
 \end{equation}
Finally, we use \eqref{8.44}, \eqref{8.45}, \eqref{8.55}, \eqref{8.62}, \eqref{8.63}, and we proceed as in the proof of 
Theorem~\ref{theorem6.3} and complete the proofs of (b) and (c).
\end{proof}

The next theorem is the counterpart of Theorem~\ref{theorem6.4}. When a new bound state is added to the spectrum, it describes
how the Jost matrix, the scattering matrix, and the Jost solution transform and
it also shows that the existing bound-state energies and their multiplicities remain unchanged.
As in Theorem~\ref{theorem6.4} we use the stronger assumption that the unperturbed potential $V$ belongs to
$L^1_2(\mathbb R^+)$ rather than $L^1_1(\mathbb R^+)$ so that the perturbed potential
$\tilde V$ belongs to $L^1_1(\mathbb R^+),$ which is assured by Theorem~\ref{theorem8.5}(b).
We remark that this is a sufficiency assumption because the asymptotic estimates used
in Theorem~\ref{theorem8.5}(b)
on the potentials
are not necessarily sharp.

\begin{theorem}
\label{theorem8.6} Consider
the unperturbed Schr\"odinger operator with the potential
$V$ satisfying \eqref{2.2} and belonging to $L^1_2(\mathbb R^+),$  with the selfadjoint
boundary condition \eqref{2.5} described by the boundary matrices $A$ and $B$ satisfying
\eqref{2.6} and \eqref{2.7},
the Jost solution $f(k,x)$ satisfying \eqref{2.9}, the Jost matrix $J(k)$ defined as in \eqref{2.11}, 
the scattering matrix $S(k)$ defined in \eqref{2.12}, and with 
$N$ bound states with the energies $-\kappa_j^2$ and
the Gel'fand--Levitan normalization matrices $C_j$ 
for $1\le j\le N.$ 
Let us use a tilde to identify the quantities
associated with the perturbed Schr\"odinger operator
so that we have
the perturbed potential
$\tilde V(x)$ expressed as in \eqref{8.22},
the boundary condition \eqref{2.5} where
the boundary matrices $A$ and $B$ are replaced with
$\tilde A$ and $\tilde B,$ respectively, given in \eqref{8.43},
the perturbed Jost solution $\tilde f(k,x)$ satisfying the asymptotics \eqref{2.9},
the perturbed regular solution $\tilde\varphi(k,x)$ satisfying the initial
conditions \eqref{5.17}, 
the perturbed Jost matrix $\tilde J(k)$ defined as in \eqref{6.89}, and
the perturbed scattering matrix $\tilde S(k)$ defined in \eqref{6.90}.
We then have the following:

\begin{enumerate}

\item[\text{\rm(a)}] The unperturbed Jost matrix $J(k)$ is transformed into the perturbed Jost matrix $\tilde J(k)$ as
 \begin{equation}
\label{8.64}
\tilde J(k)=\left[I-\ds\frac{2i\tilde\kappa_{N+1}}{k+i\tilde\kappa_{N+1}}\,\tilde P_{N+1}\right] J(k),\qquad
k\in\overline{\mathbb C^+},
\end{equation}
where we let
\begin{equation}\label{8.65}
\tilde P_{N+1}:= L_{N+1}\, \tilde C_{N+1}\, [ \tilde C_{N+1}\, L_{N+1}^\dagger\, L_{N+1}\, \tilde C_{N+1}]^+ \tilde C_{N+1}\, L_{N+1}^\dagger,
\end{equation}
with $L_{N+1}$ being the invertible matrix defined in \eqref{8.29}. Moreover, $\tilde P_{N+1}$ is the orthogonal projection onto the
 kernel of $\tilde J(i\tilde\kappa_{N+1})^\dagger$ and it has rank $\tilde m_{N+1}.$ We recall
that $\tilde m_{N+1}$ is the rank of  $\tilde C_{N+1}.$

\item[\text{\rm(b)}] The matrix product $J(k)^\dagger J(k)$ for $k\in\mathbb R$ does not change under
the perturbation,
i.e. we have
 \begin{equation}
\label{8.66}
\tilde J(k)^\dagger \tilde J(k)=J(k)^\dagger J(k),\qquad k\in\mathbb R.\end{equation}
Hence, the continuous part of the spectral measure $d\rho$ does not change
under the perturbation.

\item[\text{\rm(c)}] Under the perturbation, the determinant of the Jost matrix $J(k)$ is transformed as
 \begin{equation}
\label{8.67}
\det[\tilde J(k)]=\left(\ds\frac{k-i\tilde\kappa_{N+1}}{k+i\tilde\kappa_{N+1}}\right)^{\tilde m_{N+1}} \det[J(k)],\qquad
k\in\overline{\mathbb C^+}.\end{equation}

\item[\text{\rm(d)}] The perturbation adds
a new bound state with the energy $- \tilde\kappa_{N+1}^2$ and multiplicity $\tilde m_{N+1}$ in such a way that
the remaining bound states with the energies $-\kappa_j^2$ and multiplicities $m_j$ for $1\le j\le N$
are unchanged.

\item[\text{\rm(e)}] Under the perturbation, the scattering matrix $S(k)$ undergoes the transformation
 \begin{equation*}
\tilde S(k)=\left[I+\ds\frac{2i\tilde\kappa_{N+1}}{k-i\tilde\kappa_{N+1}}\,\tilde P_{N+1}\right] S(k)\left[I+\ds\frac{2i\tilde\kappa_N}{k-i\tilde\kappa_{N+1}}\,\tilde P_{N+1}\right],
\qquad k\in\mathbb R,
\end{equation*}
where $\tilde P_{N+1}$ is the orthogonal projection in \eqref{8.65}.

\item[\text{\rm(f)}] The perturbation changes the determinant of the scattering matrix as 
 \begin{equation*}
\det[\tilde S(k)]=\left(\ds\frac{k +i\tilde\kappa_{N+1}}{k-i\tilde\kappa_{N+1}}\right)^{2 \tilde m_{N+1}} \det[S(k)],
\qquad k\in\mathbb R.
\end{equation*}

\item[\text{\rm(g)}] Under the perturbation the Jost solution $f(k,x)$ is transformed into $\tilde f(k,x)$ as
\begin{equation}
\label{8.70}
\tilde f(k,x)=\left[f(k,x)-\ds\frac{1}{k^2+\tilde\kappa_{N+1}^2}
\,\xi_{N+1}(x) \,\Omega_{N+1}(x)^+
q_{23}(k,x)
\right]
\left[I- \frac{2i \tilde\kappa_{N+1}}{k+i \tilde\kappa_{N+1} } \tilde P_{N+1}\right],
\end{equation}
where we have defined $q_{23}(k,x)$ as
\begin{equation}\label{8.71}
q_{23}(k,x):=\left[\xi'_{N+1}(x)^\dagger\,f(k,x)-\xi_{N+1}(x)^\dagger
\,f'(k,x)\right],\end{equation}
which is the analog of $q_7(k,x)$ defined in \eqref{6.97}.

\end{enumerate}
\end{theorem}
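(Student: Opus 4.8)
The plan is to follow the structure of the proof of Theorem~\ref{theorem6.4}, since the addition of a bound state is dual to its removal. The central object is part~(a), from which the remaining items follow by algebraic manipulation. First I would write the analog of \eqref{2.15} for the perturbed problem, namely $\tilde\varphi(k,x)=\frac{1}{2ik}[\tilde f(k,x)\,\tilde J(-k)-\tilde f(-k,x)\,\tilde J(k)]$ together with its $x$-derivative, as well as the unperturbed pair \eqref{2.15} and \eqref{6.99}. Substituting these into the transformation formula \eqref{8.42} and collecting terms, I would form a quantity $\Delta(k,x)$ analogous to \eqref{6.102} that satisfies the parity relation $\Delta(k,x)=\Delta(-k,x)$ for $k\in\bR$. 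The goal is then to show that $\Delta(k,x)$ has a large-$x$ asymptotic expansion of the form $e^{ikx}\,q(-k)\,[I+o(1)]$, where $q(k)$ is the matrix measuring the discrepancy between $\tilde J(k)$ and the claimed right-hand side of \eqref{8.64}. Comparing the parity relation with this asymptotic form forces $q(k)\equiv 0$, which establishes \eqref{8.64} for $k\in\bR$; analyticity of the Jost matrix on $\bCp$ then extends it to $k\in\bCpb$.

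The main work is in producing the asymptotics that feed this argument. Here I would use \eqref{8.44} and \eqref{8.45} for the large-$x$ behavior of $\xi_{N+1}(x)$ and $\xi'_{N+1}(x)$, the asymptotics \eqref{8.62} for $\Omega_{N+1}(x)^+$, and the Jost-solution asymptotics \eqref{2.9}, all established under the assumption $V\in L^1_2(\bR^+)$ so that $\tilde V\in L^1_1(\bR^+)$ by Theorem~\ref{theorem8.5}(b). The crucial algebraic step, the counterpart of \eqref{6.124}, is to show that the exponentially growing factor $L_{N+1}\,\tilde C_{N+1}$ combined with the Moore--Penrose inverse $(\tilde C_{N+1}\,L_{N+1}^\dagger\,L_{N+1}\,\tilde C_{N+1})^+$ collapses precisely to $\tilde P_{N+1}$ as defined in \eqref{8.65}. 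Writing $X:=L_{N+1}\,\tilde C_{N+1}$, this amounts to recognizing $\tilde P_{N+1}=X\,(X^\dagger X)^+X^\dagger$, which by \eqref{3.5}, \eqref{3.4}, and Theorem~\ref{theorem3.1}(e) equals $X X^+$, the orthogonal projection onto $\text{\rm{Ran}}[X]$. Since $L_{N+1}$ is invertible and $\tilde C_{N+1}$ has rank $\tilde m_{N+1}$, this identifies $\tilde P_{N+1}$ as an orthogonal projection of rank $\tilde m_{N+1}$. To see that $\text{\rm{Ran}}[\tilde P_{N+1}]=\text{\rm{Ker}}[\tilde J(i\tilde\kappa_{N+1})^\dagger]$, I would evaluate \eqref{8.64} at $k=i\tilde\kappa_{N+1}$, where the scalar factor equals $1$, to obtain $\tilde J(i\tilde\kappa_{N+1})^\dagger=J(i\tilde\kappa_{N+1})^\dagger(I-\tilde P_{N+1})$, giving the inclusion $\text{\rm{Ran}}[\tilde P_{N+1}]\subseteq\text{\rm{Ker}}[\tilde J(i\tilde\kappa_{N+1})^\dagger]$, and then match dimensions using the determinant count from part~(c).

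Parts (b)--(f) would follow quickly from \eqref{8.64}. For (b), I would take the adjoint of \eqref{8.64} for $k\in\bR$ and observe, as in \eqref{6.131}, that the prefactor $I-\frac{2i\tilde\kappa_{N+1}}{k+i\tilde\kappa_{N+1}}\tilde P_{N+1}$ acts as multiplication by the unimodular scalar $\frac{k-i\tilde\kappa_{N+1}}{k+i\tilde\kappa_{N+1}}$ on $\text{\rm{Ran}}[\tilde P_{N+1}]$ and as the identity on its orthogonal complement, so its adjoint equals its inverse; this yields \eqref{8.66} and hence invariance of the continuous part of $d\rho$ via \eqref{5.1}. For (c), taking the determinant of \eqref{8.64} and using that $\tilde P_{N+1}$ has $\tilde m_{N+1}$ eigenvalues equal to $1$ produces \eqref{8.67}. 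Part (d) then follows from \eqref{8.67} together with Theorems~3.11.1 and 3.11.16 of \cite{AW2021}, since $\det[\tilde J(k)]$ acquires a zero of order $\tilde m_{N+1}$ at $k=i\tilde\kappa_{N+1}$ while all other zeros of $\det[J(k)]$ in $\bCp$ are untouched. Parts (e) and (f) are obtained by substituting \eqref{8.64} into \eqref{6.90} and simplifying with \eqref{2.12} and the adjoint--inverse relation established in (b).

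Finally, for (g) I would mimic the proof of Theorem~\ref{theorem6.4}(g): introduce the intermediate quantity $\tilde h(k,x)$ equal to the bracketed factor on the right-hand side of \eqref{8.70}, verify by direct differentiation and use of \eqref{8.23} and the analog of the identity \eqref{6.134} (obtained by integrating $q_{23}'(k,x)=(k^2+\tilde\kappa_{N+1}^2)\,\xi_{N+1}(x)^\dagger f(k,x)$) that $\tilde h(k,x)$ solves \eqref{2.1} with $\tilde V(x)$, and then match the asymptotics \eqref{2.9} to identify $\tilde f(k,x)$ as the composition $\tilde h(k,x)\,[I-\frac{2i\tilde\kappa_{N+1}}{k+i\tilde\kappa_{N+1}}\tilde P_{N+1}]$. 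The hardest part throughout is the asymptotic bookkeeping in part~(a): because $\xi_{N+1}(x)$ grows like $e^{\tilde\kappa_{N+1}x}$ while $\Omega_{N+1}(x)^+$ decays like $e^{-2\tilde\kappa_{N+1}x}$, the surviving leading terms must be extracted carefully and shown to assemble into $\tilde P_{N+1}$ via the collapse identity above, with the invertibility of $L_{N+1}$ from Proposition~\ref{proposition8.4}(d) and the positivity of $\tilde{\mathbf H}_{N+1}$ playing essential roles.
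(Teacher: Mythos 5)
Your proposal is correct and follows essentially the same route as the paper, which itself establishes part (a) by repeating the $\Delta(k,x)$/parity argument of Theorem~\ref{theorem6.4}(a) with the asymptotics \eqref{8.44}, \eqref{8.45}, \eqref{8.62}, identifies $\tilde P_{N+1}$ as the orthogonal projection $XX^+$ with $X=L_{N+1}\tilde C_{N+1}$ via \eqref{8.73}, and derives (b)--(g) exactly as you describe. The only cosmetic difference is at the kernel identification: the paper gets $\text{\rm{Ker}}[\tilde J(i\tilde\kappa_{N+1})^\dagger]=\text{\rm{Ker}}[I-\tilde P_{N+1}]$ directly from the invertibility of $J(i\tilde\kappa_{N+1}),$ whereas you prove one inclusion and close the gap by a dimension count from part (c); both are valid.
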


\begin{proof} We establish \eqref{8.64} by proceeding as in the proof of Theorem~\ref{theorem6.4}(a), and for this
we use \eqref{2.9} for $\tilde f(k,x),$ the analog of \eqref{2.15} for $\tilde\varphi(k,x),$
the analog of \eqref{6.101}, as well
as \eqref{8.42}, \eqref{8.44}, \eqref{8.45}, and \eqref{8.62}.
To prove that $\tilde P_{N+1}$ is an orthogonal projection, we need to show that
$\tilde P_{N+1}$ is selfadjoint and it satisfies
\begin{equation}
\label{8.72}
\tilde P_{N+1}^2=\tilde P_{N+1}.
\end{equation}
Using \eqref{3.3}
and the fact that $\tilde C_{N+1}$ is selfadjoint, we prove that
$\tilde P_{N+1}$ given in \eqref{8.65} is selfadjoint. We establish \eqref{8.72}
by proceeding as follows. Using \eqref{8.60} we obtain
\begin{equation}\label{8.73}
\left[ \tilde C_{N+1}\, L_{N+1}^\dagger\, L_{N+1} \,\tilde C_{N+1}\right]^+ \left[ \tilde C_{N+1}\, L_{N+1}^\dagger\, L_{N+1}\, \tilde C_{N+1}\right]=
I\oplus 0= \tilde Q_{N+1}.
\end{equation}
Then, using \eqref{8.1}, \eqref{8.65}, and \eqref{8.73}, we confirm that \eqref{8.72} holds.
Since $L_{N+1}$ and $\tilde{\mathbf H}_{N+1}^{-1/2}$ are invertible matrices, 
using \eqref{8.60} and\eqref{8.65} we prove that the rank of $\tilde P_{N+1}$ is $\tilde m_{N+1}.$ 
To show that $\tilde P_{N+1}$ is the orthogonal projection onto the kernel of $\tilde J(i\tilde\kappa_{N+1})^\dagger,$
we proceed as follows.  Using
\eqref{8.64} and the fact that
$\tilde J(i\tilde\kappa_{N+1})$ is invertible, we obtain
\begin{equation*}
\text{\rm{Ker}}[\tilde J(i\tilde\kappa_{N+1})^\dagger]= \text{\rm{Ker}}[I-\tilde P_{N+1}],
\end{equation*}
which implies that the kernel of $\tilde J(i\tilde\kappa_{N+1})^\dagger$ is precisely the range of $ \tilde P_{N+1}.$ 
Hence, $\tilde P_{N+1}$ is the orthogonal projection onto the kernel of $\tilde J(i\tilde\kappa_{N+1})^\dagger.$
Thus, the proof of (a) is complete.
By using \eqref{8.64} and by proceeding as in the proof of 
Theorem~\ref{theorem6.4}, we establish the stated results in
(b), (c), (e), and (f). We note that (d) follows from \eqref{8.67} and Theorems~3.11.1 and 3.11.6 of \cite{AW2021}.
Finally, we prove (g) as in the proof of Theorem~\ref{theorem6.4}(g).
\end{proof}

The following remark is the counterpart of Remark~\ref{remark6.5}. It stresses the fact that the perturbed Jost solution $\tilde f(k,x)$ appearing in
\eqref{8.70} does not have a singularity at $k=i\tilde\kappa_{N+1}$ despite the factor $k^2+\tilde\kappa_{N+1}^2$
in a denominator on the right-hand side of \eqref{8.70}.

\begin{remark}\label{remark8.7}{\rm
In this remark we show that the right-hand side of \eqref{8.70} has a removable singularity at $k=i\tilde\kappa_{N+1},$
and hence the factor $k^2+\tilde\kappa_{N+1}^2$
in a denominator there does not cause a singularity for the perturbed Jost solution $\tilde f(k,x).$
For the proof we proceed as follows.
From \eqref{8.71} we get
\begin{equation}\label{8.75}
q_{23}(i\tilde\kappa_{N+1},x)=\xi'_{N+1}(x)^\dagger\,f(i\tilde\kappa_{N+1},x)-\xi_{N+1}(x)^\dagger
\,f'(i\tilde\kappa_{N+1},x).
\end{equation}
Since $\xi_{N+1}(x)$ and $f(i\tilde\kappa_{N+1},x)$ are matrix-valued solutions to \eqref{2.1} with $ k= i\tilde\kappa_{N+1},$ 
the Wronskian appearing on the right-hand side in \eqref{8.75} is independent of $x.$
Thus, the constant value of $q_{23}(i\tilde\kappa_{N+1},x)$ can be evaluated by letting $x\to+\infty$
in \eqref{8.75}.
Using  \eqref{2.9}, \eqref{8.44}, and \eqref{8.45} on the right-hand side of \eqref{8.75}, we obtain
\begin{equation}\label{8.76}
q_{23}(i\tilde\kappa_{N+1},x)= 2 \,\tilde\kappa_{N+1}\,  \tilde C_{N+1} \,L_{N+1}^\dagger.
\end{equation}
We can isolate the part of the term containing $k^2+\tilde\kappa_{N+1}^2$ on the right-hand side of \eqref{8.70} by defining
$q_{24}(k,x)$ as
\begin{equation}\label{8.77}
q_{24}(k,x):=\ds\frac{q_{23}(k,x)}{k^2+\tilde\kappa_{N+1}^2}
\left[I- \frac{2i \tilde\kappa_{N+1}}{k+i \tilde\kappa_{N+1} } \tilde P_{N+1}\right].
\end{equation}
Comparing \eqref{8.70} and \eqref{8.77} we see that $\tilde f(k,x)$ does not have a singularity
at $k=i\tilde\kappa_{N+1}$ provided the singularity of $q_{24}(k,x)$ at $k=i\tilde\kappa_{N+1}$ is removable.
We can write the right-hand side of \eqref{8.77} as a sum of two terms by letting
  \begin{equation*}
q_{24}(k,x)=q_{25}(k,x)+q_{26}(k,x),
\end{equation*}
where we have defined
\begin{equation}\label{8.79}
q_{25}(k,x):=\ds\frac{q_{23}(k,x)-q_{23}(i\tilde\kappa_{N+1},x)}{k^2+\tilde\kappa_{N+1}^2}
\,
\left[I- \frac{2i \tilde\kappa_{N+1}}{k+i \tilde\kappa_{N+1} } \tilde P_{N+1}\right],
\end{equation}
\begin{equation}\label{8.80}
q_{26}(k,x):=\ds\frac{q_{23}(i\tilde\kappa_{N+1},x)}{k^2+\tilde\kappa_{N+1}^2}
\,
\left[I- \frac{2i \tilde\kappa_{N+1}}{k+i \tilde\kappa_{N+1} } \tilde P_{N+1}\right].
\end{equation}
Since $f(k,x)$ and $f'(k,x)$ are analytic in $k$ at $k=i\tilde\kappa_{N+1},$ the quantity
$q_{23}(k,x)-q_{23}(i\tilde\kappa_{N+1},x)$ has the behavior
$O(k-i\tilde\kappa_{N+1})$ as $k\to i\tilde\kappa_{N+1},$ and hence the right-hand side of \eqref{8.79} has a removable
singularity at $k=i\tilde\kappa_{N+1}.$
On the other hand, using \eqref{8.76} in \eqref{8.80} we get
\begin{equation}\label{8.81}
q_{26}(k,x)=\ds\frac{2 \,\tilde\kappa_{N+1}\,  \tilde C_{N+1} \,L_{N+1}^\dagger}{k^2+\tilde\kappa_{N+1}^2}
\,
\left[I- \frac{2i \tilde\kappa_{N+1}}{k+i \tilde\kappa_{N+1} } \tilde P_{N+1}\right].
\end{equation}
From \eqref{8.1}, \eqref{8.65}, and \eqref{8.73}, we conclude that
\begin{equation}\label{8.82}
\tilde C_{N+1} L_{N+1}^\dagger \tilde P_{N+1}= \tilde C_{N+1} L_{N+1}^\dagger.
\end{equation}
Using \eqref{8.82} on the right-hand side of \eqref{8.80}, we obtain
\begin{equation*}
q_{26}(k,x)= \frac{2 \tilde\kappa_{N+1}\, \tilde C_{N+1}\, L_{N+1}^\dagger}{(k +i\tilde\kappa_{N+1})^2},
\end{equation*}
which indicates that $q_{26}(k,x)$ is analytic at $k=i\tilde\kappa_{N+1}.$
Thus, $q_{24}(k,x)$ has only a removable singularity, confirming that 
$\tilde f(k,x)$ does not have a singularity at $k=i\tilde\kappa_{N+1}.$
}
\end{remark}

The next theorem is the counterpart of Theorem~\ref{theorem6.6}. When we add a new bound state to the
Schr\"odinger operator, it shows that the existing  orthogonal projection matrices and Gel'fand--Levitan normalization matrices
remain unchanged. It also shows how the spectral measure 
changes when the bound state is added.
As in Theorem~\ref{theorem8.6} we use the stronger assumption that the unperturbed potential $V$ belongs to
$L^1_2(\mathbb R^+)$ rather than $L^1_1(\mathbb R^+)$ so that the perturbed potential
$\tilde V$ belongs to $L^1_1(\mathbb R^+),$ which is assured by Theorem~\ref{theorem8.5}(b).
We recall that that this is a sufficiency assumption  because the asymptotic estimates
in Theorem~\ref{theorem8.5}(b)
on the potentials
are not necessarily sharp.

\begin{theorem}
\label{theorem8.8} Consider
the unperturbed Schr\"odinger operator with the potential
$V$ satisfying \eqref{2.2} and belonging to $L^1_2(\mathbb R^+),$  
 with the selfadjoint
boundary condition \eqref{2.5} described by the boundary matrices $A$ and $B$ satisfying
\eqref{2.6} and \eqref{2.7}, the regular solution $\varphi(k,x)$ satisfying the initial conditions
\eqref{2.10}, 
the Jost matrix $J(k)$ defined as in \eqref{2.11},
the spectral measure $d\rho$ as in \eqref{5.1}, and
containing
$N$ bound states with the energies $-\kappa_j^2,$ 
the Gel'fand--Levitan normalization matrices $C_j,$ and the orthogonal projections
$Q_j$ onto $\text{\rm{Ker}}[J(i\kappa_j)]$ 
for $1\le j\le N.$
Let us use a tilde to identify the quantities
associated with the perturbed Schr\"odinger operator
so that
$\tilde\varphi(k,x)$ is the perturbed regular solution satisfying the initial
conditions \eqref{5.17}, $\tilde J(k)$ is the perturbed Jost matrix
defined as in \eqref{6.89}, 
$\tilde Q_j$ is the orthogonal projection onto $\text{\rm{Ker}}[\tilde J(i\tilde\kappa_j)],$
$\tilde C_j$ is the Gel'fand--Levitan normalization matrix defined
as in \eqref{3.33}, \eqref{3.34}, and \eqref{3.36} but
by using $\tilde Q_j$ instead of $Q_j$ and
by using $\tilde\varphi(i\tilde\kappa_j,x)$ instead of $\varphi(i\kappa_j,x)$ for $1\le j\le N+1,$ and $d\tilde\rho$ is the
perturbed spectral measure defined
as
\begin{equation}
\label{8.84}
d\tilde\rho=\begin{cases}
\ds\frac{\sqrt{\lambda}}{\pi}\,\left(\tilde J(k)^\dagger\,\tilde J(k)\right)^{-1}\,d\lambda,\qquad \lambda\ge 0,
\\
\noalign{\medskip}
\ds\sum_{j=1}^{N+1} \tilde C_j^2\,\delta(\lambda-\tilde\lambda_j)\,d\lambda,
\qquad \lambda<0,\end{cases}
\end{equation}
where $\tilde\lambda_j:= -\tilde\kappa_j^2$ with $\tilde\lambda_j$ for $1\le j\le N+1$ denoting the eigenvalues of the perturbed operator.
It is understood that $\tilde\kappa_j=\kappa_j$ for $1\le j\le N.$
We then have the following:

\begin{enumerate}

\item[\text{\rm(a)}] Under the perturbation,
the projection matrices $Q_j$ for $1\le j\le N $ remain unchanged, i.e. we have
 \begin{equation*}
\tilde Q_j=Q_j,\qquad 1\le j\le N.
\end{equation*}

\item[\text{\rm(b)}] Under the perturbation, the Gel'fand--Levitan normalization matrices  for $1\le j\le N$ remain
 unchanged, i.e. we have
 \begin{equation*}
\tilde C_j=C_j,\qquad 1\le j\le N.\end{equation*}

\item[\text{\rm(c)}] 
The orthogonal projection onto the kernel of $\tilde J(i\tilde\kappa_{N+1})$ is the matrix $\tilde Q_{N+1}$ appearing in \eqref{8.1}.
The matrix $\tilde C_{N+1}$ appearing in
\eqref{8.1} is indeed
the Gel'fand--Levitan normalization matrix for the bound-state with the energy $-\tilde\kappa_{N+1}^2$ for
the perturbed Schr\"odinger operator.

\item[\text{\rm(d)}] The perturbed spectral measure $d\tilde\rho$ is related to the unperturbed spectral measure $d\rho$ as
\begin{equation*}
d\tilde\rho=d\rho+\tilde C_{N+1}^2\,\delta(\lambda-\tilde\lambda_{N+1})\,d\lambda.
\end{equation*}
\end{enumerate}

\end{theorem}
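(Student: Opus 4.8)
The plan is to prove Theorem~\ref{theorem8.8} by following the structure of the proof of Theorem~\ref{theorem6.6}, adapting the addition-of-bound-state machinery already developed in Section~\ref{section8}. The four parts are not independent: part~(c) is essentially a bookkeeping statement that identifies $\tilde Q_{N+1}$ and $\tilde C_{N+1}$ as the genuine perturbed spectral data at the new bound state, part~(a) isolates the kernels of the remaining bound states, part~(b) shows the associated normalization matrices are untouched, and part~(d) assembles everything into the spectral-measure increment. I would carry them out in the order (a), (c), (b), (d).

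For part~(a), I would argue exactly as in the proof of Theorem~\ref{theorem6.6}(a): the relation \eqref{8.64} shows that $\tilde J(k)$ and $J(k)$ differ by premultiplication by the invertible factor $\left[I-\tfrac{2i\tilde\kappa_{N+1}}{k+i\tilde\kappa_{N+1}}\,\tilde P_{N+1}\right]$, which for each $k=i\kappa_j$ with $1\le j\le N$ (so $k\ne i\tilde\kappa_{N+1}$) is an invertible $n\times n$ matrix. Hence $\text{\rm{Ker}}[\tilde J(i\kappa_j)]=\text{\rm{Ker}}[J(i\kappa_j)]$, so the orthogonal projections coincide and $\tilde Q_j=Q_j$. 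Part~(c) then follows from Theorem~\ref{theorem8.6}(a), which already establishes that $\tilde P_{N+1}$ in \eqref{8.65} is the orthogonal projection onto $\text{\rm{Ker}}[\tilde J(i\tilde\kappa_{N+1})^\dagger]$ of rank $\tilde m_{N+1}$; by the bijection between the two kernels (Theorem~3.11.1 of \cite{AW2021}), $\tilde Q_{N+1}$ is the orthogonal projection onto $\text{\rm{Ker}}[\tilde J(i\tilde\kappa_{N+1})]$, and the construction \eqref{8.2}--\eqref{8.4} guarantees $\tilde C_{N+1}$ is the corresponding Gel'fand--Levitan normalization matrix via the defining integral conditions \eqref{3.33}--\eqref{3.36}.

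The substantive part is (b), and I expect it to be the main obstacle. The strategy mirrors Theorem~\ref{theorem6.6}(b): by \eqref{3.36} and part~(a), it suffices to show $\mathbf G_j$ is unchanged, i.e. that $\int_0^\infty dx\,\mathbf S_j(x)^\dagger\,\mathbf S_j(x)$ is invariant, where $\mathbf S_j(x):=\varphi(i\kappa_j,x)\,Q_j$ and $\tilde{\mathbf S}_j(x):=\tilde\varphi(i\kappa_j,x)\,Q_j$. I would evaluate \eqref{8.42} at $k=i\kappa_j$, postmultiply by $Q_j$, form the adjoint, and compute $\tilde{\mathbf S}_j^\dagger\tilde{\mathbf S}_j-\mathbf S_j^\dagger\mathbf S_j$ as in \eqref{6.160}--\eqref{6.169}. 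The key is that the three resulting terms combine into a total $x$-derivative $\tfrac{d}{dx}\Upsilon(x)$ of a quantity $\Upsilon(x)=q(x)^\dagger\,\Omega_{N+1}(x)^+\,q(x)$, where $q(x)=\xi'_{N+1}(x)^\dagger\,\Phi_j(x)-\xi_{N+1}(x)^\dagger\,\Phi'_j(x)$ and I use the analog of \eqref{8.23} for $[\Omega_{N+1}(x)^+]'$. The integral then reduces to $\Upsilon(+\infty)-\Upsilon(0)$. For the boundary terms: $\Upsilon(0)$ vanishes by \eqref{2.6} after using the initial-condition formulas $\Phi_j(0)=AC_j$, $\Phi_j'(0)=BC_j$, $\xi_{N+1}(0)=AC_{N+1}$, $\xi'_{N+1}(0)=BC_{N+1}$, and $\Omega_{N+1}(0)=\tilde Q_{N+1}$. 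The delicate estimate is $\Upsilon(+\infty)=0$: here, in contrast to the removal case \eqref{6.174} where $W_N(x)^+$ grows like $e^{2\kappa_N x}$, the factor $\Omega_{N+1}(x)^+$ decays like $e^{-2\tilde\kappa_{N+1}x}$ by \eqref{8.62}, while $q(x)=O(e^{-(\tilde\kappa_{N+1}+\kappa_j)x})$ from the asymptotics \eqref{8.44}, \eqref{8.45} and their $\Phi_j$-analogs; combining these gives $\Upsilon(x)=O(e^{-2\kappa_j x})\to 0$. This sign/growth reversal is the one place where the argument genuinely differs from Section~\ref{section6}, and care is needed to verify the total-derivative identity with the decaying $\Omega_{N+1}(x)^+$.

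Finally, for part~(d) I would assemble the pieces: on the continuous part $\lambda\ge 0$, invariance follows from \eqref{8.66}, so $d\tilde\rho$ and $d\rho$ agree there by \eqref{5.1} and \eqref{8.84}; on the discrete part $\lambda<0$, parts (a), (b), (c) together with Theorem~\ref{theorem8.6}(d) show that the first $N$ point masses $\tilde C_j^2\,\delta(\lambda-\lambda_j)=C_j^2\,\delta(\lambda-\lambda_j)$ are unchanged, and a single new point mass $\tilde C_{N+1}^2\,\delta(\lambda-\tilde\lambda_{N+1})$ is adjoined at $\tilde\lambda_{N+1}=-\tilde\kappa_{N+1}^2$. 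Subtracting the two measures yields $d\tilde\rho-d\rho=\tilde C_{N+1}^2\,\delta(\lambda-\tilde\lambda_{N+1})\,d\lambda$, which is the claimed formula. This completes the proof.
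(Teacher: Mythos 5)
Your treatment of parts (a), (b), and (d) follows the paper's route: (a) is the kernel-invariance argument of Theorem~\ref{theorem6.6}(a) applied to \eqref{8.64}, (b) is the total-derivative computation of Theorem~\ref{theorem6.6}(b) transplanted to the addition setting, and (d) is the assembly you describe. One slip in (b): since $i\tilde\kappa_{N+1}$ is not an eigenvalue of the unperturbed operator, $\xi_{N+1}(x)=\varphi(i\tilde\kappa_{N+1},x)\,\tilde C_{N+1}$ \emph{grows} like $e^{\tilde\kappa_{N+1}x}$ by \eqref{8.44}, so your Wronskian-type quantity satisfies $q(x)=O(e^{(\tilde\kappa_{N+1}-\kappa_j)x})$, not $O(e^{-(\tilde\kappa_{N+1}+\kappa_j)x})$; the conclusion $\Upsilon(x)=O(e^{-2\kappa_j x})$ is still correct because $\Omega_{N+1}(x)^+=O(e^{-2\tilde\kappa_{N+1}x})$ absorbs the growth, but as written your two asymptotics are inconsistent with each other.

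The genuine gap is part (c), which you dismiss as bookkeeping. The matrices $\tilde Q_{N+1}$ and $\tilde C_{N+1}$ are chosen \emph{a priori} as input data in \eqref{8.1}--\eqref{8.4}, before the perturbed operator exists; the content of (c) is the a posteriori verification that they coincide with the spectral data of the operator actually constructed. Your appeal to ``the bijection between the two kernels'' does not accomplish this: Theorem~3.11.1 of \cite{AW2021} gives a bijection between $\text{\rm{Ker}}[\tilde J(i\tilde\kappa_{N+1})]$ and $\text{\rm{Ker}}[\tilde J(i\tilde\kappa_{N+1})^\dagger]$, but these are in general \emph{different} subspaces of the same dimension, so knowing that $\tilde P_{N+1}$ projects onto the latter tells you only the dimension of the former, not that it equals the preassigned subspace $\tilde Q_{N+1}\,\mathbb C^n$. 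Likewise, \eqref{8.2}--\eqref{8.4} do not by themselves guarantee that $\tilde C_{N+1}$ satisfies \eqref{3.33}--\eqref{3.36} for the \emph{perturbed} regular solution; that must be checked. The paper closes the gap by deriving $\tilde\varphi(i\tilde\kappa_{N+1},x)\,\tilde C_{N+1}=\xi_{N+1}(x)\,\Omega_{N+1}(x)^+$ from \eqref{8.36}, integrating (via \eqref{8.23}) to obtain $\int_0^x dy\,[\tilde\varphi\,\tilde C_{N+1}]^\dagger\,\tilde\varphi\,\tilde C_{N+1}=\tilde Q_{N+1}-\Omega_{N+1}(x)^+$, and letting $x\to+\infty$ with \eqref{8.62} to get $\int_0^\infty dx\,[\tilde\varphi\,\tilde Q_{N+1}]^\dagger\,\tilde\varphi\,\tilde Q_{N+1}=\tilde{\mathbf H}_{N+1}\,\tilde Q_{N+1}$. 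This shows the columns of $\tilde\varphi(i\tilde\kappa_{N+1},x)\,\tilde Q_{N+1}$ are square-integrable bound-state solutions, hence $\tilde Q_{N+1}\,\mathbb C^n\subset\text{\rm{Ker}}[\tilde J(i\tilde\kappa_{N+1})]$, and equality follows by counting dimensions against the multiplicity $\tilde m_{N+1}$ supplied by Theorem~\ref{theorem8.6}(d); the same integral identity is exactly what certifies $\tilde C_{N+1}$ as the Gel'fand--Levitan normalization matrix. Without this computation, (c) --- and therefore the discrete part of (d) --- is unproved.
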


\begin{proof}  The proof of (a) is similar to the proof of Theorem~\ref{theorem6.6}(a).
 The proof of (b) is obtained by proceeding as in the proof of Theorem~\ref{theorem6.6}(b)
and by using 
\eqref{6.50} and \eqref{6.51}, the counterparts for
\eqref{6.50} and \eqref{6.51} for the adjoint quantities,
the analogs of those four asymptotics but by replacing $N$ in them with
$j$ for $1\le j\le N+1,$ and by also using 
\eqref{8.44}, \eqref{8.45}, and \eqref{8.62}.
For the proof of (c) we proceed as follows.
From Theorem~\ref{theorem8.6}(d) we already know that the multiplicity of
the added bound state at $k=i\tilde\kappa_{N+1}$ is equal to $\tilde m_{N+1}.$
From \eqref{8.1}, \eqref{8.16}, \eqref{8.17}, \eqref{8.21}, and \eqref{8.36}, we obtain
\begin{equation}\label{8.88}
\tilde\varphi(i\tilde\kappa_{N+1},x) \,\tilde C_{N+1} = \xi_{N+1}(x)\, \Omega(x)^+_{N+1}.
\end{equation}
Using \eqref{8.17}, \eqref{8.23}, \eqref{8.88}, and the hermitian property of $\Omega_{N+1}(x)^+,$ we get
\begin{equation}\label{8.89}
\int_0^x dy\,[\tilde\varphi(i\tilde\kappa_{N+1},y)\, \tilde C_{N+1}]^\dagger \,\tilde\varphi(i\tilde\kappa_{N+1},y)\, \tilde C_{N+1}= \tilde Q_{N+1}- \Omega_{N+1}(x)^+.
\end{equation}
Using \eqref{8.62} in \eqref{8.89} we have
\begin{equation}\label{8.90}
\int_0^\infty dx\,[\tilde\varphi(i\tilde\kappa_{N+1},x)\, \tilde C_{N+1}]^\dagger\, \tilde\varphi(i\tilde\kappa_{N+1},x) \,\tilde C_{N+1}= \tilde Q_{N+1}.
\end{equation}
On the other hand, using \eqref{8.2}, \eqref{8.8}, and \eqref{8.90}, we obtain
\begin{equation}\label{8.91}
\int_0^\infty dx \,[\tilde\varphi(i\tilde\kappa_{N+1},x) \,\tilde Q_{N+1}]^\dagger\, \tilde\varphi(i\tilde\kappa_{N+1},x)\, \tilde Q_{N+1}
 = \tilde{\mathbf H}_{N+1} \,\tilde Q_{N+1}.
\end{equation}
From \eqref{8.91} we conclude that each column of $\tilde\varphi(i\tilde\kappa_{N+1},x)\, \tilde Q_{N+1}$ is square integrable, and hence  
$\tilde\varphi(i\tilde\kappa_{N+1},x) \,\tilde Q_{N+1}$ is a matrix-valued
bound-state solution to the perturbed Schr\"odinger equation
with the energy $-\tilde\kappa_{N+1}^2.$ Since the rank of $\tilde Q_{N+1}$ is $\tilde m_{N+1},$ any basis
for $\tilde Q_{N+1}\,\mathbb C^n$ contains exactly $\tilde m_{N+1}$ linearly independent vectors.
 Let $\{\tilde w^{(l)}_{N+1}\}_{l=1}^{\tilde m_{N+1}}$ be an orthonormal basis for $\tilde Q_{N+1}\,\mathbb C^n.$  Using
the analog of the second equality of \eqref{3.14}, we get
\begin{equation*}
\tilde Q_{N+1}=\ds\sum_{l=1}^{\tilde m_{N+1}} \tilde w^{(l)}_{N+1}\,(\tilde w^{(l)}_{N+1})^\dagger,
\end{equation*}
which yields
\begin{equation}\label{8.93}
\tilde\varphi(i\tilde\kappa_{N+1},x) \,\tilde Q_{N+1}= \ds\sum_{l=1}^{\tilde m_{N+1}}
[\tilde\varphi(i\tilde\kappa_{N+1},x) \,  \tilde w^{(l)}_{N+1}]\, (\tilde w^{(l)}_{N+1})^\dagger.
\end{equation}
From \eqref{8.93} it follows that each column vector $\tilde\varphi(i\tilde\kappa_{N+1},x)\, \tilde w^{(l)}_{N+1}$ 
for $1\le l\le \tilde m_{N+1}$ is a bound state for the
perturbed Schr\"odinger operator
with the eigenvalue $-\tilde\kappa_{N+1}^2.$ Then, from Theorem~3.11.1(e) of \cite{AW2021} we conclude that
each column vector $\tilde w^{(l)}_{N+1}$ for
$1\le l\le \tilde m_{N+1}$ belongs to the kernel of $\tilde J(i\tilde\kappa_{N+1}).$ 
Recall that from Theorem~\ref{theorem8.6}(d) we already know that the multiplicity of
the added bound state at $k=i\tilde\kappa_{N+1}$ is equal to $\tilde m_{N+1}.$ Furthermore, by Theorem~3.11.1(d) of \cite{AW2021} 
we know that the dimension of the kernel of $\tilde J(i\tilde\kappa_{N+1})$ is precisely $\tilde m_{N+1}.$
This shows that the map $\tilde Q_{N+1}$ is an orthogonal projection onto the kernel of $\tilde J(i\tilde\kappa_{N+1}).$
Moreover, from \eqref{3.33}, \eqref{3.34}, \eqref{3.36}, and \eqref{8.91} it follows that $\tilde C_{N+1}$ is the Gel'fand--Levitan normalization matrix for the bound state with the energy 
$-\tilde\kappa_{N+1}^2$ for the perturbed Schr\"odinger operator. This completes the proof of (c).
We obtain the proof of (d) by using (b) and (c) with the help of \eqref{8.66}, Theorem~\ref{theorem8.6}(d), and \eqref{8.84}.
\end{proof}

In (IV.1.30) of \cite{CS1989} it is claimed that, when adding a bound state of energy $-\kappa^2$ to the spectrum of
the scalar-valued Schr\"odinger operator with the Dirichlet boundary condition, the potential increment 
$\tilde V(x)- V(x)$ decays as $F e^{-2\kappa x}$ as $x\to+\infty,$ where $F$ is a nonzero constant. This is incorrect, as we demonstrate with an explicitly solved
example below.

\begin{example}
\label{example8.9} 
\normalfont
Consider the scalar case, i.e. when $n=1,$ with the unperturbed potential
\begin{equation}
\label{8.94}
V (x)=- \ds\frac{8 \,e^{2 x}}
{\left(1+e^{2  x}\right)^2},\qquad x\in\mathbb R^+,
 \end{equation}
 and the boundary matrices $A$ and $B,$ which are scalars in this case, are chosen as
\begin{equation}
\label{8.95}
A=0,\quad B=-1.
 \end{equation}
Using \eqref{8.95} in \eqref{2.5}, we observe that the
corresponding boundary condition is the Dirichlet boundary condition $\psi(0)=0.$
Using \eqref{8.94} and \eqref{8.95} in \eqref{2.1}, \eqref{2.9}, \eqref{2.11}, and \eqref{2.12}, we obtain
the corresponding Jost solution $f(k,x),$ regular solution $\varphi(k,x),$ Jost matrix $J(k),$ and the scattering matrix
 $S(k)$ as
 \begin{equation*}
f(k,x)= e^{ikx}\left[1-\ds\frac{2\,i }{(k+i)
\left(1+e^{2  x}\right)}\right],
 \end{equation*}
  \begin{equation*}
\varphi(k,x)=-\ds\frac{k\,\sin(kx)+\cos(kx)\, \tanh x}
{k^2+1},
 \end{equation*}
  \begin{equation}
\label{8.98}
J(k)=-\ds\frac{k}{k+i},
 \end{equation}
 \begin{equation*}
S(k)=-\ds\frac{k+i}{k-i},
 \end{equation*}
  which are all scalar valued.
Since $J(k)$ in \eqref{8.98} does not have any zeros on the positive
imaginary axis in the complex $k$-plane, we see that the unperturbed Schr\"odinger operator does not have any bound states.
Using the method of Section~\ref{section8}, we
add a bound state to the unperturbed Schr\"odinger operator
at $k=i\kappa_1$ with the Gel'fand--Levitan normalization constant $C_1,$ where we choose
 \begin{equation*}
\kappa_1=1,\quad 
C_1=4.
 \end{equation*}
The corresponding perturbed Jost matrix $\tilde J(k),$ perturbed boundary matrices $\tilde A$ and $\tilde B,$
and perturbed potential $\tilde V(x)$
 are obtained as
 \begin{equation*}
\tilde J(k)=-\ds\frac{k(k-i)}{(k+i)^2},
 \end{equation*}
 \begin{equation*}
\tilde A=0,\quad \tilde B=-1,
 \end{equation*}
 \begin{equation}
\label{8.103}
\tilde V(x)=\ds\frac{q_{27}(x)+q_{28}(x)}{\left[ (1-2x)\,\cosh x+\left(1+4 x^2+\cosh(2x)\right)\sinh x\right]^2},
\end{equation}
where we have defined
 \begin{equation*}
q_{27}(x):=7-24 x+32 x^4+64 x^2 \cosh(2x)-(16+32x)\,\sinh(2x),
\end{equation*}
 \begin{equation*}
q_{28}(x):=-(9+8x^2)\,\cosh(4x)+(-2+20x)\,\sinh(4x).
\end{equation*}
We observe that the unperturbed potential $V(x)$ in \eqref{8.94} behaves
as $O(e^{-2 x})$ and the perturbed potential $\tilde V(x)$ in \eqref{8.103}
behaves as  $O(x^2 e^{-2 x})$ as $x\to+\infty,$ and hence the potential difference
$\tilde V(x)-V(x)$ also behaves as
 $O(x^2 e^{-2 x})$ as $x\to+\infty.$ This is in contradiction to
 the result stated in (IV.1.30) of \cite{CS1989}, which incorrectly predicts
 $\tilde V(x)-V(x)=O(e^{-2x})$ as $x\to+\infty.$ We remark that the decay rate for $\tilde V(x)-V(x)$
 as $x\to+\infty$ in this example is consistent with our general estimate
 \eqref{8.40}. 
\end{example}

\section{The transformation to increase the multiplicity of a bound state}
\label{section9}

This section is complementary to Section~\ref{section8}.
In Section~\ref{section8} we have started with the unperturbed Schr\"odinger operator and added a new bound
state to the discrete spectrum without changing the continuous spectrum. We have developed the transformation formulas
for all relevant quantities under the aforementioned perturbation.
In this complementary section, we still do not change the continuous spectrum, we do not add any new bound states or remove any
existing bound states, but we only increase the multiplicity of one of the existing bound states. Under such a perturbation, we
show how all relevant quantities transform. We omit the proofs as those proofs are similar to the
proofs of the results in Section~\ref{section8}.

The changing of the multiplicity of a bound states does not apply in the scalar case, i.e. when $n=1,$ because 
in the scalar case all bound states have multiplicity one. Hence, we assume that
we are in the matrix case with $n\ge 2$ and that there exists at least one bound states, i.e. we have $N\ge 1.$
As in the previous sections, we use a tilde to identify the perturbed
quantities obtained after the multiplicity of the bound state is increased. We assume that our
unperturbed potential $V$
satisfies \eqref{2.2} and belongs
to $L^1_1(\mathbb R^+).$ 
In our notation, our unperturbed regular solution is
$\varphi(k,x),$ unperturbed Jost solution is $f(k,x),$ 
unperturbed Jost matrix is $J(k),$ unperturbed scattering matrix is $S(k),$ unperturbed 
matrix $Q_j$ corresponds to the
orthogonal projection onto the kernel of
$J(i\kappa_j)$ for $1\le j\le N,$
and 
unperturbed
boundary matrices are given by $A$ and $B.$
Our perturbed potential is $\tilde V(x),$ perturbed regular solution is
$\tilde\varphi(k,x),$ perturbed Jost solution is $\tilde f(k,x),$ 
perturbed Jost matrix is $\tilde J(k),$ perturbed scattering matrix is $\tilde S(k),$ 
and perturbed
boundary matrices are given by $\tilde A$ and $\tilde B.$

Without loss of generality, we increase the multiplicity of the bound state at $k=i\kappa_N$ from $m_N$ to $\tilde m_N$ by
the positive integer $m_{N\text{\rm{i}}},$ which is defined as
\begin{equation}
\label{9.1}
m_{N\text{\rm{i}}}:=\tilde m_N-m_N.
\end{equation}
Thus, $m_{N\text{\rm{i}}}$ satisfies the restriction
$1\le m_{N\text{\rm{i}}}\le n-m_N.$
By using the subscript $N{\text{\rm{i}}}$ in \eqref{9.1}, we indicate that we refer to the $N$th bound state at $k=i\kappa_N$ and that we 
increase its multiplicity.
We recall that $Q_N$ denotes the orthogonal projection onto
the kernel of $J(i\kappa_N),$ the positive integer $m_N$ corresponds to the dimension of
$\text{\rm{Ker}}[J(i\kappa_N)],$ and we use $C_N$ to denote the Gel'fand--Levitan normalization matrix
associated with the bound state at $k=i\kappa_N$ in the unperturbed case.
We introduce the $n\times n$ matrix $\tilde Q_{N\text{\rm{i}}}$
with the rank $m_{N\text{\rm{i}}}$  in order to denote the orthogonal projection 
onto a subspace of the orthogonal complement in 
$\mathbb C^n$ of $Q_N \,\mathbb C^n.$ Thus, we have
\begin{equation}
\label{9.2}
\tilde Q_{N\text{\rm{i}}}\, \mathbb C^n \subset \left(Q_N \,\mathbb C^n\right)^\perp.
\end{equation}

We use an $n\times n$ nonnegative hermitian matrix $\tilde{\mathbf G}_{N\text{\rm{i}}}$ so that we have the equalities
\begin{equation*}
\tilde{\mathbf G}_{N\text{\rm{i}}}\,\tilde Q_{N\text{\rm{i}}}=\tilde Q_{N\text{\rm{i}}}\,\tilde{\mathbf G}_{N\text{\rm{i}}} = \tilde{\mathbf G}_{N\text{\rm{i}}},
\end{equation*}
which is the analog of \eqref{8.3}.
We further assume that the restriction of  $\tilde{\mathbf G}_{N\text{\rm{i}}}$ to the subspace $\tilde Q_{N\text{\rm{i}}}\, \mathbb C^n$ is invertible.
 We define the $n\times n$ matrix $\tilde{\mathbf H}_{N\text{\rm{i}}}$ as
\begin{equation*}
\tilde{\mathbf H}_{N\text{\rm{i}}}:= I- \tilde Q_{N\text{\rm{i}}}+\tilde{\mathbf G}_{N\text{\rm{i}}},
\end{equation*}
which is analogous to \eqref{8.4}. By proceeding as in Section~\ref{section8}, 
we prove that $\tilde{\mathbf H}_{N\text{\rm{i}}}$ is a nonnegative hermitian matrix and that it
satisfies
\begin{equation*}
\tilde{\mathbf H}_{N\text{\rm{i}}}\, \tilde Q_{N\text{\rm{i}}}= \tilde Q_{N\text{\rm{i}}}\,\tilde{\mathbf H}_{N\text{\rm{i}}},
\end{equation*}
which is analogous to \eqref{8.8}.
By proceeding as in \eqref{8.6}, we
prove that the matrix $\tilde{\mathbf H}_{N\text{\rm{i}}}$ is invertible, and hence  it is a positive matrix.
Thus, the positive matrix $\tilde{\mathbf H}_{N\text{\rm{i}}}^{1/2}$ and its inverse
$\tilde{\mathbf H}_{N\text{\rm{i}}}^{-1/2}$ are well defined.
Analogous to \eqref{8.2}, we define the $n\times n$ matrix $\tilde C_{N\text{\rm{i}}}$ as
\begin{equation}\label{9.6}
\tilde C_{N\text{\rm{i}}}:=\tilde{\mathbf H}_{N\text{\rm{i}}}^{-1/2}\, \tilde Q_{N\text{\rm{i}}}.
\end{equation}
It follows that $\tilde C_{N\text{\rm{i}}}$ is hermitian and nonnegative, its rank is equal to $m_{N\text{\rm{i}}},$ and
it satisfies the equalities
\begin{equation*}
\tilde C_{N\text{\rm{i}}}\,\tilde Q_{N\text{\rm{i}}}= \tilde Q_{N\text{\rm{i}}} \,\tilde C_{N\text{\rm{i}}}=  \tilde C_{N\text{\rm{i}}}.
\end{equation*}

Proceeding as in the proof of
Proposition~\ref{proposition8.4}(d) and using the analog of \eqref{8.29},
we can express the matrix solution $\varphi(i\kappa_N, x) \,\tilde Q_{N\text{\rm{i}}}$ to the Schr\"odinger  equation 
\eqref{2.1} with $k= i\kappa_N$ as
\begin{equation}\label{9.8}
\varphi(i\kappa_N,x) \,\tilde Q_{N\text{\rm{i}}}= f(i\kappa_N,x)\,K_{N\text{\rm{i}}} \,
\tilde Q_{N\text{\rm{i}}} +  g(i\kappa_N,x)\, L_{N\text{\rm{i}}}\,\tilde Q_{N\text{\rm{i}}},
\end{equation}
for some $n \times n$ matrices $K_{N\text{\rm{i}}}$ and $L_{N\text{\rm{i}}}.$ We claim that the restriction of  
$L_{N\text{\rm{i}}}\,\tilde Q_{N\text{\rm{i}}}$ to 
the subspace $\tilde Q_{N\text{\rm{i}}} \,\mathbb C^n$ is invertible. 
Otherwise, we would have $L_{N\text{\rm{i}}}\,\tilde Q_{N\text{\rm{i}}}\, v=0$ 
for some column vector $v\in \tilde Q_{N\text{\rm{i}}} \, \mathbb C^n.$
Then, \eqref{9.8} would imply that
\begin{equation}
\label{9.9}
\varphi(i\kappa_N,x)\, \tilde Q_{N\text{\rm{i}}}\,v= f(i\kappa_N,x)\,K_{N\text{\rm{i}}}\,v.
\end{equation}
Using \eqref{6.48} on the right-hand side of \eqref{9.9}, we would get
\begin{equation}
\label{9.10}
\varphi(i\kappa_N,x) \,\tilde Q_{N\text{\rm{i}}}\,v=
e^{-\kappa_N x} \left[I+O\left(\int_{x}^\infty dy\,|V(y)| \right)\right] Q_{N\text{\rm{i}}}\,v,\qquad x\to+\infty.
\end{equation}
Thus, from \eqref{9.10} we would conclude that the column vector
$\varphi(i\kappa_N,x) \,\tilde Q_{N\text{\rm{i}}}\,v$
would be a square-integrable solution to \eqref{2.1} and satisfying the boundary condition \eqref{2.5}, and that would 
imply that $\varphi(i\kappa_N,x) \,\tilde Q_{N\text{\rm{i}}}\,v$ would be a bound-state solution
to \eqref{2.1} at $k=i\kappa_N.$
However, \eqref{9.2} would imply that $v\not\in Q_N \mathbb C^n,$
which would contradict Theorem~3.11.1(e) of \cite{AW2021}. This
completes the proof that the restriction of  $L_{N\text{\rm{i}}}\,\tilde Q_{N\text{\rm{i}}}$ to $ \tilde Q_{N\text{\rm{i}}} \,\mathbb C^n$  is indeed invertible.

We recall that the unperturbed potential $V$ is assumed to satisfy (2.2) and belong to $L_1^1(\mathbb R^+).$
By proceeding as in Section~\ref{section8}, we establish the following:

\begin{enumerate}

\item[\text{\rm(1)}] 
We solve  the  Gel'fand--Levitan system of integral equations \eqref{5.49}  with the kernel $G(x,y)$  chosen as
\begin{equation}
\label{9.11}
G(x,y)= \varphi(i\kappa_N,x)\,\tilde C_{N\text{\rm{i}}}^2\,
\varphi(i\kappa_N,y)^\dagger,
\end{equation}
where $\tilde C_{N\text{\rm{i}}}$ is the $n\times n$ nonnegative hermitian matrix defined in \eqref{9.6}.
We remark that \eqref{9.11} is analogous to \eqref{8.14}.
By proceeding as in the proof of 
Theorem~\ref{theorem8.2}, we show that the solution $\mathcal A(x,y)$ to \eqref{5.49} is given by 
\begin{equation}
\label{9.12}
\mathcal A(x,y)=-\xi_{N\text{\rm{i}}}(x)\,\Omega_{N\text{\rm{i}}}(x)^+\,\xi_{N\text{\rm{i}}}(y)^\dagger, \qquad 0 \le y < x,
 \end{equation}
 which is the analog of \eqref{8.15}.
We note that the $n\times n$ matrix $\xi_{N\text{\rm{i}}}(x)$  appearing in \eqref{9.12} is defined as
\begin{equation}
\label{9.13}
\xi_{N\text{\rm{i}}}(x):=\varphi(i\kappa_N,x)\,\tilde C_{N\text{\rm{i}}},
\end{equation}
which is the analog of \eqref{8.16}, and the $n\times n$ matrix $\Omega_{N\text{\rm{i}}}(x)$ there is defined as
\begin{equation}
\label{9.14}
\Omega_{N\text{\rm{i}}}(x):=
\tilde Q_{N\text{\rm{i}}}+\int_0^x dy\,\xi_{N\text{\rm{i}}}(y)^\dagger\,
\xi_{N\text{\rm{i}}}(y),
\end{equation}
which is the analog of \eqref{8.17}.
We recall that $\Omega_{N\text{\rm{i}}}(x)^+$ denotes the Moore--Penrose inverse of $\Omega_{N\text{\rm{i}}}(x).$

\item[\text{\rm(2)}] 
We define the perturbed potential $\tilde V(x)$ as
\begin{equation}\label{9.15}
\tilde V(x):=V(x)-2\,\ds\frac{d}{dx}\left[ \xi_{N\text{\rm{i}}}(x)\,\Omega_{N\text{\rm{i}}}(x)^+\,\xi_{N\text{\rm{i}}}(x)^\dagger\right].\end{equation}
By proceeding as in the proof of Theorem~\ref{theorem8.3},
we show that $\mathcal A(x,y)$ and $\tilde V(x)$ satisfy (a)--(d) of Theorem~\ref{theorem8.3}. 

\item[\text{\rm(3)}] 
Similarly, by proceeding as in the proof of Theorem~\ref{theorem8.5}, we establish the following:

\begin{enumerate}

\item[\text{\rm(a)}] The quantity $\tilde\varphi(k,x)$ expressed in terms of the unperturbed regular solution
$\varphi(k,x)$ and the quantity $\mathcal A(x,y)$ in \eqref{9.12} as
\begin{equation*}
\tilde\varphi(k,x)=\varphi(k,x)+\int_0^x dy\,\mathcal A(x,y)\,\varphi(k,y),
\end{equation*}
   is a solution to the perturbed Schr\"odinger equation \eqref{2.1}
 with the potential $\tilde V(x)$ given in \eqref{9.15}. Furthermore, with the help of \eqref{9.13} and \eqref{9.14}, we prove that
 the quantity $\varphi(k,x)$ can also be expressed as
  \begin{equation}\label{9.17}
\tilde\varphi(k,x)=\varphi(k,x)- \xi_{N\text{\rm{i}}}(x)\,\Omega_{N\text{\rm{i}}}(x)^+\, \int_0^x dy\,\xi_{N\text{\rm{i}}}(y)^\dagger\, \varphi(k,y).
\end{equation}

\item[\text{\rm(b)}] The perturbed potential $\tilde V(x)$ appearing in \eqref{9.15} 
satisfies \eqref{2.2}. Moreover, for any $a$ and $\varepsilon$ with
$a\ge 0$ and $0 < \varepsilon <1,$ the corresponding potential increment $\tilde V(x)-V(x)$ has the asymptotic behavior
\begin{equation*}
\tilde V(x)- V(x)=O\left(q_{29}(x,a,\varepsilon)\right), \qquad x\to+\infty,
\end{equation*}
where we have defined
\begin{equation*}
\begin{split}
q_{29}(x,a,\varepsilon):=&
 x \, e^{-2\,\kappa_N x} +  \int_x^\infty dy \,|V(y)|
+  
e^{-2 \,\varepsilon\, \kappa_N\,  x} \int_a^{(1-\varepsilon)x} dy\, |V(y)| \\
&+  \int_{(1-\varepsilon) x}^x dy \, e^{-2\, \kappa_N\,(x-y)}\,|V(y)|.
\end{split}
\end{equation*}
Furthermore, if the unperturbed potential $V$ belongs  $L^1_{1+\epsilon}(\mathbb R^+)$ for some fixed $\epsilon \ge 0,$ then 
the perturbed potential $\tilde V$ belongs to $L^1_\epsilon(\mathbb R^+).$

\item[\text{\rm(c)}] Assume that the unperturbed potential $V$ is further restricted to satisfy
\begin{equation*}
|V(x)| \le c\,e^{-\alpha x}, \qquad x \ge x_0,
\end{equation*}
for some positive constants $\alpha$ and $x_0,$ where $c$
denotes a generic constant.
Then, for every $0< \varepsilon <1,$ the potential increment
$\tilde V(x)-V(x)$ has the asymptotic behavior as $x\to+\infty$ given by
\begin{equation*}
\tilde V(x) -V(x) =
\begin{cases}
O\left( e^{-\alpha x}+ e^{-2  \varepsilon \kappa_N x}\right), \qquad \alpha\le 2  \kappa_N, \\
\noalign{\medskip}
O\left( e^{-2 \varepsilon \kappa_N x}\right),  \qquad \alpha > 2 \kappa_N.
\end{cases}
\end{equation*}

\item[\text{\rm(d)}]
If the unperturbed potential $V$ has compact support, then the perturbed potential $\tilde V$ has the asymptotic behavior
\begin{equation*}
 \tilde V(x)= O\left( x\, e^{-2\kappa_N x}\right), \qquad x \to +\infty.
 \end{equation*} 

\item[\text{\rm(e)}] For $ k \neq i \kappa_N,$  the perturbed quantity $\tilde\varphi(k,x)$ given in \eqref{9.17} can be expressed as
\begin{equation*}
\tilde\varphi(k,x)=\varphi(k,x)-\ds\frac{1}{k^2+\kappa_N^2}
\,\xi_{N\text{\rm{i}}}(x) \,\Omega_{N\text{\rm{i}}}(x)^+\left[\xi'_{N\text{\rm{i}}}(x)^\dagger\,\varphi(k,x)-\xi_{N\text{\rm{i}}}(x)^\dagger
\,\varphi'(k,x)\right].
\end{equation*}

\item[\text{\rm(f)}] The perturbed quantity $\tilde\varphi(k,x)$ satisfies the initial conditions \eqref{5.17},
where the matrices $\tilde A$ and $\tilde B$
are expressed in terms of the unperturbed
boundary matrices $A$ and $B$ and the nonnegative matrix $\tilde C_{N\text{\rm{i}}}$ defined in \eqref{9.6}, and we have
\begin{equation}\label{9.24}
\tilde A=A, \quad \tilde B=B-A\,\tilde C_{N\text{\rm{i}}}^2 \,A^\dagger A.
\end{equation}

\item[\text{\rm(g)}] 
The matrices $\tilde A$ and $\tilde B$ appearing in \eqref{9.24} satisfy \eqref{2.6} and \eqref{2.7}. Hence, 
as a consequence of (a) and (f), 
the quantity $\tilde\varphi(k,x)$ is the regular solution 
to the matrix Schr\"odinger equation with the potential $\tilde V(x)$ in \eqref{9.15}
and with the selfadjoint
boundary condition \eqref{2.5} with $A$ and $B$ there replaced with
$\tilde A$ and $\tilde B,$ respectively.

\end{enumerate}

\item[\text{\rm(4)}]  Under the additional assumption $V\in L^1_{2}(\mathbb R^+),$ 
by proceeding as in the proof of Theorem~\ref{theorem8.6} we establish the following:

\begin{enumerate}
\item[\text{\rm(a)}] 

The unperturbed Jost matrix $J(k)$ is transformed into the perturbed Jost matrix as
 \begin{equation*}
\tilde J(k)=\left[I-\ds\frac{2i\kappa_N}{k+i\kappa_N}\,\tilde P_{N\text{\rm{i}}}\right] J(k),\qquad
k\in\overline{\mathbb C^+},
\end{equation*}
where we have defined
\begin{equation*}
\tilde P_{N\text{\rm{i}}}:= L_{N\text{\rm{i}}}\, \tilde C_{N\text{\rm{i}}} \left(\tilde C_{N\text{\rm{i}}}\,L_{N\text{\rm{i}}}^\dagger \,
L_{N\text{\rm{i}}} \,\tilde C_{N\text{\rm{i}}}\right)^+ \tilde C_{N\text{\rm{i}}} \,L_{N\text{\rm{i}}}^\dagger,
\end{equation*}
with $L_{N\text{\rm{i}}}$ being the $n\times n$ matrix appearing 
 in \eqref{9.8}. Moreover, $\tilde P_{N\text{\rm{i}}}$ is an orthogonal projection with the rank $m_{N\text{\rm{i}}},$ where we recall
that $m_{N\text{\rm{i}}}$ is also equal to the rank of the orthogonal projection $\tilde Q_{N\text{\rm{i}}}$ appearing in \eqref{9.2}.  We also prove that
$\tilde P_{N\text{\rm{i}}}$ projects onto a subspace of the kernel of $\tilde J(i\kappa_N)^\dagger,$ where that
subspace has dimension $m_{N\text{\rm{i}}}.$

\item[\text{\rm(b)}] The matrix product $J(k)^\dagger J(k)$ for $k\in\mathbb R$ does not change under
the perturbation,
i.e. we have
 \begin{equation*}
\tilde J(k)^\dagger \tilde J(k)=J(k)^\dagger J(k),\qquad k\in\mathbb R,
\end{equation*}
and hence the continuous part of the spectral measure $d\rho$ does not change
under the perturbation.

\item[\text{\rm(c)}] The perturbation changes the determinant of the Jost matrix as
 \begin{equation}
\label{9.28}
\det[\tilde J(k)]=\left(\ds\frac{k-i\kappa_N}{k+i\kappa_N}\right)^{m_{N\text{\rm{i}}}} \det[J(k)],\qquad
k\in\overline{\mathbb C^+}.\end{equation}

\item[\text{\rm(d)}] Under the perturbation,
the bound state with the energy $-\kappa_N^2$  remains, but its multiplicity is increased from $m_N$ to $m_N+m_{N\text{\rm{i}}}.$ No new bound states are added, and the  bound states 
with the energies $-\kappa_j^2$ and multiplicities $m_j$ for $1\le j\le N-1$
are unchanged. We confirm that the multiplicity of the bound state with the energy
$-\kappa_N^2$ is increased by $m_{N\text{\rm{i}}}$ by using the following argument. The multiplicity of
the bound state at $k=i\kappa_N$
for the unperturbed problem is $m_N.$ It follows from Theorem~3.11.6 of \cite{AW2021} that $\det[J(k)]$ has a 
zero of order $m_N$ at $k=i\kappa_N.$ From \eqref{9.28} we see that $\det[\tilde J(k)]$ has a zero of order $m_N+m_{N\text{\rm{i}}}$ at $k= i\kappa_N.$ Thus, the multiplicity of the bound state
with the energy $-\kappa_N^2$ for the perturbed problem is $m_N+m_{N\text{\rm{i}}}.$ 

\item[\text{\rm(e)}] Under the perturbation, the scattering matrix $S(k)$ undergoes the transformation
 \begin{equation*}
\tilde S(k)=\left[I+\ds\frac{2i \kappa_N}{k-i\kappa_N}\,\tilde P_{N\text{\rm{i}}}\right] S(k)\left[I+\ds\frac{2i\kappa_N}{k-i\kappa_N}\,\tilde P_{N\text{\rm{i}}}\right],
\qquad k\in\mathbb R.
\end{equation*}

\item[\text{\rm(f)}] The perturbation changes the determinant of the scattering matrix as 
  \begin{equation*}
\det[\tilde S(k)]=\left(\ds\frac{k+i\kappa_N}{k-i\kappa_N}\right)^{2\, m_{N\text{\rm{i}}}} \det[S(k)],
\qquad k\in\mathbb R.
\end{equation*}

\item[\text{\rm(g)}] Under the perturbation, the Jost solution $f(k,x)$ is transformed into $\tilde f(k,x)$ as
\begin{equation}
\label{9.31}
\tilde f(k,x)=\left[f(k,x)-\ds\frac{1}{k^2+\kappa_N^2}
\,\xi_{N\text{\rm{i}}}(x) \,\Omega_{N\text{\rm{i}}}(x)^+ q_{30}(x) \right] \left[I- \ds\frac{2i \kappa_N}{k+i \kappa_N} \tilde P_{N\text{\rm{i}}}\right],
\end{equation}
where we have defined
\begin{equation*}
q_{30}(x):=\xi'_{N\text{\rm{i}}}(x)^\dagger\,f(k,x)-\xi_{N\text{\rm{i}}}(x)^\dagger\,f'(k,x).
\end{equation*}
 We prove that the singularity at $k=i\kappa_N$ appearing
 on the right-hand side of \eqref{9.31} is actually a removable singularity, as discussed in Remark~\ref{remark8.7}.
 \end{enumerate}

\item[\text{\rm(5)}] 
Finally,  assuming further that $V\in L^1_2(\mathbb R^+),$ by proceeding as in the
proof of Theorem~\ref{theorem8.8} we establish the following:

\begin{enumerate}
\item[\text{\rm(a)}] Under the perturbation,
the projection matrices $Q_j$ for $1\le j\le N-1$ remain unchanged, i.e. we have
 \begin{equation*}
\tilde Q_j=Q_j,\qquad 1\le j\le N-1.
\end{equation*}

\item[\text{\rm(b)}] Under the perturbation, the Gel'fand--Levitan normalization matrices $C_j$ remain
 unchanged for $1\le j\le N-1,$ i.e. we have
 \begin{equation*}
\tilde C_j=C_j,\qquad 1\le j\le N-1.
\end{equation*}

\item[\text{\rm(c)}] 
The orthogonal projection  $\tilde Q_{N\text{\rm{i}}}$ projects into a subspace of dimension $m_{N\text{\rm{i}}}$ of the kernel of $\tilde J(i\kappa_N).$
\end{enumerate}
\end{enumerate}

\end{document}